\providecommand{\mdseries@tt}{}
\newtheorem{theorem}{Theorem}
\newtheorem{proposition}[theorem]{Proposition}
\newtheorem{definition}[theorem]{Definition}
\newtheorem{corollary}[theorem]{Corollary}
\newtheorem{lemma}[theorem]{Lemma}
\newtheorem{claim}{Claim}
\newenvironment{claimproof}{%
  \begin{proof}[Proof of Claim~\theclaim]%
}{%
	\end{proof}%
}
\newcommand{\cf}{\mathsf{cf}}
\newcommand{\Vout}{V_\textsf{out}}
\newcommand{\lmax}{\ell_{\max}}
\newcommand{\dist}{\operatorname{dist}}
\newcommand{\poly}[1]{\operatorname{poly}(#1)}
\newcommand{\polyn}{n^{\mathcal O(1)}}
\newcommand{\Along}{A^\textsf{long}_t}
\newcommand{\Gcirc}{G^\circ_t}
\newcommand{\Cstar}{C^\star_t}
\newcommand{\Gstar}{G^\star_t}
\newcommand{\Voutt}{V^\textsf{out}_t}
\newcommand{\Gtorso}{G^\textsf{torso}_t}
\newcommand{\Ulong}{U^\textsf{long}_t}
\newcommand{\overallruntime}{$2^{\mathcal{O}(\ell k^3\log k + k^5\log k\log \ell)} \cdot \polyn$}
\newcommand{\DLCHS}{\textsc{Directed Long Cycle Hitting Set}}
\newcommand{\algorithmILCHSI}{A_\textsf{ii}}
\newcommand{\setILCHSI}{\mathcal{S}_\textsf{ii}}
\newcommand{\functionILCHSI}{f_\textsf{ii}}
\newcommand{\setkcrit}{\mathcal{S}_\textsf{crit}}
\newcommand{\setdisj}{\mathcal{S}_\textsf{disj}}
\newcommand{\algorithmHS}{A_\textsf{hs}}
\newcommand{\setHS}{\mathcal{S}_\textsf{hs}}
\newcommand{\functionHS}{f_\textsf{hs}}
\newcommand{\setManyClusters}{\mathcal{S}_\textsf{mc}}
\newcommand{\setSeparateClusters}{\mathcal{S}_\textsf{sc}}
\newcommand{\setMultiwayCut}{\mathcal{S}_\textsf{mw cut}}
\newcommand{\setImportantSeparator}{\mathcal{S}_\textsf{imp sep}}
\newcommand{\algorithmCS}{A_\textsf{cs}}
\newcommand{\setCS}{\mathcal{S}_\textsf{cluster}}
\newcommand{\functionCS}{f_\textsf{cs}}
\title{Hitting Long Directed Cycles is Fixed-Parameter Tractable}
\author{Alexander G{\"o}ke\thanks{TU Hamburg, Hamburg, Germany. \texttt{alexander.goeke@tuhh.de}. Supported by DFG grant MN 59/1-1.}
  \and D\'{a}niel Marx\thanks{Max-Planck-Institut f{\"u}r Informatik, Saarbr{\"u}cken, Germany. \texttt{dmarx@mpi-inf.mpg.de}}
  \and Matthias Mnich\thanks{TU Hamburg, Hamburg, Germany. \texttt{matthias.mnich@tuhh.de}. Supported by DFG grant MN 59/4-1.}}
\date{}
\begin{document}

\maketitle

\begin{abstract}
%
%
%
%
    In the {\sc Directed Long Cycle Hitting Set} problem we are given a directed graph $G$, and the task is to find a set $S$ of at most $k$ vertices/arcs such that $G-S$ has no cycle of length longer than $\ell$.
    We show that the problem can be solved in time  \overallruntime{}, that is, it is fixed-parameter tractable (FPT) parameterized by $k$ and $\ell$.
    This algorithm can be seen as a far-reaching generalization of the fixed-parameter tractability of {\sc Mixed Graph Feedback Vertex Set} [Bonsma and Lokshtanov WADS 2011], which is already a common generalization of the fixed-parameter tractability of (undirected) {\sc Feedback Vertex Set} and the {\sc Directed Feedback Vertex Set} problems, two classic results in parameterized algorithms.
    The algorithm requires significant insights into the structure of graphs without directed cycles length longer than $\ell$ and can be seen as an exact version of the approximation algorithm following from the Erd{\H{o}}s-P{\'o}sa property for long cycles in directed graphs proved by Kreutzer and Kawarabayashi [STOC 2015].
\end{abstract}

\textbf{Keywords.} Directed graphs, directed feedback vertex set, directed treewidth.

\thispagestyle{empty}

\clearpage
\pagebreak

\setcounter{page}{1}

\section{Introduction}
\label{sec:introduction}
{\sc Feedback Vertex Set} (FVS) and its directed variant {\sc Directed Feedback Vertex Set} (DFVS) are among the most classical problems in algorithmic graph theory: given a (directed) graph $G$ the task is to find a minimum-size set~$S \subseteq V(G)$ of vertices such that $G - S$ contains no (directed) cycles.
Interestingly, the directed version is {\em not} a generalization of the undirected one. There is no obvious reduction from FVS to DFVS (replacing each undirected edge with two arcs of opposite directions does not work, as this would create directed cycles of length 2).

Both problems received significant amount of attention from the perspective of parameterized complexity.
The main parameter of interest there is the optimal solution size $k = |S|$.
Both problems can easily be solved in time $n^{\mathcal{O}(k)}$ by enumerating all size-$k$ vertex subsets $S\subseteq V(G)$ and then checking whether $G - S$ is acyclic.
The interesting question is thus whether the problems are \emph{fixed-parameter tractable} with respect to $k$, i.e. whether there is an algorithm with run time $f(k)\cdot n^{\mathcal{O}(1)}$ for some computable function $f$ depending only on $k$.
FVS is one of the most studied problems in parameterized complexity: starting in the early 1990's, a long series of improved fixed-parameter algorithms~\cite{CaoEtAl2015,CyganEtAl2011,FominEtAl2008,GuoEtAl2006,KociumakaPilipczuk2014,RamanEtAl2006} lead to the currently fastest (randomized) algorithm from 2020 with run time $2.7^k\cdot n^{\mathcal O(1)}$~\cite{LiNederlof2020}.
The DFVS problem has also received a significant amount of attention from the perspective of parameterized complexity.
It was a long-standing open problem whether DFVS admits such an algorithm; the question was finally resolved by Chen et al. who gave a $4^k k!k^4\cdot \mathcal{O}(nm)$-time algorithm for graphs with $n$ vertices and~$m$ edges.
Recently, an algorithm for DFVS with run time $4^kk! k^5\cdot \mathcal{O}(n+m)$ was given by Lokshtanov et al.~\cite{LokshtanovEtAl2018}.
A fruitful research direction is trying to extend the algorithm to more general problems than DFVS.
On the one hand, Chitnis et al.~\cite{ChitnisEtAl2015} generalized the result by giving a fixed-parameter algorithm for \textsc{Directed Subset FVS}:  here we are given a subset $U$ of arcs and only require the $k$-vertex set $S$ to hit every cycle that contains an arc of $U$. 
On the other hand, Lokshtanov et al.~\cite{LokshtanovEtAl2020} showed that the \textsc{Directed Odd Cycle Transversal} problem, where only the directed cycles of odd length needed to be hit, is $\mathsf{W}[1]$-hard parameterized by solution size.

It is worth noting that very different algorithmic tools form the basis of the fixed-parameter tractability of FVS and DFVS: the undirected version behaves more like a hitting set-type problem, whereas the directed version has a more cut-like flavor.
These differences motivated Bonsma and Lokshtanov~\cite{BonsmaLokshtanov2011} to consider {\sc Mixed FVS}, the common generalization of FVS and DFVS where the input graph contains both directed and undirected edges.
In such \emph{mixed graphs}, cycles can contain directed arcs and undirected edges, but in particular the walk visiting an undirected edge twice is not a cycle.
They obtained an algorithm for {\sc Mixed FVS} with run time $2^{\mathcal O(k\log k)}\cdot n^{\mathcal O(1)}$ for $k$ the size of the smallest feedback vertex set.


In this paper we study the following generalization of DFVS:
We want to find a minimum size vertex set $S$ such that all cycles of $G - S$ to have length at most $\ell$.
For $\ell = 1$ this is DFVS in loopless graphs.
For $\ell = 2$ this is {\sc Mixed FVS} in mixed graphs.
The length of a longest cycle in a (directed) graph is also known as (directed) circumference of a graph.
The parameterized version of our problem thus reads:
\begin{center}
  \framebox[\textwidth]{
    \begin{tabular}{rl}
      \multicolumn{2}{l}{{\sc Directed Long Cycle Hitting Set} \hfill \textit{Parameter:} $k + \ell$.}\\
      \textit{Input:}      & A directed multigraph $G$ and integers $k,\ell\in\mathbb N$.\\
      \textit{Task:}   & Find a set $S$ of at most $k$ vertices such that $G - S$ has circumference at most $\ell$.\\
  \end{tabular}}
\end{center}
Note that \DLCHS{} for $\ell = 2$ generalizes 
{\sc Mixed FVS} (and hence both FVS and DFS): to see this, subdivide anti-parallel arcs to make all cycles have length at least three and then replace undirected edges by anti-parallel arcs.

In contrast to FVS and DFVS, even checking feasibility of a given solution is a non-trivial task.
It amounts to checking, for a digraph $G$ and integer $\ell$, whether $G$ contains a cycle of length more than $\ell$. 
This is also known as the {\sc Long Directed Cycle} problem, which is obviously $\mathsf{NP}$-hard since it contains the {\sc Directed Hamiltonian Cycle} problem for $\ell = |V(G)| - 1$.
However, {\sc Long Directed Cycle} is fixed-parameter tractable parameterized by $\ell$ \cite{Zehavi2016}, hence \emph{verifying} the solution of {\sc Directed Long Cycle Hitting Set} is fixed-parameter tractable in~$\ell$.

\subsection{Our contributions}
\label{sec:ourcontributions}

Our main result is a fixed-parameter algorithm for \DLCHS{}.
\begin{theorem}
\label{thm:boundedcyclelengthdeletion_main}
  There is an algorithm that solves {\sc Directed Long Cycle Hitting Set} in time \overallruntime{} for $n$-vertex directed graphs $G$ and parameters $k,\ell\in\mathbb N$.
\end{theorem}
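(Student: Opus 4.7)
The plan is to combine iterative compression with a structural decomposition of digraphs of bounded circumference inspired by the Kawarabayashi--Kreutzer Erd\H{o}s--P\'osa theorem for long directed cycles. Iterative compression on the solution reduces the task to a \emph{Disjoint} variant: given a digraph $G$ together with a set $X \subseteq V(G)$ of size at most $k+1$ such that $G - X$ already has circumference at most $\ell$, find $S \subseteq V(G) \setminus X$ with $|S| \le k$ such that $G - S$ has circumference at most $\ell$. Since every long cycle of $G$ must meet $X$, it suffices to destroy all long cycles that pass through $X$.

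To attack the Disjoint variant, I would iteratively identify a ``long witness'' in $G - X$---a long path or long cycle that can close through $X$ into a long cycle of $G$---surround it with a cluster, and contract that cluster into a torso that preserves the long-cycle information while shrinking the instance. Each round produces one of three outcomes: (a) $k+1$ pairwise vertex-disjoint long cycles in $G$, certifying infeasibility; (b) a bounded collection of clusters through which all remaining long cycles must travel; or (c) a torso small enough to be solved by direct enumeration.

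In outcome (b) the algorithm branches on local separators: for each $x \in X$ it guesses how $S$ separates $x$ from each cluster via the important-separators machinery at cost $4^{\mathcal{O}(k)}$ per choice, and if several clusters reach the same $x$ it additionally solves a bounded multiway-cut subproblem. The correctness of these branching choices relies on a shadow-removal argument showing that the solution can be ``pushed'' to a canonical important separator without losing optimality. After all separator choices are fixed, a dedicated hitting-set subroutine finishes off the last remaining long cycles inside the bounded-complexity torso. A carefully chosen potential---roughly, the number of outer cluster vertices attached to $X$ plus the number of surviving long witnesses---ensures that the recursion terminates quickly; multiplying the branching factors and summing over recursion levels then yields the claimed $2^{\mathcal{O}(\ell k^3\log k + k^5\log k\log \ell)}\cdot \polyn$ bound.

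The hardest step will be proving the structural lemma underlying outcome (b): that bounded circumference of $G - X$, combined with the absence of many disjoint long cycles through $X$, forces a decomposition into a bounded number of clusters whose internal and boundary separators suffice to hit every long cycle through $X$. This is the analogue for long cycles of the ``shadow-removal plus important separators'' paradigm used for Directed Feedback Vertex Set, but the obstructions are genuinely richer: long paths and short cycles of $G - X$ can combine through $X$ into long cycles of $G$ in far more ways than in the circumference-$1$ (DFVS) or circumference-$2$ (Mixed FVS) settings, so a quantitative refinement of the Kawarabayashi--Kreutzer machinery---and a careful amortisation of its $\ell$-dependence against the $k$-dependence of the separator branching---is needed to reach the claimed runtime.
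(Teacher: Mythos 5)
Your outer scaffolding matches the paper closely -- iterative compression to get a size-$(k+1)$ solution $T$, the disjoint compression variant, shadow removal \`a la Chitnis et al., important separators, and a directed multiway cut subroutine all appear in the actual proof. But there is a genuine gap in the middle, and it is precisely the step you flag as ``the hardest.''

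You propose to obtain the needed cluster decomposition via ``a quantitative refinement of the Kawarabayashi--Kreutzer machinery,'' i.e.\ by pushing an Erd\H{o}s--P\'osa-type argument to extract either $k+1$ disjoint long cycles or a bounded set of clusters. The authors explicitly observe (Section~\ref{sec:relatedwork}) that the Kawarabayashi--Kreutzer bound $f(k,\ell)$ is far too large to yield a $2^{\poly{k,\ell}}\cdot\polyn$ algorithm, and they do \emph{not} take that route. Their clusters are not ``surroundings of long witnesses'' at all; they arise differently. After eliminating medium-length cycles (lengths strictly between $\ell$ and $2\ell^6$, which can be handled by color-coding and branching), the remaining digraph has a dichotomy: within a strong component of $G-t$, any two \emph{portal} vertices (vertices incident to arcs leaving the component) are either within distance $2\ell^2$ or at distance at least $2\ell^6-2\ell$ (Lemma~\ref{thm:bcl_atleastoneshortdist}), and this threshold defines the cluster partition. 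That dichotomy, in turn, rests on the bounded-circumference distance lemmas $|P_1|\le(\cf(G)-1)|P_2|$ and $|P_1|\le(\cf(G)-1)^2\dist(x,y)$. None of this is Erd\H{o}s--P\'osa.

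Your sketch also misses the two remaining novel tools that do the real work. First, before the portal/cluster analysis can even begin, the paper must reduce to the situation where $T$ has at most one vertex per strong component of $G-S$. This requires guessing a closed walk through two $T$-vertices that survives the (unknown) deletion $S$, and that guess is made possible by the new, self-contained construction of a $k$-representative set of $x\to y$-paths of size $\ell^{\mathcal{O}(k^2\log k)}\cdot\log n$ in strong digraphs of bounded circumference (Lemma~\ref{thm:bcl_computerepresenetativeforboundedcf}, lifted to Lemma~\ref{lem:stongplusW}). Second, the final branching on important separators is not just ``important separators plus multiway cut'': the paper introduces \emph{outlet} vertices on the cluster-to-cluster paths to encode the extra constraint that $S$ must also be a long-cycle hitting set, and proves (Lemma~\ref{thm:important_cluster_separator_contains_important_separator}) that every important cluster separator contains an important $\mathcal{X}\to V_\textsf{out}\cup V_\Omega$-separator for a bounded number of candidate sets $V_\Omega$. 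Without the representative-set construction and the outlet machinery, your branching and your termination potential would both fail to bound the recursion by $2^{\mathcal{O}(\ell k^3\log k+k^5\log k\log\ell)}$, and your reliance on refining Kawarabayashi--Kreutzer runs directly into the obstacle the authors point out.
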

The result also extends to the arc deletion variant of the problem, as we show both of them to be equivalent in a parameter-preserving way.

The run time in \autoref{thm:boundedcyclelengthdeletion_main} depends on two parameters, $k$ and $\ell$.
This is necessary for the following reason.
For $\ell = 1$, \DLCHS{} corresponds to the DFVS problem, which is $\mathsf{NP}$-hard.
Moreover, the problem is also $\mathsf{NP}$-hard for $k = 0$, as it contains the {\sc Directed Hamiltonian Cycle} problem.
This also shows that the run time cannot be polynomial in $k$ or $\ell$ (unless $\mathsf{P} = \mathsf{NP}$).
Assuming ETH, it is even necessary that the run time depends \emph{exponentially} on both $k$ and $\ell$.
Our algorithm achieves a run time that is single-exponential in both parameters $k$ and $\ell$.
It is, in this sense, optimal.



\subsection{Our methodology}
\label{sec:ourmethodology}
Our algorithm witnessing \autoref{thm:boundedcyclelengthdeletion_main} is based on an elaborate combination algorithmic techniques, some of them used previously, some of them new.
\begin{itemize}
  \item We use the standard opening step of \textbf{iterative compression}, which allows us to assume that every directed cycle of length longer than $\ell$ goes through a small number of exceptional vertices.
  \item We do not want to deal with the situation when there are two exceptional vertices $x$ and~$y$ that are in the same strong component of the directed graph $G-S$ that results from deleting the vertices of a solution $S$.
    If we guess that this happens in the solution $S$, then a way to avoid this problem is to guess a directed cycle $C$ containing both $x$ and $y$, and to contract this cycle.
    In order to guess this cycle, we essentially need a \textbf{representative set} of $x\to y$-paths, that is, a collection of paths such that if an (unknown) set $S$ of at most $k$ vertices does not disconnect $y$ from $x$, then there is at least one $x\to y$-path disjoint from~$S$ in our collection.
    As an interesting self-contained result, we construct such a collection of size $\ell^{\mathcal{O}(k^2\log k)}\cdot \log n$ on directed graphs without cycles of length greater than $\ell$.
  \item If we can assume that the exceptional vertices are in different strong components of the solution, then this defines a separation problem on the exceptional vertices and makes the \textbf{directed shadow removal} technique of Chitnis et al.~\cite{ChitnisEtAl2015} relevant to simplify the structure of the instance.
    In particular, a major structural goal that we want to achieve is to ensure that every arc of the input digraph lies in a directed cycle of length at most $\ell$.
  \item Removing the exceptional vertices breaks the digraph into some number of strong components with no cycle of length longer than $\ell$ in any of them.
    We call \textbf{portal vertices} the endpoints of the arcs connecting these strong components with each other and with the exceptional vertices.
    We show that the portal vertices can be partitioned into \textbf{clusters}: portals in each cluster are close to each other, while the distance between any two clusters is large.
  Furthermore, every solution has to separate the clusters from each other, defining another directed multiway cut problem.
  \item In the final step of the algorithm, we would like to use the technique of \textbf{important separators} to solve the directed multiway cut problem defined above: these are separators that are maximally ``pushed'' towards the target of the separations.
    However, the exact notion of importance is difficult to define due to the additional constraints of the problem being solved.
    To this end, we perform a detailed analysis of the structure of the instance to identify \textbf{outlet vertices} that allows us to represent these additional constraints as separation and to formally reduce the problem to branching on the choice of an important separator.
\end{itemize}

Let us remark that the algorithm can be somewhat simplified in the case of highly connected digraphs, namely, in case of directed graphs which are $(k+1)$-strong.
A major challenge is to get the arguments right when this is not the case, and some vertices are connected only by few vertex-disjoint paths in either direction.
For this situation, no general algorithmic tools are available for such directed graphs.
So again, the situation is more complicated than that in undirected graphs, for which it is known how to reduce large classes of problems to solving them on highly-connected graphs~\cite{RamanujanEtAl2018}.
In response, we provide a fine-grained analysis of the combinatorics of how long cycles interact with small cut sets, to let the branching process make progress on the instance.

\subsection{Related work}
\label{sec:relatedwork}
The structure of long cycles in digraphs has been of interest for long time.
For instance, Lewin~\cite{Lewin1975} analyzed the density of such graphs, and Kintali~\cite{Kintali2017} analyzes the directed treewidth of such directed graphs.
Algorithmically, though, it was only recently shown by Kawarabayashi and Kreutzer~\cite{KawarabayashiKreutzer2015} that the vertex version of the Erd{\H{o}}s-Posa property holds for long directed cycles: namely, they show that any directed graph~$G$ either contains a set of $k+1$ vertex-disjoint directed cycles of length at least~$\ell$ or some set~$S$ of at most $f(k,\ell)$ vertices that intersects all directed cycles of~$G$ with length at least~$\ell$.
The corresponding questions for directed cycles without length restrictions have also been well-investigated~\cite{AmiriEtAl2016,ReedEtAl1996}.
%
%
%
%

Note that an algorithmic proof of the Erd{\H{o}}s-Posa property can be a useful opening step for a fixed-parameter algorithm: we either find a set of $k+1$ arc- or vertex-disjoint cycles of length at least $\ell$ (and thus can reject the instance $(G,k,\ell)$ as ``no''-instance) or obtain a set $S$ which can serve as a feasible approximate solution.
Such an opening step was also discussed in the well-known fixed-parameter algorithm for DFVS by Chen et al.~\cite[Remark 5.3]{ChenEtAl2008}, where the function $f(k,1)$ is known to be near-linear.
In our case though, the function $f(k,\ell)$ from the Kawarabayashi-Kreutzer result is way too large for us to obtain an algorithm for \DLCHS{} with run time $2^{\poly{k, \ell}}\cdot \polyn$.
%

We further point out that it is even $\mathsf{NP}$-hard to \emph{verify} that $G - S$ has the desired property of having its directed circumference bounded by at most $\ell$; this is another difference compared to the DFVS problem.
For checking if the digraph $G - S$ does not have cycles of length \emph{exactly} $\ell+1$, the well-known color-coding technique of Alon et al.~\cite{AlonEtAl1995} can be employed to give the correct answer in time $2^{\mathcal O(\ell)}\cdot n^{\mathcal O(1)}$.
But as we also need to refute the existence of cycles with lengths $\ell + 2,\ell + 3,\hdots$ and in general of cycles whose length is not bounded by $\ell$, more sophisticated techniques are needed; Zehavi~\cite{Zehavi2016} provides a deterministic algorithm for this purpose with run time $2^{\mathcal O(\ell)}\cdot n^{\mathcal O(1)}$.

Finally, directed circumference can be seen as an intermediate step towards a general algorithmic framework for graph optimization problems related to \emph{directed treewidth}.
In undirected graphs, treewidth as a graph measure has enjoyed unprecedented success as a tool towards efficient approximation algorithms and fixed-parameter algorithms.
For instance, as part of their Graph Minors series, Robertson and Seymour~\cite{RobertsonSeymour1995} showed that the $k$-linkage problem is fixed-parameter tractable, heavily relying on the reduction of the problem to graphs of bounded treewidth.
While a famous algorithm by Bodlaender~\cite{Bodlaender1996} shows that graphs of bounded treewidth can be recognized in linear time, it was only shown recently, by Fomin et al.~\cite{FominEtAl2012}, how to recognize graphs of nearly-bounded treewidth, i.e. graphs that have bounded treewidth after deleting at most $k$ arcs or vertices from it.
Yet in directed graphs, the situation is again much more complicated: Johnson et al.~\cite{JohnsonEtAl2001} introduced the notion of directed treewidth for digraphs.
Yet, for digraphs the $k$-linkage problem is $\mathsf{NP}$-hard already for $k = 2$, and no fixed-parameter algorithm is known which recognizes digraphs of nearly-bounded directed treewidth.
On the positive side, though, digraphs of bounded directed circumference are nicely squeezed between acyclic digraphs and digraphs of bounded directed treewidth~\cite{Kintali2017}.
Moreover, the arc version of the $k$-linkage problem is fixed-parameter tractable on digraphs of directed circumference~2~\cite{BangJensenLarsen2016}; the question remains open for digraphs of arbitrary directed circumference.


\medskip

\noindent
\textbf{Structure of the paper.}
We define the basic terms and symbols in \autoref{sec:notionsandnotations}.
The combinatorial properties of digraphs with bounded circumference, which are necessary for our algorithm, we collect in~\autoref{sec:technical_tools}.
Then we give the fixed-parameter algorithm for the vertex-deletion version of the problem, in \autoref{sec:bcl_thealgorithm}.
In that section we start with an overview of the algorithm, and then work in a completely modular way: at the end of each subsection, we summarize the state of the algorithm in a concise statement which then forms the starting point of the next subsection.
In \autoref{sec:technical_tools_proofs}, we provide proofs for the theorems of \autoref{sec:technical_tools}.
Those combinatorial insights can thus be read independently of the algorithm, and potentially be used for solving other algorithmic problems on digraphs of bounded circumference.
In \autoref{sec:boundedcyclelength}, we reduce the arc deletion version to the vertex deletion version, as well as {\sc Mixed FVS}.
Finally, we conclude in \autoref{sec:discussion}.

\section{Notions and Notations}
\label{sec:notionsandnotations}
In this paper, we mainly consider finite loop-less directed graphs (or digraphs) $G$ with vertex set~$V(G)$ and (directed) arc set $A(G)$.
We allow multiple arcs and arcs in both directions between the same pairs of vertices.
A \emph{walk} is a sequence of vertices $(v_1,\hdots,v_\ell)$ with corresponding arcs $(v_i,v_{i+1})$ for $i = 1,\hdots,\ell-1$ which forms a subgraph of $G$; the \emph{length} of a walk is its number of arcs.
A walk is \emph{closed} if $v_1 = v_\ell$; otherwise, it is \emph{open}.
A \emph{path} in $G$ is an open walk where all vertices are visited at most once.
A \emph{cycle} in $G$ is a closed walk in which every vertex is visited at most once, except for $x_1 = x_\ell$ which is visited twice.
(Throughout this entire paper, by ``cycle'' we always mean directed cycle.)
We call $G$ \emph{acyclic} if $G$ does not contain any cycle.
For two vertices $x_i, x_j$ of a walk $W$ with $i  \leq j$ we denote by $W[x_i, x_j]$ the subwalk of $W$ starting at~$x_i$ and ending in $x_j$.
For a walk $W$ ending in a vertex~$x$ and a second walk $R$ starting in $x$, $W \circ R$ is the walk resulting when concatenating~$W$ and~$R$.

We say that $y$ is \emph{reachable from $x$} in $G$ if there is a directed path from $x$ to $y$ in $G$.
The \emph{distance} $\mathsf{dist}_G(x,y)$ between any two vertices $x,y\in V(G)$ in~$G$ is the minimum length of a directed path from~$x$ to $y$ in $G$.
We say that a set $S$ \emph{separates $y$ from $x$} in $G$ if $y$ is not reachable from $x$ in $G - S$.

For each vertex $v\in V(G)$, its \emph{out-degree} in $G$ is the number $d^+_G(v)$ of arcs of the form $(v,w)$ for some $w\in V(G)\setminus\{v\}$, and its \emph{in-degree} in $G$ is the number $d^-_G(v)$ of arcs of the form $(w,v)$ for some $w\in V(G)\setminus\{v\}$.
For each subset $V'\subseteq V(G)$, the subgraph induced by~$V'$ is the graph~$G[V']$ with vertex set $V'$ and arc set $\{(u,v)\in A(G)~|~u,v\in V'\}$.
For a set $X$ of vertices or arcs, let $G - X$ denote the subgraph of $G$ obtained by deleting the elements in $X$ from~$G$.
For a subgraph~$G'$ and an integer $d$ we denote by $R^+_{G', d}(X)$ the set of vertices that are reachable from $X$ in $G'$ by a path of length at most $d$.
We omit $G'$ if it is clear from the context, and omit~$d$ if $d \geq |V(G)|$ (when the path length is unrestricted).

A digraph $G$ is called \emph{strong} if either $G$ consists of a single vertex (then $G$ is called \emph{trivial}), or for any distinct $u,v\in V(G)$ there is a (directed) path from $u$ to $v$.
A \emph{strong component} of~$G$ is an inclusion-wise maximal induced subgraph of $G$ that is strong.
The \emph{(directed) circumference} of a digraph $G$ is the length $\mathsf{cf}(G)$ of a longest cycle of $G$; if $G$ is acyclic, then define $\mathsf{cf}(G) = 0$.

\section{Technical Tools}
\label{sec:technical_tools}
This section is a collection of important structural properties of separators and bounded circumference graphs which we use in our algorithm.
  The statements themselves are presented here, whereas the proofs can be found in \autoref{sec:technical_tools_proofs}.

\subsection{Important Separators and Consequences}
An important tool in the design of parameterized graph modification algorithms are important separators.

\begin{definition}[directed separator]
  Let $G$ be a digraph.
  For disjoint non-empty sets $X,Y\subseteq V(G)$ a set $S$ is an \emph{$X - Y$-separator} if $S$ is disjoint from $X\cup Y$ and there is no path from~$X$ to~$Y$ in $G - S$.
  An $X - Y$-separator $S$ is \emph{minimal} if no proper subset of $S$ is an $X - Y$-separator.
  An $X - Y$-separator~$S$ is \emph{important} if there is no $X - Y$-separator $S'$ with $|S'|\leq |S|$ and $R^+_{G - S}(X) \subsetneq R^+_{G - S'}(X)$.
\end{definition}
Notice that $S$ can be either a vertex set or an arc set.
A standard result on important separators of size $\leq k$ is that there cannot exist to many of them.
More precisely:

\begin{proposition}[\cite{ChitnisEtAl2012}]
\label{thm:fptenumerationofimportantseperators}
  Let $G$ be a digraph and let $X,Y\subseteq V(G)$ be disjoint non-empty vertex sets.
  For every $p\geq 0$ there are at most $4^p$ important $X - Y$-separators of size at most~$p$, and all these separators can be enumerated in time $\mathcal{O}(4^p\cdot p (n + m))$.
\end{proposition}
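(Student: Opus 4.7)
The plan is to prove both the combinatorial bound and the enumeration time via a branching algorithm whose recursion tree has at most $4^p$ leaves, with each important separator corresponding to exactly one leaf. The central ingredient is the \emph{rightmost minimum cut}: among all minimum $X$-$Y$ separators $C$, there is a unique one, call it $C_0$, that maximizes $R^+_{G - C_0}(X)$. This cut can be computed in polynomial time: first run a max-flow from $X$ to $Y$ (which takes $\mathcal{O}(\lambda (n+m))$ time since the flow value $\lambda \le p$), and then identify $C_0$ by a reachability computation in the residual graph.

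First I would dispose of the easy cases. Let $\lambda$ denote the value of a minimum $X$-$Y$ cut. If $\lambda > p$, there is no $X$-$Y$-separator of size at most $p$, so output the empty list; if $\lambda = 0$, then $\emptyset$ is the unique important separator. Otherwise I compute $C_0$, pick any element $v \in C_0$, and branch on whether $v$ belongs to the important separator being enumerated. In the \textbf{include branch}, delete $v$ from $G$ and recurse with parameter $p - 1$. In the \textbf{exclude branch}, add $v$ to $X$ (forcing it to the source side) and recurse with the same $p$. The crucial progress measure is $\mu := 2p - \lambda$. Because $v$ lies on a rightmost minimum cut, adding $v$ to $X$ strictly increases $\lambda$ by at least one, while deletion decreases $\lambda$ by at most one. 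So in the include branch $\mu$ drops by at least $2 \cdot 1 - 1 = 1$, and in the exclude branch $\mu$ drops by at least $1$ as well. Since initially $\mu \le 2p$, the recursion tree has at most $2^{2p} = 4^p$ leaves, and each node does $\mathcal{O}(p(n+m))$ work, giving the claimed $\mathcal{O}(4^p \cdot p(n+m))$ total time.

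It remains to argue (i) completeness — every important separator $S$ is produced at some leaf — and (ii) injectivity — distinct leaves yield distinct separators, so the count $4^p$ is an upper bound on the number of important separators themselves. For (i), I would induct on $\mu$: if $v \in S$ then $S \setminus \{v\}$ is an important separator in $G - v$ with parameter $p - 1$ and is produced by the include branch; if $v \notin S$ then I need to verify that $S$ remains an important separator after moving $v$ into $X$. The non-trivial part is showing $v \in R^+_{G - S}(X)$, which uses the defining property of $C_0$: if $v$ were not reachable from $X$ in $G - S$, one could construct a separator $S'$ with $|S'| \le |S|$ and $R^+_{G - S'}(X) \supsetneq R^+_{G - S}(X)$ by rerouting through $C_0$, contradicting the importance of $S$. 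For (ii), the two children of each node produce separators that differ on $v$ (included versus excluded), so leaves in distinct subtrees give distinct separators.

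The main obstacle I expect is the careful bookkeeping in the completeness step — specifically verifying that the rightmost-cut property interacts correctly with the inductive hypothesis in the exclude branch, so that the importance of $S$ in the original graph transfers to importance of $S$ in the modified instance where $X$ has grown. Once that submodularity-flavored argument is pinned down, the $4^p$ bound and the enumeration time follow directly from the measure-based analysis of the binary branching tree.
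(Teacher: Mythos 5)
The paper only cites this proposition from Chitnis, Hajiaghayi and Marx and does not reprove it, so there is no internal proof to compare against. Your reconstruction is the standard branching argument from the literature and matches the cited source: taking $C_0$ to be the minimum $X$-$Y$ cut maximizing $R^+_{G-C_0}(X)$, the binary branching on $v \in C_0$ (delete $v$, or push $v$ into $X$), the measure $\mu = 2p - \lambda$ that drops by at least one per level (the strict increase of $\lambda$ in the exclude branch being the submodularity step), and the completeness/injectivity arguments are all exactly the canonical proof.

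One caveat worth flagging, though it concerns the paper's statement rather than your argument. As written here, the definition of \emph{important} $X$-$Y$-separator omits the inclusion-wise minimality requirement that appears in Chitnis et al.\ (and in the standard textbook treatment), and without it the $4^p$ bound is actually false: in a digraph with arcs $s \to a \to t$ together with $m$ isolated vertices $u_1,\dots,u_m$, every set $\{a,u_i\}$ vacuously satisfies the paper's importance condition (no separator of size at most $2$ can reach past $\{s\}$), giving $\Omega(n)$ ``important'' separators of size $2$. Your proof implicitly uses the minimality-included definition; your base case ``if $\lambda = 0$ then $\emptyset$ is the unique important separator'' is exactly the point where padding with shadow vertices would otherwise create extra solutions that the branching never reaches. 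Making that minimality hypothesis explicit closes the only gap; everything else (the $\le 2^{2p}$-leaf count, the $\mathcal{O}(p(n+m))$ per-node max-flow work, and the transfer of importance into the exclude subinstance) is sound.
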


Using this, we can establish results bounding the number of vertices defining a separator.

\begin{restatable}{lemma}{restatesmallpathwitness}
\label{thm:bcl_smallpathwitness}
  Let $G$ be a digraph, let $x\in V(G)$, let $Y\subseteq V(G)$, and let $k\in\mathbb N$.
  Then in time $2^{\mathcal{O}(k)}\cdot n^{\mathcal{O}(1)}$ we can identify a set $Y'\subseteq Y$ of size at most $(k+1)4^{k+1}$ with the following property:
  \begin{equation}
  \label{eqn:bcl_impsepproperty}
  \tag{$\dagger$}
    \parbox{0.9\textwidth}{
    \emph{if $S\subseteq V(G)$ is a set of at most $k$ vertices such that there is an $x\rightarrow Y$-path in $G - S$, then there is also a $x\rightarrow Y'$-path in $G - S$.}
    }
  \end{equation}
\end{restatable}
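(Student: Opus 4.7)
The plan is to enumerate the important $x \to Y$-separators in $G$ of size at most $k+1$ and, for each, attach up to $k+1$ witness endpoints to form $Y'$. By \autoref{thm:fptenumerationofimportantseperators} such separators number at most $4^{k+1}$ and can be enumerated in time $2^{\mathcal{O}(k)} \cdot n^{\mathcal{O}(1)}$, so the resulting $|Y'|$ would stay within the bound $(k+1) \cdot 4^{k+1}$.

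For each important separator $S$ of size $p \leq k+1$ (assumed minimal, since a non-minimal important separator can be replaced by a minimal subset with the same reachable set), the minimality of $S$ ensures that for every $v \in S$ there is an $x \to Y$-path in $G - (S \setminus \{v\})$ through $v$; I would collect the $Y$-endpoints of these paths to obtain up to $|S|$ distinct endpoints. If fewer than $k+1$ endpoints arise this way, I would top up greedily with further $Y$-vertices reachable from $x$ in $G$ until either $k+1$ endpoints or all the available ones are collected. As a sanity check, when the minimum $x \to Y$ cut in $G$ exceeds $k$, Menger's theorem directly gives $k+1$ internally vertex-disjoint $x \to Y$-paths whose endpoints alone form a valid $Y'$; the interesting regime is when the cut has size at most $k$.

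Correctness would rest on the standard important-separator push: given $|S^*| \leq k$ with $R^+_{G - S^*}(x) \cap Y \neq \emptyset$, if $Y \subseteq R^+_{G-S^*}(x)$ then $Y' \subseteq Y$ is immediately reachable in $G-S^*$; otherwise $T := Y \setminus R^+_{G-S^*}(x)$ is non-empty and $S^*$ is an $x \to T$-separator, so there is an important $x \to T$-separator $\tilde S$ with $|\tilde S| \leq |S^*|$ and $R^+_{G-S^*}(x) \subseteq R^+_{G-\tilde S}(x)$. I would then extend $\tilde S$ by one carefully chosen $Y$-vertex into an important $x \to Y$-separator $\hat S$ of size at most $k+1$ that appears in the enumeration; among the (up to) $k+1$ witness endpoints attached to $\hat S$, removing $S^*$ can disable at most $k$ of the corresponding by-pass paths, leaving at least one witness endpoint reachable from $x$ in $G - S^*$ and hence in $Y'$.

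The main obstacle I anticipate is justifying that $\tilde S$ can always be canonically extended into an important $x \to Y$-separator by a single $Y$-vertex. If the single-vertex extension fails in general, I would fall back on an iterative construction: repeatedly find a violating $S^*$, push to an important $x \to Y'_i$-separator for the current $Y'_i$, and add one new witness $y_{i+1} \in R^+_{G-\tilde S_i}(x) \cap Y$; the iteration bound then follows from observing that the reachable sets $R^+_{G-\tilde S_i}(x)$ are pairwise distinct (each $y_{i+1}$ lies in the current reachable set but is excluded from all future ones) and each has out-boundary of size at most $k$, so the number of rounds stays within the $(k+1) \cdot 4^{k+1}$ budget.
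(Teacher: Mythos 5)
The core difficulty with your plan is that you are enumerating the wrong class of important separators. An important $x\to Y$\emph{-separator} is, by definition, disjoint from $Y$. Yet the witness set $Y'$ must consist of $Y$-vertices, and it must be chosen so that no $S^*$ of size $\le k$ can make $Y$ reachable while disabling all of $Y'$. Deleting a single $Y$-vertex can be exactly such an $S^*$: in a graph with $V=\{x,v,y_1,\dots,y_m\}$ and arcs $(x,v)$, $(v,y_i)$ for all $i$, the unique minimal $x\to Y$-separator is $\{v\}$ and has a single minimality witness endpoint; so your construction (before any ``top-up'') would return $Y'=\{y_1\}$, and then $S^*=\{y_1\}$ violates $(\dagger)$. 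The paper avoids this precisely by not working with $x\to Y$-separators at all, but with important $x\to y^\star$-separators in an \emph{auxiliary} digraph where a fresh sink $y^\star$ receives an arc from every element of the current candidate set; those separators may contain $Y$-vertices, which is exactly what lets each $y_i$ be ``trapped'' inside some important separator of size $k+1$, giving the $(k+1)4^{k+1}$ bound via \autoref{thm:bcl_exceptsep_small}. Your single-vertex extension of $\tilde S$ to an ``important $x\to Y$-separator $\hat S$'' is unrepairable for two independent reasons: $\hat S$ would contain a $Y$-vertex and so fail the definition, and in any case $\tilde S$ may leave many $Y$-vertices reachable, so adding one vertex cannot restore separation.

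The fallback iteration is closer in spirit to the paper's proof, but the termination bound you offer does not follow from what you say. ``The reachable sets $R^+_{G-\tilde S_i}(x)$ are pairwise distinct and each has out-boundary of size at most $k$'' is consistent with having $n^{\Theta(k)}$ such sets, not $2^{O(k)}$ of them, and these separators are important $x\to Y'_i$-separators for \emph{different, growing} targets $Y'_i$, so the $4^k$ bound of \autoref{thm:fptenumerationofimportantseperators} does not apply across rounds. More importantly, what your iteration produces at the end is, for each $y_{i+1}$, a set $\tilde S_i$ that keeps $y_{i+1}$ reachable and blocks $y_1,\dots,y_i$ only; it need not block $y_{i+2},\dots,y_r$. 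The hypothesis of \autoref{thm:bcl_exceptsep_small} requires that $S_i$ blocks \emph{all} other $y_j$, and that is essential: it is what makes $S_i\cup\{y_i\}$ an $x\to y^\star$-separator in the auxiliary graph, which is the mechanism underlying the size bound. The paper achieves this hypothesis by shrinking $Y'$ greedily rather than growing it: it starts with $Y'=Y$, repeatedly tests (via important separators in the auxiliary sink graph) whether some $v$ can be dropped while preserving $(\dagger)$, and the stopping condition automatically yields, for every surviving $v$, a witness $S_v$ keeping $v$ alive and killing \emph{all} of $Y'\setminus\{v\}$. That is the missing structural invariant in your construction, and without it neither the correctness of the final $Y'$ nor the $(k+1)4^{k+1}$ bound is established.
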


\begin{restatable}{lemma}{restatesmallpathwitnessset}
\label{thm:bcl_smallpathwitnessset}
  Let $G$ be a digraph, let $X,Y\subseteq V(G)$ be sets of vertices, and let $k\in\mathbb N$.
  Then in time $2^{\mathcal{O}(k)}\cdot n^{\mathcal{O}(1)}$ we can identify sets $X'\subseteq X,Y'\subseteq Y$ each of size at most $(k+1)4^{k+1}$ such that the following holds:
  If $S\subseteq V(G)$ is a set of at most $k$ vertices such that there is an $X \rightarrow Y$-path in $G - S$, then there is also an $X' \rightarrow Y'$-path in $G - S$.
\end{restatable}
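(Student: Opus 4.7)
\medskip
\noindent
\textbf{Proof proposal for Lemma \ref{thm:bcl_smallpathwitnessset}.}
The plan is to reduce to two applications of Lemma~\ref{thm:bcl_smallpathwitness} by converting the set-to-set reachability problem into two single-vertex-to-set problems using auxiliary super-source and super-sink vertices. The key observation is that Lemma~\ref{thm:bcl_smallpathwitness} can be dualized to produce a small witness set on the source side by running it on the reverse digraph.

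First, I would construct an auxiliary digraph $G_1$ from $G$ by adding a fresh super-source vertex $x^{\ast}$ together with an arc $(x^{\ast},x)$ for every $x \in X$. Then the statement ``there is an $X\to Y$ path in $G-S$'' is equivalent to ``there is an $x^{\ast}\to Y$ path in $G_1-S$'' for any $S\subseteq V(G)$, because $x^{\ast}\notin V(G)\supseteq S$ and the added arcs from $x^{\ast}$ to $X$ are intact. I apply Lemma~\ref{thm:bcl_smallpathwitness} to $G_1$ with source $x^{\ast}$ and target $Y$, obtaining in time $2^{\mathcal{O}(k)}\cdot n^{\mathcal{O}(1)}$ a set $Y'\subseteq Y$ of size at most $(k+1)4^{k+1}$ satisfying property~\eqref{eqn:bcl_impsepproperty}. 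Translated back to $G$, this says: for any $S\subseteq V(G)$ with $|S|\leq k$, if $G-S$ contains an $X\to Y$ path, then $G-S$ contains an $X\to Y'$ path.

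Next, to shrink the source side, I construct an auxiliary digraph $G_2$ from $G$ by adding a fresh super-sink $y^{\ast}$ together with an arc $(y,y^{\ast})$ for every $y\in Y'$. By the same reasoning as before, ``there is an $X\to Y'$ path in $G-S$'' is equivalent to ``there is an $X\to y^{\ast}$ path in $G_2-S$.'' Now I pass to the reverse digraph $G_2^R$, in which an $X\to y^{\ast}$ path becomes a $y^{\ast}\to X$ path, while the set of forbidden separator vertices is unchanged. Applying Lemma~\ref{thm:bcl_smallpathwitness} to $G_2^R$ with source $y^{\ast}$ and target $X$ yields, in time $2^{\mathcal{O}(k)}\cdot n^{\mathcal{O}(1)}$, a set $X'\subseteq X$ of size at most $(k+1)4^{k+1}$ such that whenever $G_2^R-S$ contains a $y^{\ast}\to X$ path, it already contains a $y^{\ast}\to X'$ path.

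Chaining the two implications gives the lemma: if $S\subseteq V(G)$ with $|S|\leq k$ is such that $G-S$ contains an $X\to Y$ path, then the first reduction supplies an $X\to Y'$ path in $G-S$, equivalently an $X\to y^{\ast}$ path in $G_2-S$, equivalently a $y^{\ast}\to X$ path in $G_2^R-S$; the second reduction then yields a $y^{\ast}\to X'$ path in $G_2^R-S$, which corresponds to an $X'\to y^{\ast}$ path in $G_2-S$, and hence to an $X'\to Y'$ path in $G-S$. The total running time is dominated by the two calls to Lemma~\ref{thm:bcl_smallpathwitness} and thus stays at $2^{\mathcal{O}(k)}\cdot n^{\mathcal{O}(1)}$. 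I do not expect a genuine obstacle here; the only point that needs care is verifying that the super-source/super-sink construction is benign, i.e.\ that $x^{\ast},y^{\ast}\notin V(G)$ and therefore never appear in the separator $S$, so reachability in the auxiliary graphs faithfully tracks reachability in $G$.
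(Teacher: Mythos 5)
Your proposal is correct and follows essentially the same route as the paper: augment with a super-source adjacent to $X$ to extract $Y'$, then work in the reverse digraph with a super-source/super-sink adjacent to $Y'$ to extract $X'$, and chain the two applications of Lemma~\ref{thm:bcl_smallpathwitness}. The only cosmetic difference is that you describe the second step as ``add a super-sink to $G$ then reverse,'' whereas the paper directly adds a super-source to the already-reversed digraph $\overleftarrow{G}$; these are the same construction.
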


\begin{restatable}{lemma}{restatesmallpathwitnesstsets}
\label{thm:bcl_smallpathwitnesstsets}
  Let $G$ be a digraph, let $X_1,\hdots,X_t \subseteq V(G)$ be sets of vertices, and $k\in\mathbb{N}$.
  Then in time $t^22^{\mathcal{O}(k)}\cdot n^{\mathcal{O}(1)}$ we can identify sets $X'_i \subseteq X_i$ of size at most $2(t-1)(k+1)4^{k+1}$ for every $i \in \{1, \hdots, t\}$, such that the following holds:
  If $S\subseteq V(G)$ is a set of at most $k$ vertices such that there is an $X_i \rightarrow X_j$-path in $G - S$ for some $i \not = j$, then there is also an $X'_i \rightarrow X'_j$-path in $G - S$.
\end{restatable}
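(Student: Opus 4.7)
The plan is to reduce the $t$-set version to the two-set version (Lemma~\ref{thm:bcl_smallpathwitnessset}) applied to every ordered pair of the input sets.

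Concretely, for each ordered pair $(i,j)$ with $i\ne j$, invoke Lemma~\ref{thm:bcl_smallpathwitnessset} on the pair $(X_i,X_j)$ to obtain subsets $X_i^{(i,j)}\subseteq X_i$ and $X_j^{(i,j)}\subseteq X_j$, each of size at most $(k+1)4^{k+1}$, with the property that whenever $S$ is a vertex set of size at most $k$ and $G-S$ contains an $X_i\to X_j$-path, it also contains an $X_i^{(i,j)}\to X_j^{(i,j)}$-path. Then define
\[
X'_i \;=\; \bigcup_{j\in\{1,\dots,t\}\setminus\{i\}} \bigl(X_i^{(i,j)} \cup X_i^{(j,i)}\bigr).
\]
That is, $X'_i$ collects everything that the pairwise lemma flagged inside $X_i$, both when $X_i$ played the role of ``source'' and when it played the role of ``target''. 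Since for each of the $t-1$ indices $j\ne i$ this contributes at most $2(k+1)4^{k+1}$ vertices, the size bound $|X'_i|\le 2(t-1)(k+1)4^{k+1}$ follows immediately.

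For correctness, suppose $S\subseteq V(G)$ has $|S|\le k$ and $G-S$ contains an $X_i\to X_j$-path for some $i\ne j$. Applying the guarantee of Lemma~\ref{thm:bcl_smallpathwitnessset} to the pair $(X_i,X_j)$ yields an $X_i^{(i,j)}\to X_j^{(i,j)}$-path in $G-S$; by construction $X_i^{(i,j)}\subseteq X'_i$ and $X_j^{(i,j)}\subseteq X'_j$, so this is in particular an $X'_i\to X'_j$-path. The running time bound is immediate: we make $t(t-1)$ calls to the algorithm of Lemma~\ref{thm:bcl_smallpathwitnessset}, each running in time $2^{\mathcal{O}(k)}\cdot n^{\mathcal{O}(1)}$, for a total of $t^2\,2^{\mathcal{O}(k)}\cdot n^{\mathcal{O}(1)}$, plus the linear-time union step.

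There is essentially no obstacle here; the only thing to watch is the bookkeeping of source-versus-target contributions, which is why each $X_i$ receives two layers of $(k+1)4^{k+1}$ vertices per partner index $j$, giving the factor of $2$ in the final bound. The proof is a direct reduction and inherits both the correctness and the running time from the two-set version in a black-box manner.
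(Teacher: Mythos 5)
Your proposal matches the paper's proof almost verbatim: both invoke the two-set Lemma~\ref{thm:bcl_smallpathwitnessset} on every ordered pair $(i,j)$ and take $X'_i = \bigcup_{j \neq i}\bigl(X_i^{(i,j)} \cup X_i^{(j,i)}\bigr)$, with the same size and running-time accounting. The argument is correct.
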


\subsection{Properties of Directed Graphs with Bounded Circumference}
As we are interested in a vertex set whose deletion leads to a digraph of bounded circumference, it is useful to study the properties of this class of graphs.
One of the main observations is that paths going in both directions between two vertices cannot differ in their length by more than a factor of $\cf(G) - 1$.

\begin{restatable}{lemma}{restatedistboundedcircum}
\label{thm:bcl_distboundedcircum}
  Let $G$ be a digraph and let $x,y\in V(G)$.
  If $P_1$ is an $x\rightarrow y$-path and $P_2$ is a $y\rightarrow x$-path, then $|P_1|\leq (\mathsf{cf}(G)-1)|P_2|$.
  Consequently, we have $\mathsf{dist}_G(x,y)\leq (\mathsf{cf}(G)-1)\mathsf{dist}_G(y,x)$.
\end{restatable}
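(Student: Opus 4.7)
The plan is to merge $P_1$ and $P_2$ into a single Eulerian object and decompose it into simple directed cycles of $G$. Concretely, I would form the directed multigraph $H$ whose arc set is the disjoint union $A(P_1) \sqcup A(P_2)$, so that an arc appearing in both paths is kept with multiplicity two in $H$. Every vertex has equal in-degree and out-degree in $H$, and the concatenation $W := P_1 \circ P_2$ is a closed Eulerian trail of $H$. By the standard decomposition of a closed trail (scan $W$ and, as soon as a vertex is revisited, cut out the simple cycle that was just traced, then recurse on the remaining trail), $W$ splits into arc-disjoint simple cycles $C_1, \ldots, C_r$ of the multigraph $H$.

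Next I would verify that every $C_j$ really is a simple directed cycle of $G$, not merely of $H$. The only way this could fail is if some $C_j$ uses both parallel copies of an arc $(u,v) \in A(P_1) \cap A(P_2)$; but then $C_j$ would leave $u$ (and enter $v$) twice, contradicting its simplicity. Hence $|C_j| \leq \cf(G)$ for every $j$. Writing $a_j$ and $b_j$ for the number of arcs of $C_j$ inherited from $P_1$ and from $P_2$ respectively, we have $a_j + b_j = |C_j| \leq \cf(G)$, $\sum_j a_j = |P_1|$, and $\sum_j b_j = |P_2|$. Because $P_1$ is a simple path it is acyclic, so no $C_j$ can consist only of $P_1$-arcs; thus $b_j \geq 1$, which yields $a_j \leq \cf(G) - 1 \leq (\cf(G) - 1)\,b_j$. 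Summing over $j$ gives $|P_1| \leq (\cf(G) - 1)\,|P_2|$. The distance corollary then follows by applying the inequality to shortest paths in either direction.

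The main technical obstacle I anticipate is precisely the issue handled in the previous paragraph: the cycle decomposition naturally lives in the multigraph $H$, and one must rule out a ``digon'' formed by two parallel copies of the same underlying arc of $G$ (which is not a cycle of $G$ at all, since both copies point the same way). Once this is dispatched by the simplicity argument, the rest is a clean charging that uses only $b_j \geq 1$, which in turn is forced by the acyclicity of the path $P_1$ and is exactly what produces the $\cf(G)-1$ factor in the bound.
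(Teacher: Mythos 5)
Your proof is correct, and it takes a genuinely different route from the one in the paper. The paper argues locally along $P_1$: it takes two consecutive vertices $x',y'$ of $P_1$ between which no internal vertex of $P_1$ touches $P_2$, shows that a suitable $P_2$-segment closes this up into a simple cycle of $G$, and concludes $|P_1[x',y']|\le\cf(G)-1$; summing over the at most $|P_2|$ such gaps gives the bound. You instead argue globally: you form the balanced multigraph $H$ on $A(P_1)\sqcup A(P_2)$, observe that $P_1\circ P_2$ is an Eulerian closed trail, decompose it into arc-disjoint simple cycles, verify (correctly, via the out-degree-one property of a simple cycle) that each of these projects to a simple cycle of $G$ of length at most $\cf(G)$, and then charge $P_1$-arcs to $P_2$-arcs cycle by cycle, using that acyclicity of $P_1$ forces each cycle to contain at least one $P_2$-arc. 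Both proofs are short and both exploit exactly the same underlying phenomenon (a short cycle formed from a piece of $P_1$ and a piece of $P_2$), but your Eulerian-decomposition framing is arguably cleaner and a bit more modular: it would carry over verbatim to the setting where $P_1$ and $P_2$ are replaced by any two trails forming a closed walk with $P_1$ acyclic, whereas the paper's gap argument is more tied to the path structure. One minor nit: what you call a ``digon'' — two parallel copies of an arc pointing the same way — is not a digon in the usual sense (opposite orientations), but the substance of your simplicity argument (two arcs leaving the same vertex cannot both lie on a simple cycle) is right, and it is the correct thing to rule out.
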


By using that there is always a backward path in strong digraphs, applying above result twice yields:

\begin{restatable}{lemma}{restatedistboundedcircumsquared}
\label{thm:bcl_distboundedcircumsquared}
  Let $G$ be a strong digraph and $x,y\in V(G)$.
  Then $|P_1|\leq (\mathsf{cf}(G)-1)^2\mathsf{dist}_G(x,y)$ for every $x\rightarrow y$-path~$P_1$.
\end{restatable}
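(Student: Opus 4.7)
The plan is to apply Lemma~\ref{thm:bcl_distboundedcircum} twice, once to relate $|P_1|$ to a shortest backward path, and once more to bound the length of that backward path by $\mathsf{dist}_G(x,y)$.

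First, since $G$ is strong, for the given vertices $x,y$ there exists at least one $y\to x$-path; let $P_2$ denote a shortest such path, so that $|P_2| = \mathsf{dist}_G(y,x)$. Applying Lemma~\ref{thm:bcl_distboundedcircum} to the pair $(P_1, P_2)$ gives immediately
\[
  |P_1| \le (\mathsf{cf}(G)-1)\,|P_2| = (\mathsf{cf}(G)-1)\,\mathsf{dist}_G(y,x).
\]

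Next, I would bound $\mathsf{dist}_G(y,x)$ in terms of $\mathsf{dist}_G(x,y)$. For this I invoke the ``Consequently'' part of Lemma~\ref{thm:bcl_distboundedcircum}, but with the roles of $x$ and $y$ exchanged: taking a shortest $y\to x$-path as the ``forward'' path and a shortest $x\to y$-path (which exists because $G$ is strong) as the ``backward'' path yields
\[
  \mathsf{dist}_G(y,x) \le (\mathsf{cf}(G)-1)\,\mathsf{dist}_G(x,y).
\]
Chaining the two inequalities gives $|P_1| \le (\mathsf{cf}(G)-1)^2\,\mathsf{dist}_G(x,y)$, as desired.

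There is essentially no obstacle beyond recognizing that strong connectivity is precisely what guarantees the existence of the auxiliary $y\to x$-path needed to invoke Lemma~\ref{thm:bcl_distboundedcircum}. The factor $(\mathsf{cf}(G)-1)^2$ arises naturally from applying the one-sided distortion bound twice, and no finer analysis of $P_1$ itself is required.
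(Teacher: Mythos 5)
Your proof is correct and takes essentially the same route as the paper: both amount to applying the factor-$(\mathsf{cf}(G)-1)$ bound from Lemma~\ref{thm:bcl_distboundedcircum} twice. The only difference is cosmetic: the paper re-derives the reverse-distance bound by explicitly tracing short cycles around each arc of a shortest $x\to y$-path, whereas you simply cite the ``Consequently'' part of Lemma~\ref{thm:bcl_distboundedcircum} with $x$ and $y$ exchanged, using strong connectivity only to guarantee the needed paths exist.
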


We now give versions of the above lemmas, that consider the case where start- and endpoints of two paths are not identically but at close distance.

\begin{restatable}{lemma}{restatedistsecondpath}
\label{thm:bcl_distsecondpath}
  Let $G$ be a strong digraph, $x,y\in V(G)$ two vertices, and $P_1,P_2$ be two $x\rightarrow y$-paths.
  For every vertex $v$ of $P_1$, we have $\mathsf{dist}_G(P_2,v)\leq 2(\mathsf{cf}(G) - 2)$ and $\mathsf{dist}_G(v,P_2)\leq 2(\mathsf{cf}(G) - 2)$.
\end{restatable}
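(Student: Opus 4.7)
The plan is to prove the first inequality $\mathsf{dist}_G(v,P_2) \le 2(\cf(G)-2)$; writing $\ell := \cf(G)$, the companion inequality $\mathsf{dist}_G(P_2,v) \le 2(\ell-2)$ will follow by applying the same argument to the reverse digraph $G^{\mathrm{rev}}$, which has the same circumference and turns $P_1, P_2$ into $y \to x$ paths. Observe first that $y \in V(P_2)$ and $P_1[v,y]$ is already a $v \to y$ path, so $\mathsf{dist}_G(v,P_2) \le |P_1[v,y]|$; in particular, if $|P_1[v,y]| \le 2(\ell-2)$ the conclusion is immediate. I therefore focus on the case $|P_1[v,y]| > 2(\ell-2)$ and construct a shortcut.

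Using strong connectivity, pick a shortest $y \to v$ path $R$ in $G$; Lemma~\ref{thm:bcl_distboundedcircum} gives $|P_1[v,y]| \le (\ell-1)|R|$. Form the closed walk $W := P_1[v,y] \circ R$ based at $v$, of length $|P_1[v,y]|+|R|$. Since every simple cycle of $G$ has length at most $\ell$ and, in our case, $|W|>\ell$, the walk $W$ cannot itself be a simple cycle, so $R$ must meet $V(P_1[v,y])$ at some internal vertex $z$. Decomposing $W$ at $z$ produces two closed walks --- one using $P_1[v,z]$ and $R[z,v]$, the other using $P_1[z,y]$ and $R[y,z]$ --- each containing a simple cycle, hence of length at most $\ell$. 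Summing both inequalities yields $|P_1[v,y]|+|R| \le 2\ell$, so $|P_1[v,y]| \le 2\ell - |R|$; in particular, whenever $|R|\ge 4$ the trivial bound along $P_1[v,y]$ already gives $\mathsf{dist}_G(v,P_2) \le 2(\ell-2)$, and we are done.

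The main obstacle is the ``short back-edge'' regime $|R| \in \{1,2,3\}$, where the sum-of-cycles bound alone only gives $|P_1[v,y]| \le 2\ell-1$. Here one must use the $R$-shortcut rather than merely count: the ``left'' simple cycle forces $|P_1[v,z]| \le \ell - |R[z,v]|$, and by selecting $z$ carefully (e.g.\ as the first intersection of $R$ with $P_1[v,y]$) one can ensure $|R[z,v]| \ge 2$ whenever $|R|\ge 2$, giving $|P_1[v,z]| \le \ell-2$; symmetrically, the ``right'' simple cycle yields a $z \to y$ route of length at most $\ell-2$. Concatenating these two pieces produces a $v \to y$ path of length at most $2(\ell-2)$, and since $y \in V(P_2)$ this completes the argument. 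The delicate point is the case analysis verifying both sub-cycles are genuinely simple with back-edges of the correct minimum length; the residual case $|R|=1$ reduces to the simple cycle $P_1[v,y] \circ R$ itself, which directly yields $|P_1[v,y]|\le \ell-1 \le 2(\ell-2)$ for $\ell \ge 3$.
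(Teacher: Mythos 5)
Your proposal is incorrect, and the error is a fundamental one rather than a small gap. The plan is to bound $\mathsf{dist}_G(v,P_2)$ by exhibiting a short $v\to y$ path, using only that $y\in V(P_2)$; in effect you are attempting to prove the much stronger statement $\mathsf{dist}_G(v,y)\leq 2(\cf(G)-2)$, which is false. Take $v = x$, the first vertex of $P_1$: then $\mathsf{dist}_G(v,y)=\mathsf{dist}_G(x,y)$, which can be $\Theta(n)$ even in a strong digraph of constant circumference (the graph in Figure~\ref{fig:cf3} with $\cf(G)=3$ is such an example). For a vertex $v$ in the middle of $P_1$, the lemma holds because some \emph{interior} vertex of $P_2$ is near $v$, not because $y$ is; a correct argument must exploit the body of $P_2$ and not just its endpoint. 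Independently, the combinatorial step is also broken: decomposing $W = P_1[v,y]\circ R$ at a single repeated vertex $z$ yields two closed walks $W_1, W_2$, but these need not be simple cycles---$R$ and $P_1[v,y]$ may cross many times---so ``each contains a simple cycle'' does not give $|W_1|,|W_2|\leq\cf(G)$, and the sum bound $|P_1[v,y]|+|R|\le 2\cf(G)$ does not follow (in the same example $W$ decomposes into $\Theta(n)$ short arc-disjoint cycles and has length $\Theta(n)$). The $|R|\in\{2,3\}$ analysis inherits the same flaw, since a single choice of $z$ cannot guarantee that both $W_1$ and $W_2$ are simple.

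For contrast, the paper's proof constructs a $y\to x$ path $P_3$ that stays within distance $\cf(G)-2$ of $P_2$ (by closing each arc of $P_2$ via a short cycle), observes that $\mathsf{dist}_G(P_2,\cdot)$ changes by at most one along each arc of $P_1$, and derives a contradiction from any violating $v$: there must be a segment of $P_1$ whose interior lies entirely beyond distance $\cf(G)-2$ from $P_2$ and which has length at least $\cf(G)-1$; this segment is disjoint from $P_3$, and combining it with the rest of $P_1$ and with $P_3$ produces a simple cycle of length greater than $\cf(G)$. The key idea you are missing is the use of $P_3$ as a scaffold tracking the whole of $P_2$, rather than trying to reach the fixed endpoint $y$.
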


\begin{restatable}{lemma}{restatepathdistance}
\label{thm:bcl_pathdistance}
  Let $G$ be a strong digraph, let $P_1$ be an $x_1\rightarrow y_1$-path and $P_2$ be an $x_2\rightarrow y_2$-path such that $\mathsf{dist}_G(x_1,x_2)\leq t$ and $\mathsf{dist}_G(y_1,y_2)\leq t$ for some integer $t$.
  Then every vertex of $P_1$ is at distance at most $(\mathsf{cf}(G) - 1)t + 2(\mathsf{cf}(G) - 2)$ from $P_2$.
\end{restatable}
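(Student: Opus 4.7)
The plan is to reduce to Lemma~\ref{thm:bcl_distsecondpath} by constructing a pair of simple paths sharing both endpoints. Let $R_x$ be a shortest $x_1\rightarrow x_2$-path and $R_y$ a shortest $y_1\rightarrow y_2$-path, both of length at most $t$. Extract by greedy shortcutting a simple $x_1\rightarrow y_2$-path $P_1^+$ from the walk $P_1\circ R_y$, and a simple $x_1\rightarrow y_2$-path $P_2^+$ from the walk $R_x\circ P_2$. Applying Lemma~\ref{thm:bcl_distsecondpath} to $(P_1^+, P_2^+)$ yields $\dist_G(w, P_2^+)\leq 2(\cf(G)-2)$ for every $w\in V(P_1^+)$. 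Since every vertex of $P_2^+$ lies either on $P_2$ or on $R_x$ (hence within distance~$t$ of $x_2\in V(P_2)$), this already gives $\dist_G(w, P_2)\leq 2(\cf(G)-2)+t$ for every $w\in V(P_1^+)$.

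Fix $v\in V(P_1)$; the case $v\in V(P_1^+)$ is handled above. Otherwise the greedy shortcutting of $P_1\circ R_y$ must have discarded $v$, which forces $R_y$ to meet $P_1[x_1,v)$; also we may assume $v\notin V(R_y)$, since otherwise $\dist_G(v,y_2)\leq t$ trivially. Let $a$ be the latest vertex of $R_y\cap P_1[x_1,v)$ along $P_1$, and let $b$ be the earliest vertex of $R_y\cap P_1(v,y_1)$ along $P_1$ if one exists, or $b=y_1$ otherwise. By the extremality of $a$ and $b$ together with $v\notin V(R_y)$, these choices guarantee $V(P_1[a,b])\cap V(R_y)=\{a,b\}$. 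The main obstacle is to combine this local structure with either a short cycle through $v$ or a parallel-paths argument to bound $\dist_G(v,P_2)$.

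Compare the positions of $a$ and $b$ along $R_y$. If $b$ precedes $a$ on $R_y$, then $P_1[a,b]\circ R_y[b,a]$ is a simple closed walk, hence a directed cycle of length at most $\cf(G)$; together with $|P_1[a,v]|,|R_y[b,a]|\geq 1$ this forces $|P_1[v,b]|\leq \cf(G)-2$, so $\dist_G(v,y_2)\leq\dist_G(v,b)+\dist_G(b,y_2)\leq(\cf(G)-2)+t$ using $b\in V(R_y)$. If instead $a$ precedes $b$ on $R_y$, then $P_1[a,b]$ and $R_y[a,b]$ are two $a\rightarrow b$-paths in the strong digraph $G$, and Lemma~\ref{thm:bcl_distsecondpath} applied to this pair produces a vertex $z\in V(R_y[a,b])$ with $\dist_G(v,z)\leq 2(\cf(G)-2)$; since $z\in V(R_y)$, $\dist_G(z,y_2)\leq t$. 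In every case $\dist_G(v,P_2)\leq 2(\cf(G)-2)+t\leq(\cf(G)-1)t+2(\cf(G)-2)$, as $t\leq(\cf(G)-1)t$ for $\cf(G)\geq 2$.
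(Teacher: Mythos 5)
Your proof bounds the wrong direction of distance. You establish $\dist_G(v, P_2)\leq 2(\cf(G)-2)+t$, that is, a short path \emph{from} $v\in V(P_1)$ \emph{to} $P_2$. The paper's statement, as its own proof and all later uses make clear, asserts $\dist_G(P_2, v)\leq(\cf(G)-1)t+2(\cf(G)-2)$: Lemma~\ref{thm:bypassing_through_nearby_paths} invokes this lemma precisely to produce $P_2\to v_i$-paths, and Lemma~\ref{thm:bcl_openornearclosedoutlets} uses $\dist(P, w)\le\beta$ with $w$ on the other path. In a digraph these are not interchangeable: reversing a $v\to P_2$ path of length at most $2(\cf(G)-2)+t$ via Lemma~\ref{thm:bcl_distboundedcircum} only gives $\dist_G(P_2,v)\le(\cf(G)-1)\bigl(2(\cf(G)-2)+t\bigr)$, which exceeds the stated bound once $\cf(G)\ge 3$.

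Your argument is otherwise internally correct, and it can be repaired by running the entire construction in the reversed digraph $\overleftarrow{G}$ with $\overleftarrow{P_1}$ and $\overleftarrow{P_2}$. There the endpoint distances become $\dist_{\overleftarrow G}(y_1,y_2)=\dist_G(y_2,y_1)\le(\cf(G)-1)t$ and likewise $\dist_{\overleftarrow G}(x_1,x_2)\le(\cf(G)-1)t$ by Lemma~\ref{thm:bcl_distboundedcircum}; your bound $2(\cf(G)-2)+t'$ with $t'=(\cf(G)-1)t$ then translates exactly to $\dist_G(P_2,v)\le(\cf(G)-1)t+2(\cf(G)-2)$. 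Compared to the paper's proof, your construction is also more laborious: because you shortcut $P_1\circ R_y$ into a path $P_1^+$ that may drop vertices of $P_1$, you need the second half (the choice of $a$ and $b$, the two subcases on their order along $R_y$, the short-cycle argument) to recover those vertices. The paper sidesteps this by never touching $P_1$: it takes $Q_x$ an $x_1\to x_2$-path, $Q_y$ a $y_2\to y_1$-path (this is where the $(\cf(G)-1)t$ factor enters), shortcuts $Q_x\circ P_2\circ Q_y$ into an $x_1\to y_1$-path $R$, and applies Lemma~\ref{thm:bcl_distsecondpath} directly to the pair $(P_1,R)$, which immediately covers every vertex of $P_1$.
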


\subsection{Bypassing}

The properties of strong digraphs with bounded circumference also lead to a result on how to switch between two close paths while avoiding a deletion set of bounded size.

\begin{restatable}{lemma}{restatebypassingthroughnearbypaths}
\label{thm:bypassing_through_nearby_paths}
  Let $G$ be a strong digraph, let $S\subseteq V(G)$ be a set of at most $k$ vertices, let $P_1$ be an $x_1\rightarrow y_1$-path and $P_2$ be an $x_2\rightarrow y_2$-path such that $\mathsf{dist}_G(x_1,x_2)\leq t$ and $\mathsf{dist}_G(y_1,y_2)\leq t$ for some integer $t$.
  Let $P_1[a, b]$ be an subpath of $P_1$ of length at least $\cf(G)^5 \cdot (t + 2)k$ that is disjoint from $S$.
  If $P_2$ is disjoint from $S$, then there is an $x_2 \to b$-path in $G - S$.
\end{restatable}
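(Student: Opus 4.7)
The strategy is to locate, for some vertex $v$ on the subpath $P_1[a,b]$, a short path $Q_v$ from $P_2$ to $v$ that avoids $S$; concatenating this with suitable segments of $P_2$ and $P_1[a,b]$ will then yield the desired $x_2\to b$ walk in $G-S$. Lemma~\ref{thm:bcl_pathdistance} supplies candidate short paths: for every vertex $v$ on $P_1$ there is some $u_v \in V(P_2)$ and a $u_v\to v$-path $Q_v$ of length at most $d := (\cf(G)-1)t + 2(\cf(G)-2)$. Because $P_2$ is an $x_2\to y_2$-path, any such $u_v$ is reachable from $x_2$ along $P_2$, so the concatenation $P_2[x_2,u_v] \circ Q_v \circ P_1[v,b]$ is a well-defined walk; provided its middle segment $Q_v$ avoids $S$, the entire walk lies in $G-S$, and extracting a simple path finishes the proof.

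Concretely, enumerate the vertices of $P_1[a,b]$ in order as $w_0 = a, w_1, \ldots, w_N = b$ with $N \geq \cf(G)^5(t+2)k$, and for each $i$ fix such a path $Q_i$ from $u_i \in V(P_2)$ to $w_i$ of length at most $d$. If some $Q_i$ is disjoint from $S$ we are done, so assume for contradiction that every $Q_i$ meets $S$. A pigeonhole argument then yields a vertex $s \in S$ lying on a set $I$ of at least $(N+1)/k$ indices $i$, and for each such $i$ the suffix of $Q_i$ from $s$ to $w_i$ certifies $\dist_G(s, w_i) \leq d$.

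The contradiction is extracted via the two bounded-circumference distance-distortion lemmas. Lemma~\ref{thm:bcl_distboundedcircum} upgrades each bound to $\dist_G(w_i, s) \leq (\cf(G)-1)d$ since $G$ is strong. Setting $i_1 = \min I$ and $i_2 = \max I$ and routing via $s$ yields $\dist_G(w_{i_1}, w_{i_2}) \leq \cf(G)\cdot d$, and Lemma~\ref{thm:bcl_distboundedcircumsquared} then bounds the genuine $P_1$-subpath from $w_{i_1}$ to $w_{i_2}$ by $(\cf(G)-1)^2 \cf(G) d \leq \cf(G)^3 d$. Since this subpath has length exactly $i_2 - i_1 \geq |I|-1 \geq (N+1)/k - 1$, we obtain $(N+1)/k \leq \cf(G)^3 d + 1$. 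Plugging in $d \leq \cf(G)(t+2)$ and $N+1 > \cf(G)^5(t+2)k$ reduces this to $\cf(G)^4(\cf(G)-1)(t+2) < 1$, which fails for every $\cf(G) \geq 2$; the residual case $\cf(G) \leq 1$ forces $G$ to be a single vertex (as $G$ is strong and loop-less), making the statement vacuous.

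The main obstacle is the tight bookkeeping of constants: each distortion step loses a factor of $\cf(G)$ (forward-to-backward distance) or $(\cf(G)-1)^2$ (distance to $P_1$-length), the pigeonhole step costs a factor of $k$, and $d$ itself scales as $\cf(G)(t+2)$. These losses multiply out to precisely the $\cf(G)^5(t+2)k$ threshold in the hypothesis, with essentially no slack. The key conceptual point is that if some $s \in S$ blocked many of the candidate cross-paths $Q_i$, then many well-separated vertices of $P_1[a,b]$ would simultaneously sit in a small neighbourhood of $s$, which is geometrically incompatible with them spanning a long subpath of $P_1$ in a bounded-circumference graph.
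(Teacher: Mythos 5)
Your proof is correct and follows essentially the same route as the paper's: both invoke Lemma~\ref{thm:bcl_pathdistance} to obtain short $P_2\to v$-paths for vertices $v$ on $P_1[a,b]$, and both combine Lemmas~\ref{thm:bcl_distboundedcircum} and~\ref{thm:bcl_distboundedcircumsquared} with a pigeonhole argument to show one such cross-path avoids $S$. The only cosmetic difference is that the paper pre-selects $k+1$ well-spaced anchor vertices and proves their cross-paths are pairwise disjoint, whereas you consider all vertices of $P_1[a,b]$ and argue by contradiction that a single $s\in S$ cannot block too many of them; the underlying geometric content and constant bookkeeping are the same.
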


\subsection{Representative Sets of Paths}
Last but not least we were able to obtain a nice self-contained result on so called representative sets of path.

\begin{restatable}{definition}{DefinitionRepresentativeSetOfPaths}
\label{def:k_representative_set_of_paths}
  Let $G$ be a digraph, $x, y \in V(G)$ and $k \in \mathbb{Z}_{\geq 0}$.
  A set $\mathcal{P}$ of $x \to y$-paths is a \emph{$k$-representative set of $s \to t$-paths}, if for every set $S \subseteq V(G)$ of size at most $k$ the following holds:
  If there is an $x \to y$-path in $G - S$ there is an $x \to y$-path $P \in \mathcal{P}$ that is disjoint from $S$.
\end{restatable}

On can think of a $k$-representative set of $x \to y$-paths as certificate whether or not a vertex set of size at most $k$ separates $y$ from $x$.
The goal is to find such a set with small size.
For strongly connected digraphs of bounded circumference we were able to obtain such a result.

\begin{restatable}{lemma}{restatecomputerepresentativesetofpathsforboundedcircum}
\label{thm:bcl_computerepresenetativeforboundedcf}
  Let $G$ be a strong digraph, let $x,y\in V(G)$, and let $k\in\mathbb N$.
  In time $\mathsf{cf}(G)^{\mathcal{O}(k^2\log k)}\cdot n^{\mathcal{O}(1)}$, we can compute a $k$-representative set $\mathcal P_{x,y,k}$ of $x\rightarrow y$-paths of size $\mathsf{cf}(G)^{\mathcal{O}(k^2\log k)}\cdot \log n$.
\end{restatable}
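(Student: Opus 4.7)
The plan is a divide-and-conquer driven by minimum $x\to y$-separators. First, compute a minimum $x\to y$-separator $C$ in $G$. If $|C|\geq k+1$, Menger's theorem yields $k+1$ internally vertex-disjoint $x\to y$-paths, and any deletion set $S$ of at most $k$ vertices misses at least one, so these $k+1$ paths already form a $k$-representative set. Otherwise $|C|\leq k$ and every $x\to y$-path crosses $C$ at some $c\in C$. For each $c\in C$, I recursively construct a $k$-representative set $\mathcal{P}_{x,c}$ of $x\to c$-paths in the subgraph induced by the vertices lying on some $x\to c$-walk, and analogously a set $\mathcal{P}_{c,y}$ of $c\to y$-paths; the output is $\mathcal{P}_{x,y,k}:=\bigcup_{c\in C}\{P\circ Q : P\in\mathcal{P}_{x,c},\,Q\in\mathcal{P}_{c,y}\}$, replacing a concatenation by its shortest $x\to y$-subpath if vertices repeat. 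Correctness: for any $S$ of size $\leq k$ not separating $y$ from $x$, take any $x\to y$-path $P$ in $G-S$; it crosses $C$ at some $c\notin S$, and the two halves $P[x,c]$, $P[c,y]$ live in their respective recursive subgraphs, so by induction each half has an $S$-avoiding representative, whose concatenation (suitably pruned) avoids $S$.

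To reach the target size $\mathsf{cf}(G)^{\mathcal{O}(k^2\log k)}\log n$, every recursion must shrink the relevant subgraph by a constant factor (giving depth $\mathcal{O}(\log n)$), while merging the two recursive outputs may only blow up the count by $\mathsf{cf}(G)^{\mathcal{O}(k^2\log k)}$ per level. For the depth I would force $C$ to meet a fixed shortest $x\to y$-path $P^\star$ near its midpoint: by Lemma~\ref{thm:bcl_distboundedcircumsquared} every $x\to y$-walk has length $\mathcal{O}(\mathsf{cf}(G)^2)\cdot\mathsf{dist}_G(x,y)$, and by Lemma~\ref{thm:bcl_distsecondpath} every such walk stays inside an $\mathcal{O}(\mathsf{cf}(G))$-tube around $P^\star$, so a midpoint-respecting separator roughly bisects the set of vertices lying on some $x\to y$-walk. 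For the per-level width, the branching factor is $|C|\leq k$, and I cut down on the number of kept concatenations using the important-separator enumeration of Proposition~\ref{thm:fptenumerationofimportantseperators}: keeping one representative per ``important signature'' of size $\leq k$ at each subproblem yields $\mathsf{cf}(G)^{\mathcal{O}(k^2\log k)}$ paths per level.

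The main obstacle is the balancing step itself: without the circumference bound, a minimum $x\to y$-separator can sit arbitrarily close to one endpoint and produce a wildly unbalanced split, driving recursion depth to $\Theta(n)$ and total size to $k^{\Theta(n)}$. The crucial leverage comes from Lemma~\ref{thm:bypassing_through_nearby_paths}: once the midpoint of $P^\star$ lies \emph{outside} every small $x\to y$-separator, the bypassing lemma lets nearby $x\to y$-paths substitute for each other along a long clean subpath, so only $\mathsf{cf}(G)^{\mathcal{O}(k^2\log k)}$ additional representative paths per level are needed to cover all relevant ``bypass patterns.'' When the midpoint does lie in a small separator, each recursive half has shortest-path distance at most $\mathsf{dist}_G(x,y)/2$, so depth is bounded by $\log\mathsf{dist}_G(x,y)=\mathcal{O}(\log n)$. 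Making this dichotomy work uniformly, and verifying that the signature bookkeeping really fits in $\mathsf{cf}(G)^{\mathcal{O}(k^2\log k)}$ paths per level, is the most delicate part of the argument; the base case of short $x\to y$-distance is handled by direct enumeration within an $\mathsf{cf}(G)^{\mathcal{O}(1)}$-wide tube around $P^\star$.
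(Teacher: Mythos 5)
The central difficulty with your proposal is the size bound, and the gap is not a technicality: a divide-and-conquer that concatenates recursive path families \emph{squares} the family size at each level. If $\mathcal{P}_{x,c}$ and $\mathcal{P}_{c,y}$ each contain $B$ paths, their cross-concatenation has $B^2$ members; summing over $|C|\le k$ choices of $c$ gives $k\cdot B^2$. Iterating this to depth $\log n$ yields a bound of the form $\mathsf{cf}(G)^{\Theta(2^{\log n})}=\mathsf{cf}(G)^{\Theta(n)}$, i.e.\ exponential in $n$, not $\mathsf{cf}(G)^{\mathcal{O}(k^2\log k)}\log n$. You gesture at ``one representative per important signature'' to deflate this, but no argument is offered for why a signature of $2^{\mathcal{O}(k)}$ size suffices; indeed, the representative set for a subgraph must work against \emph{every} $S$ of size $\le k$, and two sets $S,S'$ can hit completely different members of the family. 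Without a concrete pruning criterion that provably keeps the family small while preserving $k$-representativity, the recursion does not close. Additionally, you never justify that a minimum $x\to y$-separator can be forced near the midpoint of $P^\star$: minimum separators may cluster arbitrarily close to $x$, making one recursive half essentially the entire graph and the depth linear, and the tube/bypassing lemmas do not by themselves produce a balanced cut.

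The paper avoids both problems by decomposing \emph{linearly} rather than recursively. It fixes one guide path $R$, places anchor vertices $z_i$ at distance $d=2\cf(G)^4$ along $R$, and splits them into $k+1$ offset classes $Z^0,\dots,Z^k$; a counting argument (their Claim 1) shows some class $Z^{o_S}$ stays far from the deletion set $S$. Using Lemma~\ref{thm:bcl_distboundedcircumsquared}, every segment between consecutive anchors in $Z^{o_S}$ admits only $x'\to y'$-paths of length $\mathcal{O}(\cf(G)^7k)$, so Monien's Proposition~\ref{thm:bcl_smallrepresentativeset} yields a $k$-representative family of size $B=\cf(G)^{\mathcal{O}(k\log k)}$ \emph{per segment}, with no squaring. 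The crucial structural fact (their Claim 3) is that $S$ can intersect members of at most $2k$ of the segment families, because bounded circumference forbids a single vertex from being close to two far-apart segments. A $2k$-perfect hash family over the segments, of size $2^{\mathcal{O}(k)}\log n$, then lets one commit to a choice in each hit segment and ``default'' elsewhere, giving $(k+1)\cdot 2^{\mathcal{O}(k)}\log n\cdot B^{2k}=\cf(G)^{\mathcal{O}(k^2\log k)}\log n$ paths. In short, the $\log n$ factor comes from hashing, not from recursion depth, and the $k^2\log k$ exponent comes from $B^{2k}$, not from repeated concatenation. Your approach, as written, does not recover either of these mechanisms.
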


The above lemma is stated in a self-contained way, as we think this tool may be of independent interest.
We also derive a version fine-tuned to our needs:

\begin{restatable}{lemma}{restatestongplusW}
\label{lem:stongplusW}
  Let $G$ be a digraph, let $W\subseteq V(G)$ be a set for which $\cf(G-W)\le \ell$, and let $k\in\mathbb N$.
  Then in time $2^{\mathcal O(k\ell + k^2\log k)}\cdot n^{\mathcal O(1)}$, we can compute a collection $\mathcal{Q}$ of $|W|^22^{\mathcal O(k\ell + k^2\log k)}\log^2 n$ closed walks in $G$, each containing at least two members of $W$, such that the following holds: if $S\subseteq V(G)$ is a set of at most $k$ vertices in $G$ such that $\cf(G-S)\le \ell$ and $G-S$ has a strong component containing at least two vertices of $W$, then either there is a simple cycle of length at most $\ell$ containing at least two vertices of~$W$ or a closed walk in $\mathcal{Q}$  disjoint from $S$.
\end{restatable}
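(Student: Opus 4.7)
The plan is to build $\mathcal{Q}$ by processing ordered pairs of distinct $W$-vertices. For each $(w_i,w_j) \in W\times W$ with $w_i\ne w_j$, I would construct two $k$-representative sets (in the sense of Definition~\ref{def:k_representative_set_of_paths}): a set $\mathcal{P}_{ij}$ of $w_i\to w_j$-paths in $G$ and a set $\mathcal{P}_{ji}$ of $w_j\to w_i$-paths in $G$, each of size $2^{\mathcal{O}(k\ell+k^2\log k)}\log n$. Then I set
\[
  \mathcal{Q} \;:=\; \bigcup_{i\ne j}\bigl\{\,P\circ P'\;:\; P\in\mathcal{P}_{ij},\ P'\in\mathcal{P}_{ji}\,\bigr\}.
\]
Every member of $\mathcal{Q}$ is a closed walk visiting $w_i$ and $w_j$, hence it contains at least two vertices of $W$, and the bound $|W|^2\cdot 2^{\mathcal{O}(k\ell+k^2\log k)}\log^2 n$ on $|\mathcal{Q}|$ is immediate.

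For correctness, consider a deletion set $S$ satisfying the hypothesis, and a strong component $C$ of $G-S$ containing two vertices $w_i,w_j\in W$. Because $C$ is strong with $\cf(C)\le\ell$, both a $w_i\to w_j$-path and a $w_j\to w_i$-path exist inside $C$, and therefore inside $G-S$. By the $k$-representativeness of the two collections there are paths $P\in\mathcal{P}_{ij}$ and $P'\in\mathcal{P}_{ji}$ disjoint from $S$, and their concatenation is a closed walk in $\mathcal{Q}$ through $w_i$ and $w_j$ that avoids $S$, as required.

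The substantive work is to construct each $\mathcal{P}_{ij}$ at the stated size. Lemma~\ref{thm:bcl_computerepresenetativeforboundedcf} is the natural tool, but it requires the ambient graph to be strong and of circumference at most $\ell$, and $G$ satisfies neither in general. My bridge would be to decompose any hypothetical witness $w_i\to w_j$-path inside the (unknown) strong component $C$ at its $W$-occurrences into subpaths living in strong components of $G-W$; each such strong component has circumference at most $\ell$, so Lemma~\ref{thm:bcl_computerepresenetativeforboundedcf} applies directly and produces a representative collection of boundary-to-boundary paths of size $\ell^{\mathcal{O}(k^2\log k)}\log n$. To recombine the segments into full $w_i\to w_j$-paths I would apply color-coding to $W$ with $\mathcal{O}(k\ell)$ colors and iterate over compatible colorings, stitching per-component representative paths together by dynamic programming over color classes; this contributes the extra $2^{\mathcal{O}(k\ell)}$ factor.

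The hardest step will be to guarantee that a suitable witness inside $C$ can be chosen to visit only $\mathcal{O}(k\ell)$ distinct $W$-vertices, since otherwise the color-coding step does not deliver the promised size. This is precisely where the alternative "simple cycle of length at most $\ell$ through two vertices of $W$" in the lemma earns its keep: if no such cycle exists in $G$, then (since $\cf(C)\le\ell$) no cycle inside $C$ can carry two $W$-vertices at all, which pins the $W$-structure of $C$ down severely. Combined with the bypassing lemma (Lemma~\ref{thm:bypassing_through_nearby_paths}) applied inside $C$, this should let me reroute an arbitrary witness so that only $\mathcal{O}(k\ell)$ distinct $W$-checkpoints remain while still avoiding $S$, closing the argument.
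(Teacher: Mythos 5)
Your top-level plan — enumerate pairs of $W$-vertices, build a small representative family of $w_i\to w_j$-paths for each ordered pair, and concatenate to form closed walks — is reasonable in outline, and your correctness argument for the concatenation step is fine. But the proof has a genuine gap exactly where you flag "the substantive work": the construction of the representative families $\mathcal{P}_{ij}$, and the claim that an $\mathcal{O}(k\ell)$-checkpoint witness can always be found. Let me be concrete about why the sketch you give does not close it.

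First, a true $k$-representative set of $w_i\to w_j$-paths in $G$ (in the sense of Definition~\ref{def:k_representative_set_of_paths}, i.e.\ valid against \emph{every} $S$ of size $\le k$, with no circumference hypothesis on $G-S$) need not have size $2^{\mathcal{O}(k\ell+k^2\log k)}\log n$ at all; $G$ itself has no circumference bound, so Lemma~\ref{thm:bcl_computerepresenetativeforboundedcf} provides no leverage. You must settle for a weaker, "conditional" representativeness — valid only against those $S$ with $\cf(G-S)\le\ell$ and for which the short-cycle alternative does not apply — and that condition has to actually be exploited somewhere. Second, your decomposition step is not sound as stated: splitting a witness path at its $W$-occurrences gives subpaths that live in $G-W$, but such a subpath can cross many strong components of $G-W$, so it does not "live in a strong component" and Lemma~\ref{thm:bcl_computerepresenetativeforboundedcf} cannot be applied to it directly. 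You would have to split further at SCC boundaries of $G-W$, which introduces boundary vertices that are not in $W$ and are not covered by a coloring of $W$. Third, and most importantly, the bound of $\mathcal{O}(k\ell)$ distinct $W$-checkpoints on a reroutable witness is precisely the claim you do not prove, and there is no obvious reason it holds: the hypothesis "no simple cycle of length $\le\ell$ through two $W$-vertices" constrains \emph{cycles} but does not by itself bound how many $W$-vertices a $w_i\to w_j$-path inside $C$ must traverse, and the bypassing Lemma~\ref{thm:bypassing_through_nearby_paths} lets you hop between two nearby parallel paths, not delete $W$-visits from a path that is essentially forced through them.

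For contrast, the paper's proof avoids the checkpoint issue entirely by a minimality argument on \emph{closed walks}, not paths: among all closed walks in $G-S$ hitting at least two $W$-vertices, pick one $R$ minimizing $|W\cap V(R)|$ (then length). If $|W\cap V(R)|=2$, say $R$ hits $s,t\in W$, then $R$ already lives in $G-(W\setminus\{s,t\})$, a graph where $\cf(\,\cdot\,-\{s,t\})\le\ell$; the auxiliary Lemma~\ref{lem:stongplus2} is then invoked on this \emph{subgraph}, so the other $W$-vertices are deleted and never need to be tracked or color-coded. If $|W\cap V(R)|\ge 3$ and $R$ is a simple cycle, its length is $\le\ell$ (since $\cf(G-S)\le\ell$) and the stated short-cycle alternative fires; otherwise $R$ splits at a repeated vertex into two closed walks, one of which hits at least two but strictly fewer $W$-vertices, contradicting minimality. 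This is the step that makes the "exactly two $W$-vertices" case the only one needing the machinery, and it is what your sketch is missing. If you want to salvage your route, you would need to prove an analogue of this reduction (or otherwise rigorously bound the number of $W$-visits a suitable witness must make), and separately handle the segments that cross SCC boundaries of $G-W$ — at which point you essentially rederive the paper's two-lemma decomposition.
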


\section{Directed Long Cycle Hitting Set Algorithm}
\label{sec:bcl_thealgorithm}
In this section we will present our fixed-parameter algorithm for hitting all long cycles of the input digraph.
Formally, we wish to solve the following problem:
\begin{center}
  \framebox[\textwidth]{
    \begin{tabular}{rl}
      \multicolumn{2}{l}{\DLCHS} \hfill\textit{Parameter:} $k + \ell$.~\\
      \textit{Input:}      & A directed multigraph $G$ and integers $k,\ell\in\mathbb N$.\\
      \textit{Task:}   & Find a set $S$ of at most $k$ arcs/vertices such that $G - S$ has circumference at most $\ell$.\\
  \end{tabular}}
\end{center}

\subsection{Algorithm Outline}
Our algorithm will only solve the vertex variant of \DLCHS{} problem.
  This will suffice, as the arc deletion version can be reduced to the vertex deletion version in a parameter-preserving way, as we will show in \autoref{thm:bcl_reductionarctovertex} (c.f. \autoref{sec:boundedcyclelength}).

The algorithm performs a sequence of reductions to special cases of the original {\sc Bounded Cycle Length Vertex Deletion}.
All these sections are modular and just need the problem formulation and theorem at the end of the previous section.

The overall algorithm can be described as follows:
In the first section, \autoref{sec:compressionandcontraction}, we apply the standard technique of iterative compression to our problem to solve the easier problem where we are already given a solution $T$ and search for a smaller solution $S$.
This is further refined by a contraction argument such that~$T$ has at most one vertex in every strong component of~$G - S$.

In the following section, \autoref{sec:reductiontostrongsubgraph}, we use this to find a strong subgraph $G^\star$ of $G$ which contains exactly one vertex $t$ of $T$.
We then reduce our compression problem to finding a set~$S$ in $G^\star$ which intersects all long cycles (of length more than $\ell$) in $G^\star$ and is additionally some $t \to V_\text{out}$-separator for some appropriate vertex set $V_\text{out}$.
In \autoref{sec:findinghittingseparators} we then reduce this problem to finding all important $t \to V_\text{out}$-``cluster separators''.
The concept of cluster separators (which we introduce here)  allows us describe the structure of $S$ in every strong component of $G - t$.
In the reduction, we make use of an algorithm for the {\sc Directed Multiway Cut} problem on a certain digraph with specific terminal sets.
The {\sc Directed Multiway Cut} problem can be solved in time $2^{\mathcal O(p^2)}\cdot n^{\mathcal O(1)}$ to find solutions of size at most $p$.
On instances which we cannot phrase as a \textsc{Directed Multiway Cut} problem, we finally find our cluster separators in \autoref{sec:finding_important_cluster_separators} with the help of important $t \to V_\text{out}\cup V_\Omega$-separators, for some further vertex set~$V_\Omega$.

Some of our reductions construct several instances of the reduced problem such that the reduction holds for at least one of them.
These reductions allow us to make assumption about a hypothetical solution $S$.

The procedure will give us a solution candidate for every instance.
To check whether a candidate~$S$ is indeed a solution, we have to test if $|S| \leq k$ and if $G - S$ contains cycles of length more than~$\ell$.
This can be done by an algorithm of Zehavi~\cite{Zehavi2016}:
\begin{theorem}[\cite{Zehavi2016}]
\label{thm:checkforsolution}
  There is an algorithm that decides in time $2^{\mathcal{O}(\ell)} \cdot n^{\mathcal{O}(1)}$  whether a digraph~$G$ contains a cycle of length more than $\ell$.
\end{theorem}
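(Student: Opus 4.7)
The plan is to reduce detection of a cycle of length more than $\ell$ to a long-path problem with specified endpoints, and then apply color-coding techniques. The reduction is immediate: $G$ has a cycle of length more than $\ell$ if and only if there is an arc $(u,v) \in A(G)$ such that $G - (u,v)$ contains a simple $v \to u$ path of length at least $\ell$; combining such a path with the arc $(u,v)$ yields a simple cycle of length at least $\ell + 1$, and the converse is obvious. Iterating over the $\mathcal{O}(|A(G)|)$ arcs incurs only a polynomial overhead, so it suffices to decide, for given $s, t \in V(G)$, whether $G$ contains a simple $s \to t$ path of length at least $\ell$.

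For this long-path problem, I would apply the color-coding framework of Alon, Yuster, and Zwick: color $V(G)$ with $\ell + 1$ colors (initially uniformly at random, later derandomized) and look for a witnessing $s \to t$ path whose first $\ell + 1$ vertices are \emph{colorful}, i.e., have pairwise distinct colors. A standard dynamic program over states $(w, C)$, where $w \in V(G)$ is the current endpoint and $C \subseteq \{1, \ldots, \ell + 1\}$ is the set of colors used so far, enumerates colorful simple $s \to w$ paths of length $\ell$. For each such path $P$ one then checks in linear time whether the endpoint of $P$ can reach $t$ in $G$ with the interior of $P$ deleted; concatenating such a reachability witness with $P$ yields a simple $s \to t$ path of length at least $\ell$. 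Derandomization of the initial coloring uses a Naor--Schulman--Srinivasan $(n, \ell + 1)$-perfect hash family of size $2^{\mathcal{O}(\ell)} \log n$, constructible in $2^{\mathcal{O}(\ell)} \cdot n^{\mathcal{O}(1)}$ time.

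The main obstacle is that the dynamic program only records a color set $C$, not the actual vertex set of a colorful prefix, so it cannot directly certify disjointness from an unbounded continuation toward $t$; a naive attempt to color the continuation as well fails because the continuation may have length up to $n$, breaking the $2^{\mathcal{O}(\ell)}$ probability bound. This is resolved via a \emph{representative-family} argument: at each state $(w, C)$ one maintains a carefully chosen subfamily of $2^{\mathcal{O}(\ell)}$ colorful prefixes which together represent all possibilities with respect to admitting a disjoint continuation to $t$. The existence of such small representative families, together with efficient algorithms for maintaining them under the DP extension operation, is provided by the matroid-truncation machinery of Fomin, Lokshtanov, Panolan, and Saurabh; the technical refinements in Zehavi's paper package these ingredients into the claimed $2^{\mathcal{O}(\ell)} \cdot n^{\mathcal{O}(1)}$ deterministic running time.
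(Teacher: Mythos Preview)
The paper does not prove this statement; it is quoted from \cite{Zehavi2016} and used as a black box throughout. Your outline goes further than the paper itself and is broadly accurate: the reduction to a long $s\to t$-path problem via guessing an arc of the cycle is correct, and color-coding combined with representative families is indeed the toolkit behind the cited result.

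The one place your sketch stops short is precisely the crucial one. You correctly flag that the DP over color sets cannot certify disjointness from an unbounded suffix, and you then assert that representative families of size $2^{\mathcal O(\ell)}$ resolve this. But the generic Fomin--Lokshtanov--Panolan--Saurabh bound gives a $q$-representative subfamily of size $\binom{p+q}{p}$, and here the natural choice of $q$ is the length of the suffix toward $t$, which is unbounded---so the theorem as stated does not by itself yield $2^{\mathcal O(\ell)}$. The actual argument needs one more idea: a minimality argument showing that if no stored length-$\ell$ prefix admits a disjoint continuation, then from the intersection one can exhibit a \emph{strictly shorter} long cycle, contradicting the choice of a shortest witness. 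With that step in hand, $q=\ell$-representativity already suffices, and the $\binom{2\ell}{\ell}=2^{\mathcal O(\ell)}$ bound kicks in. (FLPS in fact run this minimality argument directly on the cycle problem rather than through your path reduction, which makes the step cleaner; via the path reduction the shortcut you get is a cycle rather than a shorter $s\to t$-path, so the induction has to be phrased globally over all choices of the guessed arc.) As a pointer to the literature your summary is accurate and matches how the present paper treats the theorem, but as a standalone proof it leaves the decisive step as an assertion.
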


\subsection{Compression and Contraction}
\label{sec:compressionandcontraction}
The goal of this subsection is to get an existing solution $T$ for which we have to find a disjoint solution~$S$ of size less than $|T|$.
For this we use the standard techniques of iterative compression and disjoint solution.
Additionally we will contract some closed walks in $G$ such that every strong component of $G - S$ contains at most one vertex of $T$.

%
%

We first apply iterative compression to our instance:
For this, we label the vertices of the input digraph~$G$ arbitrarily by $v_1,\hdots,v_n$, and set $G_i = G[\{v_1,\hdots,v_i\}]$.
To find a solution of size at most~$k$, we start with the digraph $G_1$ and the solution $S_1 = \{v_1\}$.
For $i \geq 2$, as long as $|S_{i-1}| < k$, we can set $S_i = S_{i-1} \cup \{v_i\}$ as a solution (of size at most $k$) for $G_i$ and continue.
If $|S_{i-1}| = k$, set $T_i = S_{i-1} \cup \{v_i\}$ is a solution for $G_i$ of size $k+1$.
Given such a pair $(G_i,T_i)$, we then wish to solve the following ``compression variant'' of our problem:
%
\emph{Given a digraph $G$ and a solution $T$ of size $k+1$, find a solution $S$ of size at most $k$ or prove that none exists.}

If there is an algorithm $\mathcal A'$ to solve this problem, we can call it on $(G_i, T_i)$ to obtain a solution~$S_i$ of size at most $k$ or find that $G_i$ does not have a solution of size $k$, but then neither has $G = G_n$.

\begin{lemma}[safety of iterative compression]
\label{thm:safetyofiterativecompression}
  Any instance of {\sc Directed Long Cycle Hitting Set} of size $n$ can be solved by $n$ calls to an algorithm for the problem's compression variant.
\end{lemma}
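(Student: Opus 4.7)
The plan is to formalize the iterative compression scheme already sketched in the paragraph preceding the lemma. Enumerate $V(G) = \{v_1,\dots,v_n\}$ arbitrarily and set $G_i = G[\{v_1,\dots,v_i\}]$. I would prove by induction on $i$ that either we maintain a solution $S_i \subseteq V(G_i)$ with $|S_i|\le k$ and $\cf(G_i - S_i)\le \ell$, or we correctly terminate with a ``no''-answer for the original instance. The base case $i=1$ is trivial: $G_1$ has only a single vertex and contains no cycle at all, so $S_1 := \emptyset$ works.

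For the inductive step, given $S_{i-1}$, set $T_i := S_{i-1}\cup\{v_i\}$. Since $G_i - T_i$ is an induced subgraph of $G_{i-1} - S_{i-1}$, its circumference is still at most~$\ell$, so $T_i$ is a solution for $G_i$ of size at most $k+1$. If $|T_i|\le k$, take $S_i := T_i$ without invoking the oracle. Otherwise $|T_i| = k+1$, and I invoke the compression oracle~$\mathcal A'$ on $(G_i, T_i)$: either it returns a set $S_i$ of size at most~$k$ with $\cf(G_i - S_i)\le \ell$ (which I keep as the next candidate), or it certifies that no such set exists. In the second case I stop and answer ``no'' for $G$. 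Correctness of this termination rests on the monotonicity observation that any size-$\le k$ solution $S$ for $G$ induces a size-$\le k$ solution $S\cap V(G_i)$ for $G_i$: every cycle of $G_i$ is a cycle of $G$ that lies entirely in $V(G_i)$, so if $S$ hits it then so does $S\cap V(G_i)$.

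Counting calls is immediate: we invoke $\mathcal A'$ at most once per index $i \in \{2,\dots,n\}$, which gives at most $n-1 \le n$ oracle calls. If the algorithm never terminates early, then $S_n$ is a valid solution for $G = G_n$ of size at most $k$, and we return it. There is no real obstacle in this argument; the only subtleties worth recording in the final write-up are the inductive invariant (which makes the reduction from $(G,k)$ to the compression instance $(G_i,T_i)$ parameter-preserving, since $|T_i|$ is exactly $k+1$ whenever the oracle is consulted) and the monotonicity property used to justify the ``no''-termination step.
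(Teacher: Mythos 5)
Your proposal is correct and follows essentially the same iterative-compression scheme that the paper sketches in the paragraph preceding the lemma (the paper gives no separate proof). The only differences are cosmetic: you start from $S_1=\emptyset$ rather than $\{v_1\}$, and you spell out explicitly the inductive invariant and the monotonicity argument that justify the ``no''-termination, which the paper leaves implicit.
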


%

The next step is to ask for a solution $S_i$ for $G_i$ of size $k$ that is disjoint from $T_i$.
This assumption can be made by guessing the intersection $S_i\cap T_i$, and fix those vertices for any solution of $G_i$ (by deleting these vertices from~$G_i$ and decreasing the budget $k$ by the number of deleted vertices).
For each guess we create a new instance where we assume that $S_i$ and~$T_i$ are disjoint.
Since $T$ has $k+1$ elements, we produce at most~$2^{k+1}$ instances.

The disjoint compression variant of the problem is the same as the problem's compression variant, except that the sought solution should be disjoint from $T$.

\begin{lemma}[adding disjointness]
\label{thm:addingdisjointness}
  Instances of the compression variant of {\sc Directed Long Cycle Hitting Set} can be solved by $\mathcal{O}(2^{|T|})$ calls to an algorithm for the problem's disjoint compression variant.
\end{lemma}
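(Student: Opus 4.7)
The plan is to apply the standard ``guess the intersection'' trick. Given an instance $(G, T, k)$ of the compression variant with $|T| = k+1$, I would enumerate all $2^{|T|}$ subsets $T' \subseteq T$, and for each such $T'$ interpret it as the guess for the intersection $S \cap T$, where $S$ is a hypothetical solution of size at most $k$ disjoint from the remainder $T \setminus T'$.

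For each guess $T'$, construct the reduced instance $(G - T',\; T \setminus T',\; k - |T'|)$ and feed it to the algorithm for the disjoint compression variant; if it returns a solution $S''$, output $S := T' \cup S''$, otherwise move on to the next guess. Guesses with $|T'| > k$ can be discarded immediately since the resulting budget would be negative. Before invoking the disjoint algorithm I would check that the reduced instance is well-formed: $T \setminus T'$ is indeed a solution for $G - T'$, because $(G - T') - (T \setminus T') = G - T$, which by hypothesis has circumference at most $\ell$; moreover $|T \setminus T'| = (k - |T'|) + 1$, so the given solution is exactly one larger than the new budget, matching the input specification of the disjoint compression variant.

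For correctness, in one direction any $S$ returned by the procedure satisfies $|S| \le |T'| + (k - |T'|) = k$ and $G - S = (G - T') - S''$ has circumference at most $\ell$ by the guarantee of the disjoint algorithm. Conversely, if a solution $S$ of size at most $k$ exists at all, then for the specific guess $T' := S \cap T$ the set $S'' := S \setminus T$ is disjoint from $T \setminus T'$, has size at most $k - |T'|$, and verifies $(G - T') - S'' = G - S$; so the disjoint algorithm must succeed on at least that guess. Over all subsets of $T$ we perform $2^{|T|} = \mathcal{O}(2^{|T|})$ calls.

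There is no real combinatorial obstacle here; this is pure bookkeeping whose only content is the correctness of the reduced instance. The one point that deserves care is verifying that the instance handed to the disjoint algorithm still meets its input specification (a solution whose size is exactly one more than the requested budget), which is handled by the observation about $T \setminus T'$ above.
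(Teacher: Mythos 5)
Your proof is correct and follows the same ``guess the intersection $S \cap T$'' approach as the paper: enumerate all $2^{|T|}$ subsets $T'\subseteq T$, delete $T'$, decrease the budget, and invoke the disjoint algorithm. You additionally spell out the bookkeeping showing that the reduced instance meets the disjoint variant's input specification (namely that $T\setminus T'$ is a solution for $G - T'$ of size exactly one more than the new budget $k-|T'|$), which the paper leaves implicit.
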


So henceforth we will consider the following problem:
Given a digraph $G$, integers $k, \ell \in\mathbb N$ and a set $T \subset V(G)$ of size $k+1$ such that $\cf(G - T) \leq \ell$, the task is to find a set $S$ disjoint from $T$ of size $|S| < |T|$ for which $\cf(G - S) \leq \ell$.

The last reduction in this section is to give $T$ a particular nice structure with respect to~$S$.
Namely, we want to achieve that every strong component contains at most one vertex of~$T$.

\begin{definition}
  Let $G$ be a digraph, $\ell \in \mathbb{N}$ and $T \subset V(G)$ be some subset of vertices.
  A vertex set $U \subseteq V(G) \setminus T$ is called \emph{isolating long cycle hitting set} (with respect to $T$) if $\cf(G - U) \leq \ell$ and every strong component contains at most one vertex of $T$.
\end{definition}

For transforming our solutions to isolating long cycle hitting sets we try to contract closed walks containing several vertices of $T$.
To ensure that this does not harm the circumference or our solution we use the following lemma:

\begin{lemma}
\label{thm:bcl_smallorlargecycle}
  Let $G$ be a digraph and let $X\subseteq V(G)$ be such that $G[X]$ is strong and \mbox{$\mathsf{cf}(G[X])\leq\ell$}.
  Suppose that the following two properties hold:
  \begin{compactenum}
    \item[(P1)] Every cycle of $G$ has length at most $\ell$ or length at least $\ell^2$.
    \item[(P2)] For any $a,b\in X$ there can not be both an $a\rightarrow b$-path $P_{ab}$ of length at least $\ell$ in $G[X]$ and a $b\rightarrow a$-path~$P_{ba}$ of length at most $\ell$ in $G - (X \setminus \{a,b\})$.
  \end{compactenum}
  Let $G/X$ be the digraph obtained by contracting $X$ to a single vertex $x$.
  \begin{compactenum}
    \item If $\mathsf{cf}(G - S) \leq \ell$ for some $S\subseteq V(G)\setminus X$, then $\mathsf{cf}(G/X - S)\leq \ell$.
    \item If $\mathsf{cf}(G/X - S')\leq \ell$ for some $S'\subseteq V(G/X)\setminus\{x\}$, then $\mathsf{cf}(G - S')\leq\ell$.
  \end{compactenum}
\end{lemma}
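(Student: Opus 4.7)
For Part 1, my plan is to lift any cycle $C$ in $G/X - S$ to a cycle in $G - S$ of length at least $|C|$, exploiting strong connectivity of $G[X]$ to bridge the contracted vertex. Specifically, if $x \in V(C)$ then $C = (x, v, \dots, u, x)$, and the arcs $(x,v)$ and $(u,x)$ in $G/X$ correspond to arcs $(b,v)$ and $(u,a)$ in $G$ for some $a, b \in X$. A simple $a \to b$-path $\pi$ in $G[X]$ internally disjoint from $V(C)\setminus\{x\}$ (which is possible since $V(C)\setminus\{x\}$ avoids $X$) yields a simple cycle in $G - S$ of length $|C| + |\pi|$; hence $|C| \leq \cf(G - S) \leq \ell$. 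If $x \notin V(C)$, then $C$ already lies inside $G - S$.

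For Part 2, I decompose a cycle $C$ in $G - S'$ into $k$ maximal $X$-subpaths $P_j\colon a_j \to b_j$ of length $p_j$ interleaved with $k$ maximal outside subpaths $Q_j\colon b_j \to a_{j+1}$ of length $r_j$ (indices cyclic). The cases $C \cap X = \emptyset$ and $C \subseteq X$ give $|C| \leq \ell$ directly. Otherwise, contracting $C$ yields a closed walk in $G/X - S'$ that visits $x$ exactly $k$ times; between consecutive visits it traces a simple cycle at $x$ of length $r_j$ in $G/X - S'$, so $r_j \leq \ell$ for each $j$. For $k = 1$, property (P2) applied to $(a_1, b_1)$ --- with $Q_1$ playing the role of the outside $b_1 \to a_1$-path of length $r_1 \leq \ell$ --- forces $p_1 < \ell$, whence $|C| = p_1 + r_1 < 2\ell \leq \ell^2$, and (P1) then yields $|C| \leq \ell$.

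The hard case is $k \geq 2$. My plan is to choose a cycle $C$ of length exceeding $\ell$ with $k$ minimal (so $|C| \geq \ell^2$ by (P1)) and aim for a contradiction by constructing a cycle $C^\star$ in $G - S'$ with fewer than $k$ maximal $X$-subpaths but of length still exceeding $\ell$. Property (P2) applied to the wrap-around pair $(a_1, b_k)$ --- using the outside path $Q_k\colon b_k \to a_1$ of length $r_k \leq \ell$ --- gives a short $a_1 \to b_k$-path $\rho$ in $G[X]$, namely $|\rho| < \ell$. I plan to use $\rho$, together with strong connectivity of $G[X]$, to shortcut the long segment $P_1 Q_1 P_2 \cdots P_k$ of $C$ so that two $X$-intervals of $C$ merge into one, yielding a closed walk $\tilde C$ with $k-1$ maximal $X$-subpaths. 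The distance bounds in strong digraphs of bounded circumference (\autoref{thm:bcl_distboundedcircum} and \autoref{thm:bcl_distboundedcircumsquared}) let me control the length of $\rho$ relative to the replaced segment, so that the arithmetic $|\tilde C|\ge|C|-r_k+|\rho|\ge \ell^2-\ell$ still places $\tilde C$ above the threshold $\ell$.

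The main obstacle is that $\rho$ may overlap internally with the $P_j$'s, so $\tilde C$ need not be a simple cycle. Extracting a simple sub-cycle $\tilde C'$ from $\tilde C$ does not obviously preserve length above $\ell$. The plan is to argue as follows: by (P1), every simple cycle in $G - S'$ has length $\leq \ell$ or $\geq \ell^2$; if some simple sub-cycle of $\tilde C$ has length $\geq \ell^2$, it has at most $k - 1$ $X$-intervals (being a sub-structure of $\tilde C$), contradicting the minimality of $k$. Otherwise every simple sub-cycle has length $\leq \ell$, and $\tilde C$ decomposes into many short simple cycles, from which one should derive a numerical contradiction using $|\tilde C| \geq \ell^2 - \ell$ together with the bounded-circumference structure of $G[X]$. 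This careful combinatorial interplay of (P1), (P2), and the strong-connectivity distance lemmas is the technical heart of the argument.
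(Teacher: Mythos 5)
Your Part~1 argument is correct and is essentially the paper's: lift a cycle of $G/X - S$ that passes through $x$ to a cycle of $G - S$ by stitching in a path $\pi$ of $G[X]$, and observe the lift is at least as long. Your $k = 1$ case of Part~2 is also fine (and the degenerate case $a_1 = b_1$ resolves already from $r_1 = |C| \leq \ell$).

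The $k \ge 2$ case, however, has a genuine gap, and it starts with a direction error. Applying (P2) to $(a_1, b_k)$ with $Q_k\colon b_k \to a_1$ as the short outside path forbids any \emph{long} $a_1 \to b_k$-path inside $G[X]$; so the short path $\rho$ you obtain runs $a_1 \to b_k$, which is the \emph{same} direction as the long arc $P_1 Q_1 \cdots P_k$ of $C$. Replacing that long arc with $\rho$ produces the closed walk $\rho \circ Q_k$ of length $|\rho| + r_k < 2\ell$, i.e.\ a \emph{short} cycle — exactly what you do not want. To instead merge two $X$-intervals by replacing $Q_k$ you would need a $b_k \to a_1$-path in $G[X]$; strong connectivity supplies one, but (P2) says nothing about its length, so the stated arithmetic $|\tilde C| \ge |C| - r_k + |\rho|$ does not correspond to any walk you can actually build. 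The remaining plan — extract a simple sub-cycle of $\tilde C$, then ``derive a numerical contradiction'' from a decomposition into short cycles — is a hope, not an argument, and you say so yourself; and even the ``sub-structure of $\tilde C$ has $\le k-1$ $X$-intervals'' claim is unjustified, since $\rho$ can weave through the $P_j$'s.

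The paper sidesteps all of this by choosing a different minimality criterion: pick $C$ to minimize $|V(C) \setminus X|$, not the number of outside segments. With that choice, the key move is surgical: take one outside segment $P$ of $C$ (of length at most $\ell$, else contraction already yields a long cycle in $G/X - S'$), delete a single internal vertex $v$ of $P$, and set $G^\star = G[(X \cup V(C)) \setminus \{v\}]$. Minimality forces $\cf(G^\star) \le \ell$, yet the remainder of $C$ (of length at least $\ell^2 - \ell$) still lives in $G^\star$, as does all of $G[X]$. Now \autoref{thm:bcl_distboundedcircum}, applied inside $G^\star$ to that long remainder together with a path between the endpoints of $P$ inside $G[X]$, forces a $G[X]$-path of length $\ge \ell$ which, paired with $P$ (length $\le \ell$, internally outside $X$), violates (P2). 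No induction on the number of $X$-intervals, no shortcut walk, and no need to recover a simple cycle from a tangled closed walk. This is the idea your proposal is missing.
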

\begin{proof}
  For Statement 1, suppose that $G/X - S$ has a cycle $C$ of length more than $\ell$.
  If $C$ does not go through~$x$, then $C$ is a cycle of $G$ disjoint from $S$.
  Otherwise, if $C$ goes through~$x$, then the arcs of~$C$ correspond to a walk of $G$ going from some vertex $x_1\in X$ to a vertex $x_2\in X$ and having length more than~$\ell$.
  If $x_1 = x_2$ then this walk is a cycle of length more than $\ell$ in $G - S$, a contradiction.
  Suppose therefore that $x_1\not=x_2$, in which case this walk is a simple path $P$.
  As $G[X]$ is strong, there is an $x_2\rightarrow x_1$-path $Q$ in $G[X]$.
  The paths $P$ and~$Q$ create a cycle in~$G$ that is disjoint from $S$ and has length more than $\ell$, a contradiction.
  
  For Statement 2, suppose that $G - S'$ has a cycle $C$ of length greater than $\ell$.
  Let us choose~$C$ such that it has the minimum number of vertices outside $X$.
  By assumption, cycle $C$ cannot be fully contained in $X$.
  If $C$ is disjoint from $X$, then $C$ is a cycle of $G/X$ disjoint from $S'$, a contradiction.
  If~$C$ contains exactly one vertex of $X$, then there is a corresponding cycle~$C'$ in the contracted digraph with the same length and disjoint from $S'$, a contradiction.
  Assume therefore that~$C$ contains more than one vertex of $X$; let $P$ be an $x_1\rightarrow x_2$-subpath of $C$ with both endpoints in~$X$ and no internal vertex in $X$.
  If $P$ has length more than~$\ell$, then there is a corresponding cycle~$C'$ in the contracted digraph with the same length and disjoint from~$S'$, a contradiction.
  Let~$v$ be an arbitrary internal vertex of $P$ and let $G^\star = G[X\cup V(C)\setminus\{v\})]$.
  By the minimality of the choice of~$C$, we have $\mathsf{cf}(G^\star)\leq \ell$.
  As $G^\star[X] = G[X]$ is strong, it contains an $x_2\rightarrow x_1$-path~$P_1$.
  Also, the subpath $P_2$ of $C$ from $x_1$ to $x_2$ is in $G^\star$.
  By property~(P1), the length of $C$ is at least $\ell^2$, hence $|P_2| = |C| - |P|\geq \ell^2 - \ell$.
  Thus, \autoref{thm:bcl_distboundedcircum} implies that $|P_1|\geq |P_2|/(\ell - 1)\geq \ell$.
  However, this means that~$P$ and $P_1$ contradict property (P2).
\end{proof}

To get candidates for our strong subgraph $G[X]$ we make use of several techniques including important separators and representative sets of paths.
The technical details can be found in \autoref{sec:bcl_representativesetsofpaths}.
The result can be summarized in the following lemma:

\restatestongplusW*

We now combine \autoref{lem:stongplusW} and \autoref{thm:bcl_smallorlargecycle}.
Ideally, we would like to compute $\mathcal{Q}$ as in \autoref{lem:stongplusW} for our digraph $G$ and vertex set $T$ and use the walks inside $\mathcal{Q}$.
Alas, our set $\mathcal{Q}$ only contains a closed walk connecting two vertices of $T$ in $G - S$ (given such a walk exists) if we guarantee that there is no cycle of length at most $\ell$ containing at least two vertices of $T$ which is disjoint from $S$.
Therefore, we have to check for such cycles beforehand.
Also, we cannot use \autoref{thm:bcl_smallorlargecycle} directly, as the second condition may not be fulfilled.
We handle both issues via the following lemma:

\begin{lemma}
\label{thm:bcl_unifyalg}
  Let $(G, k, \ell, T)$ be an instance of the disjoint compression variant of {\sc Directed Long Cycle Hitting Set}.
  There is an algorithm that in time $2^{\mathcal{O}(\ell k^2\log k)} \cdot \polyn$ branches in $|T|^2 2^{\mathcal{O}(\ell k^2\log k)}\log^2 n$ directions.
  If $(G, k, \ell, T)$ did not already contain an isolating long cycle hitting set, one of the branches reduces either the parameter $k$ (by identifying some vertex in~$S$) or the number of vertices in~$T$.
\end{lemma}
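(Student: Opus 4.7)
My plan is to combine \autoref{lem:stongplusW} with \autoref{thm:bcl_smallorlargecycle}: the former produces candidate subgraphs $X$ that, if disjoint from the hypothetical solution $S$, would put two vertices of $T$ in a common strong component of $G-S$, while the latter tells us when contracting such an $X$ is a safe reduction. The two obstacles explicitly flagged just before the lemma are that (i) \autoref{lem:stongplusW} only yields a useful walk once we know there is no short cycle through two $T$-vertices disjoint from $S$, and (ii) condition~(P2) of \autoref{thm:bcl_smallorlargecycle} need not hold on the candidate. Both are resolved by adding branching steps on top of the main contraction step, and an initial phase takes care of~(P1).

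Concretely, the algorithm I propose has four phases. \emph{Phase~1} enforces~(P1) on $G$: while $G$ contains a cycle $C$ with $\ell<|C|<\ell^2$ (detected via color coding on top of \autoref{thm:checkforsolution}), every solution must hit $C$, so we branch over the at most $\ell^2$ choices of a vertex of $C$ to place into $S$, reducing $k$ each time. \emph{Phase~2} searches, for each pair $\{u,v\}\subseteq T$, for a simple cycle of length $\le\ell$ through both $u$ and $v$ by color coding, collecting them into a family $\mathcal{C}$. \emph{Phase~3} runs \autoref{lem:stongplusW} with $W=T$ to obtain the collection $\mathcal{Q}$ of closed walks. \emph{Phase~4} then, for every candidate $X\in\{V(C):C\in\mathcal{C}\}\cup\{V(W):W\in\mathcal{Q}\}$, opens a ``contract $X$'' branch. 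In such a branch we verify that $G[X]$ is strong (immediate, since $X$ is the vertex set of a closed walk) and that $\cf(G[X])\le\ell$ via \autoref{thm:checkforsolution}, discarding the branch when either check fails because then $X$ cannot be disjoint from $S$. We then check~(P2): if it fails on some $a,b\in X$ with paths $P_{ab}$ of length $\ge\ell$ in $G[X]$ and $P_{ba}$ of length $\le\ell$ in $G-(X\setminus\{a,b\})$, the cycle $P_{ab}\circ P_{ba}$ has length $>\ell$, and under the branch's assumption $S\cap X=\emptyset$ the solution must hit an internal vertex of $P_{ba}$, so we sub-branch over the at most $\ell$ internal vertices of $P_{ba}$, reducing $k$. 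If all conditions hold, \autoref{thm:bcl_smallorlargecycle} applies and contracting $X$ to a single vertex $x$ yields the reduced instance with $T'=(T\setminus X)\cup\{x\}$ and $|T'|<|T|$. Because a cycle $X\in\mathcal{C}$ is not guaranteed to avoid $S$, I also add, for each $C\in\mathcal{C}$, one sub-branch per vertex of $V(C)$ that fixes it in $S$.

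For correctness, assume $S$ is a hypothetical solution with $|S|\le k$, $S\cap T=\emptyset$, $\cf(G-S)\le\ell$, and some strong component of $G-S$ contains two vertices of $T$ (otherwise $S$ is already an isolating solution). After Phase~1 the graph satisfies~(P1). By \autoref{lem:stongplusW} one of two alternatives holds: either (a) $G$ contains a simple cycle of length $\le\ell$ through two vertices of $T$, which Phase~2 places into $\mathcal{C}$, or (b) some walk in $\mathcal{Q}$ is disjoint from $S$. In alternative~(a) the correct branch is either the contract branch on that cycle (if $S$ avoids it) or one of the sub-branches fixing a vertex of the cycle in $S$. In alternative~(b) the corresponding $X=V(W)$ is disjoint from $S$ so $\cf(G[X])\le\cf(G-S)\le\ell$; if~(P2) holds we contract and reduce $|T|$, otherwise the forced intersection $S\cap V(P_{ba})\ne\emptyset$ is caught by the (P2)-sub-branching. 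In every case one of the branches reduces $k$ or $|T|$ as required.

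The main obstacle I anticipate is the constructive handling of~(P2): I must algorithmically exhibit the witnessing pair $(a,b)$ together with the short path $P_{ba}$ outside $X$, which I intend to do by, for every $a,b\in X$, computing a shortest $b\to a$-path in $G-(X\setminus\{a,b\})$ and testing whether $G[X]$ also contains an $a\to b$-path of length $\ge\ell$, and by showing that any such violation contributes only an $\mathcal{O}(\ell)$ factor to the branching. The branching count is $|\mathcal{C}|\cdot(\ell+1)+|\mathcal{Q}|\cdot(\ell+1)=|T|^2\cdot 2^{\mathcal O(\ell k^2\log k)}\log^2 n$ as claimed, while the time is dominated by Phase~1 color coding, \autoref{lem:stongplusW} in Phase~3, and the per-candidate checks, all within $2^{\mathcal O(\ell k^2\log k)}\cdot\polyn$.
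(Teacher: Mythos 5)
Your proposal follows the paper's proof essentially step for step: detect short cycles through two $T$-vertices (branching on whether $S$ hits one), invoke \autoref{lem:stongplusW} to generate the candidate walks $\mathcal{Q}$, enforce (P1) by branching on the vertices of any medium-length cycle, handle a (P2) violation by branching on the internal vertices of the short back-path $P_{ba}$, and otherwise contract via \autoref{thm:bcl_smallorlargecycle} to decrease $|T|$. The only cosmetic differences are that you collect all pair-wise short cycles into a family $\mathcal{C}$ and always compute $\mathcal{Q}$ (the paper uses one short cycle if it exists and otherwise computes $\mathcal{Q}$), and that your ``while'' framing of Phase~1 should really be a single branching step since every branch there already reduces $k$ and terminates; neither affects correctness or the claimed bounds.
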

\begin{proof}
  Assume $(G, k, \ell, T)$ does not have an isolating long cycle hitting set, i.e. for every solution~$S$ one of the components of $G - S$ contains to vertices of $T$.
  First, we check whether there is a  cycle of length at most $\ell$ visiting at least two vertices of $T$.
  This can be done by standard color-coding techniques in time $2^{\mathcal{O}(\ell)} \cdot n^{\mathcal{O}(1)}$.
  If we found such a cycle $C$ it either intersects $S$ or is disjoint from it.
  We branch in $|C| + 1$ directions.
  In the first $|C|$ branches we choose a vertex of $C$ and delete it, as we guess it to be in $S$.
  For these branches we can decrease $k$ by one and are done.
  In the remaining branch we may assume that~$C$ is disjoint from $S$.
  We set $\mathcal{Q} = \{C\}$ and continue with our algorithm.
  If there is no such cycle, we compute~$Q$ with the algorithm from \autoref{lem:stongplusW}.

  Now we know that there is a closed walk $C \in \mathcal{Q}$ that contains a least two vertices of $T$ and is disjoint from $S$.
  By definition, this walk lies inside one strong component of $G - S$.
  We branch for every $C \in \mathcal{Q}$ and assume in the following we have picked the right walk $C$.

  Now we make sure the conditions of \autoref{thm:bcl_smallorlargecycle} are fulfilled.
  By the same color-coding techniques as above we can detect cycles $C$ of length $\ell'$ with $\ell < \ell' \leq \ell^2$.
  We know that one of the vertices of~$C$ has to be in $S$ and we branch on deleting one of them.

  Our algorithm then checks for every $a, b \in V(C)$ whether an $a \rightarrow b$-path $P_{ab}$ in $G[V(C)]$ of length at least $\ell$ exists and whether a $b \rightarrow a$-path $P_{ba}$ in $G - (V(C) \setminus \{a,b\})$ of length at most $\ell$ exists.
	
  If for some $a, b \in V(C)$ both paths exists, the closed walk $W = P_{ab} \circ P_{ba}$ is in fact a cycle, as the paths only intersect in $a$ and $b$.
  As $P_{ab}$ has length at least $\ell$, the cycle $W$ has to be intersected by $S$.
  As $V(P_{ab}) \subset V(C) \subset V(G) \setminus S$, we know that $S$ has to intersect $V(P_{ba})$, which has size at most $\ell$.
  We branch on deleting a vertex of $V(P_{ba})$ and reducing $k$ by one (as we have guessed a vertex of $S$).
	
  Otherwise, we have no such paths for any $a, b \in V(C)$.
  But then we can apply \autoref{thm:bcl_smallorlargecycle} to contract~$C$ to a single vertex $v_C$.
  By setting $G' = G/V(C)$ and $T' = (T \setminus V(C)) \cup \{v_C\}$ we obtain a new instance $(F', T', k, \ell)$ of our disjoint reduction problem which is guaranteed to have the same solution as the original instance.
  For the correct branch $C$, we know that $|T \cap V(C)| > 1$ and these vertices were in the same strong component of $G - S$.
  Therefore, we reduced the size of $T$ by at least one.
	
  The number of branches is dominated by the number of walks in $\mathcal{Q}$ times $\ell$, as we might have to branch on deleting vertices from $V(P_{ba})$.
\end{proof}

\begin{corollary}
\label{cor:isolating_t}
  Let $(G, k, \ell, T)$ be an instance of the disjoint compression variant of {\sc Directed Long Cycle Hitting Set}.
  There is an algorithm that in time $2^{\mathcal{O}((k + |T|) \ell k^2\log k)}\cdot \polyn$ generates $2^{\mathcal{O}((k + |T|) \ell k^2\log k)}\cdot\log^{2|T|} n$ instances $(G_i, k_i, \ell, T_i)$ with $|V(G_i)| \leq |V(G)|, k_i \leq k$ and\linebreak $|T_i| \leq T$ such that if $(G, k, \ell, T)$ has a solution, one of the instances $(G_i, k_i, \ell, T_i)$ has an isolating hitting set of size at most $k_i$.
\end{corollary}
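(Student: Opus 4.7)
The plan is to build a branching tree by iteratively applying \autoref{thm:bcl_unifyalg}. The root will be the given instance $(G, k, \ell, T)$. At each node with current instance $(G', k', \ell, T')$, I first test whether $(G', k', \ell, T')$ already admits an isolating long cycle hitting set of size at most $k'$; if so, I declare the node a leaf and add the instance to the output list. Otherwise, I invoke \autoref{thm:bcl_unifyalg}, which generates at most $|T'|^2 2^{\mathcal{O}(\ell k'^2\log k')}\log^2 n$ child instances with the guarantee that, provided $(G', k', \ell, T')$ is reachable from a solving instance along the correct branch, at least one child admits a solution too and has either $k'$ decreased by one or $|T'|$ decreased by at least one.

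Since each branching step strictly decreases the potential $k' + |T'|$, the depth of the tree is at most $k + |T|$. Taking the product of branching factors along a root-to-leaf path, and absorbing polynomial factors in $|T'|$ into the exponential (using $|T'| \leq |T|$ and $\log|T| \le \log n$), the total number of leaves is bounded by
\[
  \bigl(|T|^2 \cdot 2^{\mathcal{O}(\ell k^2 \log k)}\log^2 n\bigr)^{k + |T|} \;=\; 2^{\mathcal{O}((k+|T|)\ell k^2\log k)} \cdot \log^{2(k+|T|)} n.
\]
Each node performs the work of one call to \autoref{thm:bcl_unifyalg}, costing $2^{\mathcal{O}(\ell k^2 \log k)}\cdot \polyn$ time, so summing over the tree gives the desired overall running time.

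For correctness, I argue by induction along the tree. If the original instance has a solution, then at every internal node on the ``correct'' root-to-leaf path either the current instance already admits an isolating long cycle hitting set (closing the induction at a leaf that witnesses the claim) or \autoref{thm:bcl_unifyalg} produces a child that still admits a solution and has strictly smaller potential. After at most $k + |T|$ descents, we must terminate at a leaf that admits an isolating hitting set of size at most its $k'$, yielding one of the output instances.

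The main obstacle I anticipate is tightening the exponent of $\log n$ from the naive $2(k+|T|)$ down to the claimed $2|T|$. To do this, I would observe that the $\log^2 n$ factor inside \autoref{thm:bcl_unifyalg} is only incurred through the call to \autoref{lem:stongplusW} that builds the collection $\mathcal Q$, and that this factor only truly ``sticks'' in branches that culminate in the contraction step (which reduces $|T|$ by at least one); the other branches produced after invoking $\mathcal Q$ (the color-coding detection of cycles of length between $\ell$ and $\ell^2$, and the deletion of a vertex on the short return path $P_{ba}$) reduce $k$ and only contribute $\poly(\ell)$ overhead. Charging the $\log^2 n$ factor exclusively to the at most $|T|$ reductions of $|T|$ along any path yields the stated $\log^{2|T|} n$ bound.
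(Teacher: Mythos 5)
Your branching-tree construction, the $k+|T|$ depth bound, and the inductive correctness argument are all essentially identical to the paper's proof of this corollary, which is just as terse: the paper calls \autoref{thm:bcl_unifyalg} at most $k+|T|-1$ levels deep, keeping at each level one branch unchanged, and observes that along the correct path each level either leaves an isolating long cycle hitting set or decreases $k$ or $|T|$.

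Where you go beyond the paper is the final "charging" paragraph, and that argument has a real gap. You claim the $\log^2 n$ factor ``only truly sticks'' in branches culminating in contraction. But look at where the $\log^2 n$ enters inside \autoref{thm:bcl_unifyalg}: it is the branching over $C\in\mathcal{Q}$, and this branching is a common prefix for \emph{both} subsequent outcomes. Once the algorithm has branched on the choice of $C$, for each fixed $C$ it then either finds a pair $P_{ab}$, $P_{ba}$ and branches on the $\le\ell$ vertices of $P_{ba}$ (decreasing $k$), or finds no such pair and contracts (decreasing $|T|$). The $P_{ab}$, $P_{ba}$ check is intrinsically dependent on $C$ --- $P_{ab}$ lives in $G[V(C)]$ and $P_{ba}$ in $G-(V(C)\setminus\{a,b\})$ --- so you cannot detach it from the $\log^2 n$ branching over $\mathcal{Q}$. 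Consequently the branching factor $|\mathcal{Q}|\cdot\ell$, and hence the $\log^2 n$, is paid on $k$-reducing paths too, and the honest product over a root-to-leaf path of length $k+|T|-1$ is $\log^{2(k+|T|-1)}n$, not $\log^{2|T|}n$. (The paper's own corollary statement is arguably off in the same way; it never explicitly justifies the $2|T|$ exponent.) This does not endanger the overall algorithm --- downstream the quantity $\log^{f(k)}n$ is absorbed via \autoref{thm:boundingpowersoflogarithm} into $2^{\mathcal{O}(f(k)\log f(k))}+n$, and the final bound \overallruntime{} is unaffected --- but you should either state the bound as $\log^{2(k+|T|)}n$ or simply note that the precise exponent of $\log n$ is immaterial, rather than assert a charging that the structure of \autoref{thm:bcl_unifyalg} does not support.
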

\begin{proof}
	Call \autoref{thm:bcl_unifyalg} at most $k + |T| -1$ levels deep, and for every call keep one instance unchanged.
	Fix a solution $S$ of $(G, k, \ell, T)$.
	In every level, one of the branches
	\begin{compactitem}
		\item either has an isolating long cycle hitting set,
		\item or $k$ is reduced by one (by finding a vertex of $S$),
		\item or $|T|$ is reduced by one.
	\end{compactitem}
	As we can find at most $k$ vertices of $S$ and delete at most $|T|$ times vertices of $T$, one of the branches in the lowest level must contain an instance that has an isolating long cycle hitting set.
\end{proof}
%
%

Note that the previous lemma may reduce the size of $T$ without affecting the size of $S$.
Therefore, it might occur that $T$ is smaller than $S$ but as we search for a disjoint solution we may not use the smaller $T$ as solution.


Next, we handle the deletion of ``medium-length cycles'':
\begin{definition}
  For an integer $\ell \in \mathbb{N}$ and a digraph $G$, a cycle $C$ in $G$ is called \emph{medium-length cycle} if the length of $C$ fulfills $\ell < |\ell(C)| < 2\ell^6$.
\end{definition}

Once we have removed all medium-length cycles, we will be left with the following problem:
\begin{center}
  \framebox[1.0\textwidth]{
    \begin{tabularx}{0.98\textwidth}{rX}
      \multicolumn{2}{l}{{\sc\centering Isolating Long Cycle Hitting Set Intersection} \hfill \textit{Parameter:} $k + \ell+|T|$.}\\[1em]
      \textit{Input:}      & A directed multigraph $G$, integers $k,\ell \in\mathbb N$ and a set $T \subseteq V(G)$\\
      \textit{Properties:} 	& $G$ has no medium-length cycles, $\cf(G - T) \leq \ell$\\
      \textit{Task:}	&  Find a set $\mathcal{S}$ intersecting some isolating long cycle hitting set $S$\\
      				& of size at most $k$ with respect to $T$ if such a set exists.
  \end{tabularx}}
\end{center}

\begin{lemma}
\label{lem:hitting_medium_length_cycles}
  There is an algorithm that solves the {\sc Isolating Long Cycle Hitting Set} problem with medium-length cycles in time $2^{\mathcal{O}(k\ell)} \cdot \polyn$ by making $2^{\mathcal{O}(k\log \ell)}$ calls to an algorithm solving {\sc Isolating Long Cycle Hitting Set} in digraphs without medium-length cycles.
\end{lemma}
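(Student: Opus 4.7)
The plan is a depth-bounded branching algorithm on medium-length cycles: at each node I would locate a medium-length cycle in the current graph and guess which of its vertices lies in the hidden hitting set $S$; once no medium-length cycle remains, I would delegate to the sub-algorithm promised by the hypothesis. Concretely, given the current instance $(G',k',\ell,T)$ I would first try to find a cycle $C$ in $G'$ with $\ell<|V(C)|<2\ell^6$. Since $\cf(G'-T)\le\ell$ still holds (deletion only shrinks cycle lengths), any such $C$ must visit $T$; using Zehavi's $2^{\mathcal{O}(\ell)}\cdot\polyn$ cycle-detection algorithm as a primitive together with enumeration over $t\in T$ and a shortest-long-cycle-through-$t$ search, I can either produce such a $C$ or certify that no medium-length cycle exists in $2^{\mathcal{O}(\ell)}\cdot\polyn$ time per node. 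If the search returns no cycle, I invoke the sub-algorithm on $(G',k',\ell,T)$ and propagate its output.

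If a cycle $C$ is returned and $V(C)\subseteq T$, this branch is infeasible, because $S\cap T=\emptyset$ would leave $C$ uncut in $G'-S$; I discard it. Otherwise, any feasible $S$ hits $C$ and is disjoint from $T$, so it must contain at least one vertex of $V(C)\setminus T$, and I branch over those at most $|V(C)|\le 2\ell^6$ candidates; for each candidate $v$ I add $v$ to the output $\mathcal{S}$ and recurse on $(G'-v,k'-1,\ell,T)$. Each branching step strictly decreases $k$, so the recursion has depth at most $k$ and branching factor at most $2\ell^6$, giving $(2\ell^6)^k=2^{\mathcal{O}(k\log\ell)}$ leaves and therefore $2^{\mathcal{O}(k\log\ell)}$ calls to the sub-algorithm. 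Correctness follows because for any fixed feasible $S$, in each branching step at least one candidate $v\in V(C)\setminus T$ is in $S$, yielding a root-to-leaf path on which every recursive call remains consistent with $S$. The total internal overhead is $2^{\mathcal{O}(k\log\ell)}\cdot 2^{\mathcal{O}(\ell)}\cdot\polyn \le 2^{\mathcal{O}(k\ell)}\cdot\polyn$, matching the claimed runtime.

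The main obstacle is the detection step itself: medium-length cycles can be as long as $2\ell^6$, and a direct colour-coding over the full length range would cost $2^{\mathcal{O}(\ell^6)}$ and blow the budget. The saving idea is that every cycle of length $>\ell$ must meet $T$, which localises the search to $|T|\le k+1$ source vertices and, combined with Zehavi's length-threshold decision algorithm, yields the desired $2^{\mathcal{O}(\ell)}\cdot\polyn$ cost per node: either a shortest long cycle through some $t\in T$ has length below $2\ell^6$ and can be extracted for branching, or every long cycle in $G'$ has length at least $2\ell^6$ and the sub-algorithm may be invoked immediately on the current instance.
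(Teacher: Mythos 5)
Your branching scheme matches the paper's proof: detect a medium-length cycle $C$, branch over which single vertex of $C$ goes into the solution, decrement $k$, and delegate to the sub-algorithm once no medium-length cycle remains. Depth at most $k$ and branching factor less than $2\ell^6$ give $(2\ell^6)^k = 2^{\mathcal{O}(k\log\ell)}$ oracle calls, and your correctness argument (for a fixed feasible $S$, some branch stays consistent with it) is the paper's. Restricting the branch to $V(C)\setminus T$ and pruning the case $V(C)\subseteq T$ are valid but unnecessary refinements, since branches that delete a $T$-vertex are infeasible and get discarded later anyway.

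The one place where a genuine gap remains is the detection step, and you have correctly identified it as the delicate point. A ``shortest-long-cycle-through-$t$ search'' is not a primitive the paper supplies: \autoref{thm:checkforsolution} only decides whether some cycle of length $>\ell$ exists, and neither it nor color-coding with $\mathcal{O}(\ell)$ colors tells you whether every cycle of length $>\ell$ has length $\ge 2\ell^6$ (so that recursing is safe) or whether a medium-length cycle exists but the first long witness you happen to extract is huge. Localizing to cycles through $T$ only saves a factor $|T|$ and does not make the length-window test any easier. To be fair, the paper's own proof is no more explicit --- it simply invokes ``standard color-coding techniques'' to perform this check in $2^{\mathcal{O}(\ell)}\cdot\polyn$ time, whereas a naive color-coding over the full window costs $2^{\mathcal{O}(\ell^6)}$ --- so this thinness is shared rather than introduced by you. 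But your sketch does not close it either: treat the $2^{\mathcal{O}(\ell)}$ detection claim as the step that still needs a concrete argument.
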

\begin{proof}
  Start with an instance $(G, k, \ell)$ of {\sc Isolating Long Cycle Hitting Set} problem with medium length cycles.
  Set $S = \emptyset$.
  If $k=|S|$ we can solve the instance $(G - S, k - |S|, \ell)$ by \autoref{thm:checkforsolution}.
  Otherwise, by standard color-coding techniques we can check in time $2^{\mathcal{O}(\ell)} \cdot \polyn$ whether $G - S$ contains a cycle $C$ of length $\ell'$ with $\ell < \ell' < 2\ell^6$.
  If it does $C$ has to be intersected by any solution and we branch on each vertex $v \in V(C)$ whether to include it into the solution or not, and add the selected vertices to $S$ (reducing $k$ appropriately).
  We then proceed for each branch with the instance $(G - S, k - |S|, \ell)$ as above.

  If $(G - S, k - |S|, \ell)$ contains no medium-length cycle, we use the oracle to obtain a solution~$S'$.
  Then $G - (S \cup S')$ is a solution for $G$ and $|S \cup S'| = |S| + |S'| \leq |S| + k - |S| = k$.
  So if one of the branches has a solution we found a solution for our original solution.
  Likewise, if the original instance has a solution, we find one by choosing the correct branches.

	The run time and number oracle calls follow from $|C| < 2\ell^6$ and the fact that we branch at most $k$ steps deep.
\end{proof}

We can now summarize the results of this section in the following theorem:

\begin{theorem}
\label{thm:original_to_isolating_lchs}
  Instances $(G, k, \ell)$ of {\sc Directed Long Cycle Hitting Set} can be solved in time $2^{\mathcal{O}(\ell k^3\log k)} \cdot \left(\functionILCHSI(k, \ell)\right)^k \cdot \polyn$ by at most $2^{\mathcal{O}(\ell k^3\log k)} \cdot \left(\functionILCHSI(k, \ell)\right)^k \cdot n^2\log^{2k + 2}(n)$ calls to an algorithm~$\algorithmILCHSI$ solving the {\sc Isolating Long Cycle Hitting Set Intersection} problem, where~$\functionILCHSI(k, \ell)$ is a size bound on the set produced by $\algorithmILCHSI$.
\end{theorem}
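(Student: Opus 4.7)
The plan is to chain the four reductions developed in this section and then iteratively invoke $\algorithmILCHSI$ to go from an intersection set to an actual solution. First I apply iterative compression (\autoref{thm:safetyofiterativecompression}) to reduce the given instance to $n$ instances of the compression variant, in each of which a solution $T$ of size $k+1$ is known. Then, for each of those, I apply the disjointness reduction (\autoref{thm:addingdisjointness}) by guessing the intersection $S \cap T$ in $2^{\mathcal{O}(k)}$ ways, producing instances of the disjoint compression variant $(G, k, \ell, T)$.

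For each such disjoint compression instance, I apply \autoref{cor:isolating_t} to obtain $2^{\mathcal{O}((k+|T|)\ell k^2 \log k)}\cdot \log^{2|T|} n$ instances with the guarantee that, if the original had a solution, at least one of them admits an \emph{isolating} long cycle hitting set of size at most $k_i$. Next I apply \autoref{lem:hitting_medium_length_cycles} to each of them: it solves the {\sc Isolating Long Cycle Hitting Set} problem by making $2^{\mathcal{O}(k\log \ell)}$ calls to an algorithm for that problem on digraphs without medium-length cycles. At this point every instance matches the structural hypotheses of {\sc Isolating Long Cycle Hitting Set Intersection}.

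To finally convert $\algorithmILCHSI$ (which returns only an intersection set $\mathcal{S}$ of size at most $\functionILCHSI(k,\ell)$) into an actual solution procedure, I iterate: invoke $\algorithmILCHSI$, branch in $|\mathcal{S}| \le \functionILCHSI(k,\ell)$ directions on which vertex of $\mathcal{S}$ belongs to the sought hitting set, delete that vertex, decrement $k$, and recurse. Deleting a vertex preserves the hypotheses of the intersection problem (the property of containing no medium-length cycle is closed under vertex deletion, and $\cf(G-T)\le\ell$ is preserved since the deleted vertex is disjoint from $T$ by the disjointness reduction), so a single call to $\algorithmILCHSI$ per recursion level suffices. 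After at most $k$ levels of this recursion I have committed to a candidate set of size at most $k$, whose feasibility I verify in time $2^{\mathcal{O}(\ell)}\cdot \polyn$ using \autoref{thm:checkforsolution}. If the original instance has a solution, some leaf of the branching tree identifies all of its vertices.

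Multiplying the branching factors gives the total number of calls to $\algorithmILCHSI$: $n$ from iterative compression, $2^{\mathcal{O}(k)}$ from disjointness, $2^{\mathcal{O}(\ell k^3\log k)}\cdot\log^{2k+2} n$ from \autoref{cor:isolating_t} (using $|T|\le k+1$), $2^{\mathcal{O}(k\log \ell)}$ from the removal of medium-length cycles, and $\functionILCHSI(k,\ell)^k$ from the $k$ levels of branching on intersection sets; combining these yields the stated call count, and multiplying by the polynomial overhead of each reduction step yields the stated run time. The main obstacle is essentially bookkeeping: verifying that each reduction's output really satisfies the structural preconditions of the next (in particular, that the contractions and deletions performed by \autoref{cor:isolating_t} and \autoref{lem:hitting_medium_length_cycles} still leave a $T$-witness of the expected form), and that the branching on intersection-set vertices preserves both the ``no medium-length cycles'' property and the compression structure, so that the intersection oracle can be re-invoked without re-running the entire reduction chain.
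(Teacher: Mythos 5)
Your proposal follows exactly the same strategy as the paper's own proof: chain Lemma~\ref{thm:safetyofiterativecompression}, Lemma~\ref{thm:addingdisjointness}, Corollary~\ref{cor:isolating_t}, and Lemma~\ref{lem:hitting_medium_length_cycles} to produce instances of {\sc Isolating Long Cycle Hitting Set Intersection}, then repeatedly call $\algorithmILCHSI$, branch on its returned set, delete the guessed vertex, decrement the budget, and recurse to depth at most $k$, verifying the candidate at the leaves with Theorem~\ref{thm:checkforsolution}. The factor accounting and the observation that the instance properties are closed under deleting a non-$T$ vertex match the paper's argument; the only tiny caveat is that the branching should be restricted to $v \in \setILCHSI \setminus T$ (or the $v \in T$ branches can be discarded), since the intersection set is only guaranteed to hit a solution disjoint from $T$, but this does not change the bound.
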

\begin{proof}
  We apply in order: \autoref{thm:safetyofiterativecompression}, \autoref{thm:addingdisjointness}, \autoref{cor:isolating_t} and \autoref{lem:hitting_medium_length_cycles}.
  This produces in time
  \begin{displaymath}
    n \cdot 2^{\mathcal{O}(k)} \cdot 2^{\mathcal{O}((k + |T|) \ell k^2\log k)}\cdot \polyn \cdot 2^{\mathcal{O}(k\ell)} \cdot \polyn = 2^{\mathcal{O}(\ell k^3\log k)} \cdot \polyn
  \end{displaymath}
  at most
  \begin{displaymath}
    n \cdot 2^{\mathcal{O}(k)} \cdot 2^{\mathcal{O}((k + |T|) \ell k^2\log k)}\log^{2|T|} n \cdot 2^{\mathcal{O}(k\ell)} = 2^{\mathcal{O}(\ell k^3\log k)} \cdot n^2 \log^{2k + 2}(n)
  \end{displaymath}
  instances $(G_i, k_i, \ell, T_i)$ of \textsc{Isolating Long Cycle Hitting Set}.
  If $(G, k, \ell)$ has a solution then one of these instances has an isolating long cycle hitting set $S_i$ of size at most $k$.
  Furthermore, if we have an isolating long cycle hitting set $S_i$ in one of these instances, we can complete it to a solution of $(G, k, \ell)$.

  On each of the instances $(G_i, k_i, \ell, T_i)$, we start with $S_i = \emptyset$.
  We then call $\algorithmILCHSI$ on\linebreak $(G_i - S_i, k - |S_i|, \ell, T_i)$.
  If there is an isolating hitting set of size at most $k$ the set $\setILCHSI$ returned by our algorithm intersects at least one of them.
  We branch on $v \in \setILCHSI$ and add $v$ to $S_i$.
  Then we continue as above until $S_i = k_i$.

  If there is an isolating hitting set of size at most $k$, this branching procedure will find it.

  Now it just remains to analyze the running time of this branching procedure.
  For each instance, we branch in each step into $|\setILCHSI| \leq \functionILCHSI(k, \ell)(k, \ell)$ branches and do this at most $k$ levels deep.
  This yields the claimed run time.
\end{proof}

\subsection{Reduction to Important Hitting Separator}
\label{sec:reductiontostrongsubgraph}
In the previous section we reduced the \DLCHS{} problem to the \textsc{Isolating Long Cycle Hitting Set Intersection} problem, a variant where we are already given a solution $T$ of size at most $k + 1$ and search for a solution~$S$ disjoint from $T$ of size at most~$k$.
Additionally, we know that $T$ has at most one vertex in each strong component of $G - S$.
For the remainder of this subsection, assume that there is a solution $S$ of size at most~$k$.

\begin{definition}
  Let $(G, k, \ell, T)$ be an instance of {\sc Isolating Long Cycle Hitting Set Intersection} and let $S$ be an isolating long cycle hitting set in it.
  A vertex $t \in T$ is called \emph{last vertex of~$T$} (with respect to~$S$) if there is a topological ordering of strong components of $G - S$ such that~$t$ appears in the last component that contains some vertex of $T$.
	
  Fix $t \in T$ and let $G_t$ be the graph $G - (T \setminus \lbrace t \rbrace)$.
  Let $\Along = \lbrace (u, v) \in A(G_t) | \dist_{G_t}(v, u) \geq \ell \rbrace$ be the set of arcs of $G_t$ that only lie on long cycles in $G_t$.
  Further, let $\Gcirc = G_t - \Along$ and let~$\Cstar$ be the strong component of $\Gcirc$ containing $t$.
  Finally, set $\Gstar = \Gcirc[\Cstar]$.
\end{definition}

Note that a last vertex of some solution $S$ may not be unique (as there may be different topological orderings).
Yet, no last vertex of a solution may reach another vertex of $T$ in $G - S$.

\begin{lemma}
\label{lem:last_vertex_cant_reach_other_tvertices}
  Let $(G, k, \ell, T)$ be an instance of {\sc Isolating Long Cycle Hitting Set Intersection} and let $S$ be an isolating long cycle hitting set in it.
  Let $t \in T$ be a last vertex of $T$ with respect to $S$.
  Then there is no $t \to T \setminus \lbrace t \rbrace$-path in $G - S$.
\end{lemma}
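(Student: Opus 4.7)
The plan is a short proof by contradiction following directly from the definitions of \emph{isolating} and \emph{last vertex}. Suppose for contradiction that $P$ is a $t\to t'$-path in $G-S$ for some $t'\in T\setminus\{t\}$. I would first argue that $t$ and $t'$ lie in different strong components of $G-S$: since $S$ is isolating with respect to $T$, each strong component of $G-S$ contains at most one vertex of $T$, so $t$ and $t'$ cannot share a component. Denote these components by $C_t$ (containing $t$) and $C_{t'}$ (containing $t'$).

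Next, I would use the path $P$ to constrain every topological ordering of the strong components of $G-S$. Because $P$ goes from $t\in C_t$ to $t'\in C_{t'}$ and avoids $S$, the component $C_{t'}$ is reachable from $C_t$ in the condensation of $G-S$. Consequently, in \emph{every} topological ordering of the strong components of $G-S$, the component $C_{t'}$ must appear strictly after $C_t$.

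Finally, I would contradict the defining property of $t$ being a last vertex of $T$ with respect to $S$: by definition, there exists a topological ordering in which $C_t$ is the last component that contains a vertex of $T$. But $C_{t'}$ contains $t'\in T$ and lies strictly after $C_t$ in every topological ordering, so no such ordering can have $C_t$ as the last $T$-containing component. This contradiction finishes the proof, so no $t\to T\setminus\{t\}$-path exists in $G-S$.

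I do not anticipate a genuine obstacle here; the statement is essentially unwinding the definitions of ``isolating'' (at most one $T$-vertex per strong component of $G-S$) and ``last vertex of $T$'' (existence of a topological order of strong components in which $t$'s component is the last $T$-containing one). The only point requiring mild care is to phrase the argument in terms of the condensation, so that reachability by a single path $P$ in $G-S$ translates into the necessary precedence constraint in every topological ordering.
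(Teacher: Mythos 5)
Your proof is correct and follows essentially the same route as the paper's: use the path $P$ to force a precedence constraint between the strong components of $t$ and $t'$ in the condensation of $G-S$, and then derive a contradiction by combining that constraint with the ``last vertex'' and ``isolating'' properties. The only difference is cosmetic — you invoke the isolating property up front to separate $t$ and $t'$ into distinct components and then contradict the last-vertex property, whereas the paper's proof first uses the last-vertex property to conclude $t$ and $t'$ would share a component and then contradicts isolation — but the logical content is identical.
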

\begin{proof}
  Suppose, for sake of contradiction, that there exists such a path $P$ and it ends in some~$t'$.
  Then~$P$ is a certificate that $t'$ must be in no earlier strong component than $t$ in every topological ordering of strong components of $G - S$.
  As $t$ was in the last component containing a vertex $t$ in such an ordering, they have to be in the same component.
  This is a contradiction to $S$ being an isolating long cycle hitting set.
\end{proof}

We now make some observations about the strong component in $G - S$ containing some $t \in T$.

\begin{lemma}
\label{lem:ct_contained_in_gstar}
  Let $(G, k, \ell, T)$ be an instance of {\sc Isolating Long Cycle Hitting Set Intersection} and let $S$ be an isolating long cycle hitting set in it.
  For some $t \in T$ let $C_t$ be the strong component of $G - S$ containing $t$.
  Then $C_t$ is contained in $\Gstar$.
\end{lemma}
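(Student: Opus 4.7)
The plan is to verify the inclusion in two steps: first show $V(C_t)\subseteq V(G_t)$, then show that no arc of $C_t$ belongs to $\Along$; these together force $C_t$ into the single strong component of $\Gcirc$ containing $t$, which is $\Cstar$.

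For the vertex containment, I would invoke the defining property of an isolating long cycle hitting set: every strong component of $G-S$ meets $T$ in at most one vertex. Hence $V(C_t)\cap T=\{t\}$, so $V(C_t)\subseteq V(G)\setminus(T\setminus\{t\})=V(G_t)$, and in particular $C_t=(G-S)[V(C_t)]$ is a subgraph of $G_t$ (every arc of $C_t$ has both endpoints in $V(G_t)$, so it survives the deletion of $T\setminus\{t\}$).

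The heart of the argument is the arc step. I would fix an arbitrary arc $(u,v)\in A(C_t)$ and show $\dist_{G_t}(v,u)<\ell$. Since $C_t$ is strongly connected, there exists a $v\to u$ path in $C_t$; take a shortest such path $P$, which is therefore simple. Concatenating $P$ with the arc $(u,v)$ produces a simple cycle of length $|P|+1$ lying entirely in $G-S$. The circumference bound $\cf(G-S)\leq \ell$ (which holds because $S$ is an isolating long cycle hitting set) then forces $|P|\leq \ell-1$. Because $P$ is a path inside $C_t\subseteq G_t$, we conclude $\dist_{G_t}(v,u)\leq \ell-1<\ell$, so by definition $(u,v)\notin \Along$.

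Combining the two steps, $C_t$ is a strongly connected subgraph of $G_t$ whose arcs all lie in $A(G_t)\setminus \Along=A(\Gcirc)$; hence $V(C_t)$ lies in a single strong component of $\Gcirc$, and since $t\in V(C_t)$ that component must be $\Cstar$. All arcs of $C_t$ have both endpoints in $\Cstar$ and lie in $\Gcirc$, so they lie in $\Gcirc[\Cstar]=\Gstar$, giving $C_t\subseteq\Gstar$. I do not foresee any real obstacle; the only point that warrants care is tracking which host graph each distance or path refers to, since the crucial transfer is from a short return path inside $C_t$ to a short path in $G_t$ (possible only because $V(C_t)\subseteq V(G_t)$, which in turn depends on the isolating property).
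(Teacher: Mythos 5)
Your proof is correct and follows essentially the same route as the paper's: use the isolating property to keep $C_t$ inside $G_t$, use $\cf(G-S)\leq\ell$ to put every arc of $C_t$ on a short cycle within $G_t$ so that it avoids $\Along$, and then use $t\in V(C_t)$ to pin the strongly connected subgraph $C_t$ into $\Cstar$. Your version is slightly more explicit than the paper's (which argues via an arbitrary closed walk through $t$ and $v$ rather than fixing an arc and taking a shortest return path in $C_t$), but the underlying idea and key facts invoked are the same.
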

\begin{proof}
  Consider a closed walk in $G - S$ containing $t$ and some vertex $v$.
  This closed walk cannot go through any vertex of $T\setminus\{t\}$, as this would imply that some vertex of $T\setminus\{t\}$ is in the strong component $C_t$ of $G - S$ containing $t$, contradicting that $S$ is an isolating long cycle hitting set.
  Every arc that is in a closed walk is also in a cycle and as the closed walk is in $G - S$, this cycle has to be of length at most $\ell$.
  Thus, every arc of the closed walk is in $\Gcirc$, which means that the closed walk is fully contained in $\Gstar$.
\end{proof}

\begin{lemma}
\label{lem:cstar_has_only_long_outgoing_arcs}
  Let $(G, k, \ell, T)$ be an instance of \textsc{Isolating Long Cycle Hitting Set Intersection}.
  For $t \in T$, each arc of $\delta^+_G(\Cstar)$ lies either in $A_\textsf{long}$ or is an incoming arc of some $t' \in T \setminus \{t\}$.
\end{lemma}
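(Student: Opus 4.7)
The plan is to argue by cases on the head $v$ of an outgoing arc $(u,v)\in\delta^+_G(\Cstar)$, where $u\in\Cstar$ and $v\notin\Cstar$. The trivial case is when $v\in T\setminus\{t\}$, in which case the arc is by definition an incoming arc of some $t'\in T\setminus\{t\}$ and the conclusion holds. So the substantive case is $v\notin T\setminus\{t\}$, meaning $v\in V(G_t)$ and $(u,v)\in A(G_t)$; in this case I would show directly that $(u,v)\in\Along$.

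Assume toward a contradiction that $(u,v)\notin\Along$. By the definition of $\Along$ this means $\dist_{G_t}(v,u)<\ell$. Pick any $v\to u$ path $P$ of length at most $\ell-1$ in $G_t$. Concatenating $P$ with the arc $(u,v)$ yields a closed walk of length at most $\ell$ in $G_t$, which contains a cycle $C$ of length at most $\ell$ through the arc $(u,v)$ (and possibly a proper subcycle if $P$ revisits vertices; either way, every arc of $C$ is an arc of $G_t$).

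The key observation is that every arc of $C$ lies in $\Gcirc$: for any arc $(a,b)\in A(C)$, the remainder of $C$ forms a $b\to a$ walk in $G_t$ of length at most $\ell-1$, so $\dist_{G_t}(b,a)\le \ell-1<\ell$, which means $(a,b)\notin\Along$ and hence $(a,b)\in A(\Gcirc)$. Consequently both $(u,v)$ and the entire $v\to u$ subpath of $C$ lie in $\Gcirc$, certifying that $u$ and $v$ belong to the same strong component of $\Gcirc$. Since $u\in\Cstar$, this forces $v\in\Cstar$, contradicting $v\notin\Cstar$.

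No step here looks genuinely difficult; the only point that requires a moment of care is ensuring that the cycle produced from the arc and the short backward path witnesses, for \emph{each} of its arcs, a short backward walk in $G_t$ (so that we can conclude none of them is in $\Along$). Once that is observed, the contradiction is immediate, and the case analysis on whether $v$ lies in $T\setminus\{t\}$ completes the proof.
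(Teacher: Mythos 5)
Your proof is correct and takes essentially the same route as the paper: derive a short cycle through $(u,v)$ from $\dist_{G_t}(v,u)<\ell$, observe that each arc on a short cycle has a short backward path and hence lies in $\Gcirc$, and conclude $u$ and $v$ share a strong component of $\Gcirc$, contradicting $u\in\Cstar$ and $v\notin\Cstar$. Your writeup is just slightly more explicit than the paper's in justifying the step "every arc of $O$ lies in $\Gcirc$," and your up-front case split on $v\in T\setminus\{t\}$ is the contrapositive of the paper's reading off $v\notin T\setminus\{t\}$ from the contradiction hypothesis.
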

\begin{proof}
  Suppose, for sake of contradiction, that there is an arc $(u,v) \in \delta^+_G(\Cstar)$ that is not in~$\Along$ and not an incoming arc of some $t' \in T \setminus \{t\}$.
  Thus we have $v \notin T \setminus \{t\}$ and also $u \notin T \setminus \{t\}$ by definition of $\Cstar$, so $(u,v)$ must exist in $G_t$.
  As $(u, v) \notin \Along$, $(u, v)$ must exist in $G_t - \Along = \Gcirc$.
  This guarantees us the existence of a cycle $O$ in $G_t$ of length at most~$\ell$, such that $(u, v) \in O$.
  But then every arc of $O$ lies in $\Gcirc$.
  Therefore, $u$ lies in the same strong component of $G_\circ$ as $v$.
  This, however, is a contradiction to the choice of $u \in C^\star$ and $v \not \in C^\star$.
\end{proof}

To simplify our instance we use the shadow-covering technique established by Chitnis et al.~\cite{ChitnisEtAl2015}.
Let us formally define what the shadow of a solution is:
\begin{definition}[shadow]
\label{def:shadow}
  Let $G$ be a digraph and let $T,S\subseteq V(G)$.
  A vertex $v\in V(G)$ is \emph{in the forward shadow $f_{G,T}(S)$ of $S$ (with respect to $T$)} if $S$ is a $(T,\{v\})$-separator in $G$, and $v$ is \emph{in the reverse shadow $r_{G,T}(S)$ of $S$ (with respect to $T$)} if $S$ is a $(\{v\},T)$-separator in $G$.
  All vertices of~$G$ which are either in the forward shadow or in the reverse shadow of $S$ (with respect to $T$) are said to be \emph{in the shadow of $S$ (with respect to $T$)}.
\end{definition}

Note that $S$ itself is not in the shadow of $S$, by definition of separators.
Intuitively we now have that the endpoints of the unwanted outgoing arcs from above should lie in the shadow of~$S$.
After finding a set which covers the shadow (is a superset of it), we give a method to remove these vertices.
The method requires the notions of $T$-connected and $\mathcal F$-transversals.

\begin{definition}[$T$-connected and $\mathcal F$-transversal]
  Let $G$ be a digraph, let $T\subseteq V(G)$ and let $\mathcal F$ be a set of subgraphs of $G$.
  Say that $\mathcal F$ is \emph{$T$-connected} if for every $F\in\mathcal F$, each vertex of $F$ can reach some and is reachable by some (maybe different) vertex of $T$ by a walk completely contained in $F$.

  For a set $\mathcal F$ of subgraphs of $G$, an \emph{$\mathcal F$-transversal} is a set of vertices that intersects the vertex set of every subgraph in $\mathcal F$.
\end{definition}

Chitnis et al.~\cite{ChitnisEtAl2015} gave a deterministic algorithms for covering the shadow of some $\mathcal F$-transversal.

\begin{proposition}[deterministic covering of the shadow, \cite{ChitnisEtAl2015}]
\label{thm:deterministiccoveringoftheshadow}
  Let $T\subseteq V(G)$.
  One can construct, in time $2^{\mathcal{O}(k^2)}\cdot n^{\mathcal{O}(1)}$, sets $Z_1,\hdots,Z_p$ with $p \leq 2^{\mathcal{O}(k^2)}\log^2n$ such that for any set of subgraphs $\mathcal F$ which is $T$-connected, if there exists an $\mathcal F$-transversal of size at most $k$ then there is an $\mathcal F$-transversal $S$ of size at most $k$ that is disjoint from $Z_i$ and such that $Z_i$ covers the shadow of $S$, for some $i \leq p$.
\end{proposition}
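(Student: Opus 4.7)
The plan is to follow the two-stage shadow-removal strategy: first establish a randomized procedure that outputs a single set $Z$ which, for any fixed (but unknown) $\mathcal F$-transversal $S$ of size at most $k$, satisfies both required properties with probability $2^{-\mathcal{O}(k^2)}$; then derandomize by enumerating all ``random seeds'' over a small universal-set family to obtain a deterministic list of $2^{\mathcal{O}(k^2)}\log^2 n$ candidates $Z_1,\ldots,Z_p$. Crucially, the randomized procedure itself is oblivious to $\mathcal F$ and $S$; only the analysis postulates a fixed $S$.

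For the randomized step, I would iterate ``BFS-layer sweeps'' around $T$. Pick $\mathcal{O}(k)$ random thresholds $\rho_1, \ldots, \rho_{2k}$, each uniform in $\{0, 1, \ldots, \lceil \log n \rceil\}$, and alternately perform forward sweeps (reaching out from $T$) and reverse sweeps (reaching into $T$): at sweep $i$, mark for inclusion in $Z$ every vertex whose current forward or backward distance from $T$ falls in a layer indexed by $\rho_i$, then recompute distances in the surviving graph before the next sweep. For any fixed $\mathcal F$-transversal $S$ of size at most $k$, a layer-by-layer analysis shows that with probability $2^{-\mathcal{O}(k^2)}$ each of the at most $k$ vertices of $S$ is placed on the ``safe'' side by every relevant sweep while every vertex in the forward or reverse shadow of $S$ with respect to $T$ is captured in $Z$. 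The $T$-connectedness of $\mathcal F$ is essential here: it guarantees that any $F \in \mathcal F$ has all its vertices both reachable from and reaching to $T$, so sweeping away shadow vertices cannot destroy any $F$ needed by the unknown transversal, and in particular $S$ itself remains an $\mathcal F$-transversal in $G - Z$.

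To derandomize, I would replace the random seed $(\rho_1,\ldots,\rho_{2k})$ by an $(n,\mathcal{O}(k^2))$-universal set, which has standard constructions of size $2^{\mathcal{O}(k^2)}\log n$; an additional $\log n$ factor comes from enumerating the possible radii scales. This yields $p \leq 2^{\mathcal{O}(k^2)}\log^2 n$ deterministic candidates, computable in $2^{\mathcal{O}(k^2)}\cdot n^{\mathcal{O}(1)}$ time. The main obstacle I foresee is the tight probability analysis of the randomized step: one must couple the forward and backward sweeps around each of the $\le k$ vertices of the unknown $S$ so that all of them simultaneously escape inclusion in $Z$ while every shadow vertex is nonetheless captured. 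It is exactly this simultaneous requirement, with up to $k$ independent ``near misses'' in each of two sweep directions, that forces the $k^2$ rather than $k$ in the exponent of the success probability and hence in the final bound on $p$.
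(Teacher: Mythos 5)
Your proposal diverges from the actual Chitnis et al.\ technique in a way that I believe is fatal, not merely different. The proposition you are proving is cited (not reproved) in the present paper, so the comparison must be against the underlying technique in Chitnis, Cygan, Hajiaghayi, and Marx, which is \emph{random sampling of important separators}: one enumerates a bounded collection of important separators between $T$ and candidate shadow vertices, flips independent coins to decide which of their reach-sets to include in $Z$, shows that for a fixed (unknown) transversal $S$ of size $\leq k$ the resulting $Z$ simultaneously avoids $S$ and absorbs the entire shadow of $S$ with probability $2^{-\mathcal{O}(k^2)}$, and then derandomizes via splitters/universal sets. The reachability structure of the shadow is precisely what important separators are built to capture.

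Your BFS-layer sweep with random distance thresholds does not capture that structure. A vertex $v$ is in the shadow of $S$ if and only if $S$ is a $(T,\{v\})$- or $(\{v\},T)$-separator in $G$; this is a reachability property relative to the unknown $S$, and it has no determined relationship to the BFS distance of $v$ from $T$ in $G$ (or in any graph obtained by deleting layers). As a concrete test: take $G$ to be a one-directional star centered at $T$ with arms of length $n$, and let $S$ be a single vertex at some depth $d$ on one arm. The forward shadow is exactly the suffix of that arm past $d$; for different choices of $d$ and of arm, the shadow is a completely different set, yet the BFS layering of $G$ is identical in all cases. No $\mathcal{O}(k)$ random thresholds over $\{0,\ldots,\lceil\log n\rceil\}$ (nor any iterated recomputation) can produce, with probability $2^{-\mathcal{O}(k^2)}$, a $Z$ that both avoids $S$ and contains its shadow uniformly over the $\Omega(n)$ possible choices of $S$. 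The thresholds are independent of $S$, while the target set $Z$ must depend on $S$ through reachability after deleting $S$, which BFS layers in $G$ cannot see.

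There is a second, smaller confusion: you write that ``sweeping away shadow vertices cannot destroy any $F$ needed by the unknown transversal, and in particular $S$ itself remains an $\mathcal F$-transversal in $G - Z$.'' The proposition never deletes $Z$; it asks for $Z$ to \emph{cover} the shadow of \emph{some} transversal $S$ while being disjoint from $S$. The role of $T$-connectedness in the real proof is to let one argue that among all size-$\leq k$ transversals there is a ``shadowless'' (or shadow-covered) one: because each $F\in\mathcal F$ has all its vertices bidirectionally linked to $T$ inside $F$, one can push a transversal off shadow vertices without losing the property of hitting every $F$. Your phrasing conflates this with a claim about $G-Z$ and does not supply the shifting argument. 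In short, the randomized step as you describe it is asserted rather than proven, and I believe the assertion is false; you would need to replace the BFS-layer sampling with sampling over important separators (or an equivalent reachability-aware construction) before the derandomization step can be meaningfully applied.
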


Note that $\mathcal F$ is not an input of the algorithm described by \autoref{thm:deterministiccoveringoftheshadow}.
Hence, issues related to the representation of $\mathcal F$ (which could be exponential in the size of the graph) do not arise.

\begin{corollary}
\label{cor:covering_of_shadows_lchs}
  Let $(G, k, \ell, T)$ be an instance of {\sc Isolating Long Cycle Hitting Set Intersection}.
  One can construct, in time $2^{\mathcal{O}(k^2)}\cdot n^{\mathcal{O}(1)}$, sets $Z_1,\hdots,Z_p$ with $p \leq 2^{\mathcal{O}(k^2)}\log^2n$ such that if there is an isolating long cycle hitting set of size at most $k$, there is isolating long cycle hitting set $S$ an $i \leq t$ of size at most $k$ such that $Z_i \cap S = \emptyset$ and $Z_i$ covers the shadow of $S$ w.r.t. $T$.
  Furthermore, we can assume that $Z_i \cap T = \emptyset$.
\end{corollary}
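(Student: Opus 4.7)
The plan is to apply Proposition~\ref{thm:deterministiccoveringoftheshadow} to a family $\mathcal F$ of subgraphs of $G$ whose transversals (when disjoint from $T$) coincide with the isolating long cycle hitting sets. Concretely, I would take $\mathcal F$ to consist of two kinds of subgraphs: (a) every cycle of $G$ of length more than $\ell$, and (b) every closed walk in $G$ visiting at least two distinct vertices of $T$. A subset $U\subseteq V(G)\setminus T$ is then an isolating long cycle hitting set if and only if $U$ is an $\mathcal F$-transversal: hitting all subgraphs of type (a) forces $\cf(G-U)\le \ell$, while hitting all those of type (b) prevents two vertices of $T$ from living in the same strong component of $G-U$.

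The first substantive step is to verify that $\mathcal F$ is $T$-connected, which is a prerequisite for invoking Proposition~\ref{thm:deterministiccoveringoftheshadow}. For a type (b) closed walk, $T$-connectedness is immediate: each vertex on the walk reaches and is reached by the two $T$-vertices by simply tracing the walk. For a type (a) cycle $C$, the verification is more delicate, and this is where the input hypothesis $\cf(G-T)\le \ell$ is used: since $|C|>\ell$, $C$ cannot be entirely contained in $G-T$, so $C$ must contain some $t\in T$, and then every vertex of $C$ reaches and is reached by $t$ along $C$ itself. I expect this $T$-connectedness check for long cycles to be the main conceptual obstacle.

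With $T$-connectedness in hand, the plan is to feed $\mathcal F$ into Proposition~\ref{thm:deterministiccoveringoftheshadow}, which produces $Z_1,\dots,Z_p$ within the claimed size and time bounds. If an isolating long cycle hitting set $\tilde S$ of size at most $k$ exists, then $\tilde S$ is an $\mathcal F$-transversal of size at most $k$; the derandomization underlying the proposition then guarantees some index $i$ for which $\tilde S$ itself is disjoint from $Z_i$ and has its shadow covered by $Z_i$. Because $\tilde S$ is already disjoint from $T$, it is simultaneously an isolating long cycle hitting set, yielding the required $S$.

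Finally, to secure the extra property $Z_i\cap T=\emptyset$, I would simply discard the vertices of $T$ from each $Z_i$. This is harmless because no $t\in T$ can lie in the shadow of any set $S$ disjoint from $T$: the trivial walk $t\to t$ in $G-S$ witnesses both that $T$ reaches $t$ and that $t$ reaches $T$, so $t$ is neither in the forward nor the reverse shadow. Hence removing $T$ from $Z_i$ preserves both the shadow coverage and the disjointness from $S$, completing the argument.
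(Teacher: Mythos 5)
Your proof is correct and follows essentially the same route as the paper's: identify a $T$-connected family $\mathcal F$, invoke Proposition~\ref{thm:deterministiccoveringoftheshadow}, and delete $T$ from each $Z_i$ using the observation that $T$-vertices are never in the shadow. The one genuine difference is your choice of $\mathcal F$: the paper uses only the long cycles, whereas you enlarge $\mathcal F$ to also contain all closed walks that visit two distinct $T$-vertices. Your $T$-connectedness verification is fine for both types, and the rest of your argument goes through.

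The enlargement, however, buys you nothing here. The shadow of a set $S$ is defined purely with respect to $T$ (\autoref{def:shadow}), independently of $\mathcal F$; the family $\mathcal F$ only enters Proposition~\ref{thm:deterministiccoveringoftheshadow} through the hypothesis that the set to be protected is an $\mathcal F$-transversal, and a fixed isolating long cycle hitting set $\tilde S$ is already a transversal of the smaller family (long cycles alone). The guarantee that some $Z_i$ is disjoint from $\tilde S$ and covers its shadow is exactly what is needed, and it is the same guarantee for either choice of $\mathcal F$. Your stated motivation---making $\mathcal F$-transversals coincide with isolating long cycle hitting sets---does not actually close any gap, because even a transversal of your enlarged family may intersect $T$ and hence fail to be an isolating long cycle hitting set; what you really rely on (and correctly so, matching the intended reading of the proposition and the way the paper later invokes the corollary inside \autoref{thm:isolating_lchs_to_important_hitting_separator}) is that the sets $Z_i$ protect the \emph{given} $\tilde S$. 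In short: same method, but your $\mathcal F$ is larger than necessary, and the paper's simpler choice suffices.
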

\begin{proof}
  As the set $\mathcal F$ of forbidden subgraphs we will use all subgraphs of $G$ that are a cycle of length greater than $\ell$.
  Then $\mathcal F$ is clearly $T$-connected and every isolating long cycle hitting set~$S$ is a $\mathcal F$-transversal.
  Applying \autoref{thm:deterministiccoveringoftheshadow} gives us the desired sets, except for the disjointness from~$T$.
  As vertices of $T$ are never in the shadow, we can remove $T$ from all $Z_i$ and get the desired result.
\end{proof}

\noindent
We define a ``torso operation'' to reduce to a digraph $G_\textsf{torso}$ on $V(G) \setminus Z$ preserving connectivity:

\begin{definition}[torso]
  Let $(G, k, \ell, T)$ be an instance of {\sc Isolating Long Cycle Hitting Set Intersection}, let $t \in T$ and $Z \subseteq V(G)$.
  Then the \emph{torso} $\Gtorso$ of $G$ (with respect to $t$ and~$Z$) with a special set $\Ulong \subseteq A(\Gtorso)$ of long arcs is the graph constructed as follows:
	
  Let $V(\Gtorso) = V(G) \setminus Z$ be the vertex set of the torso.
  The arc set $A(\Gtorso)$ contains an arc $(u,v)$ if there is a $u \rightarrow v$-path~$P_{(u,v)}$ in $G$ whose internal vertices are contained in $Z$.
  If there is such a path $P_{(u,v)}$ intersecting $A_\textsf{long}$, add the arc $(u, v)$ to the set $U_\textsf{long}$.
\end{definition}

Note that the path $P_{(u,v)}$ can potentially consist of only a single arc.
The purpose of $U_\textsf{long}$ is to identify arcs in $G_\textsf{torso}$ leaving $C^\star$.

Now, we use the tool of ``critical vertices'' to ensure that $U_\textsf{long}$ not reachable from $t$:
We need the following technical tool, introduced by Chitnis et al.~\cite{ChitnisEtAl2015}.

\begin{definition}[critical vertex]
  Let $G$ be a digraph, $t \in V(G)$ a vertex, $k \in \mathbb{N}$ an integer and $U \subseteq A(G)$ some subset of arcs.
  For some subset of vertices $S \subseteq V(G) \setminus \lbrace t  \rbrace$ an edge $(v, w) \in U$  is called \emph{traversable from~$t$} if there is a $t \to v$-path in $G - S$ and $w \notin S$.
  A vertex $w \in V(G) \setminus \lbrace t  \rbrace$ is called \emph{$k$-critical} (with respect to $t$) if there exists an arc $(v, w) \in U$ and a set $S \subseteq V(G) \setminus \lbrace t  \rbrace$ of size at most $k$ such that there is a $t \to v$-path in $G - S$ but no arc of $U$ is traversable in $G - S$.
\end{definition}

\begin{proposition}[\cite{ChitnisEtAl2015}]
\label{thm:bcl_findsupersetofcriticalvertices}
  Given a digraph $G$, a subset $U$ of its arcs, and some $t \in V(G)$, we can find in time $2^{\mathcal{O}(k)}\cdot n^{\mathcal{O}(1)}$ a set $F_\textsf{critical}$ of $2^{\mathcal{O}(k)}$ vertices that is a superset of all $k$-critical vertices.
\end{proposition}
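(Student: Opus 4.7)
The plan is to reduce the search for $k$-critical vertices to enumeration of important separators via Proposition~\ref{thm:fptenumerationofimportantseperators}. First I would build an auxiliary digraph $\tilde G$ from $G$ by adjoining a fresh sink vertex $\sigma$ and, for every arc $(v,w)\in U$, a new arc $(w,\sigma)$. Then for any $S\subseteq V(G)\setminus\{t\}$, the set $S$ is a $t\to\sigma$-separator in $\tilde G$ if and only if every head of a $U$-arc is either in $S$ or unreachable from $t$ in $G-S$; this is strictly stronger than ``no arc of $U$ is traversable from $t$ in $G-S$''. Next I would apply Proposition~\ref{thm:fptenumerationofimportantseperators} to $\tilde G$ with source $t$ and target $\sigma$ to enumerate, in time $2^{\mathcal O(k)}\cdot n^{\mathcal O(1)}$, the at most $4^{k+1}$ important $t\to\sigma$-separators of size at most $k$, and I would take $F_\textsf{critical}$ to be the union of these separators intersected with $V(G)$. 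Since each separator has at most $k$ vertices, this gives $|F_\textsf{critical}|\le k\cdot 4^{k+1}=2^{\mathcal O(k)}$ in the claimed running time.

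For correctness, I would show that every $k$-critical vertex $w$ lies in at least one of the enumerated separators. Fix a witness $(S,v,w)$ and construct a $t\to\sigma$-separator $S^+$ in $\tilde G$ of size at most $k$ for which $v$ is still reachable from $t$ in $\tilde G-S^+$. Given such an $S^+$, the arc $(v,w)\in A(G)\subseteq A(\tilde G)$ together with the arc $(w,\sigma)$ forces $w\in S^+$: otherwise the walk $t\leadsto v\to w\to\sigma$ would reach $\sigma$, contradicting the separator property. Then, pushing $S^+$ to an important $t\to\sigma$-separator $S^*$ in the standard way only enlarges the reachable set, so $v$ remains reachable in $\tilde G-S^*$, and the same argument forces $w\in S^*\subseteq F_\textsf{critical}$. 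To build $S^+$ I would use the structural observation that any head $w'$ reachable from $t$ in $G-S$ must have all its $U$-tails $v'$ (with $(v',w')\in U$) unreachable from $t$ in $G-S$, for otherwise $(v',w')$ would be traversable, contradicting the witness. This enables an exchange in $S$: for every reachable head $w'$, transfer the ``blocking budget'' from a vertex of $S$ that cuts off a $U$-tail of $w'$ onto $w'$ itself, producing an $S^+$ of size at most $|S|\le k$ that blocks every head from $t$ while keeping $v$ reachable.

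The technical heart, and the main obstacle, is executing this exchange simultaneously over all reachable heads without severing the connection from $t$ to $v$. The naive candidate $S\cup(R^+_{G-S}(t)\cap\{\text{heads of }U\text{-arcs}\})$ need not fit within the budget $k$, because many heads may be reachable via non-$U$ paths in $G-S$. The delicate step is to match each reachable head with a distinct ``transferrable'' vertex of $S$, and then verify, via an uncrossing/min-cut argument in $\tilde G$, that the resulting set simultaneously blocks every head from $t$ and preserves the $t\to v$ connection; this bookkeeping is where the real work lies.
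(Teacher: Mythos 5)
The heart of your argument---constructing a $t\to\sigma$-separator $S^+$ of size at most $k$ from the witness $(S,v,w)$ while keeping $v$ reachable---cannot be carried out, because your auxiliary graph $\tilde G$ over-constrains the problem. Adding an arc $(w',\sigma)$ for \emph{every} head $w'$ of a $U$-arc forces every reachable head to be severed or deleted, whereas the definition of traversability only concerns heads whose $U$-\emph{tails} are reachable: a head reachable solely by non-$U$ paths, all of whose $U$-tails are intrinsically unreachable from $t$, poses no obstruction at all. Concretely, take $G$ on $t,v,w,u_1,\dots,u_{k+1},w_1,\dots,w_{k+1}$ with arcs $(t,v)$, $(v,w)$, and $(t,w_i)$, $(u_i,w_i)$ for each $i$, and $U=\{(v,w)\}\cup\{(u_i,w_i): i\}$. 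Then $w$ is $k$-critical with witness $S=\{w\}$: the vertex $v$ is reachable in $G-S$, the arc $(v,w)$ is non-traversable because $w\in S$, and each $(u_i,w_i)$ is non-traversable because the source $u_i$ has no in-arcs and is never reachable from $t$. But your $\tilde G$ contains $k+1$ internally disjoint paths $t\to w_i\to\sigma$, so \emph{no} $t\to\sigma$-separator of size at most $k$ exists, and your enumeration misses $w$ entirely. The exchange you sketch cannot repair this: no vertex of $S$ ``cuts off'' $u_i$---it is unreachable regardless of $S$---so there is no blocking budget to transfer onto $w_i$.

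The fix is to make $\sigma$ reachable only \emph{through} a $U$-arc, rather than through any arc into a head. For each head $w'$ introduce a fresh \emph{deletable} vertex $m_{w'}$, add an arc $(v',m_{w'})$ for every $(v',w')\in U$ and an arc $(m_{w'},\sigma)$, and keep all of $G$ unchanged. In this $\tilde G$, a set $\hat S$ separates $t$ from $\sigma$ iff for every head $w'$ either $m_{w'}\in\hat S$ or every $U$-tail of $w'$ is unreachable from $t$ in $G-(\hat S\cap V(G))$---exactly the traversability condition, with $m_{w'}$ standing in for ``$w'\in S$.'' Given the witness $S$ for a $k$-critical $w$, the set $S\cup\{m_{w'}: w'\in S\text{ is the head of a }U\text{-arc}\}$ is a $t\to\sigma$-separator in $\tilde G$ of size at most $2k$ that keeps $v$ reachable, and your pushing argument then applies verbatim to force $m_w$ into the dominating important separator (via the arcs $(v,m_w)$ and $(m_w,\sigma)$). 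Enumerating all important $t\to\sigma$-separators of size at most $2k$ and outputting $\{w': m_{w'}\text{ appears in one of them}\}$ gives the desired superset of the $k$-critical vertices of size $2k\cdot 4^{2k}=2^{\mathcal{O}(k)}$.
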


\begin{lemma}
\label{thm:bcl_nouniversearctraversable}
  Let $(G, k, \ell, T)$ be an instance of {\sc Isolating Long Cycle Hitting Set Intersection} and let $S$ be an isolating long cycle hitting set in it.
  Let $t \in T$ be a last vertex of $T$ with respect to $S$ and $Z \subset V(G) \setminus S$ covering the shadow of $S$ with respect to $T$.
  Then no arc of $\delta^+_{\Gtorso}(\Cstar \setminus Z) \cup U_\textsf{long}$ is traversable from $t$ in $\Gtorso - S$.
\end{lemma}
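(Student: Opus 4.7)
The plan is a proof by contradiction. Suppose some arc $(v,w) \in \delta^+_{\Gtorso}(\Cstar \setminus Z) \cup \Ulong$ is traversable from $t$ in $\Gtorso - S$, so there is a $t \to v$-path $Q'$ in $\Gtorso - S$ with $w \notin S$. The first preparatory step is to \emph{lift} $Q'$ into $G$: since internal vertices of every torso arc lie in $Z$ and $Z \cap S = \emptyset$, replacing each torso arc of $Q'$ by its underlying $G$-path yields a $t \to v$-walk in $G - S$; concatenating the underlying path $P_{(v,w)}$ extends it to a $t \to w$-walk in $G - S$ passing through $v$. I would also record a uniform fact for later use: the entire walk avoids $T \setminus \{t\}$, because any torso vertex on it is reachable from $t$ in $G - S$ (so by \autoref{lem:last_vertex_cant_reach_other_tvertices} it cannot lie in $T \setminus \{t\}$), and internal vertices of lifted torso arcs lie in $Z$, which by \autoref{cor:covering_of_shadows_lchs} is disjoint from $T$.

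I would then split on which of the two sets $(v,w)$ belongs to. First suppose $(v,w) \in \delta^+_{\Gtorso}(\Cstar \setminus Z)$ but $(v,w) \notin \Ulong$. Then $P_{(v,w)}$ leaves $\Cstar$ via some arc that is not in $\Along$. By \autoref{lem:cstar_has_only_long_outgoing_arcs}, the head of that arc must lie in $T \setminus \{t\}$; and since internal vertices of $P_{(v,w)}$ lie in $Z$ (disjoint from $T$), this head must be $w$ itself. Then $w \in T \setminus \{t\}$, so the lifted $t \to w$-walk contradicts \autoref{lem:last_vertex_cant_reach_other_tvertices}.

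The main case is $(v,w) \in \Ulong$, where $P_{(v,w)}$ carries an arc $(a,b) \in \Along$ with $\dist_{G_t}(b,a) \geq \ell$; the lifted walk reaches both $a$ and $b$ from $t$ in $G - S$. The heart of the argument is a subclaim: $b$ cannot reach $T$ in $G - S$. If it could, \autoref{lem:last_vertex_cant_reach_other_tvertices} forces the target to be $t$, giving a $b \to t$-path $R$ in $G - S$; concatenating the $t \to a$-prefix of our walk, the arc $(a,b)$, and $R$ gives a closed walk through $(a,b)$. If $R$ avoids $T \setminus \{t\}$, the whole closed walk lives in $G_t - S$, and any simple cycle through $(a,b)$ extracted from it has $b \to a$-portion of length at least $\dist_{G_t}(b,a) \geq \ell$, yielding a cycle of length more than $\ell$ in $G - S$ and contradicting that $S$ is a solution. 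If instead $R$ meets some $t'' \in T \setminus \{t\}$, its $b \to t''$-prefix extends our walk to a $t \to (T \setminus \{t\})$-walk, again contradicting the last-vertex lemma. The subclaim forces $b$ into the reverse shadow of $S$, hence $b \in Z$; since $w \in V(\Gtorso) = V(G) \setminus Z$, we conclude $b \neq w$, so $b$ is an internal vertex of $P_{(v,w)}$ and the $b \to w$-subpath of $P_{(v,w)}$ lies in $G - S$. Then $w$ cannot reach $T$ in $G - S$ either (else $b$ would), so $w$ lies in the reverse shadow of $S$ as well, giving $w \in Z$ -- the final contradiction.

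The delicate step I expect to grind against is the $\Ulong$ case, specifically extracting a long cycle in $G - S$ from the concatenation around $(a,b)$. The subtlety is that $\Along$ constrains distances in $G_t$, while a returning $b \to t$-path in $G - S$ may exploit vertices of $T \setminus \{t\}$ that are unavailable in $G_t$. The remedy, as sketched, is to dichotomize on whether the returning path stays inside $G_t$: in the clean subcase one invokes the $\Along$-bound directly, and in the dirty subcase the last-vertex lemma closes the argument.
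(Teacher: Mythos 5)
Your proof is correct, but it follows a genuinely different path from the paper's. The paper works \emph{forward} from the shadow-covering hypothesis: since $w \notin Z$ and $Z$ covers the shadow, $w$ reaches $T$ in $G - S$, so the lifted walk extends to a $t \to T$-walk, which \autoref{lem:last_vertex_cant_reach_other_tvertices} forces to be a closed walk through $t$ avoiding $T \setminus \{t\}$; from this single observation it follows that \emph{every} arc of the walk lies on a short cycle and hence in $\Gcirc$, which disposes of both cases at once (a $\Ulong$-arc forces an $\Along$-arc into $\Gcirc$; a $\delta^+$-arc forces $w$ into $\Cstar$). You instead argue by contrapositive: rather than exploiting $w$'s ability to reach $T$, you show $w$ \emph{cannot} reach $T$, place it in the reverse shadow, and derive $w \in Z$, contradicting $w \in V(\Gtorso)$. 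This requires a case split handled by hand: for the $\delta^+$-but-not-$\Ulong$ case you invoke \autoref{lem:cstar_has_only_long_outgoing_arcs} (not used in the paper's proof of this lemma) to force $w \in T \setminus \{t\}$ directly; for the $\Ulong$ case you run a two-level subcase analysis on the head $b$ of the $\Along$-arc, extracting a long cycle when $b$ can return to $t$ inside $G_t$, and then propagating the shadow membership from $b$ to $w$ along $P_{(v,w)}$. Both routes are valid; the paper's is shorter because the closed-walk/$\Gcirc$ observation is derived once and reused. One small remark: you cite \autoref{cor:covering_of_shadows_lchs} for $Z \cap T = \emptyset$, which is not a stated hypothesis of the lemma; however, since every vertex of the lifted walk is reachable from $t$ in $G-S$, \autoref{lem:last_vertex_cant_reach_other_tvertices} already ensures it avoids $T \setminus \{t\}$, so that reliance is easily removed.
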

\begin{proof}
  Let $P$ be a path of $\Gtorso$ starting at $t$, disjoint from $S$, and containing at least one arc~$(v,w) \in \delta^+_{\Gtorso}(\Cstar \setminus Z) \cup \Ulong$.
  By the definition of $\Gtorso$ we can replace the all arcs of $P$ with paths whose internal vertices are in $Z$ (hence disjoint from~$S$ by choice of $Z$) to obtain a $t \to w$-walk~$P'$ in $G - S$.
  As $w \notin Z$ and $Z$ covers the shadow of $S$ with respect to $T$, there is a $w \to T$-path $R$ in $G - S$.
  Thus, $W = P' \circ R$ is a $t \to T$-walk in $G - S$.
  By \autoref{lem:last_vertex_cant_reach_other_tvertices},~$W$ must be disjoint from $T\setminus\{t\}$ i.e. a closed walk.
  Therefore, every arc of $W$ is in a cycle, and as $W$ is disjoint from $S$, these cycles have length at most~$\ell$.
  Hence, every arc of $W$ exists in $\Gcirc$.
	
  If $(v,w) \in \Ulong$, then, by definition of $\Ulong$, $P'$ and therefore $W$ must contain an arc of $\Along$---contradiction to the fact that all arcs of $W$ are in $\Gcirc$.
  Otherwise, we have that $(v,w) \in \delta^+_{\Gtorso}(\Cstar \setminus Z)$, implying $w \notin \Cstar$.
  But then the closed walk $W$ contains $w$ and proves that $w$ is in the same connected component of $\Gcirc$ as $t$, namely $\Cstar$ --- a contradiction.
\end{proof}

\begin{definition}
  Let $(G, k, \ell, T)$ be an instance of {\sc Isolating Long Cycle Hitting Set Intersection}, $t \in T$ and $Z \subseteq V(G)$.
  Let $\Gtorso$ the torso of $G$ with respect to $t$ and $Z$ with $\Ulong$ as set of long arcs.
  We define $\Voutt$ to be the vertex set $\lbrace v \in \Cstar | (v,w) \in \delta^+_{\Gtorso}(\Cstar \setminus Z) \cup \Ulong \rbrace$.
\end{definition}

\begin{lemma}
\label{lem:intersecting_t_to_vout_paths}
  Let $(G, k, \ell, T)$ be an instance of {\sc Isolating Long Cycle Hitting Set Intersection}, let $t \in T$ and $Z \subset V(G) \setminus T$.
  Then in time $2^{\mathcal{O}(k)}\cdot n^{\mathcal{O}(1)}$ we can find a set $\setkcrit$ of $2^{\mathcal{O}(k)}$ vertices, such that for every isolating long cycle hitting set $S$ of size at most $k$
  \begin{compactitem}
    \item for which $t$ is a last vertex of $T$ with respect to $S$,
    \item that is disjoint from $Z$ and for which $Z$ covers the shadow of $S$ with respect to $T$,
    \item and that is disjoint from $\setkcrit$,
  \end{compactitem}
  there is no $t \to \Voutt$-path in $G - S$.
\end{lemma}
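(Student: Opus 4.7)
The plan is to apply the critical-vertex machinery of \autoref{thm:bcl_findsupersetofcriticalvertices} to the torso graph, and to combine it with \autoref{thm:bcl_nouniversearctraversable}. Concretely, I would first build $\Gtorso$ together with $\Ulong$ from the given $G$, $t$, and $Z$ (this is polynomial time), set $U = \delta^+_{\Gtorso}(\Cstar \setminus Z) \cup \Ulong$, and then invoke \autoref{thm:bcl_findsupersetofcriticalvertices} on the digraph $\Gtorso$, the arc set $U$, and the vertex $t$. This produces a set $\setkcrit$ of $2^{\mathcal{O}(k)}$ vertices that is a superset of every $k$-critical vertex of $\Gtorso$ with respect to $U$ and $t$, computed in time $2^{\mathcal{O}(k)}\cdot n^{\mathcal{O}(1)}$, matching the stated bounds.

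For correctness, suppose toward a contradiction that some $S$ satisfies the three listed bullets yet there is a $t\to \Voutt$-path in $G - S$. Since $S$ is disjoint from $Z$ and both $t$ and every vertex of $\Voutt$ lie outside $Z$ (the latter because all arcs of $U$ have their tail in $V(\Gtorso)=V(G)\setminus Z$), this path translates---by replacing each maximal subpath whose internal vertices lie in $Z$ with the corresponding torso arc---into a $t\to v$-path in $\Gtorso - S$ for some $v\in\Voutt$. By definition of $\Voutt$, there exists an arc $(v,w)\in U$. The hypotheses of \autoref{thm:bcl_nouniversearctraversable} are met, since $t$ is a last vertex of $T$ with respect to $S$ and $Z$ covers the shadow of $S$; hence no arc of $U$ is traversable from $t$ in $\Gtorso - S$. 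In particular $(v,w)$ is not traversable, yet $t$ reaches $v$ in $\Gtorso - S$, which forces $w\in S$.

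Now the existence of an arc $(v,w)\in U$ together with the $t\to v$-path in $\Gtorso - S$ and the fact that no arc of $U$ is traversable in $\Gtorso - S$ is precisely the defining condition for $w$ to be $k$-critical in $\Gtorso$ with respect to $U$ and $t$. Therefore $w\in\setkcrit$, contradicting $S\cap \setkcrit=\emptyset$. The main conceptual obstacle is recognizing that the gap between the conclusion of \autoref{thm:bcl_nouniversearctraversable} (``no arc of $U$ is traversable'') and the desired conclusion (``no $t\to\Voutt$-path exists in $G-S$'') is bridged exactly by the notion of $k$-critical vertices of Chitnis et al., which can then be covered by the set produced by \autoref{thm:bcl_findsupersetofcriticalvertices}; the remainder amounts to bookkeeping about torso paths and verifying that the hypotheses of the two invoked statements are inherited from the three bullets in the lemma.
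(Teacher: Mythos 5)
Your proposal is correct and takes essentially the same route as the paper: build the torso $\Gtorso$ with $\Ulong$, run \autoref{thm:bcl_findsupersetofcriticalvertices} on $\Gtorso$ with the arc set $\delta^+_{\Gtorso}(\Cstar\setminus Z)\cup\Ulong$ and vertex $t$, then derive the contradiction by translating a hypothetical $t\to\Voutt$-path into $\Gtorso - S$, invoking \autoref{thm:bcl_nouniversearctraversable}, and concluding that the head $w$ of the witnessing arc is $k$-critical and hence in $\setkcrit$. You are in fact slightly more explicit than the paper about why the path in $G-S$ translates to a path in $\Gtorso-S$ (using $S\cap Z=\emptyset$ and $t,\Voutt\subseteq V(\Gtorso)$) and about why $w\in S$, both of which the paper leaves implicit.
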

\begin{proof}
  Let $\Gtorso$ be the torso of $G$ with respect to $t$ and $Z$ with $\Ulong$ as set of long arcs.
  Use \autoref{thm:bcl_findsupersetofcriticalvertices} on the graph $\Gtorso$ with arc subset $\delta^+_{\Gtorso}(\Cstar \setminus Z) \cup\Ulong$ and vertex $t$ to obtain the set $\setkcrit$.
  The run-time and size bounds follow directly, we just have to argue about the correctness.

  Suppose, for sake of contradiction, that there was a $t \rightarrow v$-path $P$ in $\Gtorso - S$ with $(v,w) \in \delta^+_{\Gtorso}(\Cstar \setminus Z) \cup \Ulong$ for an $S$ as in the statement of the lemma.
  By definition of $v$ and~$w$, we have $v,w \in V(\Gtorso)$.
  Therefore,~$P$ implies a $t \rightarrow v$-path $P'$ in $\Gtorso - S$.
  But no arc of $\delta^+_{\Gtorso}(\Cstar \setminus Z) \cup\Ulong$ is traversable in $G - S$ by \autoref{thm:bcl_nouniversearctraversable}, proving that $w$ is $k$-critical with respect to $t$ and~$S$ in~$\Gtorso$.
  Therefore, $w \in \setkcrit$---which yields a contradiction to the choice of~$S$ being disjoint from~$\setkcrit$.
\end{proof}

\begin{lemma}
\label{lem:force_isolating_lchs_to_t_vout_separators}
  Let $(G, k, \ell, T)$ be an instance of {\sc Isolating Long Cycle Hitting Set Intersection}, let $t \in T$ and $Z \subset V(G) \setminus T$.
  Let $\setkcrit$ as in \autoref{lem:intersecting_t_to_vout_paths}.
  Then in time $2^{\mathcal{O}(k)} \cdot \polyn$ we can find a set $\setdisj$ of at most $(k +1)4^{k+1}$ vertices, such that every isolating long cycle hitting set $S$
  \begin{compactitem}
    \item of size at most $k$,
    \item for which $t$ is a last vertex of $T$ w.r.t. $S$,
    \item that is disjoint from $Z$ and for which $Z$ covers the shadow of $S$ w.r.t. $T$,
    \item and that is disjoint from $\setdisj{} \cup \setkcrit$,
  \end{compactitem}
  contains a $t \to \Voutt$-separator in $G$.
\end{lemma}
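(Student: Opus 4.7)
The plan is to apply \autoref{thm:bcl_smallpathwitness} to the digraph $G$ with source vertex $x := t$, target set $Y := \Voutt$, and parameter $k$. This returns, in time $2^{\mathcal O(k)}\cdot\polyn$, a set $\setdisj\subseteq \Voutt$ of size at most $(k+1)4^{k+1}$ enjoying property~\eqref{eqn:bcl_impsepproperty}: for every $S\subseteq V(G)$ with $|S|\le k$, the existence of a $t\to \Voutt$-path in $G-S$ entails the existence of a $t\to \setdisj$-path in $G-S$. I return this as $\setdisj$; the size bound is exactly $(k+1)4^{k+1}$, as required.

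To verify correctness, fix any isolating long cycle hitting set $S$ meeting the four bullets. Since the conditions imposed on $S$ match those of \autoref{lem:intersecting_t_to_vout_paths}, the digraph $G-S$ contains no $t\to\Voutt$-path. I claim that $S^\star := S\setminus\Voutt \subseteq S$ is a $t\to\Voutt$-separator in $G$, which in particular proves that $S$ contains such a separator. Disjointness of $S^\star$ from $\{t\}\cup\Voutt$ is by construction. Suppose for contradiction some $t\to\Voutt$-path exists in $G-S^\star$; let $\Pi$ be a shortest such, ending at some $v\in\Voutt$. Minimality forces $\Pi$ to have no internal vertex in $\Voutt$, so its internal vertices avoid $S^\star\cup\Voutt\supseteq S$; hence only the endpoint $v$ may lie in $S$. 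If $v\notin S$, $\Pi$ is already a $t\to\Voutt$-path in $G-S$, directly contradicting \autoref{lem:intersecting_t_to_vout_paths}.

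The main obstacle is the remaining case $v\in S\cap\Voutt$, where the offending endpoint is itself removed in $G-S$, so \autoref{lem:intersecting_t_to_vout_paths} does not immediately yield the contradiction. This is where the set $\setdisj$ enters. I invoke property~\eqref{eqn:bcl_impsepproperty} applied to $S^\star$ (which has size $\le k$): the existence of a $t\to\Voutt$-path in $G-S^\star$ produces a $t\to\setdisj$-path $Q$ in $G-S^\star$, whose endpoint lies outside $S$ by the hypothesis $S\cap\setdisj=\emptyset$. Repeating the shortest-prefix analysis on $Q$: either the prefix ends at a $\Voutt$-vertex outside $S$ (yielding a $t\to\Voutt$-path in $G-S$ and hence the desired contradiction), or it ends at some $v'\in S\cap\Voutt$. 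In the latter sub-case, the witnessing out-arc $(v',w')\in \delta^+_{\Gtorso}(\Cstar\setminus Z)\cup\Ulong$ guaranteed by $v'\in\Voutt$, paired with the set $S\setminus\{v'\}$ of size at most $k$, serves as a witness for $w'$ being $k$-critical in $\Gtorso$, exactly as in the proof of \autoref{lem:intersecting_t_to_vout_paths}. Hence $w'\in\setkcrit$, and the disjointness $S\cap\setkcrit=\emptyset$ gives $w'\notin S$, making $(v',w')$ traversable from $t$; this contradicts \autoref{thm:bcl_nouniversearctraversable} and closes the proof.
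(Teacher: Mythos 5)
Your opening moves agree with the paper's --- compute $\setdisj$ from \autoref{thm:bcl_smallpathwitness}, let $S^\star=S\setminus\Voutt$, and reduce to showing $S^\star$ is a $t\to\Voutt$-separator --- and you correctly isolate as the ``main obstacle'' the case where the certified path terminates at some $v'\in S\cap\Voutt$. The $k$-criticality argument you use to handle it, however, does not go through. To witness that $w'$ is $k$-critical via $S''=S\setminus\{v'\}$ you must establish \emph{both} a $t\to v'$-path in $\Gtorso-S''$ \emph{and} that no arc of $U=\delta^+_{\Gtorso}(\Cstar\setminus Z)\cup\Ulong$ is traversable from $t$ in $\Gtorso-S''$. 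You supply the first but merely assert the second ``exactly as in the proof of \autoref{lem:intersecting_t_to_vout_paths},'' where it follows from \autoref{thm:bcl_nouniversearctraversable}; but that lemma gives non-traversability in $\Gtorso-S$, a strict subgraph of $\Gtorso-S''$, and does not transfer. Indeed, if $w'\notin S$ then $(v',w')$ \emph{is} traversable from $t$ in $\Gtorso-S''$ (a $t\to v'$-path exists there and $w'\notin S''$), so $S''$ is no witness at all. Your closing contradiction is also mis-targeted: \autoref{thm:bcl_nouniversearctraversable} speaks of $\Gtorso-S$, where $v'\in S$ means there is no $t\to v'$-path whatsoever, so no arc out of $v'$ is traversable there and nothing is refuted.

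The underlying gap --- a $t\to V'$-path produced by \autoref{thm:bcl_smallpathwitness} may pass internally through $S\cap(\Voutt\setminus V')$ --- is incidentally present in the paper's own proof as well (the step ``Therefore, $S$ intersects $V'$'' does not follow from $S\cap(\Voutt\cup\{t\})\neq\emptyset$). It can be closed cleanly by applying \autoref{thm:bcl_smallpathwitness} not to $G$ but to the digraph obtained from $G$ by deleting every out-arc of every vertex of $\Voutt$: there a certified $t\to V'$-path has no internal vertex in $\Voutt$, hence its internal vertices avoid $S^\star\cup\Voutt\supseteq S$, its endpoint lies in $V'$ (disjoint from $S$), and $t\notin S$, so the whole path survives in $G-S$ --- contradicting \autoref{lem:intersecting_t_to_vout_paths} directly and avoiding the problematic sub-case altogether.
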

\begin{proof}
  Use \autoref{thm:bcl_smallpathwitness} on $t$ and $\Voutt$ to obtain the set $V' \subseteq \Voutt$.
  We return $V' \cup \lbrace t \rbrace$ as $\setdisj$.
  The run-time and size bounds follow directly, we just have to argue about the correctness.
	
  Suppose, for sake of contradiction, that some $S$ as in the lemma does not contain a $t \to \Voutt$-separator.
  In particular, $S' = S \setminus (\Voutt \cup \lbrace t\rbrace)$ is not a $t \to \Voutt$-separator.
  So there is a $t \to \Vout$-path in $G - S'$ and $|S'| \leq |S| \leq k$.
  Thus there is a $t \to V'$-path in $G - S'$ by \autoref{thm:bcl_smallpathwitness}.
  By \autoref{lem:intersecting_t_to_vout_paths}, this path does not exist in $G - S$ (recall that $V' \subseteq \Voutt$).
  Hence the set $S \setminus S' = V' \cup \lbrace t \rbrace$ is not empty.
  By definition of isolating long cycle hitting sets, $t \notin S$ holds.
  Therefore, $S$ intersects $V'$ in contradiction to the choice of $S$ and $V' =\setdisj$.
\end{proof}


All isolating long cycle hitting sets $S$ not already covered by \autoref{lem:force_isolating_lchs_to_t_vout_separators},  are $t \rightarrow \Voutt$-separator.
Also, these $S$ intersects all cycles of length more than~$\ell$ in $G$.
Combining these two properties we introduce the notion of ``hitting separators'':

\begin{definition}
  Let $G$ be a digraph, let $X, Y \subseteq V(G)$ be two vertex sets and let $\ell \in \mathbb{N}$ .
  We call an $X \to Y$-separator $U$ an \emph{hitting $X \to Y$-separator} if $\cf(G - U) \leq \ell$.
  For a set $Z \subseteq V(G)$ an $X \to Y$-separator $U$ is \emph{shadowless} if it is disjoint from $Z$ and every vertex in $V(G) \setminus (U \cup Z)$ can reach a vertex $v \in X \cup Y$ and is reachable from some $u \in X \cup Y$ in $G - U$.
  
  A (shadowless) hitting $X \to Y$-separator $U$ is \emph{important} if there is no (shadowless) hitting $X \to Y$-separator $U'$ with $|U'| \leq |U|$ and $R^-_{G - U'}(Y) \subsetneq R^-_{G - U}(Y)$.
  
  We call two important $X \to Y$-separators $U, U'$ \emph{range equivalent} if $R^-_{G - U}(Y)= R^-_{G - U'}(Y)$.
  This forms an equivalence relation among the important $X \to Y$-separators and we call the equivalence classes \emph{range equivalent classes}.
\end{definition}

Note that in the definition of important hitting separators, instead of maximizing the forward range we minimize the backward range.

As already stated, all isolating long cycle hitting sets $S$ not already covered by \autoref{lem:force_isolating_lchs_to_t_vout_separators} are $t \rightarrow \Voutt$-separators and fulfill $\cf(G - S) \leq \ell$.
Therefore, these $S$ are hitting $t \rightarrow \Voutt$-separators for~$G$.
The subgraph~$\Gstar$ inherits these properties.
Our goal is to replace $S \cap V(\Gstar)$ by an important hitting $t \rightarrow \Voutt$-separator with the help of the following lemma:

\begin{lemma}
\label{thm:bcl_pushinglemmanotcompletlyincstar}
  Let $(G, k, \ell, T)$ be an instance of {\sc Isolating Long Cycle Hitting Set Intersection} and let $S$ be an isolating long cycle hitting set in it.
  Let $t \in T$ be a last vertex of $T$ with respect to $S$, and let $Z \subset V(G) \setminus (S \cup T)$ be a set covering the shadow of $S$ with respect to $T$.
  If a hitting $t \rightarrow \Voutt$-separator $D$ in~$\Gstar$ with $R^-_{\Gstar - D}(\Voutt) \subseteq R^-_{\Gstar - S}(\Voutt)$ exists, then $S' = (S \setminus V(\Gstar)) \cup D$ is an isolating long cycle hitting set.
\end{lemma}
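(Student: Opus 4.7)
The plan is to verify that $S' = (S \setminus V(\Gstar)) \cup D$ satisfies both defining properties of an isolating long cycle hitting set, namely (i) $\cf(G - S') \leq \ell$ and (ii) every strong component of $G - S'$ contains at most one vertex of $T$. The structural observation to leverage throughout is that $S' \setminus V(\Gstar) = S \setminus V(\Gstar)$ and $S' \cap V(\Gstar) = D$, so $S'$ inherits the behavior of $S$ outside $\Cstar$ and of $D$ inside.

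For property (i), I would argue by contradiction: let $C$ be a cycle of $G - S'$ with $|C| > \ell$. If $V(C) \cap \Cstar = \emptyset$, then $C$ meets $S'$ only through $S \setminus V(\Gstar)$, so $C$ also avoids $S$, contradicting that $S$ is a long cycle hitting set. If $V(C) \subseteq \Cstar$ and every arc of $C$ lies in $A(\Gstar)$, then $C$ is a cycle in $\Gstar - D$ of length greater than $\ell$, contradicting $\cf(\Gstar - D) \leq \ell$. In the remaining case, $C$ either leaves $\Cstar$ or uses an $\Along$-arc with both endpoints in $\Cstar$. I would then trace $C$ and produce an ``exit'' vertex $b \in \Cstar \setminus Z$ whose outgoing torso-arc in $\Gtorso$ belongs to $\delta^+_{\Gtorso}(\Cstar \setminus Z) \cup \Ulong$, placing $b$ in $\Voutt$. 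The key points are that by \autoref{lem:cstar_has_only_long_outgoing_arcs} any direct $\Cstar$-exit on $C$ is either an $\Along$-arc or an arc into $T \setminus \{t\}$, and because $Z \cap T = \emptyset$, any maximal run of $Z$-vertices along $C$ connecting $\Cstar$ to its exterior must be entered via an $\Along$-arc and thus collapses to a torso arc in $\Ulong$. The subpath of $C$ inside $V(\Gstar)$ terminating at $b$ avoids $D$; using that $S \cap V(\Gstar)$ is itself a $t \to \Voutt$-separator in $\Gstar$ (inherited from $S$) together with the hypothesis $R^-_{\Gstar - D}(\Voutt) \subseteq R^-_{\Gstar - S}(\Voutt)$ and the strong connectivity of $\Gstar$, I would transfer the reachability witnessed in $\Gstar - D$ to a witness in $\Gstar - S$, and splicing in $C$'s portion outside $\Cstar$ (which lies in $G - S$ because $Z$ is disjoint from $S$) would reconstitute a cycle of length greater than $\ell$ in $G - S$, the desired contradiction.

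For property (ii), suppose $t_1, t_2 \in T$ lie in the same strong component of $G - S'$. Since $S$ is isolating, one of the two directed paths witnessing their strong connection in $G - S'$ does not exist in $G - S$ and hence uses a vertex of $S \setminus S' = (S \cap V(\Gstar)) \setminus D$, forcing it to enter $\Cstar$. If one of the $t_i$ equals $t$, then by \autoref{lem:last_vertex_cant_reach_other_tvertices} the offending path starts at $t$; the same exit analysis as in (i) locates $b \in \Voutt$ and an initial $t \to b$-subpath in $V(\Gstar)$ that avoids $D$, which directly contradicts $D$ being a $t \to \Voutt$-separator. Otherwise, both $t_i$ lie in $T \setminus \{t\}$, and applying the exit analysis to the offending path produces an $a \to b$-path in $\Gstar - D$ with $b \in \Voutt$; combining the $R^-$ inclusion with the shadow-cover property of $Z$ then reconstitutes a full $t_1 \to t_2$ or $t_2 \to t_1$ path in $G - S$, contradicting the isolation of $S$.

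The main technical obstacle in both parts is the exit analysis: rigorously establishing $b \in \Voutt$ when the exit from $\Cstar$ proceeds through several $Z$-vertices, and promoting a $G - S'$-subpath inside $V(\Gstar)$ to a genuine $\Gstar - D$-path rather than a walk in $G$ that might traverse an $\Along$-arc. Resolving these requires a careful combination of \autoref{lem:cstar_has_only_long_outgoing_arcs}, the definitions of $\Gtorso$ and $\Ulong$, the disjointness $Z \cap (S \cup T) = \emptyset$, the shadow-cover property of $Z$, and the $R^-$ inclusion on $D$.
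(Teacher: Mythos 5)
Your plan correctly identifies the two properties to verify, and your ``exit analysis'' through the torso is the right intermediate step. However, the contradiction you want to extract from the hypothesis $R^-_{\Gstar - D}(\Voutt) \subseteq R^-_{\Gstar - S}(\Voutt)$ has a genuine gap, and you also invoke an assumption the lemma does not grant you.

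The gap: in both parts you propose to ``transfer the reachability witnessed in $\Gstar - D$ to a witness in $\Gstar - S$'' and then splice that witness with the portion of $C$ outside $\Cstar$ to ``reconstitute a cycle of length greater than $\ell$ in $G - S$'' (resp.\ a $t_1\to t_2$ path). This does not follow. The $R^-$-inclusion only tells you that a vertex reaching $\Voutt$ in $\Gstar - D$ can \emph{also} reach $\Voutt$ in $\Gstar - S$, via some unrelated path to some possibly different vertex of $\Voutt$. Neither endpoints nor lengths are preserved, so there is no way to splice this new witness to $C$'s outer portion to obtain a long cycle, and no way to control the length of whatever object you do obtain. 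Relatedly, your appeal to ``$S \cap V(\Gstar)$ is itself a $t \to \Voutt$-separator (inherited from $S$)'' is not a hypothesis of this lemma (that fact is only established downstream, under extra disjointness assumptions in \autoref{lem:force_isolating_lchs_to_t_vout_separators}), and it is not implied by the $R^-$-inclusion either.

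The missing idea is simpler and makes reconstitution unnecessary. The long cycle $O$ in $G - S'$ must meet $S$ (since $S$ is a hitting set) but is disjoint from $S'$, so it meets some $v \in (S \cap V(\Gstar)) \setminus D$. Tracing $O$ from $v$ to a vertex $t' \in T\setminus\{t\}$ (which $O$ must contain, and $t' \neq t$ by a separate claim), the exit analysis shows that the initial segment $O[v,x]$ up to the first exit vertex $x \in \Voutt$ lives entirely in $\Gstar - D$. Hence $v \in R^-_{\Gstar - D}(\Voutt)$. But $v \in S$, so trivially $v \notin R^-_{\Gstar - S}(\Voutt)$ (it is deleted there). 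This directly contradicts $R^-_{\Gstar - D}(\Voutt) \subseteq R^-_{\Gstar - S}(\Voutt)$ — no long cycle or path in $G - S$ needs to be built. The paper packages exactly this as a standalone claim (``no $S\cap\Cstar \to T\setminus\{t\}$-path in $G - S'$'') and then reuses it for the isolating property: a closed walk in $G - S'$ through two vertices of $T$ must meet $S\cap\Cstar$ and hence contains a forbidden $S\cap\Cstar \to T\setminus\{t\}$-path. Your part (ii) should be rerouted through this same claim rather than attempting to reconstitute a $t_1\to t_2$ path.
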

\begin{proof}
  We first show that $\cf(G - S') \leq \ell$.
  Suppose, for sake of contradiction, that $G - S'$ contains a cycle $O$ of length more than~$\ell$.
  As~$O$ is intersected by $S$ but not by $S'$, it is intersected by some $v \in S \cap V(\Gstar)$ but not by $D$.
  By $S$ and $Z$ being disjoint, $v$ exists in $\Gtorso$ and especially in $\Gtorso[\Cstar \setminus Z]$.
  Moreover, $\cf(G - T) \leq \ell$ and therefore $O$ has to be intersected by $T$.
	
  \begin{claim}
    $O$ does not contain $t$.
  \end{claim}
  \begin{claimproof}
    Suppose, for sake of contradiction, that $O$ is intersected by~$t$.
    Then $O$ cannot be completely contained in $\Gstar$, as $D$ is a hitting $t \rightarrow \Voutt$-separator and would intersect it.
    So it has to leave $\Gstar$ by either visiting a $t' \in T \setminus  \lbrace t \rbrace$ or using an arc of $\Along$.
	
    If it visits a $t'$ we have that $t' \notin \Cstar$ (by definition of $\Cstar$ and $t' \notin Z$ (by $Z \cap T = \emptyset$).
    Therefore,~$O$ contains a $t \to t'$-path $P$ in $G - D$ that induces an $t \to t'$-path $P'$ in $\Gtorso - D$.
    Then $P'$ has to use an arc $(x, y) \in \delta_{\Gtorso}(\Cstar \setminus Z)$ which implies $x \in \Voutt$.
    This is a contradiction to $D$ being a $t \rightarrow \Voutt$-separator, as $P[t, x]$ is a $t \rightarrow \Voutt$-path in $\Gstar - D$.
		
    Otherwise, $O$ uses an arc in $\Along$.
    The induced cycle $O'$ in $\Gtorso$ contains at least the vertex $t$ by $T \cap Z = \emptyset$.
    Therefore, $O'$ has at least one vertex and one arc.
    One arc of $O'$ was induced by an path containing an arc in $\Along$.
    This arc is in $\Ulong$ and therefore $O$ contains an vertex $y \in \Voutt$.
    Then the existence of the path $O[t, y]$ is a contradiction to $D$ being a $t \rightarrow \Voutt$-separator.
  \end{claimproof}
	
  As $O$ does not contain $t$, it has to contain some vertex in $t' \in T \setminus \lbrace t \rbrace$.
  So $O[v, t']$ is an $S \cap \Cstar \to T \setminus \lbrace t \rbrace$-path in $G - S'$.
  We get a contradiction by the following claim:
  
  \begin{claim}
  \label{claim:no_cstar_s_path_leaving_cstar}
    There is no $S \cap \Cstar \to T \setminus \lbrace t \rbrace$-path in $G - S'$.
  \end{claim}
  \begin{claimproof}
    Suppose, for sake of contradiction, that such a path $R$ exists.
    Let $R$ be an $s \to t'$-path.
    As $s, t' \notin Z$, path $R$ induces an $s \to t'$-path $R'$ in $\Gtorso - S'$.
    Furthermore, $t' \notin \Cstar$ and therefore $R'$ has to leave $\Cstar$ at least once.
    Let $(x, y)$ be the first arc on $R'$ such that $(x, y) \in \delta_{\Gtorso}(\Cstar \setminus Z) \cup \Ulong$.
    Then $x \in \Voutt$.
    Furthermore, all arcs before $(x, y)$ on $R'$ are induced only by paths that do not contain $\Along$.
    Also $R'[s, x]$ contains no vertex in $T \setminus \lbrace t \rbrace$.
    So $R[s, x]$ lies in $\Gstar - D$.
    This shows that $s$ lies in $R^-_{\Gstar - D}(\Voutt)$ but not in $R^-_{\Gstar - S}(\Voutt)$---a contradiction.
  \end{claimproof}
	
  We just proved that $S'$ is a long cycle hitting set.
  Now we have to show that $S'$ is indeed isolating.
  As $D$ lies in $V(\Gstar) - t \subseteq V(G) - T$ and $S$ is disjoint from $T$, also $D$ is disjoint from~$T$.
  Suppose, for sake of contradiction, that there is a closed walk $W$ containing two different vertices $t_1, t_2 \in T$ in $G - S'$.
  Set $S$ was isolating, and therefore intersects $W$ in some vertex $x \in S \setminus S' \subseteq S \cap \Cstar$.
  Moreover, at least one of $t_1$ and~$t_2$ is different from $t$.
  So~$W$ contains a $S \cap \Cstar \to T \setminus \lbrace t \rbrace$-path --- a contradiction to \autoref{claim:no_cstar_s_path_leaving_cstar}.
\end{proof}

%

We will now state the remaining problem that we face:
\begin{center}
  \framebox[1.0\textwidth]{
    \begin{tabularx}{0.97\textwidth}{rXl}
      \multicolumn{3}{X}{{\sc\centering Important Hitting Separator in Strong Digraphs} \hfill  \textit{Parameter:} $k + \ell$.} \\[1em]
      \textit{Input:}      & \multicolumn{2}{l}{A strong directed multigraph $G$, integers $k,\ell \in\mathbb N$,}\\
      								 & \multicolumn{2}{l}{$t \in V(G)$ and sets $Z, V_\textsf{out} \subseteq V(G)$.}\\
      \textit{Properties:}	& \multicolumn{2}{l}{$\cf(G - t) \leq \ell$, $G$ has no medium-length cycles and}\\
      									&  \multicolumn{2}{l}{every arc of $G$ lies on a cycle of length at most $\ell$.}\\
      \textit{Task:}	&  \multicolumn{2}{l}{If $G$ has a non-trivial important hitting $t \to V_\textsf{out}$-separator,}\\
      							&  \multicolumn{2}{>{\raggedright}p{0.70\textwidth}}{find a vertex set $\setHS$ intersecting either
      								\begin{compactitem}
      									\item all important shadowless hitting $t \to V_\textsf{out}$-separators\linebreak with respect to~$Z$ of size $\leq k$ or
      									\item one important hitting $t \to \Vout$-separator of size $\leq k$\linebreak in every range equivalence class.
									\end{compactitem}}\\[-1em]
  \end{tabularx}}
\end{center}

\begin{theorem}
\label{thm:isolating_lchs_to_important_hitting_separator}
  The {\sc Isolating Long Cycle Hitting Set Intersection} problem for an instance $(G, k, \ell, T)$ can be solved by in time $2^{\mathcal{O}(k^2)} \cdot n^{\mathcal{O}(1)}$ by an algorithm which makes $2^{\mathcal{O}(k^2)}|T|\cdot \log^2(n)$ calls to an algorithm $\algorithmHS$ solving the {\sc Important Hitting Separator in Strong Digraphs} problem and returns a set of size $2^{\mathcal{O}(k^2)}|T|\cdot \log^2(n) \cdot \functionHS(k, \ell)$, where $\functionHS$ is a computable function such that $|\setHS| \leq \functionHS(k,\ell)$ and $\functionHS(k,\ell) \geq 1$.
\end{theorem}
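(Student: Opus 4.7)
The plan is to reduce the instance $(G, k, \ell, T)$ of {\sc Isolating Long Cycle Hitting Set Intersection} to a family of instances of {\sc Important Hitting Separator in Strong Digraphs} by branching on two guesses about an unknown isolating long cycle hitting set $S$: the identity of a last vertex $t \in T$ with respect to $S$, and a shadow-covering set $Z_i$ produced by \autoref{cor:covering_of_shadows_lchs}. Branching over $t$ gives $|T|$ choices, and applying \autoref{cor:covering_of_shadows_lchs} for each $t$ produces $2^{\mathcal{O}(k^2)}\log^2 n$ candidate shadow covers $Z_1,\dots,Z_p$ in time $2^{\mathcal{O}(k^2)}\cdot n^{\mathcal{O}(1)}$. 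In the correct branch, $t$ is a last vertex of $T$ with respect to $S$, and $Z_i$ is disjoint from $S$ while covering its shadow with respect to $T$.

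For each pair $(t,Z_i)$, I would compute the intersection-forcing sets $\setkcrit$ via \autoref{lem:intersecting_t_to_vout_paths} (size $2^{\mathcal{O}(k)}$) and $\setdisj$ via \autoref{lem:force_isolating_lchs_to_t_vout_separators} (size $(k+1)4^{k+1}$) and add them to the output $\setILCHSI$. By \autoref{lem:force_isolating_lchs_to_t_vout_separators}, either $S$ already intersects $\setkcrit\cup\setdisj$, or $S$ contains a $t\to\Voutt$-separator in $G$; in the latter case $S\cap V(\Gstar)$ is a hitting $t\to\Voutt$-separator in $\Gstar$, since $\cf(\Gstar-S)\le\cf(G-S)\le\ell$ and every $t\to\Voutt$-path in $\Gstar$ is also one in $G$. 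Before calling $\algorithmHS$, I would verify the required input properties of $\Gstar$: it is strong by construction; $\Gstar-t$ is a subgraph of $G-T$ and hence has circumference at most $\ell$; it inherits the absence of medium-length cycles from $G$; and every arc of $\Gstar$ lies on a cycle of length at most $\ell$ because any arc not in $\Along$ belongs to a cycle of length at most $\ell$ in $G_t$ whose arcs are all in $\Gcirc$ and whose vertices therefore stay within $\Cstar$.

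I would then invoke $\algorithmHS$ on $(\Gstar, k, \ell, t, Z_i\cap V(\Gstar), \Voutt)$ and add the returned $\setHS$ to $\setILCHSI$. Correctness in the pushed case follows from \autoref{thm:bcl_pushinglemmanotcompletlyincstar}: there is an important hitting $t\to\Voutt$-separator $D$ in $\Gstar$ whose reverse range is contained in $R^-_{\Gstar-S}(\Voutt)$, and any representative of the range-equivalence class of $D$ can replace $S\cap V(\Gstar)$ to yield a valid isolating long cycle hitting set. Using that $Z_i$ covers the shadow of $S$ with respect to $T$, a suitable $D$ can be chosen shadowless. In Case~A of $\algorithmHS$'s specification, $\setHS$ intersects every shadowless important hitting separator and hence $\setHS\cap D\subseteq \setHS\cap S'$ with $S'=(S\setminus V(\Gstar))\cup D$. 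In Case~B, $\setHS$ itself contains a representative $D'$ in the range-equivalence class of $D$; applying the pushing lemma with $D'$ produces another valid isolating long cycle hitting set $S''$ containing $D'$, so $\setILCHSI\supseteq \setHS$ still intersects a valid solution.

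The main obstacle will be reconciling both possible output modes of $\algorithmHS$ with the pushing lemma, and in particular verifying that an important separator $D$ range-equivalent to $S\cap V(\Gstar)$ can be chosen shadowless in $\Gstar$ with respect to $Z_i$---a property which rests on $Z_i$ covering not only the shadow of $S$ in $G$ but also the residual shadow produced when restricting attention to $\Gstar$. Once correctness is established the numerical bounds follow routinely: the total number of branches is $|T|\cdot 2^{\mathcal{O}(k^2)}\log^2 n$, each branch makes one call to $\algorithmHS$ and contributes $|\setkcrit|+|\setdisj|+|\setHS|=\mathcal{O}(\functionHS(k,\ell))$ vertices to $\setILCHSI$, and the preprocessing at each branch takes $2^{\mathcal{O}(k)}\cdot n^{\mathcal{O}(1)}$ time, matching the claimed bound of $2^{\mathcal{O}(k^2)}|T|\log^2(n)\cdot\functionHS(k,\ell)$ on the output size and $2^{\mathcal{O}(k^2)}\cdot n^{\mathcal{O}(1)}$ running time outside the oracle calls.
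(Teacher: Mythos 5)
Your proposal follows the paper's proof in every structural respect: branching over pairs $(t, Z_i)$, adding $\setkcrit$ and $\setdisj$, verifying that $\Gstar$ inherits the required input properties, calling $\algorithmHS$ on $(\Gstar, k, \ell, t, Z_i \cap \Cstar, \Voutt)$, and closing the argument with the pushing lemma against both output modes. The one step you flag as an obstacle is real, but the paper's resolution is worth noting because it is framed differently than you suggest: instead of trying to choose a $D$ range-equivalent to $S \cap V(\Gstar)$ that is also shadowless, the paper proves as a separate claim that $S \cap V(\Gstar)$ is \emph{itself} a shadowless hitting $t \to \Voutt$-separator in $\Gstar$ (by taking any $v \in V(\Gstar)\setminus(S\cup Z_i)$, using the shadow cover to get a $v \to T$-path in $G-S$, and translating its prefix through $\Gtorso$ to a $v \to \Voutt \cup \{t\}$-path inside $\Gstar$). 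Once $S \cap V(\Gstar)$ is known to be a shadowless hitting separator, both an important shadowless separator $D_\textsf{sl}$ and an important hitting separator $D$ dominating it exist automatically from the definition of importance, each yields a valid replacement solution via the pushing lemma, and whichever output mode $\algorithmHS$ realizes, $\setHS$ intersects one of those two.
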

\begin{proof}
  The algorithm reads as follows:\\
  \begin{algorithm}[H]
    \SetKwInOut{Input}{Input}\SetKwInOut{Output}{Output}	
	
    \Input{A digraph $G$, integers $k, \ell$ and a vertex set $T \subseteq V(G)$}
    \Output{A vertex set $\mathcal{S} \subseteq V(G)$}
    Let $\mathcal{S} = \emptyset$\;
		Compute sets $Z_1, \hdots, Z_p$ as in \autoref{cor:covering_of_shadows_lchs}.\;
		\ForEach{$i \in \lbrace 1 \hdots p \rbrace$}{
			\ForEach{$t \in T$} {
				Compute $\setkcrit$ for $G, t, Z_i$ with \autoref{lem:intersecting_t_to_vout_paths}.\;
				Compute $\setdisj$ for $G, t, Z_i$ with \autoref{lem:force_isolating_lchs_to_t_vout_separators}\;
				Add $\setkcrit \cup \setdisj$ to $\mathcal{S}$.\;
				Run $\algorithmHS$ on $(\Gstar, k, \ell, t, Z \cap \Cstar, \Voutt)$ and add the result $\setHS$ to $\mathcal{S}$.\;
			}
		}
  \end{algorithm}
	
  Let $(G, k, \ell, T)$ be an \textsc{Isolating Long Cycle Hitting Set Intersection} instance.
  We first make sure that the calls to $\algorithmHS$ are correct.
  As $\Gstar - t$ is a subgraph of $G - T$, we have $\cf(\Gstar - t) \leq \cf(G - T) \leq \ell$.
  As $G$ contained no medium-length cycles, neither does the subgraph~$\Gstar$.
  Last but not least, $\Gstar$ is a strong component of $\Gcirc$ where every arc lies on a short cycle, so also this property is inherited.
	
  Now we argue the correctness of the algorithm:
  If there is no isolating long cycle hitting separator for $(G, k, \ell, T)$ of size at most $k$, we may return any set.
  Otherwise, fix a long cycle hitting separator with $|S| \leq k$.
  By \autoref{cor:covering_of_shadows_lchs} there is some $Z_i$ that is disjoint from $T \cup S$ and covers the shadow with respect to~$Z$.
  Also, there is some $t$ which is a last vertex of $T$ with respect to $S$.
  For some inner loop we made the correct choices of $Z_i$ and $t$.
  If $S$ intersects $\setkcrit \cup \setdisj$ for these choices of $Z_i$ and $t$, we are done.
  Otherwise, $S$ is a hitting $t \to \Voutt$-separator in $\Gstar$ by \autoref{lem:force_isolating_lchs_to_t_vout_separators}.
  
  \begin{claim}
    The set $S$ is a shadowless hitting $t \to \Voutt$-separator in~$\Gstar$.
  \end{claim}
  \begin{claimproof}
    Let $v \in V(\Gstar) \setminus (S \cup Z_i)$.
    As $Z_i$ covers the shadow of $S$, there is a $v \to t'$-path~$P$ in $G - S$ with $t' \in T$.
    By $v$ and $T$ being disjoint from $Z$, $P$ induces a $v \to t'$-path $P'$ in~$\Gtorso$.
		
    Let $x$ be such that $(x, y)$ us the first arc in $\delta_{\Gtorso}(\Cstar \setminus Z_i) \cup \Ulong$ along $P'$, or $x = t'$ if there is none.
	In the latter case, $t'$ has to be in $\Cstar$ as no arc of $\delta_{\Gtorso}(\Cstar \setminus Z_i)$ was used.
    By choice of $x$, we have that $x \in \Voutt \cup \lbrace t \rbrace$ and $P'[v,x]$ is disjoint from $\delta_{\Gtorso}(\Cstar \setminus Z_i) \cup \Ulong$.
    This implies that $P[v,x]$ uses only arcs of~$\Along$, as otherwise an arc of $P'[v,x]$ had to be in $\Along$.
    But the short cycles of the arcs not in $\Along$ prove that every vertex of $P[v,x]$ is in the same strong component of $\Gcirc$ as $v$, namely $\Cstar$.
    Therefore, $P[v,x]$ is an $v \to  \Voutt \cup \lbrace t \rbrace$-path in $\Gstar = \Gcirc[\Cstar]$.
  \end{claimproof}
	
  Now let $D_\textsf{sl}$ be an important shadowless hitting $t \to \Voutt$-separator with $|D_\textsf{sl}| \leq |S \cap \Cstar|$ and $R^-_{\Gstar - D_\textsf{sl}}(\Voutt) \subseteq R^-_{\Gstar - S}(\Voutt)$.
  Furthermore, let $D$ be an important hitting $t \to \Voutt$-separator with $R^-_{\Gstar - D}(\Voutt) \subseteq R^-_{\Gstar - S}(\Voutt)$ and $|D| \leq |S \cap \Cstar|$.
  Both exist by $S \cap \Cstar = S \cap V(\Gstar)$ being a shadowless hitting $t \to \Voutt$-separator in $\Gstar$.
  Then, by \autoref{thm:bcl_pushinglemmanotcompletlyincstar}, both $S'_\textsf{sl} =  (S\setminus\Cstar) \cup D_\textsf{sl}$ and $S' =  (S\setminus\Cstar) \cup D$ are isolating long cycle hitting sets.
  A solution calculated by $\algorithmHS$ intersects either $D_\textsf{sl}$ or $D$, and therefore intersects either $S'_\textsf{sl}$ or~$S'$.
  Also $|S'_\textsf{sl}| = |S| - |S \cap \Cstar| + |D_\textsf{sl}| \leq k$ and $|S'| = |S| - |S \cap \Cstar| + |D| \leq k$.
  Thus our set $\mathcal{S}$ intersects a isolating long cycle hitting set.
  
  The run time and size bounds follow directly from \autoref{cor:covering_of_shadows_lchs}, \autoref{lem:intersecting_t_to_vout_paths} and \autoref{lem:force_isolating_lchs_to_t_vout_separators}.
\end{proof}

\subsection{Portals and Clusters}
\label{sec:findinghittingseparators}
In the previous section we reduced \textsc{Isolating Long Cycle Hitting Set Intersection} to {\sc Important Hitting Separator in Strong Digraphs}.
We now want to simplify this problem further by consideration of the strong components of $G - t$.
The deletion of $t$ reduces the long cycles in $G$ to paths.
We observe that every long path must be traversing a long distance in some strong component of $G - t$.
By restricting the important hitting separators with the help of some set $\setHS{}$, we can assume that there are not many strong components that need handling.
For these remaining strong components, we then solve the problem individually.

Let us start with the structure of $G$ after the deletion of $t$.
Let $\mathcal C$ be the set of strong components of~$G - t$.
For each $C\in\mathcal C$, we identify certain ``portal'' vertices that can be used to enter/leave the component.

\begin{definition}
  Let $G$ be a graph and let $C \subset V(G)$.
  A vertex $v \in C$ is a \emph{portal vertex} of $C$, if $\Delta_{G}(v) > \Delta_{G[C]}(v)$, where~$\Delta_H(v)$ is the number of incident arcs (both in-coming and out-going) of~$v$ in a graph~$H$.
  We denote by $X_C$ the set of all portal vertices of $C$.
\end{definition}

\begin{lemma}
\label{thm:bcl_shortcyclethroughvertex}
	Let $G$ be a digraph where every arc lies on a cycle of length at most $\ell$.
	Then for any $C \in \mathcal{C}$ and any $v \in X_C$ there is a cycle of length at most $\ell$ in $G$ going through~$v$ and $t$.
\end{lemma}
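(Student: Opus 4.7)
The plan is to exploit the definition of a portal vertex to exhibit an arc leaving $C$, and then use the short-cycle hypothesis on that arc.

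Fix $C \in \mathcal{C}$ and a portal vertex $v \in X_C$. Since $\Delta_G(v) > \Delta_{G[C]}(v)$, there is an arc $e$ of $G$ incident to $v$ whose other endpoint $w$ is not in $C$. Note that $w$ must either equal $t$ or lie in some other strong component $C' \in \mathcal{C}$ with $C' \neq C$, because the partition of $V(G) \setminus \{t\}$ into strong components is exhaustive. By the hypothesis that every arc of $G$ lies on a cycle of length at most $\ell$, we may pick such a cycle $O$ through $e$; in particular $O$ contains both $v$ and $w$.

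If $w = t$, then $O$ is already a cycle of length at most $\ell$ going through both $v$ and $t$, and we are done. Otherwise $w \in C'$ with $C' \neq C$. Suppose for contradiction that $O$ avoids $t$. Then $O$ is a cycle in $G - t$ passing through both $v \in C$ and $w \in C'$, which would force $v$ and $w$ to lie in a common strong component of $G - t$, contradicting $C \neq C'$. Hence $O$ must traverse $t$, giving the desired cycle through $v$ and $t$ of length at most $\ell$.

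There is no real obstacle here: the statement is essentially an immediate unpacking of the definition of portal vertex combined with the short-cycle hypothesis. The only point requiring a brief argument is ruling out the case $w \neq t$ with $O$ avoiding $t$, which is handled by the strong-component observation above.
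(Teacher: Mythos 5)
Your proof is correct and follows essentially the same approach as the paper: find an arc at $v$ leaving $C$, take a short cycle through that arc, and observe that this cycle cannot live in $G-t$ (since its endpoints would then lie in distinct strong components of $G-t$, or the arc itself would vanish), hence it passes through $t$. Your version is merely a bit more explicit about the case split $w = t$ versus $w \in C'$, which the paper's proof compresses into one sentence.
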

\begin{proof}
  As $\Delta_{G}(v) > \Delta_{G[C]}(v)$ there is an arc $a \in A(G)$ incident to $v$ with its other endpoint $w$ not contained in $C$.
  We know that $a$ lies on a cycle of length at most $\ell$ in $G$.
  As $a \not \in G[C]$ this cycle exists in $G$ but not in $G - t$; thus, the cycle goes through $t$.
\end{proof}

For every $C \in \mathcal{C}$ and $v\in X_C$, fix an arbitrary cycle as in \autoref{thm:bcl_shortcyclethroughvertex}, and let $O_v$ be the vertex set of that cycle.

\begin{lemma}
\label{thm:bcl_atleastoneshortdist}
  For any $v_1,v_2\in X_C$, either $\mathsf{dist}_{G[C]}(v_1,v_2)\leq 2\ell^2$ or $\mathsf{dist}_{G[C]}(v_1,v_2)\geq 2\ell^6 - 2\ell$.
\end{lemma}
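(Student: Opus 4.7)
The plan is to prove the dichotomy by contradiction: assume $d := \dist_{G[C]}(v_1,v_2)$ satisfies $2\ell^2 < d < 2\ell^6 - 2\ell$ and derive a contradiction. Fix a shortest $v_1 \to v_2$ path $P$ in $G[C]$; it has length $d$ and avoids $t$. Using the short cycles through $t$ given by \autoref{thm:bcl_shortcyclethroughvertex}, extract the subpaths $P_4 := O_{v_1}[t,v_1]$ and $P_3 := O_{v_2}[v_2,t]$, each of length at most $\ell-1$. Then $W := P \circ P_3 \circ P_4$ is a closed walk in $G$ of length at most $d + 2\ell - 2$ that passes through $t$.

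Next, I would consider the Eulerian decomposition of $W$ into edge-disjoint simple cycles. Crucially, since $P \subseteq G[C]$ avoids $t$ and $P_3, P_4$ are subpaths of the simple cycles $O_{v_2}, O_{v_1}$, the vertex $t$ has both in-degree and out-degree exactly one in the edge multiset of $W$. Consequently, exactly one cycle $C^\star$ in the decomposition passes through $t$, and every other cycle $C_i$ avoids $t$ and hence lies in $G-t$. Since $\cf(G-t)\le \ell$, each such $C_i$ has length at most $\ell$.

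I would then split into two cases based on $|C^\star|$. If $|C^\star| > \ell$, the no-medium-length-cycles hypothesis on $G$ forces $|C^\star|\ge 2\ell^6$, and from $|C^\star|\le |W|\le d+2\ell-2$ we get $d \ge 2\ell^6 - 2\ell$, contradicting the upper bound on $d$. Otherwise $|C^\star|\le \ell$, and I would use a double-counting argument: because $P$ is a simple path it is acyclic, so every $C_i$ must contain at least one edge of $P_3 \cup P_4$; by edge-disjointness the number of $C_i$'s is at most $|P_3|+|P_4|\le 2\ell-2$, hence the total number of edges in $\bigcup_i C_i$ is at most $(2\ell-2)\cdot \ell = 2\ell^2 - 2\ell$. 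On the other hand, $C^\star$ can contain at most $|C^\star|\le \ell$ edges of $P$, so at least $d - \ell$ edges of $P$ must appear in the $C_i$'s. This gives $d - \ell \le 2\ell^2 - 2\ell$, i.e. $d \le 2\ell^2 - \ell < 2\ell^2$, again contradicting the assumption.

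The main obstacle I expect is being careful about the Eulerian decomposition: one needs to verify that $t$ really has balanced degree exactly one (so that $C^\star$ is unique), and that distinct $C_i$'s use distinct edges of $P_3\cup P_4$ (which is why the argument must be phrased at the edge level rather than via vertex incidences). The remaining calculations are straightforward bookkeeping against the bounds $|P_3|,|P_4|\le \ell-1$ and $\cf(G-t)\le \ell$.
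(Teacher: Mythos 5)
Your proof is correct, but it follows a genuinely different route from the paper's. The paper also builds the short return path $P_2 = O_{v_2}[v_2,t]\circ O_{v_1}[t,v_1]$ of length at most $2\ell$, but then it considers the subgraph $G'$ induced on $V(P_1)\cup V(P_2)$: since $|V(G')| < 2\ell^6$, the no-medium-length-cycles hypothesis forces $\cf(G')\le\ell$, and a single application of \autoref{thm:bcl_distboundedcircum} gives $|P_1|\le (\ell-1)\cdot 2\ell < 2\ell^2+1$ immediately. You instead decompose the closed walk $W$ into arc-disjoint simple cycles and count: the degree-balance observation at $t$ (which is sound, since $t\notin C$ so $P$ avoids $t$, and $t$ appears only once on each of $P_3,P_4$) pins down a unique cycle $C^\star$ through $t$; the no-medium-length-cycles hypothesis handles $|C^\star|>\ell$, and the hypothesis $\cf(G-t)\le\ell$ plus the fact that a simple path is acyclic handles $|C^\star|\le\ell$ via the edge-charging argument. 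Both derivations are valid; what differs is which ambient structural lemma is doing the heavy lifting. The paper's route is shorter and localises the argument to a small induced subgraph, reusing the path-length-ratio lemma that it already needs elsewhere. Your route is more elementary (no external lemma) but requires the Eulerian-decomposition bookkeeping and uses \emph{both} the $\cf(G-t)\le\ell$ property and no-medium-length-cycles, whereas the paper's version gets by on the latter alone. Both yield bounds slightly stronger than the stated constants, which is fine. One cosmetic nit: your cycles $C_i$ may consume arcs of $P_3\cup P_4$ with multiplicity if $P_3$ and $P_4$ happen to share arcs; the bound $|P_3|+|P_4|\le 2\ell-2$ is correct because it counts occurrences in the walk $W$, not distinct arcs of $G$, which is what "edge-disjointness" of the decomposition measures — you already phrase the argument at the edge-multiset level, so this is fine, but worth stating explicitly.
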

\begin{proof}
  Suppose, for sake of contradiction, that $P_1$ is a $v_1\rightarrow v_2$-path of $G[C]$ with $2\ell^2 + 1\leq |P_1|\leq 2\ell^6 - 2\ell - 1$.
  There is a $t\rightarrow v_1$-path using only the vertices of $O_{v_1}$ and hence has length at most~$\ell$.
  Similarly, there is a $v_2\rightarrow t$-path of length at most $\ell$.
  Concatenating these two paths shows that $\mathsf{dist}(v_2,v_1)\leq 2\ell$; let $P_2$ be a $v_2\rightarrow v_1$-path of length at most $2\ell$ in $G$.
  
  Consider the digraph $G'$ induced by the vertices of the $v_1\rightarrow v_2$-path $P_1$ and the $v_2\rightarrow v_1$-path~$P_2$.
  This graph has at most $|P_1| + |P_2| \leq |P_1| + 2\ell < 2\ell^6$ vertices.
  As $G$ contains no medium-length cycles (i.e no cycles with length in $(\ell, 2\ell^6]$\enspace), we get $\mathsf{cf}(G')\leq \ell$.
  Applying \autoref{thm:bcl_distboundedcircum} on $P_1$ and $P_2$ in~$G'$, we get $|P_1|\leq (\mathsf{cf}(G')-1)|P_2|\leq (\ell-1)\cdot 2\ell < 2\ell^2 + 1$, a contradiction.
\end{proof}

Let $C\in\mathcal C$.
We partition $X_C$ into clusters the following way.
Let $\ell_{\max} := 2\ell^2$.
For every $v\in X_C$, let $X_v$ be the subset of $X_C$ that is at distance at most $\ell_{\max}$ from $v$ in $G[C]$ (note that $v\in X_v$).

\begin{lemma}
\label{thm:bcl_disjointorequal}
  For every $C\in\mathcal C$ and $v_1,v_2\in X_C$, the sets $X_{v_1}$ and $X_{v_2}$ are either disjoint or equal.
\end{lemma}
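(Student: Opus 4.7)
The plan is to show the contrapositive: if $X_{v_1}\cap X_{v_2}$ contains some common portal $w$, then $X_{v_1}=X_{v_2}$. The main tool is \autoref{thm:bcl_atleastoneshortdist}, which forces a gap in portal-to-portal distances: any two portals in $X_C$ are either within $2\ell^2$ of each other or more than $2\ell^6-2\ell$ apart. So it suffices to derive an upper bound on $\mathsf{dist}_{G[C]}(v_2,u)$ (for an arbitrary $u\in X_{v_1}$) that comfortably falls below $2\ell^6-2\ell$; the dichotomy then promotes it to $\le 2\ell^2=\ell_{\max}$, putting $u$ into $X_{v_2}$.

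To get the upper bound, I would concatenate three paths in $G[C]$: a $v_2\to w$-path, a $w\to v_1$-path, and a $v_1\to u$-path. The first and third have length at most $\ell_{\max}=2\ell^2$ by assumption that $w\in X_{v_2}$ and $u\in X_{v_1}$. For the middle one, note that $G[C]$ is strong (since $C$ is a strong component of $G-t$) and has circumference at most $\ell$ (since $G[C]\subseteq G-t$ and $\cf(G-t)\le\ell$ by the standing assumptions on the \textsc{Important Hitting Separator in Strong Digraphs} instance). Hence \autoref{thm:bcl_distboundedcircum} applied to the known $v_1\to w$-path of length $\le 2\ell^2$ yields a $w\to v_1$-path of length at most $(\ell-1)\cdot 2\ell^2$. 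Adding up, we get
\[
\mathsf{dist}_{G[C]}(v_2,u)\;\le\;2\ell^2+(\ell-1)\cdot 2\ell^2+2\ell^2\;=\;2\ell^3+2\ell^2.
\]

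For $\ell\ge 2$ this quantity is strictly less than $2\ell^6-2\ell$ (and for $\ell=1$ there are no medium-length cycles to worry about, and the statement becomes vacuous or reduces immediately), so \autoref{thm:bcl_atleastoneshortdist} applied to the two portals $v_2,u\in X_C$ forces $\mathsf{dist}_{G[C]}(v_2,u)\le 2\ell^2=\ell_{\max}$, i.e.\ $u\in X_{v_2}$. This shows $X_{v_1}\subseteq X_{v_2}$, and swapping the roles of $v_1$ and $v_2$ gives the reverse inclusion, hence equality.

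I do not expect any real obstacle here: the only subtlety is to make sure that the needed $w\to v_1$ path actually lives in $G[C]$ (where the circumference bound holds), which is exactly why we use that $C$ is a strong component and apply \autoref{thm:bcl_distboundedcircum} inside $G[C]$ rather than in the ambient graph $G$. The rest is a clean triangle-inequality computation combined with the gap lemma.
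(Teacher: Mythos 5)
Your proof is correct and follows essentially the same route as the paper's: both apply the triangle inequality to the chain $v_2\to w\to v_1\to u$, bound the reverse $w\to v_1$ leg by $(\ell-1)\ell_{\max}$ via \autoref{thm:bcl_distboundedcircum}, arrive at the same total $(\ell+1)\ell_{\max}=2\ell^3+2\ell^2$, and invoke the dichotomy of \autoref{thm:bcl_atleastoneshortdist} to collapse the distance down to $\ell_{\max}$. Your remark about $\ell=1$ (where the arithmetic comparison actually fails but the statement is vacuous since $G-t$ is acyclic and the strong components are singletons) is a small point the paper glosses over, but otherwise the arguments are the same.
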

\begin{proof}
  Suppose that $x\in X_{v_1}\cap X_{v_2}$ and, without loss of generality, $y\in X_{v_1}\setminus X_{v_2}$.
  Now
  \begin{eqnarray*}
    \mathsf{dist}_{G[C]}(v_2,y) & \leq & \mathsf{dist}_{G[C]}(v_2,x) + \mathsf{dist}_{G[C]}(x,v_1) + \mathsf{dist}_{G[C]}(v_1,y)\quad(\text{triangle inequality})\\
                                & \leq & \ell_{\max} + (\ell - 1)\mathsf{dist}_{G[C]}(v_1,x) + \ell_{\max}\quad(\text{\autoref{thm:bcl_distboundedcircum}})\\
                                & \leq & \ell_{\max} + (\ell - 1)\cdot\ell_{\max} + \ell_{\max}\\
                                &    = & (\ell + 1)\ell_{\max}\\
                                & \leq & 2\ell^6 - 2\ell - 1,
  \end{eqnarray*}
  hence by \autoref{thm:bcl_atleastoneshortdist}, we actually have $\mathsf{dist}_{G[C]}(v_2,y) \leq 2\ell^2 = \ell_{\max}$, implying $y\in X_{v_2}$.
\end{proof}

Therefore, the sets $X_v$ for $v\in X_C$ define a partition of $X_C$; we call the classes of these partitions the \emph{clusters} of $X_C$.
An example for portals and clusters can be found in \autoref{fig:portalsandclusters}.
\begin{figure}[hb!]
  \center
  \includegraphics[scale=1.5]{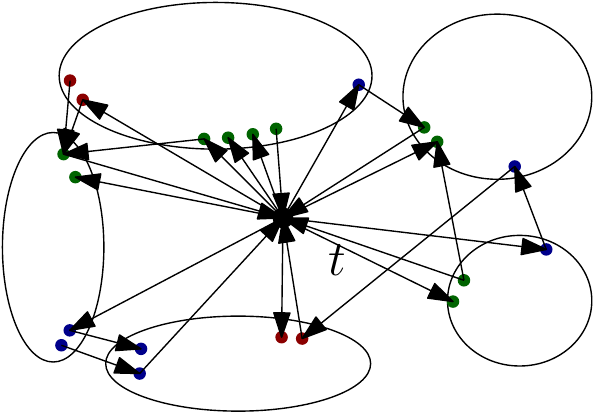}
  \caption{An example for the structure of $G- t$. The large circles form the strong components $C \in \mathcal{C}$. The colored dots represent the portals with their color corresponding to their cluster.\label{fig:portalsandclusters}}
\end{figure}

The huge distance between the clusters allows for the following structural insight:

\begin{lemma}
\label{thm:bcl_longcyclethroughsolution}
  Let $R$ be a cycle of length more than $\ell$ in $G$.
  Then $R$ contains a path between two different clusters of some strong component $C\in\mathcal C$.
\end{lemma}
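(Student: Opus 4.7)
I argue by contradiction, aiming to bound the length of $R$ under the assumption that no two portals in different clusters appear in a common $P_i$. Since $\cf(G-t)\le \ell$ and $|R|>\ell$, the cycle $R$ must pass through $t$; the no-medium-length assumption then forces $|R|\ge 2\ell^6$. Writing $P := R-t$, this is a simple path in $G-t$ of length at least $2\ell^6-2$ from $u$ (the out-neighbor of $t$ on $R$) to $w$ (the in-neighbor of $t$ on $R$). Because the condensation of $G-t$ is a DAG and $P$ is simple, I decompose $P = P_1\cdot e_1 \cdot P_2 \cdots e_{m-1}\cdot P_m$, where each $P_i$ lies in some strong component $C_i\in\mathcal{C}$ from entry portal $u_i$ to exit portal $u_i'$, and each $e_i$ is an inter-component arc. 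Note that $u,w$ as well as every $u_i, u_i'$ are portals of their respective components.

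Suppose for contradiction that for every $i$ the portals of $X_{C_i}$ traversed by $P_i$ lie in a single cluster. In particular $u_i$ and $u_i'$ share a cluster. Mimicking the triangle-inequality step from the proof of \autoref{thm:bcl_disjointorequal} and then invoking \autoref{thm:bcl_atleastoneshortdist} to collapse the resulting bound gives $\dist_{G[C_i]}(u_i', u_i) \leq \lmax$. Because $\cf(G[C_i]) \leq \ell$, I then apply \autoref{thm:bcl_distboundedcircum} to $P_i$ and a shortest $u_i'\to u_i$-path in $G[C_i]$ to conclude $|P_i| \leq (\ell-1)\lmax$ for every $i$, which is $O(\ell^3)$.

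The main obstacle is that this alone does not bound $|P|$, since the number $m$ of visited components might be huge. I overcome this by showing a similar dichotomy for the prefix length $|R[t,u_i']|$. For each $i$ let $O_{u_i'}$ be the short cycle through $t$ provided by \autoref{thm:bcl_shortcyclethroughvertex}, and form the subdigraph $G_i^\star := R[t,u_i'] \cup O_{u_i'}$, which has at most $|R[t,u_i']|+\ell$ vertices. Every cycle of $G_i^\star$ is a cycle of $G$, so the no-medium-length hypothesis forces it to have length at most $\ell$ or at least $2\ell^6$. Hence whenever $|V(G_i^\star)| < 2\ell^6$, we get $\cf(G_i^\star)\le \ell$, and applying \autoref{thm:bcl_distboundedcircum} inside $G_i^\star$ to the $t\to u_i'$-path $R[t,u_i']$ and the $u_i'\to t$-subpath of $O_{u_i'}$ (of length at most $\ell-1$) yields $|R[t,u_i']| \leq (\ell-1)^2$. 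Thus for every $i$, either $|R[t,u_i']| \leq (\ell-1)^2$ or $|R[t,u_i']| \geq 2\ell^6-\ell$.

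Since $|R[t,u_m']| = |R|-1 \geq 2\ell^6-\ell$, I take the smallest $i^\star$ with $|R[t,u_{i^\star}']| \geq 2\ell^6-\ell$, setting $|R[t,u_0']|:=0$. Then $|R[t,u_{i^\star-1}']| \leq (\ell-1)^2$, while $|R[t,u_{i^\star}']| - |R[t,u_{i^\star-1}']| = 1+|P_{i^\star}|$, forcing $|P_{i^\star}| \geq 2\ell^6-\ell - (\ell-1)^2 - 1$. For $\ell \geq 2$ this easily exceeds the cluster bound $(\ell-1)\lmax$ from the previous paragraph, contradicting the standing assumption. Hence some $P_i$ visits two portals of $C_i$ lying in distinct clusters, and the sub-path of $R$ between them is the required path between two different clusters of that strong component.
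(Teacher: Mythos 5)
Your proof is correct. It shares the paper's key ingredients --- the cycle $R$ must contain $t$, the no-medium-length hypothesis forces $|R|\geq 2\ell^6$, the short cycles $O_v$ from \autoref{thm:bcl_shortcyclethroughvertex} supply $O(\ell)$-length back-paths to $t$, and \autoref{thm:bcl_distboundedcircum} (or \autoref{thm:bcl_distboundedcircumsquared}) applied inside a small induced subgraph whose circumference is forced to be $\le\ell$ by the absence of medium-length cycles --- but the final counting step is organized differently. The paper enumerates all portal vertices $x_0,\dots,x_p$ appearing on $R$, splits on whether $p<\ell^2$ (in which case the segment bound $(\ell-1)^2\lmax$ directly bounds $|R|<2\ell^6$), and otherwise derives the contradiction $\ell^2\le |R[x_0,x_{\ell^2}]|<\ell^2$ by forming a small subgraph from the prefix up to the fixed index $x_{\ell^2}$ together with a back-path of length $\le\ell$. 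You instead decompose $R-t$ into maximal intra-component pieces $P_i$, bound each by $(\ell-1)\lmax$, and prove a uniform dichotomy ``$|R[t,u_i']|\le(\ell-1)^2$ or $\ge 2\ell^6-\ell$'' for every prefix, so that the first index crossing the threshold exhibits one segment $P_{i^\star}$ that is simultaneously short and of length $\Omega(\ell^6)$. The paper's route avoids having to reason about all prefixes but pays with a case distinction on $p$; your dichotomy eliminates that case split and localizes the contradiction to a single segment, at the price of a slightly more careful decomposition argument. Both are sound, and your per-segment bound is actually marginally sharper (one factor of $\ell-1$ saved by using \autoref{thm:bcl_distboundedcircum} in both directions via $X_{u_i}=X_{u_i'}$ from \autoref{thm:bcl_disjointorequal}, rather than \autoref{thm:bcl_distboundedcircumsquared}).
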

\begin{proof}  
  As $\cf(G - t) \leq \ell$ we have that $t \in R$.
  Starting from~$t$, let $x_0,\hdots,x_p$ be the vertices of $R$ that are in $\bigcup_{C\in\mathcal C}X_C$.
  By definition, the vertex after~$t$ is in $X_C$ for some $C\in\mathcal C$ and the vertex before $t$ is in $X_{C'}$ for some $C'\in\mathcal C$.
  Thus, $R[x_0,x_p]$ contains every vertex of $R$ except $t$, yielding $|R[x_0,x_p]| = |R| - 2$.
  If an arc $(u,v)$ of $R[x_0,x_p]$ has $u$ and $v$ in different strong components $C_1\in\mathcal C$ and $C_2\in\mathcal C$ respectively, then $u \in X_{C_1}$ and $v\in X_{C_2}$, hence both appear in the sequence $x_0,\hdots,x_p$.
  Therefore, for $i = 0,\hdots,p-1$ the subpath $R[x_i,x_{i+1}]$ is either fully contained in a single component $C\in\mathcal C$ or consists of only one arc.
  If $x_i$ and $x_{i+1}$ are in the same strong component $C\in\mathcal C$ and they are in two different clusters of $C$, then we are done.
  Otherwise, if $x_i$ and $x_{i+1}$ are in the same cluster, then $\mathsf{dist}_{G[C]}(x_i,x_{i+1})\leq \ell_{\max}$ by the definition of the clusters.
  Thus \autoref{thm:bcl_distboundedcircumsquared} implies $|R[x_i,x_{i+1}]|\leq (\mathsf{cf}(G[C])-1)^2\cdot\ell_{\max} < (\ell-1)^2\cdot\ell_{\max}$.
  Therefore, if $p < \ell^2$, we have $|R| = |R[x_0,x_p]|+2\leq p\cdot(\ell-1)^2\cdot\ell_{\max} + 2 < 2\ell^6$, contradicting that $G$ has no medium length cycles.
  Otherwise, consider the vertex $x_{\ell^2}$; we have $\ell^2\leq |R[x_0,x_{\ell^2}]|\leq \ell^2(\ell-1)^2\ell_{\max}$.
  By \autoref{thm:bcl_shortcyclethroughvertex}, there is an $x_i\rightarrow t$-path of length at most $\ell - 1$.
  As $x_0$ is an out-neighbor of~$t$, this means that there is an $x_{\ell^2}\rightarrow x_0$-path $Q$ of length at most~$\ell$ in $G$.
  Let $G'$ be the digraph induced by $R[x_0,x_{\ell^2}]$ and $Q$.
  As $G'$ has at most $|R[x_0,x_{\ell^2}]| + |Q| \leq \ell^2(\ell -1)^2\lmax + \ell < 2\ell^6$ vertices and $G$ contains no medium-length cycles (i.e., no cycles with length in $(\ell, 2\ell^6]$), we have that $\cf(G') \leq \ell$.
  Applying \autoref{thm:bcl_distboundedcircum} on $R[x_0,x_{\ell^2}]$ and $Q$ in $G'$ we get $|R[x_0,x_{\ell^2}]| \leq (\ell - 1)|Q| < \ell^2 \leq |R[x_0,x_{\ell^2}]|$---a contradiction.
\end{proof}

We now focus again on finding the important hitting separators in $G$.
For this we fix an arbitrary important hitting separator $S$.
This separator is not known to the algorithm but helps our analysis.
Our main observation about clusters of $C\in\mathcal C$ is that $S$ has to separate them from each other.

\begin{lemma}
\label{thm:bcl_clusterintersection}
  Let $x_1\in L_1, x_2\in L_2$ for distinct clusters $L_1,L_2$ of some strong component $C \in\mathcal C$.
  For each hitting $t \to \Vout$-separator $S$ disjoint from $O_{x_1} \cup O_{x_2}$, there is no $x_1\rightarrow x_2$-path in $G - S$.
\end{lemma}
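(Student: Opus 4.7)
The plan is to argue by contradiction: suppose there is an $x_1\to x_2$ path $P$ in $G - S$, and derive a contradiction by first upper-bounding $|P|$ using the circumference of $G-S$, and then using the cluster separation to lower-bound the length of any $x_1\to x_2$ walk that stays ``close'' to $G[C]$. For the upper bound, note that $S\cap (O_{x_1}\cup O_{x_2})=\emptyset$ means both short cycles live entirely in $G-S$, so the concatenation $O_{x_2}[x_2,t]\circ O_{x_1}[t,x_1]$ is an $x_2\to x_1$ walk in $G-S$ of length at most $2\ell-2$. Reducing this walk to a simple path and invoking \autoref{thm:bcl_distboundedcircum} inside $G-S$ (which has circumference at most $\ell$ because $S$ is a hitting separator) yields $|P|\leq (\ell-1)(2\ell-2)=2(\ell-1)^2<\lmax$.

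Next I would split on whether $t$ lies on $P$. If $t\notin V(P)$, then $P$ is a path in $G-t$; since the condensation of $G-t$ is a DAG and both $x_1,x_2$ live in the same strong component $C$, the path $P$ cannot leave $C$, so $P\subseteq G[C]$. This gives $\dist_{G[C]}(x_1,x_2)\leq |P|<\lmax$, which by the very definition of $X_{x_1}$ puts $x_2\in X_{x_1}$. By \autoref{thm:bcl_disjointorequal} the clusters $X_{x_1}$ and $X_{x_2}$ must then coincide, contradicting $L_1\ne L_2$.

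If instead $t\in V(P)$, I would extract portal vertices on $P$ as follows: let $u$ be the last vertex of $P$ before $t$ such that $P[x_1,u]\subseteq G[C]$, and let $v$ be the first vertex of $P$ after $t$ such that $P[v,x_2]\subseteq G[C]$. Both exist and lie in $X_C$ by re-applying the DAG argument to the relevant prefix/suffix, and the length bound from Step 1 forces $u\in L_1$ and $v\in L_2$. Taking a shortest $v\to u$ walk $Q$ inside $G[C]$, \autoref{thm:bcl_atleastoneshortdist} gives $|Q|\geq 2\ell^6-2\ell$ since $u$ and $v$ lie in different clusters. Splicing $Q$ into $P$ in place of $P[u,v]$ produces a closed walk in $G$, from which I would extract a simple cycle and argue that its length falls strictly inside the forbidden medium-length window $(\ell,2\ell^6)$, contradicting the hypothesis that $G$ has no medium-length cycle.

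The main obstacle is the case $t\in V(P)$: one has to guarantee simultaneously that the extracted simple cycle has length \emph{above} $\ell$ (not collapsing into shorter cycles due to incidental shared vertices between $P[u,v]$ and $Q$) and \emph{below} $2\ell^6$ (not overshooting, in case $Q$ is very long). Controlling the upper end likely requires combining the choice of $Q$ with \autoref{thm:bcl_distboundedcircum} applied inside $G[C]$ to bound $\dist_{G[C]}(v,u)$ in terms of $\dist_{G[C]}(u,v)$, in the same spirit as the medium-length argument carried out in \autoref{thm:bcl_longcyclethroughsolution}.
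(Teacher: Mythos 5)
Your upper bound on $|P|$ and your treatment of the $t\notin V(P)$ case together recover the paper's proof: the paper assumes from the outset that the hypothetical path $P_1$ lies in $G[C]-S$ (which, as you note, follows from the DAG structure of the condensation of $G-t$ whenever $P$ avoids $t$), records $|P_1|>\lmax$ from the cluster definition, builds a short $x_2\to x_1$-path through $O_{x_1}\cup O_{x_2}$, and applies \autoref{thm:bcl_distboundedcircum} in $G-S$ to get $|P_1|<\lmax$, a contradiction. So that half of your sketch is sound and is the same argument.

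The $t\in V(P)$ case, however, is a gap you cannot close, because that case is a genuine counterexample to the lemma as literally written. Take $\ell=3$ and let $G-t$ be a single strong component $C$ formed by a long bidirected path $v_1\leftrightarrow\cdots\leftrightarrow v_n$ with one pendant bidirected edge $v_{\lfloor n/2\rfloor}\leftrightarrow w$, and attach $t$ via the arcs $t\to v_1$, $v_1\to t$, $t\to v_n$, $v_n\to t$. With $n$ large enough that the Hamiltonian-like cycle through $t$ has length at least $2\ell^6$, there is no medium-length cycle, every arc lies on a $2$-cycle, $X_C=\{v_1,v_n\}$, and these two portals lie in distinct clusters. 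Setting $O_{v_1}=\{t,v_1\}$, $O_{v_n}=\{t,v_n\}$, $\Vout=\{w\}$ and $S=\{v_{\lfloor n/2\rfloor}\}$, one checks that $S$ is a hitting $t\to\Vout$-separator ($\cf(G-S)=2$, and $w$ is cut off from $t$) disjoint from $O_{v_1}\cup O_{v_n}$, yet $v_1\to t\to v_n$ is an $x_1\to x_2$-path in $G-S$. The statement is only true — and is only ever invoked (see \autoref{thm:bcl_intersectsomecluster}, \autoref{thm:bcl_fewcomponentswithmanyclusters}, \autoref{lem:force_cluster_path_inside_single_component}) — for paths inside $G[C]-S$, and the paper's own proof silently restricts to that case in its first line. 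The concrete failure in your sketch is the one you already anticipated: the cycle you extract from $P[u,v]\circ Q$ has no upper bound on its length, since $|Q|\geq 2\ell^6-2\ell$ is only a lower bound and $Q$ can be arbitrarily long, and \autoref{thm:bcl_distboundedcircum} cannot relate $|Q|$ (a path in $G[C]$) to $|P[u,v]|$ (a path that leaves $C$ through $t$) because the two do not both sit in a common subgraph of circumference at most $\ell$.
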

\begin{proof}
  Let $S$ be disjoint from $O_{x_1} \cup O_{x_2}$ and suppose, for sake of contradiction, that $P_1$ is an $x_1\rightarrow x_2$-path in $G[C] - S$.
  As~$x_1$ and~$x_2$ are in distinct clusters, we have $|P_1| > \ell_{\max}$.
  There is a $t\rightarrow x_1$-path of $G$ using only the vertices of $O_{v_1}$ and hence has length at most $\ell$.
  Similarly, there is an $x_2\rightarrow t$-path in $G$ using only vertices of $O_{v_2}$ and having length at most $\ell$.
  The concatenation of these two paths gives an $x_2\rightarrow x_1$-walk using only the vertices $O_{v_1}\cup O_{v_2}$ and having length at most $2\ell$.
  This walk contains an $x_2\rightarrow x_1$-path $P_2$ of length at most $2\ell$.
  
  By the assumptions of the lemma, $P_1$ and $P_2$ are disjoint from $S$.
  Applying \autoref{thm:bcl_distboundedcircum} on the $x_1\rightarrow x_2$-path $P_1$ and the $x_2\rightarrow x_1$-path $P_2$ in $G - S$, we get $|P_1|\leq (\mathsf{cf}(G - S) - 1)|P_2|\leq (\ell - 1)\cdot 2\ell < \ell_{\max}$---a contradiction.
\end{proof}

Our next goal is to use \autoref{thm:bcl_clusterintersection} to argue that there cannot be too many clusters in a component $C\in\mathcal C$ and only a few components can contain more than one cluster.
This may in general not be the case, but if there are many clusters we can identify vertices of $S$.

\begin{lemma}
\label{thm:bcl_atmostoneclusterpath}
  Let $C\in\mathcal C$ be a component with distinct clusters $L_1,L_2$ and $Z \subseteq V(G)$.
  Let $x_1\in L_1$, $x_2\in L_2$ and $x_3 \in C\setminus Z$.
  Any hitting $t \to \Vout$-separator $S$ that is disjoint from $O_{x_1} \cup O_{x_2}$ and shadowless with respect to $Z$ cannot have both an $x_1\rightarrow x_3$-path $P_1$ and an $x_2\rightarrow x_3$-path $P_2$ in $G[C] - S$.
\end{lemma}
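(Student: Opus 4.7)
The plan is to argue by contradiction: suppose both $P_1$ and $P_2$ exist in $G[C]-S$, and derive a violation of the cluster-distance lower bound of Lemma~\ref{thm:bcl_atleastoneshortdist}. Since $x_3 \notin S$ (it is the endpoint of $P_1$ inside $G[C]-S$) and $x_3 \notin Z$ by hypothesis, the shadowless property of $S$ supplies a path $Q$ in $G-S$ from $x_3$ to some $v \in \{t\} \cup V_{\textrm{out}}$. By Lemma~\ref{thm:bcl_shortcyclethroughvertex}, each of $O_{x_1}$ and $O_{x_2}$ contains a $t\to x_i$-subpath of length at most $\ell$ that lies in $G-S$, because $S$ is disjoint from $O_{x_1}\cup O_{x_2}$.

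In the easy case $v\in V_{\textrm{out}}$, the concatenation of the $t\to x_1$-piece of $O_{x_1}$, the path $P_1$, and $Q$ is a $t\to V_{\textrm{out}}$-walk in $G-S$, contradicting that $S$ is a $t\to V_{\textrm{out}}$-separator. So I may assume $v=t$, i.e.\ $Q$ is an $x_3\to t$-path in $G-S$. Because $t\notin C$, $Q$ must leave $C$ through a last vertex $u\in C$, which is a portal ($u\in X_C$), and the prefix $Q[x_3,u]$ lies in $G[C]-S$. If $u\in\{x_1,x_2\}$, then $Q[x_3,u]$ combined with $P_2$ (resp.\ $P_1$) would yield an $x_2\to x_1$- (resp.\ $x_1\to x_2$-) path in $G[C]-S$, contradicting Lemma~\ref{thm:bcl_clusterintersection}; thus $u\notin\{x_1,x_2\}$ and $u$ sits in some cluster $L_u$. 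Since $L_1\neq L_2$, we have $L_u\neq L_i$ for some $i\in\{1,2\}$, so Lemma~\ref{thm:bcl_atleastoneshortdist} gives $d_{G[C]}(x_i,u)\geq 2\ell^6-2\ell$, while the path $P_i\circ Q[x_3,u]$ in $G[C]-S$ witnesses $d_{G[C]}(x_i,u)\leq |P_i|+|Q[x_3,u]|$.

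To complete the contradiction I plan to bound $|P_i|+|Q[x_3,u]|$ by a polynomial in $\ell$ of degree strictly less than $6$. Since $S$ is a hitting separator, $\mathrm{cf}(G-S)\leq\ell$, and applying Lemma~\ref{thm:bcl_distboundedcircum} inside $G-S$ with the short $t\to x_j$-pieces of $O_{x_j}$ gives $d_{G-S}(x_j,t)\leq(\ell-1)\ell$ for $j\in\{1,2\}$. Then Lemma~\ref{thm:bcl_distboundedcircumsquared}, applied inside the strong component of $G-S$ containing $\{t,x_1,x_2,x_3\}$, bounds every path length in that component by $(\ell-1)^2$ times the corresponding distance, so in particular $|P_i|$ and $|Q|$ are both polynomially bounded in $\ell$. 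The main subtlety, and where I expect the most care to be needed, is avoiding a circular dependence between the bounds on $|P_1|$, $|P_2|$, and $|Q|$: the fix is to use the round-trip $x_j\to t\to x_j$ of length at most $2\ell$ through $O_{x_j}$ as an \emph{independent} anchor distance in $G-S$ before cross-applying Lemmas~\ref{thm:bcl_distboundedcircum} and~\ref{thm:bcl_distboundedcircumsquared}, which breaks the circularity and yields the desired $o(\ell^6)$ upper bound.
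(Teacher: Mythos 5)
Your setup mirrors the paper's proof: use shadowlessness to get a path $Q$ from $x_3$ to some $v\in\{t\}\cup\Vout$, dispatch the $v\in\Vout$ case via the separator property, identify the portal $u$ where $Q$ exits $C$, and note that $u$'s cluster differs from at least one of $L_1,L_2$. (The intermediate check that $u\notin\{x_1,x_2\}$ via \autoref{thm:bcl_clusterintersection} is harmless but unnecessary: $L_u$ is a single cluster and $L_1\neq L_2$, so $L_u\neq L_i$ for some $i$ regardless.)

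The gap is in the final contradiction step. You want to show $|P_i|+|Q[x_3,u]|$ is $o(\ell^6)$, but there is no such bound available. \autoref{thm:bcl_distboundedcircumsquared} only bounds $|P_i|$ by $(\ell-1)^2\cdot\dist_{G[C]}(x_i,x_3)$ (or $(\ell-1)^2\cdot\dist_H(x_i,x_3)$ inside the strong component $H$ of $G-S$), and nothing in the hypotheses controls $\dist(x_i,x_3)$ or $\dist(x_3,u)$: $x_3$ is an arbitrary vertex of $C\setminus Z$ with no short anchor path to it, and the round-trips $x_j\to t\to x_j$ through $O_{x_j}$ only constrain distances among $t,x_1,x_2$. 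You also cannot anchor at $u$ via $O_u$, since $O_u$ may intersect $S$. The paper never bounds the walk length. Instead it extracts the underlying \emph{path} $R^\star$ of $P_i\circ Q$ from $x_i$ to $t$, which is a simple path in $G-S$. Since $R_i\subseteq O_{x_i}$ is a $t\to x_i$-path of length at most $\ell$ disjoint from $S$, \autoref{thm:bcl_distboundedcircum} applied in $G-S$ gives $|R^\star|\leq(\ell-1)\ell\leq\ell^2$. Meanwhile $R^\star$ must use the unique arc of $P_i\circ Q$ that leaves $C$, so it passes through $u$, and $R^\star[x_i,u]$ is an $x_i\to u$-path in $G[C]$ of length greater than $\lmax = 2\ell^2 > \ell^2$, a contradiction. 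The missing idea is to push the extracted path all the way to $t$ so that the short return path $R_i$ becomes a usable anchor, and to bound that path rather than the walk $P_i\circ Q[x_3,u]$.
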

\begin{proof}
  Let $S$ be as in the statement and suppose, for sake of contradiction, that both $P_1$ and~$P_2$ exist.
  Let~$R_i$ be a $t\rightarrow x_i$-path in $O_{x_i}$ for $i = 1,2$.
  The concatenation of $R_1$ and $P_1$ shows that there is a $t\rightarrow x_3$-path in $G - S$.
  By $x_3 \in V(G) \setminus Z$ and $S$ being shadowless with respect to $Z$, a vertex $v \in V_\textsf{out} \cup \{t\}$ is reachable from $x_3$ in $G - S$.
  If $v\neq t$, then this means that $V_\textsf{out}$ is reachable from $t$ in $G - S$, contradicting that~$S$ is a $t \to V_\textsf{out}$-separator.
  Therefore, there is an $x_3\rightarrow t$-path~$Q$ disjoint from $S$ in $G - S$.
  Let $(a,b)$ be the last arc of~$Q$ that is not entirely in~$C$ (as $t\notin C$, there is such an arc).
  As $Q[x_3, a]$ is a path disjoint from $t$ which starts and ends in~$C$, the path~$Q$ is entirely contained in the strong component $C$ of $G - t$.
  Thus $a$ is in~$X_C$, and hence in some cluster $L$.
  Now for $i = 1,2$ we have that $P_i\circ Q[x_3,a]$ is a walk from~$L_i$ to $L$, fully contained in $G[C] - S$.
  However, $L$ is different from at least one of $L_1$ and $L_2$.
  Without loss of generality, let $L \neq L_1$.
  Then $P_1\circ Q[x_3,a]$ contains an $x_1 \to a$-path of length at least~$\lmax$ by definition of clusters.
  Likewise, $P_1 \circ Q[x_3, t]$ contains an $x_1 \to t$-path $R^\star$ of length at least $\lmax$ (as $Q[a, t]$ is outside of~$C$).
  Consider again the $t \to x_1$-path $R_1$ inside $O_{x_1}$.
  As it lies inside $O_{x_1}$, it is disjoint from~$S$ and has length at most $\ell$.
  If we now compare the length of $R^\star$ and $R_1$ with help of \autoref{thm:bcl_distboundedcircum}, we get $|R^\star| \leq (\cf(G - S) - 1)|R_1| \leq \ell^2 < \lmax \leq |R^\star|$---a contradiction.
\end{proof}

The following lemma suggests a branching step when a vertex is reachable from many clusters on (mostly) disjoint paths.
\begin{lemma}
\label{thm:bcl_vertexreachedbymanyclusters}
  Let $(G, k, \ell, t, Z, \Vout)$ be an {\sc Important Hitting Separator in Strong Digraphs} instance and $\mathcal{C}$ the strong components of $G - t$.
  Let $x_1,\hdots,x_{k+2}$ be vertices in distinct clusters of a component $C\in\mathcal C$ and $v\in C \setminus Z$ be a vertex.
  Furthermore, let $P_1,\hdots,P_{k+2}$ be paths in $G[C]$ such that $P_i$ is an $x_i \rightarrow v$-path and these paths share vertices only in $Z\cup\{v\}$.
  Then every hitting $t \to \Vout$-separator $S$ of size at most $k$ that is shadowless with respect to $Z$ intersects $v \cup \bigcup_{i = 1}^{k + 2} O_{x_i}$.
\end{lemma}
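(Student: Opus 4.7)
The plan is to prove the lemma by contradiction: assume a shadowless hitting $t \to V_\textsf{out}$-separator $S$ of size at most $k$ is disjoint from $\{v\} \cup \bigcup_{i=1}^{k+2} O_{x_i}$, and derive that $|S| \ge k+1$.

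First, I would apply \autoref{thm:bcl_atmostoneclusterpath} to every pair $(x_i, x_j)$ with $i\neq j$, taking $v$ in the role of the third vertex $x_3$. The hypotheses of that lemma are satisfied: $v$ lies in $C\setminus Z$ by assumption; the separator $S$ is disjoint from $O_{x_i} \cup O_{x_j}$ by the contradiction hypothesis; and $S$ is shadowless with respect to $Z$ by the premise of the current lemma. The conclusion is that $G[C] - S$ cannot contain both an $x_i \to v$-path and an $x_j \to v$-path. Consequently, among the $k+2$ given paths $P_1,\dots,P_{k+2}$, at most one survives in $G[C]-S$; equivalently, $S$ hits at least $k+1$ of them.

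Next I would argue that these $k+1$ hits contribute $k+1$ distinct vertices to $S$. By the hypothesis of the lemma, the paths $P_1,\dots,P_{k+2}$ share vertices only inside $Z \cup \{v\}$. However, $S$ avoids $Z$ (being shadowless with respect to $Z$, by definition $S\cap Z=\emptyset$) and avoids $v$ (by the contradiction hypothesis). Therefore, whenever $S$ intersects some $P_i$, it does so at a vertex lying in the ``private'' part of $P_i$, disjoint from every other $P_j$. Hence the intersections of $S$ with the hit paths are pairwise disjoint, yielding $|S|\ge k+1$, which contradicts $|S|\le k$.

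The routine parts are bookkeeping; the only point that requires care is verifying the shadowlessness and cluster-disjointness preconditions for \autoref{thm:bcl_atmostoneclusterpath} when instantiated for every pair $(x_i,x_j)$, since that is precisely what allows the repeated application to cascade into a linear-in-$(k+2)$ lower bound on $|S|$. The main ``obstacle'' to anticipate is simply to make sure that the shared-vertex condition on the $P_i$'s really forces the $k+1$ intersection vertices to be distinct, which follows cleanly from the fact that $S$ is disjoint from both $Z$ and $v$, the only locations where two paths can meet.
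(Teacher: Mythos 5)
Your argument is correct, and it relies on exactly the same two ingredients the paper uses: Lemma~\ref{thm:bcl_atmostoneclusterpath} and the fact that the paths $P_1,\dots,P_{k+2}$ are pairwise disjoint outside $Z\cup\{v\}$, which $S$ avoids. The only difference is the order in which you combine them. The paper first pigeonholes: since each vertex of $S$ can lie in the private part of at most one $P_i$, at least two paths, say $P_1$ and $P_2$, are untouched by $S$; a single application of Lemma~\ref{thm:bcl_atmostoneclusterpath} to $x_1,x_2,v$ then already yields the contradiction. You instead apply Lemma~\ref{thm:bcl_atmostoneclusterpath} to all $\binom{k+2}{2}$ pairs to conclude that at most one $P_i$ survives, and then pigeonhole the remaining $k+1$ hits into distinct private vertices to get $|S|\ge k+1$. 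Both routes are sound and give the same bound; the paper's is a bit more economical (one invocation of the auxiliary lemma, and the counting step is stated upfront rather than at the end), but yours is a valid rearrangement of the same argument. One small point worth making explicit in your write-up: you need $S$ to avoid each $x_i$ as well as $v$ so that every hit on a $P_i$ lands on a private internal vertex; this follows from $x_i\in O_{x_i}$ and your contradiction hypothesis that $S$ is disjoint from $\bigcup_i O_{x_i}$.
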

\begin{proof}
  As $|S|\leq k$ and $S$ is disjoint from $Z$, at least two of the $P_i$'s have their internal vertices disjoint from~$S$.
  Assume, without loss of generality, that $S$ contains no internal vertex of $P_1$ and~$P_2$.
  If $S$ is disjoint from $O_{x_1}\cup O_{x_2}\cup\{v\}$, then \autoref{thm:bcl_atmostoneclusterpath} gives a contradiction.
\end{proof}

\begin{lemma}
\label{lem:check_if_vertex_reached_by_many_clusters_is_applicable}
  For an {\sc Important Hitting Separator in Strong Digraphs} instance\linebreak $(G, k, \ell, t, Z, \Vout)$ it can be tested in polynomial time whether \autoref{thm:bcl_vertexreachedbymanyclusters} is applicable.
\end{lemma}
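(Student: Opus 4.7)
The plan is to reduce the check to a vertex-capacitated maximum flow computation, run once for each pair consisting of a component $C \in \mathcal{C}$ and a candidate endpoint $v \in C \setminus Z$.

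First, as preprocessing, I would enumerate the strong components $\mathcal{C}$ of $G - t$ by Tarjan's algorithm, and for each $C \in \mathcal{C}$ identify the portal set $X_C$ and compute its cluster partition in polynomial time. By the definition preceding \autoref{thm:bcl_disjointorequal}, it suffices to compute all-pairs shortest path distances inside $G[C]$ restricted to $X_C$, set $X_u = \{w \in X_C : \dist_{G[C]}(u,w) \le \lmax\}$ for each $u \in X_C$, and invoke \autoref{thm:bcl_disjointorequal} to see that these yield a partition of $X_C$ into clusters.

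Next, for each $C$ and each $v \in C \setminus Z$, I would build an auxiliary flow network $H_{C,v}$ such that an integral $s \to v^+$ flow of value $k+2$ corresponds exactly to a configuration as in \autoref{thm:bcl_vertexreachedbymanyclusters}. Concretely, I use standard vertex-capacity splitting on $G[C]$: for each $u \in V(G[C])$ introduce $u^-, u^+$ and an arc $(u^-,u^+)$ of capacity $+\infty$ if $u \in Z \cup \{v\}$ and of capacity $1$ otherwise; every arc $(u,w) \in A(G[C])$ becomes $(u^+, w^-)$ of capacity $+\infty$. For each cluster $L$ of $X_C$, add a cluster gadget consisting of a split vertex $c_L^-, c_L^+$ (capacity $1$ on the arc between them), an arc $(s, c_L^-)$ of capacity $1$, and an arc $(c_L^+, x^-)$ of capacity $1$ for each $x \in L$. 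Designate $v^+$ as the sink. Any integral $s$-$v^+$ flow of value $k+2$ decomposes into $k+2$ arc-disjoint paths; the unit capacity inside the cluster gadgets forces these to originate in $k+2$ distinct clusters, the unit vertex capacities outside $Z \cup \{v\}$ enforce the required vertex-disjointness, and stripping the $s$-prefix and cluster gadgets recovers the $x_1, \ldots, x_{k+2}$ and $P_1, \ldots, P_{k+2}$ witnessing applicability. Conversely, any witnessing configuration directly yields such a flow by routing one unit through each corresponding cluster gadget, so \autoref{thm:bcl_vertexreachedbymanyclusters} is applicable at $(C,v)$ if and only if the maximum flow in $H_{C,v}$ has value at least $k+2$.

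Finally, I would iterate over all $\mathcal{O}(n)$ pairs $(C,v)$ and return ``applicable'' as soon as some pair reaches flow value at least $k+2$. Both the preprocessing and each max-flow computation are polynomial, yielding the claimed polynomial overall running time. The main subtlety I expect is ensuring that the gadgets really forbid two paths from starting in the same cluster while still permitting the paths to freely share vertices of $Z \cup \{v\}$; this is handled uniformly by the capacities defined above and is insensitive to whether some cluster happens to intersect $Z$ or contain $v$, so no case analysis is required.
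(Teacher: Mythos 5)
Your proposal is correct and follows essentially the same approach as the paper: for each pair $(C,v)$ with $v \in C \setminus Z$, reduce the applicability check to a vertex-capacitated maximum flow computation where a source is connected through per-cluster unit-capacity gadgets, the sink is $v$, vertices in $Z \cup \{v\}$ have infinite capacity, and all other vertices of $C$ have unit capacity; applicability holds iff the max flow reaches $k+2$. You merely spell out the per-cluster gadget and the standard vertex-splitting more explicitly than the paper's terse phrasing ``a new source adjacent to each cluster.''
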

\begin{proof}
  For every $C \in \mathcal{C}$ and $v \in C \setminus Z$, we solve the following vertex-capacitated maximum flow problem: introduce a new source adjacent to each cluster of $C$, set $v$ to be the only sink, vertices in $Z\cup\{v\}$ have infinite capacity, and every other vertex of $C$ has unit capacity.
  An integral flow of value at least $k+2$ corresponds directly to the vertices in the \autoref{thm:bcl_vertexreachedbymanyclusters}.
  As an maximum integral flow can be found in polynomial time and we have at most $|V(G)|$ choices for $v$ and $C$ (choosing $v$ fixes $C$) we can check for this in polynomial time.
\end{proof}

If \autoref{thm:bcl_vertexreachedbymanyclusters} is not applicable and a strong component $C \in \mathcal{C}$ with many clusters exists, we can find a simple set intersecting every shadowless hitting $t \to \Vout$-separator $S$ of size at most~$k$:

\begin{lemma}
\label{thm:bcl_intersectsomecluster}
  Let $(G, k, \ell, t, Z, \Vout)$ be an {\sc Important Hitting Separator in Strong Digraphs} instance and $\mathcal{C}$ the strong components of $G - t$.
  Let $x_0,\hdots,x_{k(k+1)+1}$ be vertices from different clusters of some $C\in\mathcal C$.
  If \autoref{thm:bcl_vertexreachedbymanyclusters} is not applicable then every hitting $t \to \Vout$-separator $S$ of size at most $k$ that is shadowless with respect to $Z$ intersects $\bigcup_{i = 1}^{k(k+1)+1} O_{x_i}$.
\end{lemma}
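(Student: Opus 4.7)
I argue by contradiction, assuming that $S$ is an important shadowless hitting $t \to \Vout$-separator of size at most $k$ that is disjoint from $\bigcup_{i=1}^{k(k+1)+1} O_{x_i}$. The goal is to derive a contradiction either with the size bound $|S| \le k$ or with the standing hypothesis that Lemma~\ref{thm:bcl_vertexreachedbymanyclusters} is not applicable.

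The first observation is structural. By Lemma~\ref{thm:bcl_clusterintersection}, the assumption that $S$ is disjoint from $O_{x_i} \cup O_{x_j}$ for every pair $i \ne j$ in $\{1,\ldots,k(k+1)+1\}$ forces $S \cap C$ to be an $x_i \to x_j$-separator in $G[C]$ for every such pair. Thus $S \cap C$ is a \emph{pairwise vertex separator} for the $k(k+1)+1$ cluster representatives $x_1, \ldots, x_{k(k+1)+1}$ inside the strong digraph $G[C]$, so these representatives lie in pairwise distinct strong components of $G[C] - S$. At the same time, because each intact $O_{x_i}$ delivers both a $t \to x_i$ and an $x_i \to t$ path in $G - S$, the vertex $t$ together with all of $x_1, \ldots, x_{k(k+1)+1}$ still lies in a single strong component of $G - S$; the in-$G[C]$ separations must therefore all route through $t$ in the ambient graph.

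The core of the argument is a lower bound: using the non-applicability of Lemma~\ref{thm:bcl_vertexreachedbymanyclusters}, I claim that any pairwise vertex separator $S^\star$ for the $k(k+1)+1$ cluster representatives inside $G[C]$ must have at least $k+1$ vertices. Granting this, we obtain $|S \cap C| \ge k+1 > k \ge |S|$, the required contradiction. To prove the claim, fix any candidate $S^\star$ of size at most $k$ and analyse the vertex-capacitated max-flow from $\{x_1, \ldots, x_{k(k+1)+1}\}$ to a target vertex $v \in C \setminus Z$, where $Z \cup \{v\}$ carries infinite capacity and all other vertices carry unit capacity. The non-applicability hypothesis caps every such max-flow at $k+1$. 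Applying this cap vertex-wise to each $f \in S^\star \cap (C \setminus Z)$, and using that a single $f$ can ``absorb'' into the pairwise-separator structure at most $k+1$ representatives (each routing a unit of flow to $f$ through internally disjoint paths guaranteed by Menger), a pigeonhole over $|S^\star| \le k$ vertices and $k(k+1)+1$ representatives gives the count $k \cdot (k+1) < k(k+1)+1$. Hence at least one representative $x_{i^\star}$ escapes the absorption by $S^\star$; concatenating the surviving reachability with the strong connectivity of $G[C]$ yields a path in $G[C] - S^\star$ between two representatives, contradicting the pairwise-separator property of $S^\star$.

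The main obstacle is the per-vertex ``responsibility'' bookkeeping that turns the Menger cap of $k+1$ per vertex of $S^\star$ into a strict pigeonhole against the threshold $k(k+1)+1$. Secondary technical points that need care are the boundary cases $x_0 \in Z$, $x_0 \in S$, and $f \in Z$ for some $f \in S^\star$; in the last of these the cap supplied by Lemma~\ref{thm:bcl_vertexreachedbymanyclusters} does not apply directly at $f$, and a supplementary argument based on the shadowless property (guaranteeing that every non-$Z$ vertex is reached from and reaches $\{t\} \cup \Vout$ in $G - S$) is used to recover a replacement bound.
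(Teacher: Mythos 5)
Your proposal shares the paper's broad strategy — pigeonhole over the cut vertices to derive that \autoref{thm:bcl_vertexreachedbymanyclusters} would after all be applicable — but the mechanism that makes this work is missing. What the paper actually does: fix, for each $i = 1,\dots,k(k+1)+1$, an $x_i \to x_0$-path $P_i$ in $G[C]$; each is hit by $S$ (via \autoref{thm:bcl_clusterintersection}); let $y_i \in S$ be the \emph{first} vertex of $S$ on $P_i$; pigeonhole over $|S|\le k$ produces a $y$ with $y_i = y$ for at least $k+2$ indices. The crucial remaining step is that the prefixes $P_i[x_i,y]$ share vertices only in $Z\cup\{y\}$, and this is exactly where \autoref{thm:bcl_atmostoneclusterpath} (and hence the shadowless hypothesis on $S$) is indispensable: any shared vertex $v\ne y$ precedes the first hit on both paths, so $v\notin S$, giving paths from two distinct clusters to $v$ in $G[C]-S$, which forces $v\in Z$. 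This is what upgrades the $k+2$ prefix paths to the internally disjoint (outside $Z\cup\{y\}$) family that triggers \autoref{thm:bcl_vertexreachedbymanyclusters}. You never define a first-hit assignment and never invoke \autoref{thm:bcl_atmostoneclusterpath}; the appeal to ``Menger'' and a ``max-flow cap of $k+1$ per $f$'' does not substitute, because the flow cap bounds the number of internally disjoint (outside $Z\cup\{f\}$) paths from distinct clusters to $f$ but does not by itself assign each representative to a unique $f\in S^\star$ via such a disjoint family. For the same reason, your claim for ``any pairwise vertex separator $S^\star$'' of the representatives in $G[C]$ is too strong; the argument genuinely requires $S^\star = S\cap C$ with $S$ shadowless.

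There is also a quantitative gap. You drop $x_0$ and work only with $x_1,\dots,x_{k(k+1)+1}$. If you then target one of these, you have only $k(k+1)$ source-to-target paths, and pigeonhole over at most $k$ cut vertices yields only $\lceil k(k+1)/k\rceil = k+1$ paths meeting at some cut vertex — one short of the $k+2$ required to invoke \autoref{thm:bcl_vertexreachedbymanyclusters}. The extra vertex $x_0$ as the fixed sink is precisely what lets the paper's pigeonhole reach $\lceil (k(k+1)+1)/k\rceil = k+2$.
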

\begin{proof}
  Suppose that $S$ is disjoint from every $O_{x_i}$.
  For $i = 1,\hdots,k(k+1)+1$ let us fix an $x_i\rightarrow x_0$-path~$P_i$ in $C$.
  By \autoref{thm:bcl_clusterintersection}, $S$ intersects path $P_i$ for every $i = 1,\hdots,k(k+1)+1$.
  For $i = 1,\hdots,k(k+1)+1$, let~$y_i$ be the first vertex of $S$ on $P_i$.
  There has to be a vertex of~$S$ that appears as $y_i$ for at least $k+2$ values of~$i$; assume, without loss of generality, that $y_1 =\hdots= y_{k+2}=:y$.
  If for some $1\leq j_1 < j_2 \leq k+2$, paths $P_{j_1}[x_{j_1},y]$ and $P_{j_2}[x_{j_2},y]$ share a vertex different from $y$, then by \autoref{thm:bcl_atmostoneclusterpath} this vertex has to be in $Z$.
  Therefore, the paths $P_1[x_1,y],\hdots,P_{k+2}[x_{k+2,y}]$ share vertices only in $Z\cup\{y\}$, and hence \autoref{thm:bcl_vertexreachedbymanyclusters} would be applicable, a contradiction.
\end{proof}

%

Next, we find a simple intersection set if many components have more than two clusters.

\begin{lemma}
\label{thm:bcl_fewcomponentswithmanyclusters}
  Let $(G, k, \ell, t, Z, \Vout)$ be an {\sc Important Hitting Separator in Strong Digraphs} instance and $\mathcal{C}$ the strong components of $G - t$.
  If there exist strong components $C_1,\hdots,C_{k+1}\in\mathcal C$, each containing at least two clusters.
  Then for arbitrary vertices $x_i,y_i$ of different clusters of $C_i$, every hitting $t \to \Vout$-separator~$S$ of size at most $k$ intersects $\bigcup_{i=1}^{k+1}(O_{x_i}\cup O_{y_i})$.
\end{lemma}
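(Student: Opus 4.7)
The plan is to derive a contradiction by counting. Suppose, for sake of contradiction, that $S$ is a hitting $t \to \Vout$-separator with $|S|\le k$ that is disjoint from $\bigcup_{i=1}^{k+1}(O_{x_i}\cup O_{y_i})$. Then in particular, for every single index $i\in\{1,\ldots,k+1\}$, the set $S$ is disjoint from $O_{x_i}\cup O_{y_i}$. Since $x_i$ and $y_i$ lie in distinct clusters of the same strong component $C_i$, \autoref{thm:bcl_clusterintersection} (applied with $L_1,L_2$ being the clusters of $x_i,y_i$) yields that there is no $x_i\to y_i$-path in $G-S$.

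On the other hand, $C_i$ is a strong component of $G-t$, so $G[C_i]$ contains an $x_i\to y_i$-path~$Q_i$; this path avoids $t$ and lies entirely inside~$C_i$. Since no $x_i\to y_i$-path survives in $G-S$, the set $S$ must intersect~$Q_i$, and in particular $S\cap V(C_i)\neq\emptyset$. This reasoning works for every $i$, producing a vertex $s_i\in S\cap V(C_i)$.

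Now I would invoke the key combinatorial fact that distinct strong components of a digraph are vertex-disjoint: the components $C_1,\ldots,C_{k+1}$ of $G-t$ are pairwise disjoint, so the vertices $s_1,\ldots,s_{k+1}$ are pairwise distinct. Hence $|S|\ge k+1$, contradicting the assumption that $|S|\le k$.

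There is really no significant obstacle here, since the heavy lifting has already been done by \autoref{thm:bcl_clusterintersection}; the present lemma essentially packages that lemma together with a disjointness/pigeonhole argument across the $k+1$ components. The only care needed is to ensure that the $x_i\to y_i$-path $Q_i$ used to witness a vertex of $S$ in $C_i$ genuinely lies in $G$ (not merely in $G-t$), which is immediate because $G[C_i]$ is a subgraph of $G$.
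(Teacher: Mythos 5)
Your proof is correct and matches the paper's argument in essence: both rely on \autoref{thm:bcl_clusterintersection} together with the fact that the $k+1$ strong components are pairwise vertex-disjoint, so $S$ (with $|S|\le k$) cannot meet all of them. The paper phrases the counting step in contrapositive form (some $C_i$ is disjoint from $S$, so an $x_i\to y_i$-path survives in $G[C_i]-S$), while you directly extract a distinct vertex of $S$ from each $C_i$ to exceed the budget; the two are equivalent.
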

\begin{proof}
  Suppose that $S$ is disjoint from $\bigcup_{i=1}^{k+1}(O_{x_i}\cup O_{y_i})$.
  Then some $C_i$ is disjoint from~$S$, implying that there is an $x_i\rightarrow y_i$-path in $G[C_i] - S$, contradicting \autoref{thm:bcl_clusterintersection}.
\end{proof}

\begin{corollary}
\label{cor:unbalanced_clusters}
  Let $(G, k, \ell, t, Z, \Vout)$ be an {\sc Important Hitting Separator in Strong Digraphs} instance and $\mathcal{C}$ the strong components of $G - t$.
  If
  \begin{compactitem}
    \item either there is a $C \in \mathcal{C}$ with more than $k(k+1)+1$ clusters,
    \item or there are more than $k$ components in $\mathcal{C}$ with at least two clusters,
  \end{compactitem}
  then there is a set $\setManyClusters \subseteq V(G)$ of size at most $(k^2+ k +1)(\ell - 1)$ that intersects every hitting $t \to \Vout$-separator $S$ of size at most $k$ which is shadowless with respect to $Z$.
  Moreover, the set~$\setManyClusters$ can be found in polynomial time.
\end{corollary}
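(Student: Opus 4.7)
\medskip
\noindent\textbf{Proof plan for Corollary~\ref{cor:unbalanced_clusters}.}
The plan is to build $\setManyClusters$ by case analysis on which of the two hypotheses holds, and then to bound its size by a straightforward union bound over the short cycles $O_v$ already associated with every portal vertex $v$. As a preprocessing step, I would compute the cluster decomposition explicitly: for each $C\in\mathcal{C}$ identify the portal set $X_C$ by comparing $\Delta_{G[C]}(\cdot)$ with $\Delta_G(\cdot)$, then compute BFS distances between portals in $G[C]$ to define the sets $X_v$ and (by \autoref{thm:bcl_disjointorequal}) partition $X_C$ into clusters; for each portal $v$ fix one cycle $O_v$ through $v$ and $t$ of length at most $\ell$, whose existence is guaranteed by \autoref{thm:bcl_shortcyclethroughvertex}. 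All of this runs in polynomial time, and crucially $|O_v\setminus\{t\}|\leq\ell-1$.

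With this data in hand, I would first invoke \autoref{lem:check_if_vertex_reached_by_many_clusters_is_applicable} to decide in polynomial time whether \autoref{thm:bcl_vertexreachedbymanyclusters} is applicable. If it is, returning the certifying set $\{v\}\cup\bigcup_{i=1}^{k+2}O_{x_i}$ immediately intersects every shadowless hitting $t\to\Vout$-separator of size at most $k$ by that lemma, and its size is at most $1+(k+2)(\ell-1)$, which is bounded by $(k^2+k+1)(\ell-1)$ for all relevant $k$. If \autoref{thm:bcl_vertexreachedbymanyclusters} is not applicable, I would split into the two sub-cases of the hypothesis. When some $C\in\mathcal{C}$ has more than $k(k+1)+1$ clusters, pick arbitrary representatives $x_0,\dots,x_{k(k+1)+1}$ from distinct clusters of $C$ and output $\setManyClusters=\bigcup_{i=0}^{k(k+1)+1}O_{x_i}\setminus\{t\}$; \autoref{thm:bcl_intersectsomecluster} guarantees that every target separator intersects this union, and since it is the union of at most $k^2+k+1$ sets of size $\leq \ell-1$ (all sharing $t$, which can be dropped because every separator is disjoint from $t$), its size is at most $(k^2+k+1)(\ell-1)$. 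In the remaining sub-case, pick $k+1$ strong components with at least two clusters each, pick representatives $x_i,y_i$ from two distinct clusters of $C_i$ for $i=1,\dots,k+1$, and output $\bigcup_{i=1}^{k+1}(O_{x_i}\cup O_{y_i})\setminus\{t\}$; \autoref{thm:bcl_fewcomponentswithmanyclusters} guarantees intersection with every hitting $t\to\Vout$-separator of size at most $k$, and the size is at most $2(k+1)(\ell-1)\leq (k^2+k+1)(\ell-1)$ (the interesting regime is $k\geq 2$; for $k\in\{0,1\}$ the statement is either vacuous or can be absorbed by taking $\setManyClusters$ to be the whole union of the few cycles involved).

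There is no real obstacle here: everything is a direct appeal to the already-established lemmas, the size estimate is a routine union bound using $|O_v|\leq\ell$ and the fact that all chosen cycles share $t$ (which does not need to be included because separators avoid $t$), and every primitive (portal detection, BFS, cluster grouping, short-cycle selection, applicability test for \autoref{thm:bcl_vertexreachedbymanyclusters}) is polynomial-time. The mildly delicate point is simply to observe that whenever we reach Case~1 we have already verified that \autoref{thm:bcl_vertexreachedbymanyclusters} is \emph{not} applicable, which is exactly the hypothesis required by \autoref{thm:bcl_intersectsomecluster}.
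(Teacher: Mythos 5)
Your proof follows essentially the same route as the paper's: you invoke the same three lemmas (\autoref{thm:bcl_vertexreachedbymanyclusters}, \autoref{thm:bcl_intersectsomecluster}, \autoref{thm:bcl_fewcomponentswithmanyclusters}) with the same applicability test (\autoref{lem:check_if_vertex_reached_by_many_clusters_is_applicable}) and the same union bound $|O_v\setminus\{t\}|\leq\ell-1$. The only cosmetic slips — writing $\bigcup_{i=0}^{k(k+1)+1}$ where \autoref{thm:bcl_intersectsomecluster} only needs $\bigcup_{i=1}^{k(k+1)+1}$, and the caveat for $k\in\{0,1\}$ (which the paper's proof also glosses over) — do not change the substance.
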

\begin{proof}
  If there is a $C \in \mathcal{C}$ with more than $k(k+1)+1$ clusters either \autoref{thm:bcl_vertexreachedbymanyclusters} or \autoref{thm:bcl_intersectsomecluster} is applicable.
  If there are more than $k$ components in $\mathcal{C}$ with at least two clusters \autoref{thm:bcl_fewcomponentswithmanyclusters} is applicable.
  We can check whether these conditions are accurate (by possibly using \autoref{lem:check_if_vertex_reached_by_many_clusters_is_applicable}) in polynomial time.
  We can also compute the sets $\mathcal{S}$ they produce in polynomial time and use one of them as set $\setManyClusters$.
  By taking the maximum over their size bounds and using that $|O_x - t| \leq (\ell - 1)$, we obtain the promised size bound.
\end{proof}

Now we need to handle the remaining case.
Namely, that there are only $k$ components in $\mathcal{C}$ with more than two clusters and these have at most $k(k+1) + 1$ clusters.

\begin{lemma}
\label{lem:force_cluster_path_inside_single_component}
  Let $(G, k, \ell, t, Z, \Vout)$ be an {\sc Important Hitting Separator in Strong Digraphs} instance and $\mathcal{C}$ the strong components of $G - t$.
  If \autoref{cor:unbalanced_clusters} is not applicable, then there is a set $\mathcal{S}$ of size at most $2^{\mathcal{O}(k + \log \ell)}$ such that every hitting $t \to \Vout$-separator $S$ of size at most $k$ either intersects $\mathcal{S}$ or for every $C \in \mathcal{C}$ intersects all paths between different clusters of $C$ in $G[C]$.
  Furthermore, the set $\mathcal{S}$ can be found in time $2^{\mathcal{O}(k)} \cdot \polyn$.
\end{lemma}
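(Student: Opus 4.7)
The plan is to build $\mathcal{S}$ out of the short cycles $O_{x_L}$ associated with one chosen representative per cluster in each component carrying at least two clusters. Since \autoref{cor:unbalanced_clusters} is not applicable, at most $k$ components $C \in \mathcal{C}$ have more than one cluster, and each such $C$ has at most $k(k+1)+1$ clusters. For every such cluster $L$, I would fix an arbitrary representative $x_L \in L$ and add $V(O_{x_L})$ (which has at most $\ell$ vertices) to $\mathcal{S}$. This yields $|\mathcal{S}| \leq k \cdot (k(k+1)+1) \cdot \ell = 2^{\mathcal{O}(k + \log \ell)}$, and identifying the strong components of $G-t$, the portals, the clusters, and the cycles $O_{x_L}$ is polynomial, so the construction fits in $2^{\mathcal{O}(k)}\polyn$ time.

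For correctness, suppose $S$ is a hitting $t \to \Vout$-separator of size at most $k$ with $S \cap \mathcal{S} = \emptyset$. For every $C$ with at least two clusters and every pair of distinct clusters $L_1, L_2$ of $C$, $S$ is disjoint from $O_{x_{L_1}} \cup O_{x_{L_2}} \subseteq \mathcal{S}$, and \autoref{thm:bcl_clusterintersection} forbids any $x_{L_1} \to x_{L_2}$ path in $G[C] - S$. This already realizes cluster separation at the representative level, which is exactly the structure the subsequent multiway-cut reduction will consume (clusters serve as supernodes and their representatives as terminals). To lift representative-level separation to arbitrary cluster members $y_1 \in L_1, y_2 \in L_2$, the plan is to argue by contradiction: if a $y_1 \to y_2$ path $P$ existed in $G[C] - S$, then by \autoref{thm:bcl_atleastoneshortdist} we would have $|P| \geq 2\ell^6 - 2\ell$, and one glues $P$ with the short detour $x_{L_2} \to t \to x_{L_1}$ (of length at most $2\ell - 2$ inside $O_{x_{L_1}} \cup O_{x_{L_2}}$, both disjoint from $S$) and with short intra-cluster hops $x_{L_1} \to y_1$ and $y_2 \to x_{L_2}$ of length at most $\lmax = 2\ell^2$, producing either an $x_{L_1} \to x_{L_2}$ path in $G[C] - S$ (contradicting the representative separation above) or a cycle of length greater than $\ell$ in $G - S$ (contradicting the hitting property $\cf(G - S) \leq \ell$).

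The main obstacle is precisely that the short intra-cluster hops $x_{L_1} \to y_1$ and $y_2 \to x_{L_2}$ might themselves pass through $S$. The plan for getting around this is to invoke \autoref{thm:bypassing_through_nearby_paths}, playing off the enormous $S$-disjoint length of $P$ against the tiny distance $\lmax$ that separates each pair of endpoints inside a common cluster, and rerouting around the at most $k$ deleted vertices through the bounded-circumference structure of $G[C]$. Fine-tuning the parameters so that a suitably chosen subpath of $P$ dominates the required bound $\cf(G[C])^5 \cdot (\lmax + 2) \cdot k$ (possibly after a further case split isolating where the budget of $S$ is actually spent inside the two clusters) is where the bulk of the technical care concentrates, and it is this step that I expect to be the most delicate part of the proof.
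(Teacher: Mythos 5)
The construction and size/time bounds in your first paragraph are fine. The problem is in the correctness argument. Choosing a single representative $x_L$ per cluster does not propagate from "no $x_{L_1}\to x_{L_2}$-path in $G[C]-S$" to "no $y_1\to y_2$-path in $G[C]-S$ for arbitrary $y_1\in L_1$, $y_2\in L_2$," and the bypassing step you sketch to bridge this gap does not go through. \autoref{thm:bypassing_through_nearby_paths} has a hard precondition: besides the long $S$-disjoint segment inside one path, you need a \emph{second} path $P_2$ that is \emph{entirely} disjoint from $S$, goes roughly in the \emph{same} direction, and whose endpoints are close to the first path's endpoints. In your setup the only candidate for the long $S$-disjoint segment is $P$ ($y_1\to y_2$), but there is no candidate for $P_2$: the only $S$-disjoint routes you have at your disposal are $P$ itself and the pieces $x_{L_2}\to t\to x_{L_1}$ inside $O_{x_{L_2}}\cup O_{x_{L_1}}$, which run in the \emph{opposite} direction to what \autoref{thm:bypassing_through_nearby_paths} requires. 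There is simply no known $S$-disjoint forward path from near $x_{L_1}$ to near $x_{L_2}$ to bypass onto --- that is exactly what you are trying to construct. Separately, even if a valid invocation could be set up, the lengths don't cooperate: the required bound $\cf(G[C])^5(\ell_{\max}+2)k \approx 2\ell^7 k$ already exceeds your lower bound $|P|\geq 2\ell^6-2\ell$ for every $k\geq 1$, so no choice of subpath of $P$ satisfies the hypothesis.

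The paper's proof avoids this entirely by never committing to a single representative per cluster. Instead it feeds the clusters $L_1,\ldots,L_t$ of each relevant component into \autoref{thm:bcl_smallpathwitnesstsets}, which returns bounded-size subsets $L_1',\ldots,L_t'$ with the precisely-tailored guarantee: if any $L_i\to L_j$-path survives in $G[C]-S$, then some $L_i'\to L_j'$-path also survives in $G[C]-S$. Taking $\mathcal{S}=\bigcup_x O_x$ over $x$ in these representative sets, a surviving inter-cluster path between arbitrary members immediately yields a surviving path between representatives, and \autoref{thm:bcl_clusterintersection} applied to that pair (whose $O$-cycles are in $\mathcal{S}$, hence disjoint from $S$) gives the contradiction. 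The lesson is that the separator-representative machinery of \autoref{thm:bcl_smallpathwitnesstsets} is doing the essential work of turning "for all pairs of cluster members" into "for some pair of representatives," and this cannot be replaced by picking one representative per cluster and hoping to reroute with \autoref{thm:bypassing_through_nearby_paths}.
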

\begin{proof}
  Construct $\mathcal{S}$ as follows:\\
  \begin{algorithm}[H]
    \SetKwInOut{Input}{Input}\SetKwInOut{Output}{Output}	
	
    \Input{Digraph $G$, integers $k, \ell$ and a vertex $t \in V(G)$}
    \Output{A vertex set $\mathcal{S} \subseteq V(G)$}
	Let $\mathcal{S} = \emptyset$\;
	\ForEach{$C \in \mathcal{C}$ with at least two clusters} {
      Let $L_1,\hdots,L_t$ the clusters of $C$\;
      Apply \autoref{thm:bcl_smallpathwitnesstsets} to $L_1,\hdots,L_t$ to obtain $L_1',\hdots,L_t'$.\;
      \ForEach{$x \in L_1' \cup \hdots \cup L_t'$}{
        Add $O_x$ to $\mathcal{S}$.\;
      }
    }		
  \end{algorithm}
	
  The size bound follows from $|O_x| \leq \ell$, $t \leq k^2 + k +1$ (as \autoref{cor:unbalanced_clusters} is not applicable) and \autoref{thm:bcl_smallpathwitnesstsets} which yields $|\mathcal{S}| \leq \ell \cdot 2(t-1)(k+1)4^{k + 1} = 2^{\mathcal{O}(k + \log \ell)}$.
	
  For correctness, let $S$ be a hitting $t \to \Vout$-separator of size at most $k$ that is disjoint from $\mathcal{S}$.
  Suppose, for sake of contradiction, that there is a path $P$ in $G[C] - S$ between different clusters of $C$ for some $C \in \mathcal{C}$.
  Without loss of generality, let $P$ be a $L_1 \to L_2$-path.
  By \autoref{thm:bcl_smallpathwitnesstsets} there is also an $x \to y$-path $Q$ in $G[C] - S$ with $x \in L_1', y \in L_2'$.
  But $S$ is disjoint from $O_x \cup O_y \subseteq \mathcal{S}$ in contradiction to \autoref{thm:bcl_clusterintersection}.
\end{proof}

\begin{lemma}
\label{lem:force_v_to_vout_path_inside_single_component}
  Let $(G, k, \ell, t, Z, \Vout)$ be an {\sc Important Hitting Separator in Strong Digraphs} instance, and let~$\mathcal{C}$ be the set of strong components of $G - t$.
  There is a set $\mathcal{S}$ of size at most $2^{\mathcal{O}(k + \log \ell)}$ such that for every hitting $t \to \Vout$-separator $S$ of size at most $k$  that is disjoint from $\mathcal{S}$,
  \begin{compactitem}
    \item there is no $L \to \Vout$-path in $G - S$ for any cluster $L$ of some $C \in \mathcal{C}$,
    \item and any $v \rightarrow \Vout$-path $P$ in $G - (S - v)$ for some $v \in C$ is entirely contained in $C$.
  \end{compactitem}
  Furthermore, the set $\mathcal{S}$ can be found in time $2^{\mathcal{O}(k)} \cdot \polyn$.
\end{lemma}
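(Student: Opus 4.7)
}
The plan is to define $\mathcal{S}$ via a single global application of \autoref{thm:bcl_smallpathwitnessset} to the union of all portal sets, and then derive both properties from this one set.

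First, I would let $X = \bigcup_{C \in \mathcal{C}} X_C$ be the union of all portal sets across the strong components of $G-t$, and apply \autoref{thm:bcl_smallpathwitnessset} to the pair $(X, \Vout)$ to obtain $X' \subseteq X$ of size at most $(k+1)4^{k+1}$. I would set $\mathcal{S} = \bigcup_{y \in X'} O_y$, where $O_y$ is the fixed short cycle through $t$ and $y$ from \autoref{thm:bcl_shortcyclethroughvertex}. Since $|O_y| \le \ell$, this gives $|\mathcal{S}| \le \ell (k+1) 4^{k+1} = 2^{\mathcal{O}(k + \log \ell)}$, computed in time $2^{\mathcal{O}(k)} \cdot \polyn$, as required.

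For the first property, fix a hitting $t \to \Vout$-separator $S$ of size at most $k$ with $S \cap \mathcal{S} = \emptyset$, and suppose for contradiction that some cluster $L$ of some $C \in \mathcal{C}$ admits an $L \to \Vout$-path in $G - S$. Since $L \subseteq X_C \subseteq X$, this is also an $X \to \Vout$-path in $G - S$; hence by \autoref{thm:bcl_smallpathwitnessset} there is an $X' \to \Vout$-path $Q$ in $G - S$ starting at some $y \in X'$. By construction $O_y \subseteq \mathcal{S}$ is disjoint from $S$, so following $O_y$ from $t$ to $y$ gives a $t \to y$-path in $G - S$; concatenating it with $Q$ yields a $t \to \Vout$-walk in $G - S$, contradicting that $S$ separates $\Vout$ from $t$.

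For the second property, let $v \in C$ for some $C \in \mathcal{C}$ and let $P$ be a $v \to \Vout$-path in $G - (S - v)$. Since $\mathcal{C}$ consists of the strong components of $G - t$, we have $v \ne t$, and every vertex of $P$ other than possibly $v$ avoids $S$. Suppose $P$ leaves $C$, and let $w$ be its first vertex outside $C$; then $w$ strictly follows $v$ on $P$, so $v \notin P[w, \Vout]$ and hence $P[w, \Vout]$ lies entirely in $G - S$. If $w = t$, then $P[w, \Vout]$ is a $t \to \Vout$-path in $G - S$, contradicting that $S$ is a $t \to \Vout$-separator. Otherwise $w$ lies in some other component $C' \in \mathcal{C}$, so $w$ is an in-portal of $C'$ and thus belongs to some cluster of $C'$; then $P[w, \Vout]$ is an $L' \to \Vout$-path in $G - S$, contradicting the first property. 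The main obstacle is avoiding a naive per-cluster application of \autoref{thm:bcl_smallpathwitnessset}, which would make $|\mathcal{S}|$ scale with the number of clusters and blow past the size bound; the key observation that sidesteps this is that every cluster is already a subset of the portal set, so a single application to the union $X$ simultaneously handles all clusters, and then the second property reduces to the first by the short case analysis on how a path could exit $C$.
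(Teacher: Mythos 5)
Your construction and argument match the paper's exactly: define $X=\bigcup_{C}X_C$, apply \autoref{thm:bcl_smallpathwitnessset} to $(X,\Vout)$ to get $X'$, set $\mathcal{S}=\bigcup_{y\in X'}O_y$, and derive a $t\to\Vout$-path contradiction via some $O_{x'}$. The only difference is cosmetic: for the second property you reduce to the first (by noting the entry vertex $w\neq t$ is a portal, hence in a cluster), whereas the paper re-runs the same $X'/O_{x'}$ argument inline; your version is slightly cleaner but not a different approach.
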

\begin{proof}
  Let $X = \bigcup_{C \in \mathcal{C}}X_C$ be the set of all portal vertices.
  Use \autoref{thm:bcl_smallpathwitnessset} on $X$ and $\Vout$ to obtain a set $X'$ of size $2^{\mathcal{O}(k)}$ such that if there is an $X \to \Vout$-path in $G - S$, there is also a $X' \to \Vout$-path in $G - S$.
  Let $\mathcal{S} = \bigcup_{x \in X'} O_x$.
  This set has size $|\mathcal{S}| \leq |X'| \cdot \ell = 2^{\mathcal{O}(k + \log \ell)}$.
	
  Now let $S$ be a hitting $t \to \Vout$-separator of size at most $k$ disjoint from $\mathcal{S}$.
  To show the first statement, suppose, for sake of contradiction, that there is an $x \to \Vout$-path in $G - S$ with $x \in L$ for some cluster $L$ of some $C \in \mathcal{C}$.
  We have that $x \in X_C \subseteq X$.
  Therefore, by \autoref{thm:bcl_smallpathwitnessset}, there is an $x' \to \Vout$-path $P$ with $x' \in X'$.
  So $\Vout$ is reachable from $x'$ in $G - S$, but also~$x'$ is reachable from~$t$ in $G - S$ as $O_{x'}$ is disjoint from~$S$.
  Thus, $\Vout$ is reachable from $t$---a contradiction to $S$ being a $t \to \Vout$-separator.
	
  Now for the second statement:
  Suppose, for sake of contradiction, that there is a $v \rightarrow V_\textsf{out}$-path~$P$ in $G - (S - v)$ for some $v\in C \in \mathcal{C}$ that is not entirely contained in $C$.
  The path~$P$ cannot contain~$t$ as then $G - S$ would have a $t \rightarrow V_\textsf{out}$ path in contradiction to $S$ being a $t \to \Vout$-separator.
  Thus $P$ visits some component $C' \in \mathcal{C}$ different from $C$.
  It enters this component through some $x \in X_{C'}$.
  Let $P$ end in $z$, then $P[x,z]$ is an $X \to \Vout$-path in $G - S$.
  By choice of $X'$ there is also an $x' \to \Vout$-path $Q$ in $G - S$ with $x' \in X'$.
  As~$S$ is disjoint from $O_{x'} \subseteq \mathcal{S}$, there is a $t \to x'$-path~$R$ disjoint from $S$ inside $O_{x'}$.
  Overall, $R \circ Q$ is an $t \to \Vout$-path disjoint from~$S$---a contradiction to~$S$ being a $t \to \Vout$-separator.
\end{proof}
%

If we assume disjointness of the sets constructed above, we now have that all important paths lie in a single component $C \in \mathcal{C}$.
To emphasize the structure in every strong component $C \in \mathcal{C}$ we introduce the concept of ``cluster separators''.
\begin{definition}
  Let $G$ be a digraph and let $X_1, \hdots, X_t, Y \subseteq V(G)$ be pairwise disjoint vertex sets.
  We call a vertex set $U \subseteq V \setminus (X_1 \cup \hdots \cup X_t \cup Y)$ a \emph{cluster separator} if $G - U$ contains (i) no path from $X_i$ to $X_j$ for $i \neq j$ and (ii) no path from $X_i$ to $Y$ for $i = 1, \hdots, t$. 

  A cluster separator $U$ is \emph{important} if there is no cluster separator ~$U'$ with $|U'| \leq |U|$ and $R^-_{G - U'}(Y) \subsetneq R^-_{G - U}(Y)$.
\end{definition}

With this notion of important cluster separators we can describe the structure of $S$ in every strong component $C \in \mathcal{C}$.

\begin{lemma}
\label{lem:cluster_separators_in_every_component_are_hitting_separators}
  Let $(G, k, \ell, t, Z, \Vout)$ be an \textsc{Important Hitting Separator in Strong Digraphs} instance, and let $\mathcal{C}$ be the set of strong components of $G - t$.
  For each $C \in \mathcal{C}$, let $L^C_1, \hdots, L^C_t$ be the clusters of $C$.
  If a set $S \subseteq V(G)$ is a cluster separator for each component of $G$, i.e., $S \cap C$ is an important cluster separator in $G[C]$ for $L_1^C, \hdots, L_t^C, (V_\textsf{out} \cap C)$ for every $C \in \mathcal{C}$, then $S$ is a hitting $t \to \Vout$-separator.
\end{lemma}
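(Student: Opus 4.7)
The plan is to verify the two defining properties of a hitting $t\to V_\textsf{out}$-separator separately: first that $S$ is a $t\to V_\textsf{out}$-separator, then that $\cf(G-S)\le\ell$.

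For the separator property, I would argue by contradiction: suppose $P$ is a $t\to V_\textsf{out}$-path in $G-S$, and let $v\in V_\textsf{out}$ be its endpoint. Assuming the standing disjointness $t\notin V_\textsf{out}$, the vertex $v$ lies in some strong component $C'\in\mathcal{C}$ of $G-t$. The key observation is that after leaving $t$, the path $P$ lives entirely in $G-t$, where the inter-component arcs form a DAG on $\mathcal{C}$; hence $P$ can enter $C'$ at most once, so if $v'$ is the first vertex of $P$ that lies in $C'$, the whole subpath $P[v',v]$ stays inside $C'$. Since $v'$ is reached by an arc coming from outside $C'$ (either from $t$ or from a previously visited component), $v'$ is a portal of $C'$, so $v'\in L_i^{C'}$ for some cluster index $i$. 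Then $P[v',v]$ is an $L_i^{C'}\to (V_\textsf{out}\cap C')$-path inside $G[C']-(S\cap C')$, directly contradicting the assumption that $S\cap C'$ is a cluster separator for $L_1^{C'},\dots,L_{t}^{C'},(V_\textsf{out}\cap C')$.

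For the cycle bound, suppose towards contradiction that $R$ is a cycle of length more than $\ell$ in $G-S$. Such an $R$ cannot be contained in a single component of $G-t$, because $\cf(G-t)\le\ell$ by the problem's standing assumptions, so $R$ must pass through $t$ and be a long cycle of $G$. Applying \autoref{thm:bcl_longcyclethroughsolution} to $R$ (viewed as a cycle of $G$), we obtain a subpath $Q$ of $R$ whose endpoints lie in two different clusters of some strong component $C'\in\mathcal{C}$ and which, by the lemma's proof, is fully contained in $G[C']$. Since $Q\subseteq R\subseteq G-S$, the path $Q$ witnesses an $L_i^{C'}\to L_j^{C'}$-path with $i\ne j$ in $G[C']-(S\cap C')$, again contradicting the cluster separator condition.

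The two parts together give both defining properties of a hitting $t\to V_\textsf{out}$-separator. The only subtle step is the use of the DAG structure on the strong components of $G-t$ to argue that a simple path cannot leave $C'$ and re-enter it; this is what allows the ``local'' cluster-separator property in each $C'$ to globally close off every $t\to V_\textsf{out}$-path. Aside from that, both cases reduce to a direct invocation of \autoref{thm:bcl_longcyclethroughsolution} and the definition of cluster separator, so no further technical machinery is required.
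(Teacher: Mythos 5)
Your proof is correct and follows essentially the same approach as the paper's: show that any $t\to V_\textsf{out}$-path must enter its target's strong component through a portal (hence a cluster) and thereafter remain inside that component, and apply \autoref{thm:bcl_longcyclethroughsolution} to any long cycle to obtain an intra-component inter-cluster subpath; both cases contradict the cluster-separator hypothesis. You make explicit the DAG-of-strong-components reasoning that the paper leaves implicit when it asserts the subpath $P[y,z]$ is fully contained in $G[C]$, but this is a justification of the same step rather than a different argument.
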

\begin{proof}
  Suppose, for sake of contradiction, that $S$ is not a hitting $t \to \Vout$-separator.
  Then $G - S$ contains either a $t \to \Vout$-path or a cycle of length more than $\ell$.
  If $G - S$ contains a $t \to z$-path~$P$ with $z \in \Vout$, there must be a $C \in \mathcal{C}$ such that $z \in C$.
  As $P$ can enter $C$ only through a portal vertex, there has to be a $y \in L^C_i$ such that $P[y, z]$ is completely contained in $G[C] - (S \cap C)$.
  This is a contradiction to $S \cap C$ being a cluster separator for $L_1^C, \hdots, L_t^C, (V_\textsf{out} \cap C)$ in $G[C]$.
  So $G - S$ must contain a cycle $O$ of length more than $\ell$.
  By \autoref{thm:bcl_longcyclethroughsolution} there must be a subpath~$Q$ connecting two different clusters in some component $C \in \mathcal{C}$.
  This is again a contradiction to $S \cap C$ being a cluster separator for $L_1^C, \hdots, L_t^C, (V_\textsf{out} \cap C)$ in $G[C]$.
\end{proof}

\begin{lemma}
\label{lem:s_is_important_cluster_separator_in_every_strong_component}
  Let $(G, k, \ell, t, Z, \Vout)$ be an {\sc Important Hitting Separator in Strong Digraphs} instance and $\mathcal{C}$ the strong components of $G - t$.
  For $C \in \mathcal{C}$ let $L^C_1, \hdots, L^C_t$ be the clusters of $C$.
  If \autoref{cor:unbalanced_clusters} is not applicable, then there is a set $\setSeparateClusters$ of size at most $2^{\mathcal{O}(k + \log \ell)}$ such that every important hitting $t \to \Vout$-separator $S$ of size at most $k$ either intersects $\mathcal{S}$ or $S \cap C$ is an important cluster separator in $G[C]$ for $L_1^C, \hdots, L_t^C, (V_\textsf{out} \cap C)$ for every $C \in \mathcal{C}$.
  Furthermore, the set $\mathcal{S}$ can be found in time $2^{\mathcal{O}(k)} \cdot \polyn$.
\end{lemma}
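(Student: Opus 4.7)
The plan is to take $\setSeparateClusters$ to be the union of the sets produced by Lemma~\ref{lem:force_cluster_path_inside_single_component} and Lemma~\ref{lem:force_v_to_vout_path_inside_single_component}; call these $\mathcal{S}_1$ and $\mathcal{S}_2$ respectively. Each has size $2^{\mathcal{O}(k+\log\ell)}$ and is computable in time $2^{\mathcal{O}(k)}\cdot\polyn$, so the same bounds hold for $\setSeparateClusters$.

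Fix an important hitting $t\to\Vout$-separator $S$ with $|S|\le k$ and $S\cap\setSeparateClusters=\emptyset$, and let $C\in\mathcal{C}$. The first step is to verify that $S\cap C$ is a cluster separator in $G[C]$ for the clusters of $C$ and $\Vout\cap C$. Any path between two different clusters of $C$ inside $G[C]-(S\cap C)$ is also a path in $G[C]-S$; since $S$ is disjoint from $\mathcal{S}_1$, Lemma~\ref{lem:force_cluster_path_inside_single_component} rules this out. Any path from a cluster $L$ of $C$ to $\Vout\cap C$ inside $G[C]-(S\cap C)$ lifts to an $L\to\Vout$ path in $G-S$; since $S$ is disjoint from $\mathcal{S}_2$, the first bullet of Lemma~\ref{lem:force_v_to_vout_path_inside_single_component} rules this out as well.

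To establish importance I would argue by contradiction. Suppose that for some $C_0\in\mathcal{C}$ the set $U=S\cap C_0$ is a cluster separator but is not important, so there exists a cluster separator $U'\subseteq V(C_0)$ in $G[C_0]$ with $|U'|\le|U|$ and $R^-_{G[C_0]-U'}(\Vout\cap C_0)\subsetneq R^-_{G[C_0]-U}(\Vout\cap C_0)$. Let $S^\star=(S\setminus C_0)\cup U'$, so that $|S^\star|\le|S|$. Since $S^\star\cap C=S\cap C$ is a cluster separator for every $C\ne C_0$ (by the first step applied to $S$) and $S^\star\cap C_0=U'$ is one by choice, Lemma~\ref{lem:cluster_separators_in_every_component_are_hitting_separators} shows that $S^\star$ is a hitting $t\to\Vout$-separator.

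It remains to show that $R^-_{G-S^\star}(\Vout)\subsetneq R^-_{G-S}(\Vout)$, which together with $|S^\star|\le|S|$ contradicts the importance of $S$. The second bullet of Lemma~\ref{lem:force_v_to_vout_path_inside_single_component}, applied to $S$ (which is justified by $S\cap\mathcal{S}_2=\emptyset$), yields for every $C\in\mathcal{C}$ and every $v\in C\setminus S$ the componentwise identity $v\in R^-_{G-S}(\Vout)\iff v\in R^-_{G[C]-(S\cap C)}(\Vout\cap C)$. If the same identity can be established for $S^\star$, then the two ranges agree outside $C_0$ (where $S^\star$ agrees with $S$ on each component) while inside $C_0$ the range strictly shrinks by choice of $U'$, giving the desired strict inclusion. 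The main obstacle is precisely to apply Lemma~\ref{lem:force_v_to_vout_path_inside_single_component} to $S^\star$ as well: this requires $U'$ to be disjoint from $\mathcal{S}_2\cap C_0$. I expect to handle this by choosing, among all violating cluster separators, one that additionally minimizes $|U'\cap\mathcal{S}_2|$ and then pushing any remaining vertex of $U'\cap\mathcal{S}_2$ off $\mathcal{S}_2$ using that each $O_{x'}$ is a short cycle of $G$ through $t$ and a portal of $C_0$; carrying out this pushing step while preserving the cluster separator property and the strict backward-range drop is the technical heart of the proof.
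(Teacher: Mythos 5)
Your construction of $\setSeparateClusters$, your verification that $S\cap C$ is a cluster separator in every component, and your use of Lemma~\ref{lem:cluster_separators_in_every_component_are_hitting_separators} to conclude that $S^\star=(S\setminus C_0)\cup U'$ is a hitting $t\to\Vout$-separator are all correct (the latter is in fact a slight streamlining of the paper's argument, which proves separation and hitting as two separate claims). The issue is the strict backward-range inclusion.

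You have correctly identified the obstruction yourself: your plan needs the ``componentwise identity'' for $S^\star$ as well as for $S$, which requires applying the second bullet of Lemma~\ref{lem:force_v_to_vout_path_inside_single_component} to $S^\star$, which in turn requires $U'\cap\mathcal{S}_2=\emptyset$. That disjointness is not given, and the pushing argument you sketch is not carried out; it is also far from obvious how to carry it out, since the vertices of $\mathcal{S}_2\cap C_0$ are not a small or structurally simple set and swapping them for other vertices while preserving both the cluster-separator property and the strict range drop in $G[C_0]$ is delicate. The good news is that this extra work is unnecessary. The paper never applies Lemma~\ref{lem:force_v_to_vout_path_inside_single_component} to the modified separator at all. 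Instead, for any $v\to\Vout$-path $P$ in $G-S^\star$, it argues as follows: if $P$ meets $S$ at all, those hits lie inside $C_0$ (since $S$ and $S^\star$ agree outside $C_0$); take the last $S$-vertex $y$ on $P$. Then $P[y,\cdot]$ is a $y\to\Vout$-path in $G-(S-y)$, and the second bullet of Lemma~\ref{lem:force_v_to_vout_path_inside_single_component}, applied to $S$ (which you may, since $S\cap\mathcal{S}_2=\emptyset$), forces $P[y,\cdot]$ to stay inside $C_0$. Since $P$ is disjoint from $S^\star$, the subpath $P[y,\cdot]$ is disjoint from $U'$, so $y\in R^-_{G[C_0]-U'}(\Vout\cap C_0)$; but $y\in S$ gives $y\notin R^-_{G[C_0]-(S\cap C_0)}(\Vout\cap C_0)$, contradicting the choice of $U'$. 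This ``last $S$-vertex'' trick gives $R^-_{G-S^\star}(\Vout)\subseteq R^-_{G-S}(\Vout)$ using the lemma only on $S$; strictness is then argued separately by taking a witness $v$ in $R^-_{G[C_0]-(S\cap C_0)}(\Vout\cap C_0)\setminus R^-_{G[C_0]-U'}(\Vout\cap C_0)$ and showing it cannot reach $\Vout$ in $G-S^\star$. You should replace the pushing plan with this argument.
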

\begin{proof}
  We set $\setSeparateClusters$ to be the union of the sets computed in \autoref{lem:force_cluster_path_inside_single_component} and \autoref{lem:force_v_to_vout_path_inside_single_component}.
  Now let~$S$ be an important hitting $t \to \Vout$-separator of size at most $k$ that is disjoint from $\setSeparateClusters$.
	
  First we have to show that $S \cap C$ is an cluster separator at all.
  Assume for contradiction that $S \cap C$ is not a cluster separator.
  Then in $G[C] -S$ there is either a path between two different clusters or a path from one cluster $L_i$ to $\Vout \cap C$.
  \autoref{lem:force_cluster_path_inside_single_component} rules out the first case, while \autoref{lem:force_v_to_vout_path_inside_single_component} rules out the latter --- a contradiction.
	
  Now we can show the importance of $S \cap C$ as an cluster separator in $G[C]$.
  Suppose, for sake of contradiction, that $S \cap C$ is not an important cluster separator.
  Let $U$ be an important cluster separator with respect to the cluster separator $S \cap C$, i.e. $|U| \leq |S \cap C|$, $R^-_{G[C] - U}(\Vout \cap C) \subsetneq R^-_{G[C] - S}(\Vout \cap C)$.
  As $S \cap C$ is not important, we know $U \neq S \cap C$.
  Consider the set $S' = (S \setminus C) \cup U$.

  We now want to show that $S'$ is a hitting $t \to \Vout$-separator that is important with respect to~$S$.
  As $U \neq S \cap C$ we have $S' \neq S$.
  If we can show that $S'$ is important with respect to $S$, then $S$ cannot be an important hitting $t \to V_\textsf{out}$-separator, which yields a contradiction to the choice of~$S$.
  \begin{claim}
    $S'$ is a $t \to \Vout$-separator.
  \end{claim}
  \begin{claimproof}
    Suppose, for sake of contradiction, that there is a $t \to \Vout$-path $P$ in $G^\star - S'$.
    Then path $P$ is intersected by $S$ (as $S$ is an $t \to \Vout$-separator), but as $P$ is disjoint from $S'$ this intersection must lie inside of $C$.
    Let $v$ be the last vertex of $P$ in $S$ and $z$ be the last vertex of $P$.
    Then we know that $P[v, z]$ is a $v \to \Vout$-path in $G - (S - v)$.
    \autoref{lem:force_v_to_vout_path_inside_single_component} implies that $P[v,z]$ lies completely inside $C$.
    Therefore, $v \in V(P[v, z]) \subseteq R^-_{G - U}(\Vout)$.
	But as $v \in S$ we have $v \notin R^-_{G[C] - S}(\Vout)$, a contradiction to $R^-_{G[C] - U}(\Vout) \subsetneq R^-_{G[C] - S}(\Vout)$.
  \end{claimproof}
		
  \begin{claim}
    Set $S'$ is a hitting $t \to \Vout$-separator.
  \end{claim}
  \begin{claimproof}
    In the preceding claim we have shown that $S'$ is a $t \to \Vout$-separator, so we only need to show it is also a hitting $t \to V_\textsf{out}$-separator.
    Suppose, for sake of contradiction, that $S'$ is not hitting, i.e. there is a cycle $O$ of length more than $\ell$ in $G - S'$.
    As $S$ is hitting, we know that~$O$ is intersected by~$S$.
    By choice of $S'$, these intersections lie only in $C$.
    \autoref{thm:bcl_longcyclethroughsolution} tells us that~$O$ has a path $R$ between two different clusters of some component.
    On the other hand \autoref{lem:force_cluster_path_inside_single_component} says that $R$ must be intersected by $S$.
    Therefore,~$R$ must be a path between different clusters in~$C$.
    By choice of $O$, the path $R$ is disjoint from $S' \supseteq U$, showing that $U$ is not a cluster separator---a contradiction.
  \end{claimproof}
		
  It remains to show that $S'$ is important with respect to $S$, i.e. $|S'| \leq |S|$ and $R^-_{G - S'}(\Vout) \subsetneq R^-_{G - S}(\Vout)$.
  For the size bound note that $|S'| = |S| - |S \cap C| + |U| \leq |S|$ by importance of $U$.

  \begin{claim}
    Set $R^-_{G - S'}(\Vout)$ is a proper subset of $R^-_{G - S}(\Vout)$.
  \end{claim}
  \begin{claimproof}
    Let $v \in R^-_{G - S'}(\Vout)$ and $P$ be a $v \to z$-path in $G - S'$ with $z \in \Vout$.
    Suppose, for sake of contradiction, that $v \notin R^-_{G - S}(\Vout)$.
    Then $P$ is intersected by $S$ but not by $S'$, therefore the intersection lies in $C$.
    Let $y$ be the last vertex of $S$ on $P$.
    Then $P[y, z]$ is a $y \to \Vout$-path in $G - (S - y)$.
    By \autoref{lem:force_v_to_vout_path_inside_single_component} we have that $P[y, z]$ lies in $C$.
    Now $y$ lies in $R^-_{G[C] - S'}(\Vout)$ as certified by $P[y, z]$, but not in $R^-_{G[C] - S}(\Vout)$ as $y \in S$.
    This is a contradiction to the choice of $U$.
		
    We still have to show that the inclusion is strict, i.e. $R^-_{G - S'}(\Vout) \neq R^-_{G - S}(\Vout)$.
    There is a $v \in \subsetneq R^-_{G[C] - S}(\Vout \cap C) \setminus R^-_{G[C] - U}(\Vout \cap C)$ by choice of $U$.
    We want to show that $v$ is in $R^-_{G - S}(\Vout)$ but not in $R^-_{G - S'}(\Vout)$.
    Suppose, for sake of contradiction, that there is a $v \to z$-path $P$ in $G - S'$ with $z \in \Vout$.
    If $P$ contains $t$, $P[t, z]$ is an $t \to \Vout$-path in $G - S'$ contradiction to $S'$ being a $t \to \Vout$-separator.
    So $P$ is disjoint from $t$.
    If $z \in C$ then $P$ must be entirely contained in $C$ as it starts and ends there and is disjoint from~$t$.
    But then $P$ is a certificate that $v \in R^-_{G[C] - U}(\Vout \cap C)$---a contradiction to the choice of $v$.
    So $P$ must visit some other $C' \in \mathcal{C}$ different from $C$ with $z \in C'$.
    Let $y$ be the last vertex trough which $P$ enter~$C'$.
    Then $P[y, z]$ lies entirely in $C'$ and $y \in X_{C'}$.
    We have that $S' \cap C' = S \cap C'$ and therefore $P[y, z]$ is a $X_{C'} \to \Vout$-path disjoint from~$S$---a contradiction to \autoref{lem:force_v_to_vout_path_inside_single_component} and the choice of~$S$.
  \end{claimproof}
	
  So we have shown that if $S \cap C$ is not an important cluster separator, $S$ is not an important hitting $t \to V_\textsf{out}$-separator (as witnessed by $S'$).
  This is a contradiction to the choice of~$S$.
\end{proof}

Finding these important cluster separators can now be defined as separate problem, called {\sc Important Cluster Separator in Strong Digraphs of Bounded Circumference}.

\begin{center}
  \framebox[1.0\textwidth]{
    \begin{tabularx}{0.95\textwidth}{rXr}
      \multicolumn{2}{X}{\sc\centering Important Cluster Separator in Strong Digraphs of Bounded Circumference} &\textit{Parameter:} $k + \ell$.\\[2em]
      \textit{Input:}      & \multicolumn{2}{l}{A strong digraph $G$, integers $k,\ell \in\mathbb N$ and sets $X_1, \hdots, X_p, V_\textsf{out} \subseteq V(G)$.}\\
      \textit{Properties:} & $\cf(G) \leq \ell$, $X_i, \Vout \neq \emptyset$, $2 \leq p \leq k(k+1)+1$,\\
      					& $\dist(v,w) \leq 2\ell^2 \quad \forall v,w \in X_i, i \in \lbrace 1, \hdots, p\rbrace$.\\
      \textit{Task:}	& \multicolumn{2}{l}{Find a vertex set $\setCS$ intersecting any important cluster separator}\\
      							& \multicolumn{2}{l}{with respect to $X_1, \hdots, X_\ell, V_\textsf{out}$ of size at most $k$.}
  \end{tabularx}}
\end{center}

\begin{definition}
  Let $(G, k, \ell, t, Z, \Vout)$ be an {\sc Important Hitting Separator in Strong Digraphs} instance.
  A component $C \subseteq V(G)$ is called
  \begin{compactitem}
    \item \emph{trivial} if it has only one cluster and $C \cap \Vout = \emptyset$,
    \item \emph{easily separable} if it has only one cluster and $C \cap \Vout \neq \emptyset$,
    \item \emph{multiway cut separable} if it has more than one cluster and $C \cap \Vout = \emptyset$.
  \end{compactitem}
\end{definition}

\begin{definition}
  Let $G$ be a digraph and let $X_1, \hdots, X_t \subseteq V(G)$.
  A set $S \subset V(G)\setminus (X_1 \cup \hdots \cup X_t)$ is called \emph{$X_1, \hdots, X_t$-multiway cut} if $G - S$ contains no $X_i \to X_j$-path for any $i \neq j$.
	
  The {\sc Directed Multiway Cut} problem asks for a digraph $G$, an integer $p \in \mathbb{Z}_{\geq 0}$ and sets $X_1, \hdots, X_t \subseteq V(G)$ whether there is an $X_1, \hdots, X_t$-multiway cut of size at most $p$.
\end{definition}

\begin{proposition}[\cite{ChitnisEtAl2012}\cite{ChitnisEtAl2015}]
\label{thm:multiway_cut_is_fpt}
  Let $G$ be a digraph, let $p \in \mathbb{N}$ and let $X_1, \hdots, X_t \subseteq V(G)$.
  The {\sc Directed Multiway Cut} problem for $(G, p, X_1, \hdots, X_t)$ can be solved in $2^{\mathcal{O}(p^2)}\cdot \polyn$ time.
  Further, an $X_1, \hdots, X_t$-multiway cut of size at most $p$ can be found in the same time, if it exists.
\end{proposition}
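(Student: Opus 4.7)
The plan is to combine two standard tools: derandomized shadow removal and enumeration of important separators. First, I would take $T = X_1 \cup \cdots \cup X_t$ and invoke Proposition~\ref{thm:deterministiccoveringoftheshadow} with budget $p$, producing $q = 2^{\mathcal{O}(p^2)}\log^2 n$ candidate shadow-covering sets $Z_1, \ldots, Z_q$. By the guarantee of that proposition, if a multiway cut $S$ of size at most $p$ exists, then for at least one index $i$ the set $S$ is disjoint from $Z_i$ and $Z_i$ covers the shadow of $S$ with respect to $T$. I try each $Z_i$ in turn and look for a shadowless multiway cut disjoint from it.

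The benefit of shadowlessness is that, after suppressing $Z_i$ via a reachability-preserving torso operation, every non-terminal vertex $v \notin S$ admits both a $T \to v$-walk and a $v \to T$-walk in $G - S$. With this property in hand, I would run a recursive branching algorithm on the parameterized instance $(G, \{X_1, \ldots, X_t\}, p)$. At each call, pick some terminal class, say $X_1$. The solution $S$ contains an $X_1 \to (X_2 \cup \cdots \cup X_t)$-separator $S_1$ (obtained as the boundary of $X_1$-reachability in $G - S$), and a standard pushing argument shows that $S_1$ may be assumed to be an \emph{important} such separator. I then enumerate all important $X_1 \to (X_2 \cup \cdots \cup X_t)$-separators of size at most $p$ using Proposition~\ref{thm:fptenumerationofimportantseperators}, yielding at most $4^p$ candidates, and for each candidate $S_1'$ recurse on $(G - S_1', \{X_2, \ldots, X_t\}, p - |S_1'|)$.

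Each recursive call either decreases the remaining budget (when $|S_1'| \ge 1$) or decreases the number of terminal classes (when $X_1$ was already disconnected from the remaining terminals), so the recursion tree has depth at most $t + p \le 2p + 1$. Multiplying the branching factor $4^p$ through this depth gives $4^{\mathcal{O}(p^2)} = 2^{\mathcal{O}(p^2)}$ leaves in total, and each leaf yields a candidate cut that can be verified in polynomial time by a single reachability check. Combining the $2^{\mathcal{O}(p^2)}\log^2 n$ outer choices of $Z_i$ with the $2^{\mathcal{O}(p^2)} \cdot \polyn$-time branching per choice yields the claimed overall bound of $2^{\mathcal{O}(p^2)} \cdot \polyn$.

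The main obstacle will be the pushing argument: formalizing that $S_1$ can be replaced by an important $X_1 \to (X_2 \cup \cdots \cup X_t)$-separator $S_1'$ without violating the multiway cut property. The worry is that pushing $S_1$ further towards the other terminals might release vertices that then create fresh $X_i \to X_j$ paths through regions previously blocked by $S_1$. Shadowlessness is exactly the invariant needed to rule this out: any such hypothetical new path would have to pass through a vertex with two-sided $T$-reachability, and a careful case analysis then shows such a vertex must still be reachable from $X_1$ in $G - ((S \setminus S_1) \cup S_1')$, contradicting the choice of $S_1'$ as a separator that is maximally pushed towards $X_2 \cup \cdots \cup X_t$. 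Managing this interaction between shadow coverage, the torso, and importance of separators is the most delicate step.
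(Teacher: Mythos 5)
The paper does not give its own proof of this statement; it is cited from Chitnis et al.~\cite{ChitnisEtAl2012}\cite{ChitnisEtAl2015}, so your proposal has to be judged against their actual algorithm. Your first phase (derandomized shadow removal via \autoref{thm:deterministiccoveringoftheshadow}) matches theirs, but the second phase does not, and the ``pushing'' step that you yourself flag as the main obstacle genuinely fails. Pushing the $X_1$-side boundary $S_1$ of a solution $S$ to an important $X_1\to(X_2\cup\cdots\cup X_t)$-separator $S_1'$ enlarges $R^+(X_1)$ and undeletes the vertices of $S_1\setminus S_1'$; a separator that merely maximizes \emph{forward} reach from $X_1$ does nothing to stop a path from some $X_j$ ($j\neq 1$) from running through such a freed vertex \emph{back into} $X_1$. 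Here is a concrete shadowless counterexample. Take terminals $a,b,c$ and non-terminals $p,q,u,w$, with arcs in both directions between each of the pairs $\{a,p\}$, $\{p,w\}$, $\{q,w\}$, $\{b,q\}$, $\{q,u\}$, $\{u,c\}$, plus one further arc $c\to w$. Then $S=\{w,u\}$ is a shadowless multiway cut of size two; its $X_1$-side boundary is $S_1=\{w\}$; and the unique important $a\to\{b,c\}$-separator of size one is $\{q\}$, which strictly dominates both $\{w\}$ and $\{p\}$ in forward reach. Replacing $S_1$ by $\{q\}$ yields $\{q,u\}$, which is \emph{not} a multiway cut because $c\to w\to p\to a$ survives. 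Moreover, your recursion drops $X_1$ from the terminal list after fixing $S_1'$, so it would never again look at $X_j\to X_1$ paths; on this instance every branch of your algorithm fails, even though a size-two solution exists.

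The structural reason this cannot be patched locally is that directed forward-reach importance is one-sided while the multiway-cut constraint is symmetric. The Chitnis--Hajiaghayi--Marx argument avoids branching on directed important separators altogether: after shadow removal and taking the torso, every surviving non-solution vertex both reaches and is reached from exactly one terminal block, so no arc of $G-S$ crosses between blocks, and therefore $S$ is also a multiway cut of the \emph{underlying undirected graph}; conversely any undirected multiway cut is trivially a directed one. They thus reduce, after shadow removal, to undirected \textsc{Multiway Cut} (where a symmetric pushing lemma does hold) and invoke a known FPT algorithm for that problem. To make your proposal correct you would need to replace the entire second phase with this reduction; the branching scheme as written is not repairable by ``a more careful case analysis''.
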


\begin{lemma}
\label{lem:cluster_separator_easy_cases}
  Let $(G, k, \ell, t, Z, \Vout)$ be an {\sc Important Hitting Separator in Strong Digraphs} instance and let $\mathcal{C}$ be the set of strong components of $G - t$.
  For each $C \in \mathcal{C}$ let $L^C_1, \hdots, L^C_t$ be the clusters of $C$.
  The inclusion-wise minimal important cluster separators for $L^C_1, \hdots, L^C_t, \Vout \cap C$ in $G[C]$ are exactly
  \begin{compactitem}
    \item the empty set, if $C$ is trivial,
    \item the important $L^C_1 \to \Vout \cap C$-separators, if $C$ is easily separable,
    \item the $L^C_1, \hdots, L^C_t$-multiway cuts of minimal size, if $C$ is multiway cut separable.
  \end{compactitem}
\end{lemma}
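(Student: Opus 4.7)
The plan is to perform a case analysis matching the three types of component (trivial, easily separable, multiway cut separable) against the definitions of cluster separator and important cluster separator, in each case showing that one of the two defining conditions becomes vacuous and that the importance condition simplifies accordingly. The trivial case is immediate: $C$ contains a single cluster $L_1^C$ and $\Vout \cap C = \emptyset$, so both condition~(i) (only one $X_i$) and condition~(ii) ($Y = \emptyset$) are vacuous. Every subset of $C \setminus L_1^C$ is a cluster separator, and because $R^-_{G[C] - U}(\emptyset) = \emptyset$ for every $U$, the strict inclusion in the importance definition is never satisfiable; hence every cluster separator is important, and the unique inclusion-wise minimal one is $\emptyset$.

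For the easily separable case ($C$ has a single cluster $L_1^C$ and $Y := \Vout \cap C$ is non-empty), I would first observe that condition~(i) remains vacuous while condition~(ii) says exactly that $U$ is an $L_1^C \to Y$-separator in $G[C]$. Consequently cluster separators coincide with $L_1^C \to Y$-separators. The importance condition for cluster separators, tracking $R^-_{G[C] - U}(Y)$, matches term-by-term the backward-range importance condition for $L_1^C \to Y$-separators used throughout this section; this is essentially the single-source specialization of the definition, where only one reachability set is of interest. Thus the inclusion-wise minimal important cluster separators are exactly the important $L_1^C \to Y$-separators.

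For the multiway cut separable case ($t \geq 2$ clusters and $Y = \Vout \cap C = \emptyset$), condition~(ii) is vacuous, so cluster separators are exactly the $L_1^C, \hdots, L_t^C$-multiway cuts in $G[C]$. Because $Y$ is empty, $R^-_{G[C] - U}(Y) = \emptyset$ for every cluster separator $U$, so the strict-inclusion condition in the importance definition is never satisfiable and every cluster separator is automatically important. The inclusion-wise minimal ones are then precisely the inclusion-wise minimal multiway cuts, which (as the lemma's phrasing ``of minimal size'' indicates for the subsequent algorithmic use) will be identified with the multiway cuts of minimum cardinality, since only those meet the budget $k$ in later steps. I do not expect a substantial obstacle: the whole argument is a definition chase. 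The one point worth a little care is the easily separable case, where the backward-range importance used in the cluster-separator framework has to be aligned with the importance attributed to ordinary $L_1^C \to Y$-separators; this boils down to noting that only a single relevant reachability set survives, but it is worth writing out explicitly.
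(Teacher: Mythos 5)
Your treatment of the trivial and multiway-cut cases is essentially sound (in both, $\Vout \cap C = \emptyset$ forces $R^-(\Vout \cap C)$ to be empty for every candidate, so the importance condition never bites, and the definition chase you describe is what the paper does).

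The gap is in the easily separable case. You claim that the importance condition for cluster separators, which asks that no $U'$ with $|U'|\leq|U|$ have $R^-_{G[C]-U'}(\Vout\cap C)\subsetneq R^-_{G[C]-U}(\Vout\cap C)$, "matches term-by-term the backward-range importance condition for $L_1^C\to Y$-separators." But the paper has no backward-range notion of importance for ordinary $X\to Y$-separators: the definition used throughout (and in \autoref{thm:fptenumerationofimportantseperators}, which is how the algorithm actually enumerates these separators) is the \emph{forward-range} condition, namely that there is no $S'$ with $|S'|\leq|S|$ and $R^+_{G-S}(X)\subsetneq R^+_{G-S'}(X)$. So the two notions of importance you need to reconcile push on \emph{different} ranges, and equating them is exactly the nontrivial content of the lemma. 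The paper's proof addresses this by arguing that for an inclusion-wise minimal separator $S$ in this setting, each $v\in S$ lies on an $L_1^C\to(\Vout\cap C)$-path that meets $S$ only in $v$, from which one infers that the relevant part of $V(G[C])$ partitions into $R^+_{G[C]-S}(L_1^C)\cup S\cup R^-_{G[C]-S}(\Vout\cap C)$; hence enlarging one range is equivalent to shrinking the other, and the two importance conditions agree on inclusion-wise minimal separators. You flag that "the one point worth a little care" is aligning the two importance notions, but you diagnose it as a "single-source specialization" when the actual issue is forward vs.\ backward reachability; as written, the proposal asserts the equivalence rather than proving it, and without the partition argument (or an equivalent one) the step does not go through.
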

\begin{proof}
  If $C$ is trivial, we have that $C \cap \Vout = \emptyset$ and therefore $R^-_{G[C] - S}(C \cap \Vout ) = \emptyset$ for any $S \subseteq V(G[C])$.
  Thus any cluster separator of minimal size is important.
  But as there are no clusters to separate, the empty set is the only cluster separator of minimal size.
	
  If $C$ is easily separable the only thing an cluster separator has to do, is to hit all $L^C_1 \to \Vout \cap C$-paths.
  Let $S$ be an inclusion-wise minimal important cluster separator.
  Then $S$ is a $L^C_1 \to \Vout \cap C$-separator.
  Furthermore, for every $v \in S$ there is a $L^C_1 \to \Vout \cap C$-path $P_v$ that intersects $S$ only in $v$ (as otherwise~$S$ would not be inclusion-wise minimal).
  This shows that $V(G[C]) = R^+_{G - S}(L^C_1) \cup S \cup R^-_{G[C] - S}(C \cap \Vout )$.
  Translated into importance that means that inclusion-wise minimal cluster separator $S$ for $C$ is important if and only if it is important as $L^C_1 \to \Vout \cap C$-separator.
	
  If $C$ is multiway cut separable, we again have $R^-_{G[C] - S}(C \cap \Vout ) = \emptyset$ for any $S \subseteq V(G[C])$.
  Thus all cluster separators of minimum size are important.
  Furthermore, there are no $L^C_i \to \Vout \cap C$-paths to be cut.
  Thus, any cluster separator is a $L^C_1, \hdots, L^C_t$-multiway cut.
  So the important cluster separators are exactly the $L^C_1, \hdots, L^C_t$-multiway cuts.
\end{proof}

\begin{theorem}
\label{thm:hitting_separator_to_cluster_separator}
  There is an algorithm that solves instances $(G, k, \ell, t, Z, \Vout)$ of {\sc Important Hitting Separator in Strong Digraphs} in time $2^{\mathcal{O}(k^2)} \cdot \polyn$ by making at most one call to an algorithm $\algorithmCS$ for the {\sc Important Cluster Separator in Strong Digraphs of Bounded Circumference} problem.
  Further, the set returned has size at most $2^{\mathcal{O}(k + \log \ell)} + \functionCS(k,\ell)$, where~$\functionCS(k,\ell)$ is a bound on the size of the output of $\algorithmCS$.
\end{theorem}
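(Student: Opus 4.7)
The plan is to follow the case analysis in \autoref{lem:cluster_separator_easy_cases}, which identifies four structural types for a strong component $C$ of $G-t$ (trivial, easily separable, multiway cut separable, or multi-cluster with $\Vout\cap C\neq\emptyset$) and describes the inclusion-wise minimal important cluster separators in each; only the last case requires $\algorithmCS$. First I would test whether \autoref{cor:unbalanced_clusters} applies, and if so return its set $\setManyClusters$ and halt, making no call to $\algorithmCS$. Otherwise the components of $G-t$ have at most $k(k+1)+1$ clusters each and at most $k$ of them have more than one cluster, meeting the preconditions of the subsequent lemmas.

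Next I would compute $\setSeparateClusters$ from \autoref{lem:s_is_important_cluster_separator_in_every_strong_component} and add it to $\setHS$. After this, by that lemma, any important hitting $t\to\Vout$-separator $S$ of size $\leq k$ disjoint from $\setSeparateClusters$ restricts to an important cluster separator for $L_1^C,\ldots,L_t^C,\Vout\cap C$ in $G[C]$ for every $C$; conversely, \autoref{lem:cluster_separators_in_every_component_are_hitting_separators} says that any union of cluster separators across components is again a hitting $t\to\Vout$-separator. Hence it suffices that $\setHS$ hits one representative per range-equivalence class of important cluster separators in each component.

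Then I iterate over the components and classify. Trivial components need nothing. For each easily separable $C$, I enumerate via \autoref{thm:fptenumerationofimportantseperators} the $\leq 4^k$ important $L_1^C\to \Vout\cap C$-separators of size $\leq k$ and add their union to $\setHS$; by \autoref{lem:cluster_separator_easy_cases} these are exactly the minimal important cluster separators in this case. For each multiway cut separable $C$, I call \autoref{thm:multiway_cut_is_fpt} on $L_1^C,\ldots,L_t^C$ to compute a minimum multiway cut in time $2^{\mathcal O(k^2)}\cdot\polyn$ and add it to $\setHS$; here $\Vout\cap C=\emptyset$ collapses all cluster separators into one range-equivalence class, so one cut is enough.

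The main obstacle is the fourth case, a component $C$ with multiple clusters and $\Vout\cap C\neq\emptyset$, which is precisely what $\algorithmCS$ is designed to solve. I would argue, from the shadowless structure enforced by the previous steps and the definition of $\Vout$ in \autoref{sec:reductiontostrongsubgraph}, that at most one such hard component exists, so $\algorithmCS$ is invoked at most once; it is then run on $G[C]$ with the clusters of $C$ as $X_1,\ldots,X_p$ and $\Vout\cap C$ as target, and the returned $\setCS$ is appended to $\setHS$. Correctness is the combination of the forward direction of \autoref{lem:s_is_important_cluster_separator_in_every_strong_component} (the solution restricts to an important cluster separator in each component) with the converse of \autoref{lem:cluster_separators_in_every_component_are_hitting_separators} (any such restriction recombines into a hitting separator). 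The final size is $|\setSeparateClusters|+\mathcal O(k\cdot 4^k)+\mathcal O(k^3)+\functionCS(k,\ell)=2^{\mathcal O(k+\log\ell)}+\functionCS(k,\ell)$, and the total running time is $2^{\mathcal O(k^2)}\cdot\polyn$ plus one call to $\algorithmCS$, matching the statement.
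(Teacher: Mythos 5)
Your plan to iterate over all components of $G-t$ and handle each is the main departure from the paper, and it is where the argument breaks down. The paper's algorithm does \emph{not} process every component: after computing $\setManyClusters$ (if \autoref{cor:unbalanced_clusters} applies) or $\setSeparateClusters$, it examines the components in a fixed priority order (multiway cut separable, then easily separable, then non-trivial), finds the \emph{first} component of the highest available type, handles only that one, and returns. This is enough because the goal is merely to \emph{intersect} some important hitting separator in each range-equivalence class (or all shadowless ones), not to fully assemble one. If $S$ is the unknown important hitting separator and $C$ is the one chosen component, then Lemma~\ref{lem:s_is_important_cluster_separator_in_every_strong_component} makes $S\cap C$ an important cluster separator inside $G[C]$; the set computed for $C$ either contains $S\cap C$ (easily separable case, via enumerating all important separators), or can be swapped in for $S\cap C$ to produce a range-equivalent $S'$ containing the computed set (multiway cut case), or intersects $S\cap C$ (the $\algorithmCS$ case). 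In every branch the returned set intersects $S$ or a range-equivalent replacement, and no other component needs to be touched.

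Because you process \emph{every} component, two concrete things go wrong. First, the size bound fails: there is no bound on the number of easily separable components (one cluster, $C\cap\Vout\neq\emptyset$), so summing $\mathcal{O}(k\cdot 4^k)$ over all of them is $\Omega(n)$ in the worst case, not $2^{\mathcal{O}(k+\log\ell)}$; your arithmetic in the final paragraph silently assumes one component of each type. Second, the one-call guarantee fails: a ``hard'' component is one with more than one cluster and $C\cap\Vout\neq\emptyset$, and \autoref{cor:unbalanced_clusters} not applying only tells you there are at most $k$ multi-cluster components in total, so there can be up to $k$ hard ones. Your claim that ``at most one such hard component exists'' is stated without justification, is not a consequence of the shadowlessness or the construction of $\Vout$, and appears to be false in general. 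The paper never needs such a claim, precisely because its algorithm returns after the first handled component. If you adopt the paper's priority-and-return-early structure, the rest of your component-by-component classification via \autoref{lem:cluster_separator_easy_cases} is sound; as written, though, the proposal does not establish the stated bounds.
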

\begin{proof}
  Let us first describe our algorithm:\\
  \begin{algorithm}[H]
    \SetKwInOut{Input}{Input}\SetKwInOut{Output}{Output}	
	
    \Input{Digraph $G$, integers $k,\ell$, a vertex $t \in V(G)$ and vertex sets $Z, \Vout \subseteq V(G)$}
    \Output{A vertex set $\mathcal{S} \subseteq V(G)$}
		
		Compute the strong components $C \in \mathcal{C}$ of $G - t$ together with their clusters $L^C_i$.\;
		\If{\autoref{cor:unbalanced_clusters} can be applied to $(G, k, \ell, t, Z, \Vout)$}{
			\Return $\setManyClusters$ as in \autoref{cor:unbalanced_clusters}.\;
		}
		Compute $\setSeparateClusters$ as in \autoref{lem:s_is_important_cluster_separator_in_every_strong_component}.\;
		\If{there is a multiway cut separable component $C \in \mathcal{C}$}{
			Solve the \textsc{Directed Multiway Cut} on $(G[C], p, L^C_1, \hdots, L^C_t)$ for all $1 \leq p \leq k$ with \autoref{thm:multiway_cut_is_fpt}.\;
			Let $\setMultiwayCut$ the smallest solution found (or $\setMultiwayCut = \emptyset$ if none existed).\;
			\Return $\setSeparateClusters \cup \setMultiwayCut$.\;
		}
		\If{there is an easily separable $C \in \mathcal{C}$}{
			Compute all important $L^C_1 \to \Vout \cap C$ separators in $G[C]$ with \autoref{thm:fptenumerationofimportantseperators}.\;
			Let $\setImportantSeparator$ the union of all these separators.\;
			\Return $\setSeparateClusters \cup \setImportantSeparator$.\;
		}
		\If{there is an $C \in \mathcal{C}$ which is not trivial}{
			Let $\setCS$ the output of $\algorithmCS$ on $G([C], k, \ell, L^C_1, \hdots, L^C_t, \Vout \cap C)$.\;
			\Return $\setSeparateClusters \cup \setCS$.\;
		}
		\Return $\setSeparateClusters$\;
	\end{algorithm}
	
  Now we argue for correctness:
  Let $(G, k, \ell, t, Z, \Vout)$ an \textsc{Important Hitting Separator in Strong Digraphs} instance.
  If \autoref{cor:unbalanced_clusters} can be applied to this instance, the set $\setManyClusters$ intersects all important hitting $t \to \Vout$-separators that are shadowless with respect to $Z$.
  So we return a correct solution.
  Otherwise, we compute $\setSeparateClusters$ as in \autoref{lem:s_is_important_cluster_separator_in_every_strong_component}.
  Then any important hitting $t \to \Vout$-separator $S$ disjoint from $\setSeparateClusters$ is an important cluster separators for $G[C]$, $C \in \mathcal{C}$.
	
  By \autoref{lem:cluster_separator_easy_cases}, if there is a multiway cut separable component, we know that $S$ is an multiway cut inside~$G[C]$.
  Let now $\setMultiwayCut$ the multiway cut for $G[C]$ computed by our algorithm.
  We know that $|\setMultiwayCut| \leq |S \cap C|$.
  Consider now $S' = (S \setminus C) \cap \setMultiwayCut$.
  We know $|S'| = |S| - |S \cap C| + |\setMultiwayCut| \leq |S|$.
  As we replaced one cluster separator by another, \autoref{lem:cluster_separators_in_every_component_are_hitting_separators} tells us that $S'$ is a hitting $t \to \Vout$-separator.
	
  \begin{claim}
    Set $S'$ is an important hitting $t \to \Vout$-separator range equivalent to $S$.
  \end{claim}
  \begin{claimproof}	
    Consider first the range $R^-_{G - S'}(\Vout)$.
    Now, we want to show that it equals the range~$R^-_{G - S}(\Vout)$.
    Suppose, for sake of contradiction, that there is a $v \in R^-_{G - S'}(\Vout) \setminus R^-_{G - S}(\Vout)$.
    Let $P$ be a $v \to z$-path with $z \in \Vout$ disjoint from $S'$.
    As $\Vout \cap C = \emptyset$ there is a different component $C' \in \mathcal{C}$ with $z \in C'$.
    As $P$ has to enter~$C'$ through a portal vertex $y$, there is a $y \to z$-path from a cluster of $C'$ to $\Vout \cap C'$ completely in $C' \setminus S'$.
    This is a contradiction to $S' \cap C' = S \cap C'$ being a cluster separator for $G[C']$.
    So $R^-_{G - S'}(\Vout) \subseteq R^-_{G - S}(\Vout)$.
    But $S$ is an important hitting $t \to \Vout$-separator, $S'$ is a hitting $t \to \Vout$-separator and $|S'| \leq |S|$.
    Therefore, the ranges $R^-_{G - S'}(\Vout)$ and $R^-_{G - S}(\Vout)$ have to be equal and $S'$ must be important too.
  \end{claimproof}
	
  So $\setMultiwayCut$ intersects an important hitting $t \to \Vout$-separator out of the range equivalence class of $S$.
  As~$S$ was arbitrary, $\setSeparateClusters \cup \setMultiwayCut$ intersects an important hitting $t \to \Vout$-separator out of every range equivalence class.
	
  If there is an easy separable component $C$, we know by \autoref{lem:cluster_separator_easy_cases}, that $S \cap C$ is an important $L^C_1 \to \Vout \cap C$ separator in $G[C]$.
  As we computed all of them, $\setSeparateClusters \cup \setImportantSeparator$ intersects all important hitting $t \to \Vout$-separators (in especially one of every range equivalence class).
	
  If there is a component $C$ that is neither multiway cut separable, easy separable nor trivial, we know the following properties of~$C$:
  \begin{compactitem}
    \item $\cf(G[C]) \leq \ell$ as $G[C]$ is a subgraph of $G - t$.
    \item $\Vout \cap C \neq \emptyset$, as $C$ would be trivial or multiway cut separable otherwise.
    \item $C$ has at least two cluster, as it would be trivial or easy separable otherwise.
    \item $C$ has at most $k(k+1) + 1$ cluster, as we could have applied \autoref{cor:unbalanced_clusters} otherwise.
    \item for $v, w \in L^C_i$ we have $\dist_{G[C]}(v,w) \leq \ell^2$ by definition of clusters.
  \end{compactitem}
  So $(G[C], k, \ell, L^C_1, \hdots, L^C_t, \Vout \cap C)$ is an instance of \textsc{ Important Cluster Separator in Strong Digraphs of Bounded Circumference}.
  By \autoref{lem:cluster_separator_easy_cases}, $S \cap C$ is an important cluster separator in this instance and therefore $\setCS$ intersects it.
  Therefore $\setSeparateClusters \cup \setCS$ intersects all important hitting $t \to \Vout$-separator.
	
  If none of the cases before happened, we know that every component is trivial.
  \autoref{lem:cluster_separator_easy_cases} tells us that~$S$ is the empty set in every component, so $S = \emptyset$.
  But if an important hitting separator $S$ is trivial, we have that $R^-_{G -S}(\Vout)$ is maximal and $|S|$ is minimal, therefore any other set cannot be an important hitting separator.
  Thus all important hitting separators are trivial.
  So we may return any set in this case.
  The set $\setSeparateClusters$ covers the case that $S$ was not disjoint from it.
  Therefore the algorithm is correct.
	
  The run time and size bound follow by combining \autoref{cor:unbalanced_clusters}, \autoref{lem:s_is_important_cluster_separator_in_every_strong_component}, \autoref{thm:multiway_cut_is_fpt} and \autoref{thm:fptenumerationofimportantseperators}.
\end{proof}


\subsection{Finding Important Cluster Separators}
\label{sec:finding_important_cluster_separators}
In this section we want to solve the {\sc Important Cluster Separator in Strong Digraphs of Bounded Circumference} problem.
Let $X_1, \hdots, X_p, \Vout$ be as in the definition of cluster separators.
For ease of notation, let $\mathcal{X} = \bigcup_{i=1}^t X_i$.
We know that every important cluster separator is an $\mathcal{X} \to \Vout$-separator.
Unfortunately, not every important cluster separator intersects an important $\mathcal{X} \to \Vout$-separator.
The goal of this section is to identify vertices on paths between different clusters ($X_i$'s) that are also separated by the $\mathcal{X} \to \Vout$ separator part of the cluster separator.
As in the sections before we fix some arbitrary cluster separator $S$.

By definition of cluster separators, every $X_i \to X_j$-path contains an vertex of $S$.
To guide or search we fix for every ordered pair $(i, j) \in \lbrace 1, \hdots, p\rbrace, i \neq j$ an $X_i \to X_j$-path $P_{i,j}$.
Let $\mathcal{P}$ the set of all these paths~$P_{i, j}$.
As $p \leq k(k + 1) +1$ we have that $|\mathcal{P}| \in \mathcal{O}(k^4)$.

To identify interesting vertices on the $P_{i,j}$'s guiding our search, we introduce the notion of ``outlets''.

\begin{definition}
  Let $v$ be a vertex on an $x\rightarrow y$-path $P$ and $\alpha\in\mathbb N$ be some integer.
  The \emph{$\alpha$-neighborhood} of $v$ on $P$ denoted by $P^\alpha(v)$ is the subpath of $P$ that contains the subpath of $\alpha$ arcs before and after $v$ (or all arcs until the end of the path, if $v$ is closer to an endpoint).
  Formally, $P^\alpha(v) = P[x',y']$, where $|P[x',v]| = \min\{\alpha,|P[x,v]|\}$ and $|P[v,y']| = \min\{\alpha,|P[v,y]|\}$.

  Let $P$ be a path and let $\alpha, \beta \in \mathbb{N}$.
  A vertex $v\in P$ is an $(\alpha, \beta)$-\emph{outlet} of $P$ (with respect to some $\Vout \subseteq V(G)$ if there is a $v\rightarrow \Vout$-path $R$ in $G$ that is at distance at least $\beta$ from every vertex of~$P$ \emph{not} in~$P^{\alpha}(v)$, that is, $\mathsf{dist}_{G^\star[C]}(V(P)\setminus V(P^{\alpha}(v)),R)\geq \beta$.
  
  An outlet is \emph{open} (with respect to $S \subseteq V(G)$) if there is a path $R$ as above such that $R - v$ is disjoint from $S$; otherwise, the outlet is \emph{closed}.
\end{definition}

First we show how to efficiently find outlets on a path $P$:

\begin{lemma}
\label{thm:compute_outlets_in_cubic_time}
  Given a digraph $G$, a set $\Vout \subseteq V(G)$, integers $\alpha, \beta \in \mathbb{N}$ and a path $P$ in $G$.
  Then the set~$\Omega(P)$ of outlets on $P$ with respect to $\Vout$ can be found in $\mathcal{O}(n^3)$ time.
\end{lemma}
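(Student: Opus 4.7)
The plan is to iterate over each vertex $v$ on the path $P$ and, for each, directly test the defining property of an $(\alpha,\beta)$-outlet by a single reachability computation in a subgraph obtained by deleting a suitable forbidden region. For a fixed $v\in V(P)$, first compute $Q_v := V(P)\setminus V(P^\alpha(v))$, the set of ``far'' vertices of $P$; since $P$ is given as an ordered sequence and $P^\alpha(v)$ is the window of $\alpha$ arcs on either side of $v$, this is linear in $|P|$.

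Next, compute the forbidden set $B_v := R^+_{G,\beta-1}(Q_v)$, that is, the set of vertices reachable from $Q_v$ by a walk of length at most $\beta-1$ in $G$. A single multi-source truncated BFS from $Q_v$ produces $B_v$ in $\mathcal{O}(n+m)$ time. Then declare $v$ an outlet if and only if (i) $v\notin B_v$ and (ii) there is a $v\to \Vout$-path in $G-B_v$. Condition (i) is a table lookup, and (ii) is one reachability check from $v$ in $G-B_v$, again in $\mathcal{O}(n+m)$ time.

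Correctness rests on the equivalence: a $v\to \Vout$-path $R$ in $G$ satisfies $\dist_G(Q_v,R)\geq \beta$ if and only if $V(R)\cap B_v=\emptyset$. Indeed, a vertex $w$ lies in $B_v$ precisely when some $u\in Q_v$ reaches $w$ by a walk of length strictly less than $\beta$, i.e., when $\dist_G(Q_v,w)<\beta$. So valid witnesses for ``$v$ is an outlet'' are exactly the $v\to \Vout$-paths in $G-B_v$; in particular any such witness forces $v\notin B_v$, justifying check (i).

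The main subtlety, rather than a real obstacle, is making sure the distance in the definition of $B_v$ is measured in $G$ and not in the restricted graph $G-B_v$: removing $B_v$ only excludes invalid vertices, not recomputes shortest paths. Summing the per-vertex $\mathcal{O}(n+m)=\mathcal{O}(n^2)$ work over the $|V(P)|\leq n$ choices of $v$ yields the claimed $\mathcal{O}(n^3)$ bound. One could shave constants by updating the truncated BFS incrementally as $v$ slides along $P$ (since $Q_v$ changes by only one vertex between consecutive positions), but this is not needed for the stated runtime.
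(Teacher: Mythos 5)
Your proof is correct and takes essentially the same approach as the paper: for each candidate $v\in V(P)$ you compute a forbidden region of vertices within distance less than $\beta$ from the far part $Q_v$ of $P$, then perform one reachability check from $v$ to $\Vout$ in the remaining subgraph, for a total of $\mathcal{O}(n^3)$ time. If anything, your construction of $Q_v=V(P)\setminus V(P^\alpha(v))$ directly from the definition of $P^\alpha(v)$ (measured by position along $P$) is cleaner than the paper's write-up, which substitutes the graph-distance test $\min\{\dist_G(v,w),\dist_G(w,v)\}>\alpha$.
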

\begin{proof}
  In $\mathcal{O}(n^3)$ time we can calculate the distances of every vertex pair in $G$ (for example by the Moore-Bellman-Ford-algorithm).
  Then for every $v \in V(P)$ do the following:
  Iterate over all vertices in $w \in V(P)$.
  If $\min\lbrace \dist_G(v, w), \dist_G(w,v)\rbrace > \alpha$ we mark all vertices $z \in V(G)$ with $\dist(w, z) \leq \beta$.
  This marking can be done in time $\mathcal{O}(n^2)$ for a single vertex~$v$.
  After we marked all vertices for a single~$v$, we try to find a $v \to \Vout$-path using only unmarked vertices by a DFS in time $\mathcal{O}(n + m) = \mathcal{O}(n^2)$.
  The vertex $v$ is an outlet of $P$ if and only if such an path exists.
  By checking this for every vertex $v \in V(P)$ we can find all outlets of $P$ in time $\mathcal{O}(n^3 + |V(P)| \cdot n^2) = \mathcal{O}(n^3)$.
\end{proof}

We now consider outlets on paths in $\mathcal{P}$.
Most useful for our purpose are outlets which are open:
\begin{lemma}
\label{obs:bcl_nosopenoutletpath}
  Let $v$ be an open outlet of a path $P$ with respect to $S$.
  Then there is no $\mathcal{X} \to v$-path in $G - S$.
\end{lemma}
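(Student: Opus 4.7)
The plan is to argue directly by contradiction, using nothing more than the definitions of open outlet and cluster separator.

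Suppose, for the sake of contradiction, that there is an $\mathcal{X} \to v$-path $Q$ in $G - S$. First I would dispose of a small technicality: if $v \in S$ then the endpoint $v$ of $Q$ would not be a vertex of $G - S$ at all, so the assumed path could not exist. Hence we may assume $v \notin S$. Since $v$ is an open outlet of $P$ with respect to $S$, by definition there exists a $v \to V_\textsf{out}$-path $R$ in $G$ such that $R - v$ is disjoint from $S$; combined with $v \notin S$, the entire path $R$ avoids $S$.

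Now concatenate: $W = Q \circ R$ is a walk from $\mathcal{X}$ to $V_\textsf{out}$ whose vertex set is contained in $V(G) \setminus S$, so it lies in $G - S$. Every walk contains a simple path with the same endpoints, so $G - S$ contains an $X_i \to V_\textsf{out}$-path for some $i \in \{1, \dots, p\}$. This directly contradicts condition (ii) of $S$ being a cluster separator, which prohibits any $X_i \to V_\textsf{out}$-path in $G - S$. Thus no such $Q$ can exist.

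There is essentially no obstacle here; the statement is an immediate consequence of unfolding the definition of \emph{open outlet} (providing a safe escape route $R$ from $v$ to $V_\textsf{out}$ in $G - S$) together with the separator axiom for $S$. The geometric data in the outlet definition (the parameters $\alpha$ and $\beta$, and the distance condition $\mathsf{dist}(V(P) \setminus V(P^\alpha(v)), R) \geq \beta$) plays no role in this particular lemma; it will only matter in subsequent applications where outlets are used to control how the separator $S$ interacts with the guiding paths in $\mathcal{P}$.
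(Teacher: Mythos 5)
Your proof is correct and follows the same route as the paper: deduce $v\notin S$ from the assumed path $Q$, unpack the definition of open outlet to get the $v\to V_\textsf{out}$-path $R$ avoiding $S$, concatenate to obtain an $\mathcal{X}\to V_\textsf{out}$-walk in $G-S$, and contradict the cluster-separator property.
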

\begin{proof}
  Suppose, for sake of contradiction, that there is an $\mathcal{X} \to v$-path $Q$ in $G - S$.
  In particular we have that $v \notin S$.
  By definition of an open outlet, there is also a $v \rightarrow V_\textsf{out}$-path $R$ in $G - (S \setminus \lbrace v \rbrace) = G - S$.
  This means that $Q \circ R$ is an $\mathcal{X} \rightarrow V_\textsf{out}$-walk in $G - S$, contradicting that~$S$ is a cluster separator.
\end{proof}

Unfortunately, not every path in $\mathcal{P}$ has an open outlet.
But every path in $\mathcal{P}$ is intersected by $S$.
We introduce the notion of ``frontier'' to denominate the vertices of $S$ which would be open outlets if they were outlets in the first place.
\begin{definition}
  Let $S$ be a cluster separator in a digraph $G$ with respect to $X_1, \hdots, X_t, \Vout$.
  The \emph{frontier} $F$ of $S$ is the set of vertices $v \in S$ such that there is an $v \to \Vout$-path in $G - (S \setminus \lbrace v \rbrace)$.
\end{definition}

Let $F$ denote the frontier of $S$.
Our next lemma shows that if an $X_i \to X_j$-path $Q \in \mathcal{P}$ is intersected by $F$, then every $X_i \to X_j$-path either has an open outlet, or at least a closed outlet with some vertex of $S$ nearby.
\begin{lemma}
\label{thm:bcl_openornearclosedoutlets}
  Let $(G, k, \ell, X_1, \hdots, X_p, \Vout)$ be an instance of the {\sc Important Separator in Strong Graphs of Bounded Circumference} problem and let $S$ be any cluster separator.
  Then for $\beta \geq 2 \ell^3$ and $\alpha \geq \ell^3 \beta$, the following holds:
  If there is an $X_i \to X_j$-path $Q$ for some $i \neq j$ that is intersected by the frontier $F$ of $S$ then every $X_i \to X_j$-path $P$ contains
  \begin{compactitem}
    \item either an open $(\alpha, \beta)$-outlet,
	\item or a closed $(\alpha, \beta)$-outlet $\omega$ with a $\omega \rightarrow \Vout$-path $R_\omega$ in $G^\star$ such that $\emptyset \subsetneq R_\omega \cap S \subseteq R^+_\beta(\omega)$.
  \end{compactitem}
\end{lemma}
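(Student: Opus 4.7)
The plan is to leverage the frontier vertex $f \in F \cap V(Q)$ together with a witness $f \to \Vout$-path $R_f$ (satisfying $V(R_f) \cap S = \{f\}$ by definition of the frontier) to produce an outlet on any given $X_i \to X_j$-path $P$. The core idea is that $R_f$ is an almost $S$-free path to $\Vout$ rooted on $Q$, and the bounded-circumference hypothesis forces $P$ and $Q$ to stay close enough that we can graft onto $R_f$ starting from a suitable vertex on $P$.

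The first step is \emph{proximity}: apply Lemma~\ref{thm:bcl_pathdistance} to the two $X_i \to X_j$-paths $P$ and $Q$, whose endpoints lie in clusters of diameter at most $2\ell^2$. This yields a vertex $v_0 \in V(P)$ with both $\dist_G(v_0, f)$ and $\dist_G(f, v_0)$ bounded by $(\ell-1) \cdot 2\ell^2 + 2(\ell-2) \le 2\ell^3 \le \beta$. The second step is to build a \emph{candidate witness}: let $\sigma$ be a shortest $v_0 \to f$-path and set $\tilde R := \sigma \circ R_f$, a $v_0 \to \Vout$-walk in which only the vertices of $\sigma$ and $f$ itself can lie in $S$. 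Since $|\sigma| \le \beta$, we get $\emptyset \subsetneq \tilde R \cap S \subseteq R^+_\beta(v_0)$ automatically. The third step is an \emph{iteration} over the choice of candidate. If $\tilde R$ is at distance at least $\beta$ from every vertex of $V(P) \setminus V(P^\alpha(v_0))$, then $v_0$ is the outlet we seek: if an entirely $S$-free $v_0 \to \Vout$-path exists it is open, otherwise $\tilde R$ itself witnesses the closed case with the required $R_\omega \cap S \subseteq R^+_\beta(\omega)$. Otherwise some $w \in V(P) \setminus V(P^\alpha(v_0))$ satisfies $\dist_G(w, \tilde R) < \beta$; pick $r$ to be the vertex of $\tilde R$ reached \emph{latest} along the $v_0 \to \Vout$ traversal by such a $w$, prepend a shortest $w \to r$-path $\sigma'$ to the suffix $\tilde R[r, \Vout]$, and restart the construction from $\omega := w$ with the new candidate.

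The main obstacle is showing that this iteration terminates with an $\omega$ that actually satisfies the outlet distance condition \emph{in both directions along} $P$. Termination is the easy part: each restart strictly advances the position of $r$ along $R_f$, so the process halts after at most $|R_f|$ rounds. The subtle point is the backward direction of the outlet condition: after restarting at $w$, one must rule out that a vertex of $P$ lying far from $w$ along $P$ is within distance $\beta$ of the new shortcut $\sigma'$ or of $\tilde R[r, \Vout]$. The tail is handled by the maximality of $r$, so the work lies in analysing $\sigma'$, and this is where the gap $\alpha \ge \ell^3 \beta$ is forced. Concretely, by Lemma~\ref{thm:bcl_distboundedcircum} and the bypassing estimate of Lemma~\ref{thm:bypassing_through_nearby_paths}, any vertex of $P$ whose $G$-distance to the length-$\le \beta$ walk $\sigma'$ is at most $\beta$ must lie within $\mathcal{O}(\ell^3 \beta) \le \alpha$ arcs of $w$ along $P$; verifying this quantitative bound is the crux. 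Once it is established, the open/closed dichotomy is essentially free: either a fully $S$-avoiding $\omega \to \Vout$-path exists (open), or the specific witness produced by the construction concentrates its $S$-intersection on $\sigma'$ and at $f$, both contained in $R^+_\beta(\omega)$, giving exactly the closed case of the statement.
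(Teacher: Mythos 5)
Your proposal identifies the right ingredients — locate a frontier vertex $f\in V(Q)$ with a near-$S$-free escape path $R_f$, pull $P$ and $Q$ together via Lemma~\ref{thm:bcl_pathdistance}, and then graft onto $R_f$ from a vertex of $P$ — but the iterative scheme you use to find the base vertex $\omega$ has a genuine gap, and the paper avoids the iteration entirely.

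Concretely, you claim that ``each restart strictly advances the position of $r$ along $R_f$.'' This is not automatic: after restarting from $w$ with the new candidate $\sigma'\circ\tilde R[r,\Vout]$, the next offending vertex $w'\in V(P)\setminus V(P^\alpha(w))$ could be within distance $\beta$ of the new shortcut $\sigma'$ rather than of the tail $\tilde R[r,\Vout]$, in which case the new $r'$ would sit on $\sigma'$ and $r$ would \emph{not} advance along $R_f$. Ruling this out requires exactly the quantitative ``crux'' bound you defer, so the claim that ``termination is the easy part'' is circular as written. Moreover, the maximality of $r$ is taken with respect to $P\setminus P^\alpha(v_0)$, while the restart switches the excluded neighborhood to $P^\alpha(w)$; the two do not coincide, so the tail is not automatically handled. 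One further small point: the crux estimate uses Lemma~\ref{thm:bcl_distboundedcircum} and Lemma~\ref{thm:bcl_distboundedcircumsquared}, not the bypassing Lemma~\ref{thm:bypassing_through_nearby_paths}, which is a statement about routing around a fixed deletion set and plays no role here.

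The paper sidesteps all of this with a one-shot choice: let $x$ be the \emph{last} vertex of the fixed frontier path $W=R_f$ satisfying $\dist_G(P,x)\le\beta$ (with respect to the \emph{entire} path $P$, not an excluded region), then let $\omega\in V(P)$ be the vertex closest to $x$ and $R$ the shortest $\omega\to x$-path. Then $R\circ W[x,u]$ is the outlet witness: the tail $W[x,u]\setminus\{x\}$ is at distance more than $\beta$ from all of $P$ by maximality of $x$, and any $r\in R$ that is within distance $\beta$ of some $p\in P\setminus P^\alpha(\omega)$ gives $\dist_G(p,\omega)\le\ell\beta$ by Lemma~\ref{thm:bcl_distboundedcircum}, contradicting $|P[p,\omega]|\ge\alpha=\ell^3\beta>(\cf(G)-1)^2\dist_G(p,\omega)$ via Lemma~\ref{thm:bcl_distboundedcircumsquared}. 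By defining $x$ before $\omega$ and taking it maximal on $W$ rather than on any evolving candidate, the paper gets both the tail and the prefix conditions in a single step, with no termination argument to supply. If you want to keep your iterative formulation, the way to repair it is to show first (using the crux bound) that no $w'\notin P^\alpha(w)$ can be $\beta$-close to the length-$\le\beta$ shortcut $\sigma'$, which then forces progress on $R_f$; but at that point you have essentially re-derived the paper's direct construction.
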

\begin{proof}
  Let $Q$ be an $x_1 \to y_1$-path and $P$ be an $x_2 \to y_2$-path with $x_1, x_2 \in X_i$ and $y_1, y_2 \in X_j$.
  By assumption there is a vertex $w \in V(Q) \cap F$.
  As $w \in F$ we get a $w \rightarrow u$-path $W$, with $u \in \Vout$, that intersects $S$ only in $w$.
  By \autoref{thm:bcl_pathdistance} we have that $\dist(P, w) \leq (\cf(G) - 1) \cdot \max\lbrace \dist(x_1, x_2), \dist(y_1, y_2)\rbrace + 2(\cf(G) - 2) \leq (\ell - 1)2\ell^2 + 2(\ell - 2) \leq \beta$.
  Let $x$ be the last vertex on $W$ with $\mathsf{dist}(P,x) \leq \beta$; as $\mathsf{dist}(P, W) \leq \dist(P, w) \leq \beta$, there is such a vertex.
  Let $\omega$ be a vertex on $P$ minimizing $\mathsf{dist}_{G}(\omega,x)$, and let $R$ be a shortest $\omega \rightarrow x$-path.
  \begin{claim}
    $\omega$ is an $(\alpha, \beta)$-outlet witnessed by $R \circ W[x,u]$.
  \end{claim}
  \begin{claimproof}
    By definition of $x$, $\omega$ and $R$ we have that $R \circ W[x,u]$ is a $\omega \rightarrow V_\textsf{out}$-path (and not only a walk).
    We will now show that every vertex $z$ at distance more than $\alpha$ from~$\omega$ on $P$ has distance at least $\beta$ from $R \circ W[x,u]$.
    Every vertex on $W[x, u]$---except for $x$---has distance at least $\beta + 1$ from every vertex of~$P$ by definition of $x$.
    Therefore, every vertex within distance~$\beta$ from $P$ lies on $R$.
    Assume there is a vertex $r \in R$ with $\textsf{dist}(P \setminus P^\alpha(\omega), r) \leq \beta$.
    Let $p$ be a vertex on $P \setminus P^\alpha(\omega)$ with $\textsf{dist}(p, r) = \textsf{dist}(P \setminus P^\alpha(\omega), r)$.
    
	If $p$ appears before $\omega$ on $P$, we obtain
    \begin{eqnarray*}
	  \mathsf{dist}_{G}(p,\omega) & \leq & \mathsf{dist}_{G}(p,r) + \mathsf{dist}_{G}(r, \omega)\\
	                                       & \leq & \mathsf{dist}_{G}(p,r) + (\mathsf{cf}(G)-1)\mathsf{dist}_{G}(\omega,r)\\
										   & \leq & \ell\beta,
    \end{eqnarray*}
	using \autoref{thm:bcl_distboundedcircum}.
	Equivalently, if $\omega$ appears before $p$ on $P$ we obtain
	\begin{eqnarray*}
	  \mathsf{dist}_{G}(\omega,p) & \leq & \textsf{dist}_{G}(\omega,r) + \textsf{dist}_{G}(r, p)\\
                                           & \leq & \mathsf{dist}_{G}(\omega,r) + (\mathsf{cf}(G)-1)\mathsf{dist}_{G}(p, r)\\
                                           & \leq & \ell\beta \enspace .
	\end{eqnarray*}
	In either case, we get that the distance between $p$ and $\omega$ is bounded by $\ell\beta \leq \alpha / (\textsf{cf}(G)-1)^2$.
	Note that the segment of $P$ between $p$ and $\omega$ has length at least $\alpha$.
	Therefore $|P[p, \omega]| \geq \alpha = \ell^2 \cdot \ell\beta > (\textsf{cf}(G)-1)^2 \mathsf{dist}_{G}(p,\omega)$ respectively $|P[\omega,p]| > (\textsf{cf}(G)-1)^2 \mathsf{dist}_{G}(\omega, p)$ which is a contradiction to \autoref{thm:bcl_distboundedcircumsquared}.
  \end{claimproof}

  If $\omega$ is an open $(\alpha, \beta)$-outlet, we are done.
  Otherwise, $R_\omega = R \circ W[x,t]$ is intersected by~$S$.
  As~$W$ is disjoint from $S$ except maybe for its first vertex, we have that $S \cap R_\omega = S \cap R$.
  The path~$R$ has length at most~$\beta$ and starts at $\omega$, implying $\emptyset \subsetneq R_\omega \cap S \subseteq R^+_\beta(\omega)$.
\end{proof}

Our next goal is to ensure that every path intersecting $F$ contains at least one open outlet.
For this we want to guess the paths with closed outlets near to a vertex $v \in F$ and find this $v$.
The problem with guessing the closed outlets is that the number of outlets on these paths may not be bounded in $k + \ell$.
Fortunately, paths with many outlets contain always an open outlet:
\begin{lemma}
\label{thm:bcl_firstgammaoutletssuffice}
  Let $G$ be a strong digraph with $\cf(G) \leq \ell$, $\Vout \subseteq V(G)$ and $S \subseteq V(G)$ with $|S| \leq k$.
  Then for $\beta \geq 3 \ell$, the following holds:
  If a path $P$ has at least $\gamma = k\cdot(2\alpha+2)+1$ outlets, one of the first $\gamma$ many $(\alpha, \beta)$-outlets is an open $(\alpha, \beta)$-outlet with respect to $S$.
\end{lemma}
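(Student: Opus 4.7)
The plan is a pigeonhole argument on $S$. Assume for contradiction that the first $\gamma$ outlets $v_1,\dots,v_\gamma$ along $P$ (in their order on $P$) are all closed. For each $i$, fix a valid witness $R_i$ (a $v_i\to\Vout$-path in $G$ satisfying $\mathsf{dist}_G(V(P)\setminus V(P^\alpha(v_i)),R_i)\geq\beta$) and a vertex $s_i\in S\cap(R_i\setminus\{v_i\})$; such an $s_i$ exists because $v_i$ is closed. The central claim is that $R_i$ and $s_i$ can be chosen so that $\mathsf{dist}_G(v_i,s_i)<\beta$; granting this, a pigeonhole count finishes the proof.

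Because $|S|\leq k$ and $\gamma=k(2\alpha+2)+1$, there is an $s\in S$ with $s=s_i$ for at least $\lceil\gamma/k\rceil\geq 2\alpha+3$ indices $i$. For each such $i$, validity of $R_i$ forces $\mathsf{dist}_G(u,s)\geq\beta$ for every $u\in V(P)\setminus V(P^\alpha(v_i))$; equivalently, $Q_s:=\{u\in V(P):\mathsf{dist}_G(u,s)<\beta\}\subseteq V(P^\alpha(v_i))$. The central claim yields $v_i\in Q_s$, so for any other such index $j$ one has $v_i\in Q_s\subseteq V(P^\alpha(v_j))$, placing $v_i$ within $P$-distance $\alpha$ of $v_j$. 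Hence all of these $\geq 2\alpha+3$ outlets sit in a single $P$-window of length at most $\alpha$, containing at most $\alpha+1\leq 2\alpha+1$ vertices—the desired contradiction.

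The main obstacle is verifying the central claim $\mathsf{dist}_G(v_i,s_i)<\beta$. I would pick $R_i$ so as to minimize the length $|R_i[v_i,s_i]|$ of the prefix to the first $S$-vertex encountered after $v_i$. If this length were $\geq\beta\geq 3\ell$, then $R_i[v_i,s_i]$ would be a long $v_i\to s_i$-path that is $S$-free in its interior. Combining $\cf(G)\leq\ell$ with \autoref{thm:bcl_distboundedcircum} and strong connectivity of $G$ to supply a backward $s_i\to v_i$-route, and invoking the bypassing lemma \autoref{thm:bypassing_through_nearby_paths}, one would reroute $R_i$ to a valid witness whose first $S$-vertex lies strictly closer to $v_i$—contradicting the minimal choice. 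The hypothesis $\beta\geq 3\ell$ is exactly what makes the initial $S$-free segment long enough for this bounded-circumference bypass to apply; hence $|R_i[v_i,s_i]|<\beta$ and in particular $\mathsf{dist}_G(v_i,s_i)\leq|R_i[v_i,s_i]|<\beta$, as required.
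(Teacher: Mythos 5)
Your overall structure — pigeonhole on which vertex $s\in S$ is first hit, then argue the outlets sharing that $s$ must be crowded into a single window of $P$ — is genuinely different from the paper's argument, and the arithmetic linking $\gamma = k(2\alpha+2)+1$ to the pigeon count $\lceil\gamma/k\rceil = 2\alpha+3$ and the at-most-$(2\alpha+1)$-vertex window $P^\alpha(v_j)$ is sound. (Your window-length statement of ``at most $\alpha$'' should read $2\alpha$, but you correctly conclude $\leq 2\alpha+1$ vertices, so the contradiction stands.) The paper takes a dual route: it first spaces out $k+1$ outlets at pairwise $P$-distance $\geq 2\alpha+2$, then proves their witness paths are \emph{pairwise vertex-disjoint} via Lemma~\ref{thm:bcl_distsecondpath} (picking a vertex $p$ on $P$ between two outlets, outside both $\alpha$-neighborhoods, and showing any shared vertex of the two witnesses would force an $\omega_i\to\omega_j$-path implausibly far from $p$), and finally pigeonholes on which disjoint witness avoids $S$.

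However, your proof has a genuine gap in the ``central claim,'' and I don't see how to repair it. You claim that if a closed outlet's minimum-prefix witness has $|R_i[v_i,s_i]|\geq\beta$, then the bypassing lemma~\ref{thm:bypassing_through_nearby_paths} lets you reroute to a witness with a strictly shorter $S$-free prefix. Two problems. First, the quantitative hypothesis of that lemma is a $S$-free subpath of length at least $\cf(G)^5(t+2)k$, which for $\cf(G)=\ell$ is $\Theta(\ell^5 k)$ and vastly exceeds the assumed $\beta\geq 3\ell$; the hypothesis $\beta\geq 3\ell$ simply does not put you in the lemma's regime. Second, and more fundamentally, the bypassing lemma requires a second path $P_2$ that is \emph{disjoint from $S$} with endpoints near $P_1$'s; but precisely because the outlet is closed, you have no $S$-free witness to play the role of $P_2$, and a shortest $v_i\to s_i$-path need not avoid $S$ either. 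Without such a $P_2$, the lemma gives nothing. So the claim $\mathsf{dist}_G(v_i,s_i)<\beta$ is unestablished, and as far as I can tell it need not hold: nothing forbids a strong digraph of small circumference in which every valid witness for a closed outlet first meets $S$ only after a long detour whose distance from $v_i$ exceeds $\beta$. The paper avoids this entirely by never reasoning about where $S$ intersects an individual witness; it instead makes the $k+1$ chosen witnesses pairwise disjoint, so that $|S|\leq k$ immediately forces one to be $S$-free. If you want to keep your pigeonhole-on-$S$ framing, you would need a completely different argument for the central claim — the bypassing lemma as stated will not close the gap.
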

\begin{proof}
  As there are at least $\gamma = k\cdot(2\alpha+2)+1$ many $(\alpha, \beta)$-outlets on $P$, we can choose outlets $\omega_1,\hdots,\omega_{k+1}$ among the first $\gamma$ outlets of $P$ such that $P[\omega_i, \omega_j] \geq 2\alpha+2$ for $1 \leq i < j \leq k+1$.
  For $i = 1,\hdots,k+1$ let us fix an $\omega_i \rightarrow V_\textsf{out}$-path $R_{\omega_i}$ in $G$ that is at distance at least $\beta$ from $P\setminus P^{\alpha}(\omega_i)$.
  
  \begin{claim}
  	The paths $R_{\omega_i}$ are pairwise vertex-disjoint.
  \end{claim}
  \begin{claimproof}
    Suppose, for sake of contradiction, that there is a vertex $v$ that appears on both~$R_{\omega_i}$ and~$R_{\omega_j}$ for $i < j$.
    As $|P[\omega_i,\omega_j]|\geq 2\alpha+2$ by assumption, we can choose a vertex $p$ of $P[\omega_i,\omega_j]$ with $|P[\omega_i,p]| > \alpha$ and $|P[p,\omega_j]| > \alpha$.
    That is, $p$ is not in $P^{\alpha}_{\omega_i}\cup P^{\alpha}_{\omega_j}$, hence every vertex of $V(R_{\omega_i}) \cup V(R_{\omega_j})$ is at distance at least $\beta$ from $p$.
    As every arc of $R_{\omega_j}$ is in a cycle of length at most~$\ell$ (since $G$ is strong and $\cf(G) \leq \ell$), we can use the arcs of these cycles to create a $v\rightarrow \omega_j$-path~$Q$.
    For every vertex $q$ of $Q$, there is some vertex of $R_{\omega_j}$ at distance at most $\ell-1$ from $q$, hence $\mathsf{dist}_{G}(p,Q) \geq \beta - (\ell-1)$.
    Therefore, concatenating~$R_{\omega_i}$ and $Q$, we can obtain an $\omega_i\rightarrow \omega_j$-path whose vertices are at distance at least $\beta-(\ell -1) > 2\textsf{cf}(G^\star)$ from $p$ in~$G^\star$, contradicting \autoref{thm:bcl_distsecondpath}.
 \end{claimproof}  
  
  Since the paths $R_{\omega_i}$ are pairwise vertex-disjoint, there is some $j\in\{1,\hdots,k+1\}$ such that~$R_{\omega_j}$ is disjoint from $S$ and thus $\omega_j$ is an open outlet.
\end{proof}

We now have all restrictions on $\alpha$ and $\beta$ and choose $\beta = 3\ell^3$ and $\alpha = \ell^3\beta = 3\ell^6$.
This sets~$\gamma := k(6\ell^6+2)+1$.
\autoref{thm:bcl_firstgammaoutletssuffice} defines the separation of $\mathcal{P}$ in the disjoint union $\mathcal{P} = \mathcal{P}_\textsf{long} \uplus \mathcal{P}_\textsf{short}$, where $\mathcal{P}_\textsf{long}$ contains all paths of $\mathcal{P}$ with at least $\gamma$ outlets.

For the paths in $\mathcal{P}_\textsf{short}$ we want to eliminate closed vertices by guessing nearby vertices of~$S$:
\begin{lemma}
\label{thm:near_closed_outlet_guessing_set}
  Let $G$ be a strong digraph with $\cf(G) \leq \ell$ and $\Vout \subseteq V(G)$.
  Moreover, let $\omega$ an $(\alpha, \beta)$-outlet on a path $P$ with respect to $\Vout$.
  Then in time $2^{\mathcal{O}(k^2 \log k \log \ell)} \cdot \polyn$ we can compute a set $\mathcal{S}_\omega \subseteq V(G)$ of size at most $2^{\mathcal{O}(k^2\log k \log l)}\beta \cdot \log n$ such that for every set $S \subseteq V(G)$ with
  \begin{compactitem}
    \item $|S| \leq k$,
    \item $\omega$ is closed $(\alpha, \beta)$-outlet with respect to $S$ and
    \item there is an $\omega \to \Vout$-path $R_\omega$ with $\emptyset \subsetneq V(R_\omega) \cap S \subseteq R^+_\beta(\omega)$
  \end{compactitem}		
  we have that $\mathcal{S}_\omega \cap S \neq \emptyset$.
\end{lemma}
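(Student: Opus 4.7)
The plan is to assemble $\mathcal{S}_\omega$ from vertices sitting inside $R^+_\beta(\omega)$ along a small representative family of $\omega\to\Vout$-paths. To handle $\Vout$ as a set target, I would first invoke \autoref{thm:bcl_smallpathwitness} with source $\omega$ and target set $\Vout$ to extract a small subset $\Vout'\subseteq\Vout$ of size $(k+1)4^{k+1}=2^{\mathcal{O}(k)}$ with the property that any $\omega\to\Vout$-path surviving the removal of at most $k$ vertices already has an $\omega\to\Vout'$-counterpart. For each $v\in\Vout'$ I would then invoke \autoref{thm:bcl_computerepresenetativeforboundedcf} in the strong digraph $G$ (which has $\cf(G)\leq\ell$) to compute a $k$-representative family $\mathcal{Q}_v$ of $\omega\to v$-paths of size $\ell^{\mathcal{O}(k^2\log k)}\log n$, obtainable in time $\ell^{\mathcal{O}(k^2\log k)}\cdot n^{\mathcal{O}(1)}$.

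Let $\mathcal{Q}=\bigcup_{v\in\Vout'}\mathcal{Q}_v$, so $|\mathcal{Q}|\leq 2^{\mathcal{O}(k)}\cdot\ell^{\mathcal{O}(k^2\log k)}\log n=\ell^{\mathcal{O}(k^2\log k)}\log n$. I would then define $\mathcal{S}_\omega=\bigcup_{P\in\mathcal{Q}}(V(P)\cap R^+_\beta(\omega))$. For the size bound, note that if $v\in V(P)\cap R^+_\beta(\omega)$ then the prefix $P[\omega,v]$ is an $\omega\to v$-path in $G$, and by \autoref{thm:bcl_distboundedcircumsquared} it has length at most $(\cf(G)-1)^2\dist_G(\omega,v)\leq(\ell-1)^2\beta$. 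Hence each $P$ contributes at most $(\ell-1)^2\beta+1$ vertices, yielding $|\mathcal{S}_\omega|\leq\ell^{\mathcal{O}(k^2\log k)}\beta\log n=2^{\mathcal{O}(k^2\log k\log\ell)}\beta\log n$, matching the promised bound.

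For correctness, given any $S$ satisfying the hypotheses, set $S'=S\setminus R^+_\beta(\omega)$, so that $|S'|\leq k$ and $R_\omega$ is an $\omega\to\Vout$-path in $G-S'$ (since $V(R_\omega)\cap S\subseteq R^+_\beta(\omega)$). By \autoref{thm:bcl_smallpathwitness} there is then an $\omega\to\Vout'$-path in $G-S'$ ending at some $v^\ast\in\Vout'$, and the representative-family property of $\mathcal{Q}_{v^\ast}$ delivers some $P^\ast\in\mathcal{Q}_{v^\ast}$ disjoint from $S'$; this forces $V(P^\ast)\cap S\subseteq R^+_\beta(\omega)$, and any such intersection vertex lies in $\mathcal{S}_\omega$ by construction. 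The step I expect to be the main obstacle is arguing that $V(P^\ast)\cap S$ is actually non-empty: the closedness of $\omega$ forbids \emph{witness} $\omega\to\Vout$-paths from being $S$-disjoint, but does not a priori preclude $P^\ast$ from being a non-witness $\omega\to\Vout$-path disjoint from $S$ altogether. I would close this gap by restricting the representative-family construction to witness paths, namely by excising the zone $Z$ of vertices at distance less than $\beta$ from $V(P)\setminus V(P^\alpha(\omega))$ and computing the representative families inside a strong subgraph of $G-Z$ that contains $\omega$; closedness of $\omega$ with respect to $S$ then directly ensures that every $\omega\to\Vout$-path remaining in this restricted setting meets $S$, so in particular $V(P^\ast)\cap S\neq\emptyset$, completing the argument.
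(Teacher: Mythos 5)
Your construction of $\mathcal{S}_\omega$ is essentially the paper's: shrink $\Vout$ to $\Vout'$ via \autoref{thm:bcl_smallpathwitness}, compute $k$-representative families of $\omega\to v$-paths with \autoref{thm:bcl_computerepresenetativeforboundedcf} for each $v\in\Vout'$, and collect the intersections with $R^+_\beta(\omega)$; the size bound via \autoref{thm:bcl_distboundedcircumsquared} is the one the paper uses. The only cosmetic difference is your choice $S'=S\setminus R^+_\beta(\omega)$ where the paper takes $S_{\mathsf{far}}=S\setminus V(R_\omega)$; both have size at most $k$ and both keep $R_\omega$ alive, so the representative-set argument goes through in the same way.

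The point you flag at the end is the genuine crux, and you are right to be uneasy about it. ``Closed'' only quantifies over \emph{witness} $\omega\to\Vout$-paths (those at distance $\geq\beta$ from $V(P)\setminus V(P^\alpha(\omega))$); it does not by itself rule out a non-witness $\omega\to v$-path in $G-S$. The paper's proof of this very lemma asserts the stronger statement directly (``As $\omega$ is a closed $(\alpha,\beta)$-outlet with respect to $S$ we know that there is no $\omega\to v$-path in $G-S$''), so you have identified a real soft spot rather than a deficiency unique to your writeup. Your sketched repair — excise the $\beta$-neighborhood $Z$ of $V(P)\setminus V(P^\alpha(\omega))$ and build the representative families in $G-Z$, so that every path under consideration is automatically a witness — is the right idea, but as sketched it is not yet a proof: you must additionally know that $R_\omega$ itself avoids $Z$, i.e.\ is a witness path, which the lemma's hypotheses do not say (though \autoref{thm:bcl_openornearclosedoutlets}, the lemma's only source of closed outlets, does produce $R_\omega$ as a witness); and \autoref{thm:bcl_computerepresenetativeforboundedcf} requires $\omega$ and $v$ to live in one strong digraph, whereas after deleting $Z$ the targets $v\in\Vout'$ need not lie in $\omega$'s strong component of $G-Z$, so one would have to stitch prefix/suffix paths around a representative middle segment in the spirit of \autoref{lem:stongplus2} rather than invoke \autoref{thm:bcl_computerepresenetativeforboundedcf} directly.
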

\begin{proof}
  Our algorithm works as follows:\\
  \begin{algorithm}[H]
    \SetKwInOut{Input}{Input}\SetKwInOut{Output}{Output}	
	
    \Input{Digraph $G$, integers $k, \ell$, a vertex $\omega \in V(G)$ and a vertex set $\Vout \subseteq V(G)$}
    \Output{A vertex set $\mathcal{S}_\omega \subseteq V(G)$}
    Use \autoref{thm:bcl_smallpathwitnessset} on $\omega$ and $\Vout$ to obtain a set $\Vout' \subseteq \Vout$.\;
		Let $\mathcal{S}_\omega = \emptyset$.\;
		\ForEach{$v \in \Vout'$} {
			Compute a set $\mathcal{R}_v$ of $k$-representative $\omega \to v$-paths by \autoref{thm:bcl_computerepresenetativeforboundedcf}.\;
			\ForEach{$R \in \mathcal{R}_v$}{
				Add $V(R) \cap R^+_\beta(\omega)$ to $\mathcal{S}_\omega$.\;
			}
		}		
	\end{algorithm}
	
  We first care about the correctness of our algorithm.
  Let $S$ be as in the statement of the lemma.
  Then we know that there is a $\omega \to \Vout$-path $R_\omega$ with $\emptyset \subsetneq V(R_\omega) \cap S \subseteq R^+_\beta(\omega)$.
  Let $S_\textsf{near} = V(R_\omega) \cap S$ and $S_\textsf{far} = S \setminus S_\textsf{near}$.
  Then $G - S_\textsf{far}$ contains an $\omega	 \to \Vout$-path, namely $R_\omega$.
  We have $|S_\textsf{far}| \leq |S| \leq k$.
  By \autoref{thm:bcl_smallpathwitnessset} there is a $v \in \Vout'$ such that an $\omega \to v$-path exists in $G - S_\textsf{far}$.
  The algorithm computed a set $\mathcal{R}_v$ of $k$-representative $\omega \to v$-paths.
  By \autoref{thm:bcl_computerepresenetativeforboundedcf} this set contains a $\omega \to v$-path $R$ in $G - S_\textsf{far}$.
  We want to show that $V(R) \cap R^+_\beta(\omega)$ contains an element of $S$.
  As $\omega$ is a closed $(\alpha, \beta)$-outlet with respect to $S$ we know that there is no $\omega \to v$-path in $G - S$ by $v \in \Vout' \subseteq \Vout$.
  So $S$ has to intersect $R$.
  But $R$ is disjoint from~$S_\textsf{far}$, therefore~$R$ intersects $S_\textsf{near}$.
  This yields $\left(V(R) \cap R^+_\beta(\omega)\right) \cap S = V(R) \cap \left(R^+_\beta(\omega) \cap S\right) = R \cap S_\textsf{near} \neq \emptyset$.
  So we introduced an element of $S$ to $\mathcal{S}_\omega$.
	
  Now for the size bound on $\mathcal{S}_\omega$.
  By our algorithm we have 
  \begin{displaymath}		
    |\mathcal{S}_\omega| \leq \sum_{v \in \Vout'} \sum_{R \in \mathcal{R}_v} |V(R) \cap R^+_\beta(\omega)|  \enspace .
  \end{displaymath}
	
  \begin{claim}
  \label{claim:bcl_pathintersectionwithforwardconeisbounded}
    For every path $R$ starting in $\omega$ we have $|R \cap R^+_\beta(\omega)| \leq (\ell - 1)^2\beta+1$.
  \end{claim}
  \begin{proof}
    Suppose, for sake of contradiction, that $|R \cap R^+_\beta(\omega)| >(\ell - 1)^2\beta + 1$.
    Let $v$ be the last vertex on~$R$ in $R^+_\beta(\omega^\star)$ when traversed from $\omega$, yielding $|R[\omega,v]| > (\ell - 1)^2\beta$.
    But by \autoref{thm:bcl_distboundedcircumsquared} we get $|R[\omega,v]| \leq (\mathsf{cf}(G) - 1)^2 \mathsf{dist}_G(\omega,v) \leq (\ell - 1)^2\beta < |R[\omega,v]|$, a contradiction.
  \end{proof}
	
  Plugging in the size bounds of \autoref{claim:bcl_pathintersectionwithforwardconeisbounded}, \autoref{thm:bcl_smallpathwitnessset} as well as \autoref{thm:bcl_computerepresenetativeforboundedcf}, we get:
  \begin{align*}		
    |\mathcal{S}_\omega| &\leq \sum_{v \in \Vout'} \sum_{R \in \mathcal{R}_v} |V(R) \cap R^+_\beta(\omega)|\\
                         &\leq (k+1)4^{k+1} \cdot \ell^{\mathcal{O}(k^2 \log k)}\log n \cdot \left[(\ell - 1)^2\beta + 1\right]\\
                         &\leq 2^{\mathcal{O}(\log k + k + k^2 \log k \log \ell + \log \ell)}\beta \cdot \log n = 2^{\mathcal{O}(k^2\log k \log l)}\beta \cdot \log n.
  \end{align*}
  For the run time we first compute the distances from $\omega$ to all other vertices in $G$ by a simple BFS.
  Then we compute $\Vout'$ in time $2^{\mathcal{O}(k)} \cdot \polyn$.
  For at most $|\Vout| \leq (k+1)4^{k+1}$ choices of $v$ we compute~$\mathcal{R}_v$.
  By \autoref{thm:bcl_computerepresenetativeforboundedcf}, this can be done in time $2^{\mathcal{O}(k)} \cdot \ell^{\mathcal{O}(k^2 \log k)} \cdot \polyn = 2^{\mathcal{O}(k^2 \log k \log \ell)} \cdot \polyn$.
  Also checking whether the vertices of the $(k+1)4^{k+1} \cdot \ell^{\mathcal{O}(k^2 \log k)}\log n = 2^{\mathcal{O}(k^2 \log k \log \ell)} \cdot \log n$ paths lie in $R^+_\beta(\omega)$ can be done in the stated run time, using the precomputed distances.
\end{proof}

We get a simple corollary about open outlets in cluster separators disjoint from $S_\omega$.
\begin{corollary}
\label{cor:solutions_disjoint_of_s_omega}
  Let $(G, k, \ell, X_1, \hdots, X_p, \Vout)$ be an instance of the {\sc Important Cluster Separator in Strong Digraphs of Bounded Circumference} problem.
  Set $\alpha = 3\ell^6$, $\beta=3\ell^3$ and $\gamma = k(6\ell^6 + 2) +1$.
  Let $P$ be an $X_i \to X_j$-path in $G$ for some $i \neq j$ and denote by $\Omega_\gamma(P)$ the first $\gamma$ many $(\alpha, \beta)$-outlets on $P$.
  Then any cluster separator $S$ of size at most $k$ which
  \begin{compactitem}
    \item is disjoint from $\bigcup_{\omega \in \Omega_\gamma(P)} \mathcal{S}_\omega$ (for $\mathcal{S}_\omega$ as in \autoref{thm:near_closed_outlet_guessing_set}) and
    \item has its frontier $F$ intersect some $X_i \to X_j$-path $Q$
  \end{compactitem}
  has an open $(\alpha, \beta)$-outlet in $\Omega_\gamma(P)$.
\end{corollary}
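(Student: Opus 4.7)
The plan is to assume for contradiction that some cluster separator $S$ satisfying both bullet hypotheses has no open $(\alpha,\beta)$-outlet in $\Omega_\gamma(P)$, and then derive a contradiction by combining the three preceding lemmas. The parameter choices are precisely set up so that $\beta = 3\ell^3 \geq 2\ell^3$ and $\alpha = 3\ell^6 \geq \ell^3\beta$, which are exactly what \autoref{thm:bcl_openornearclosedoutlets} requires; and $\beta \geq 3\ell$ together with $\gamma = k(2\alpha+2)+1$ match the hypotheses of \autoref{thm:bcl_firstgammaoutletssuffice}.

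First, I would apply \autoref{thm:bcl_openornearclosedoutlets} using the second bullet hypothesis that the frontier $F$ of $S$ intersects some $X_i \to X_j$-path $Q$. This guarantees that the given $X_i \to X_j$-path $P$ contains at least one $(\alpha,\beta)$-outlet, and every such outlet of $P$ is either open (with respect to $S$) or a closed outlet $\omega$ witnessed by an $\omega \to \Vout$-path $R_\omega$ with $\emptyset \subsetneq V(R_\omega) \cap S \subseteq R^+_\beta(\omega)$. In particular, $P$ has at least one outlet, so $\Omega_\gamma(P)$ is nonempty.

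Next, I would rule out that $P$ has $\gamma$ or more outlets. Indeed, if $P$ had at least $\gamma$ $(\alpha,\beta)$-outlets, then \autoref{thm:bcl_firstgammaoutletssuffice} (applicable since $|S|\le k$, $\cf(G)\le \ell$, and $\beta \geq 3\ell$) would produce an open outlet among the first $\gamma$, i.e.\ inside $\Omega_\gamma(P)$, contradicting our assumption. Hence $P$ has strictly fewer than $\gamma$ outlets, and consequently $\Omega_\gamma(P)$ already enumerates \emph{all} $(\alpha,\beta)$-outlets of $P$.

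Finally, since by assumption no outlet in $\Omega_\gamma(P)$ is open, the dichotomy from \autoref{thm:bcl_openornearclosedoutlets} forces every $\omega \in \Omega_\gamma(P)$ to be a closed outlet of the second type, namely witnessed by an $\omega \to \Vout$-path $R_\omega$ with $\emptyset \subsetneq V(R_\omega)\cap S \subseteq R^+_\beta(\omega)$. This is exactly the hypothesis list of \autoref{thm:near_closed_outlet_guessing_set}, which therefore yields $\mathcal{S}_\omega \cap S \neq \emptyset$ for each such $\omega$. Taking the union over $\omega \in \Omega_\gamma(P)$ gives a nonempty intersection of $S$ with $\bigcup_{\omega \in \Omega_\gamma(P)} \mathcal{S}_\omega$, contradicting the first bullet hypothesis. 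There is no real obstacle beyond matching parameters; the proof is essentially a clean bookkeeping of the three preceding lemmas.
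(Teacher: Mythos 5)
Your proof is correct and follows essentially the same three-step bookkeeping as the paper: Lemma~\ref{thm:bcl_firstgammaoutletssuffice} to reduce to the case where $\Omega_\gamma(P)$ exhausts all outlets, Lemma~\ref{thm:bcl_openornearclosedoutlets} to produce a closed outlet with a near-$S$ witness path once open outlets are ruled out, and Lemma~\ref{thm:near_closed_outlet_guessing_set} to contradict disjointness from $\bigcup_\omega \mathcal{S}_\omega$. One small imprecision: Lemma~\ref{thm:bcl_openornearclosedoutlets} is existential (it guarantees that $P$ contains \emph{some} outlet of one of the two types), not a dichotomy over \emph{every} outlet of $P$ as you phrase it, but since a single closed outlet with the witness property suffices for the final contradiction, the argument goes through unchanged.
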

\begin{proof}
  By \autoref{thm:bcl_firstgammaoutletssuffice} we know that if $P$ has at least $\gamma$ many $(\alpha, \beta)$-outlets, one of the first $\gamma$ must be an open one, i.e. lie in $\Omega_\gamma(P)$.
  So we can restrict ourselves to paths with at most $\gamma$ many $(\alpha, \beta)$-outlets i.e. all outlets lie in $\Omega_\gamma(P)$.
	
  As some $X_i \to X_j$-path $Q$ is intersected by the frontier of $S$, \autoref{thm:bcl_openornearclosedoutlets} states that $P$ either has an open $(\alpha, \beta)$-outlet with respect to $S$ or has a closed $(\alpha, \beta)$-outlet $\omega^\star$ with a $\omega^\star \rightarrow \Vout$-path~$R_\omega$ in $G^\star$ such that $\emptyset \subsetneq R_\omega \cap S \subseteq R^+_\beta(\omega^\star)$.
  Again, if $P$ has an open outlet, it is in $\Omega_\gamma(P)$ and we are done.
	
  Now, suppose for sake of contradiction, that this is not the case and that there is a closed $(\alpha, \beta)$-outlet~$\omega^\star$ as above.
  By \autoref{thm:near_closed_outlet_guessing_set} we have that $\mathcal{S}_{\omega^\star}$ intersects $S$.
  Thus $S$ intersects $\mathcal{S}_{\omega^\star} \subseteq \bigcup_{\omega \in \Omega_\gamma(P)} \mathcal{S}_\omega$ --- a contradiction to the choice of $S$.
\end{proof}

\begin{lemma}
\label{thm:inclusion_wise_minimal_cluster_separators_have_all_vertices_reachable_from_x}
  Let $(G, k, \ell, X_1, \hdots, X_p, \Vout)$ be an {\sc Important Cluster Separator in Strong Digraphs of Bounded Circumference} instance.
  Then in any inclusion-wise minimal cluster separator $S$ every vertex $v \in S$ is reachable from $\mathcal{X}$ in $G - (S \setminus \lbrace v \rbrace)$.
\end{lemma}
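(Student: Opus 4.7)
The plan is to prove the contrapositive-flavored statement directly: we assume $S$ is an inclusion-wise minimal cluster separator, fix an arbitrary $v \in S$, and show that $v$ must be reachable from $\mathcal{X}$ in $G - (S \setminus \{v\})$; otherwise we would be able to remove $v$ from $S$ and still have a cluster separator, contradicting minimality. So suppose for contradiction that for some $v \in S$ there is no $\mathcal{X} \to v$-path in $G - (S \setminus \{v\})$. I will argue that $S' := S \setminus \{v\}$ is itself a cluster separator, which directly contradicts the inclusion-wise minimality of $S$.

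To see that $S'$ is a cluster separator, I have to verify the two defining properties: (i) $G - S'$ contains no $X_i \to X_j$-path for $i \neq j$, and (ii) $G - S'$ contains no $X_i \to \Vout$-path for any $i$. Suppose, towards another contradiction, that some such forbidden path $P$ exists in $G - S'$. Since $S$ itself is a cluster separator, $P$ cannot exist in $G - S$, so $P$ must contain the vertex $v$. Writing $P = P_1 \circ P_2$ with $P_1$ ending at $v$ and $P_2$ starting at $v$, we note that $P_1$ starts in $X_i \subseteq \mathcal{X}$ and is contained in $G - S'$, hence $P_1$ is an $\mathcal{X} \to v$-path in $G - (S \setminus \{v\})$. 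This contradicts our assumption on $v$.

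Thus both conditions (i) and (ii) hold for $S'$, so $S'$ is a cluster separator strictly smaller than $S$, contradicting the inclusion-wise minimality of $S$. Therefore every $v \in S$ is reachable from $\mathcal{X}$ in $G - (S \setminus \{v\})$, as claimed. There is no real obstacle in this proof; the only subtlety is remembering that a cluster separator has \emph{two} kinds of forbidden paths, so both have to be ruled out when arguing that $S'$ still separates. The argument is purely structural and uses no property of the underlying digraph beyond the definition of cluster separator.
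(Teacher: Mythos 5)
Your proof is correct and is essentially the paper's own argument: both assume $v$ is not reachable from $\mathcal{X}$ in $G - (S \setminus \{v\})$, invoke minimality to obtain a forbidden $X_i \to X_j \cup \Vout$-path in $G - (S\setminus\{v\})$ that must pass through $v$, and then observe that the prefix of that path ending at $v$ is an $\mathcal{X} \to v$-path in $G - (S \setminus \{v\})$, giving the contradiction. The only cosmetic difference is that you phrase it as ``show $S\setminus\{v\}$ is a cluster separator'' while the paper phrases it as ``minimality forces a forbidden path in $G-(S\setminus\{v\})$''; these are the same step.
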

\begin{proof}
  Suppose, for sake of contradiction, that there is a vertex $v \in S$ that is not reachable from~$\mathcal{X}$ in $G - (S \setminus \lbrace v \rbrace)$.
  Consider $S' = S \setminus \lbrace v \rbrace$.
  As $S$ was an inclusion-wise minimal cluster separator, $G - S'$ must contain an $X_i \to X_j \cup \Vout$-path $P$ for some $i \neq j$.
  This path $P$ does not exists in $G - S$ and therefore has to be intersected by $S \setminus S' = \lbrace v \rbrace$.
  Let $x$ be the start vertex of $P$.
  Then $P[x,v]$ as a subpath of $P$ exists in $G - S' = G - (S \setminus \lbrace v \rbrace)$ and thus is a certificate that $v$ is reachable from an $x \in \mathcal{X}$ in $G - (S \setminus \lbrace v \rbrace)$ --- a contradiction to the choice of $v$.
\end{proof}

Now we want to use \autoref{thm:bypassing_through_nearby_paths}.
For this we need the following definition:
\begin{definition}
  Given an integer $t \in \mathbb{N}$, a path $P$ and a vertex $v$ on $P$, the \emph{landing strip} $L^t_P(v)$ is the vertex~$v$ and its $t$ predecessors on $P$ or all predecessors if there are less than $t$.
\end{definition}

\begin{lemma}
\label{thm:important_cluster_separator_contains_important_separator}
  Let $(G, k, \ell, X_1, \hdots, X_p, \Vout)$ be an \textsc{Important Cluster Separator in Strong Digraphs of Bounded Circumference} instance.
  Let $\mathcal{P}$ be a set of paths that contains an $X_i \to X_j$-path $P_{i,j}$ for every ordered pair $(i,j) \in \lbrace 1, \hdots, p\rbrace^2, i \neq j$.
  Set $\alpha = 3\ell^6$, $\beta=3\ell^3$ and $\gamma = k(6\ell^6 + 2) +1$.
  Denote by $\Omega_\gamma(P)$ the set of the first $\gamma$ many $(\alpha, \beta)$-outlets on a path $P$.
  Then every important cluster separator $S$ of size at most $k$ that is disjoint from $\bigcup_{P \in \mathcal{P}} \bigcup_{\omega \in \Omega_\gamma(P)} \mathcal{S}_\omega \cup L^{3\ell^7k}_P(\omega)$ (with $\mathcal{S}_\omega$ as in \autoref{thm:near_closed_outlet_guessing_set}) contains an important $\mathcal{X} \to \Vout \cup V_\Omega$-separator for some $V_\Omega$ that contains for each path $P$ at most one landing strip $L^{3\ell^7k}_P(\omega)$ with $\omega \in \Omega_\gamma(P)$.
\end{lemma}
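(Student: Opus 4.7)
The plan is to construct $V_\Omega$ from the frontier of $S$, show that $S$ itself is an $\mathcal{X} \to \Vout \cup V_\Omega$-separator, and then exhibit a forward-boundary subset of $S$ that is globally important.

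First I would define the frontier $F := \{v \in S : \text{there is a } v \to \Vout \text{-path in } G - (S \setminus \{v\})\}$, and for each $P = P_{i,j} \in \mathcal{P}$ such that $F$ hits some $X_i \to X_j$-path, use \autoref{cor:solutions_disjoint_of_s_omega} (whose hypotheses hold because $S$ is disjoint from $\bigcup_{\omega \in \Omega_\gamma(P)} \mathcal{S}_\omega$) to obtain an open $(\alpha, \beta)$-outlet $\omega_P \in \Omega_\gamma(P)$. The landing strip $L^{3\ell^7k}_P(\omega_P)$ is added to $V_\Omega$; paths not hit by $F$ contribute nothing, so $V_\Omega$ contains at most one landing strip per $P$. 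To verify that $S$ is an $\mathcal{X} \to \Vout \cup V_\Omega$-separator, separation from $\Vout$ is immediate, and for $V_\Omega$ I would suppose toward contradiction an $\mathcal{X} \to v$-path in $G - S$ with $v$ on some landing strip $L^{3\ell^7k}_P(\omega_P)$; since $S$ is disjoint from the entire landing strip by hypothesis, the subpath $P[v, \omega_P]$ lies in $G - S$, producing an $\mathcal{X} \to \omega_P$-path in $G - S$ that contradicts \autoref{obs:bcl_nosopenoutletpath}.

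The key step is to extract an important separator inside $S$. I would set $S^\star := \{v \in S : v \text{ has an in-arc from a vertex of } R^+_{G - S}(\mathcal{X})\}$, the forward boundary of $R^+_{G - S}(\mathcal{X})$. A standard argument, namely that the first $S$-vertex on any $\mathcal{X} \to \Vout \cup V_\Omega$-path in $G - S^\star$ is immediately preceded by a vertex of $R^+_{G - S}(\mathcal{X})$, shows $S^\star$ is an $\mathcal{X} \to \Vout \cup V_\Omega$-separator contained in $S$. To prove $S^\star$ is globally important, I would assume toward contradiction a counterexample $S''$ with $|S''| \leq |S^\star|$ and $R^+_{G - S^\star}(\mathcal{X}) \subsetneq R^+_{G - S''}(\mathcal{X})$, and split into two cases. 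If $S''$ is itself a cluster separator, strong-connectivity of $G$ yields the decomposition $V(G) = R^+_{G - T}(\mathcal{X}) \uplus T \uplus R^-_{G - T}(\Vout)$ for tight cluster separators $T$, and thus $R^-_{G - S''}(\Vout) \subsetneq R^-_{G - S}(\Vout)$, contradicting the importance of $S$ as a cluster separator. Otherwise $S''$ admits an $X_i \to X_j$-path $R$ in $G - S''$ for some $i \neq j$; I would argue the pair $(i,j)$ must be active (otherwise the first $S$-vertex on $P_{i,j}$ from $X_i$ already lies in $S^\star$ and its forced membership in $S''$ puts us back in the first case), so $V_\Omega$ contains $L^{3\ell^7k}_{P_{i,j}}(\omega_{i,j})$, which $S''$ must avoid. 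Applying the Bypassing Lemma \autoref{thm:bypassing_through_nearby_paths} with $P_1 := P_{i,j}$, $P_2 := R$, and $t := 2\ell^2$ (noting $\cf(G)^5(t+2)k \leq 3\ell^7k$) then produces an $X_i \to \omega_{i,j}$-path in $G - S''$, contradicting that $S''$ separates $V_\Omega$ from $\mathcal{X}$.

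The hardest parts will be reconciling the $R^+$-importance of the target separator with the $R^-$-importance defining cluster separators in the first subcase, and carefully handling the non-active sub-case in the second; both rely on the strong connectivity of $G$ and on the forward-boundary structure of $S^\star$, which forces the obstructing vertices of $S$ into every ``good'' competitor separator $S''$.
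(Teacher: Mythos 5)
Your outline shares some scaffolding with the paper (the frontier $F$, the open-outlet machinery via \autoref{cor:solutions_disjoint_of_s_omega}, the use of the Bypassing Lemma), but the central step — extracting and certifying the important separator — takes a route that does not work.

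The first problem is the choice of $S^\star$ (the forward boundary of $R^+_{G-S}(\mathcal{X})$). The paper first reduces to an \emph{inclusion-wise minimal} important cluster separator $S$, which is legitimate since a sub-separator containing the required important separator implies $S$ does too. But then by \autoref{thm:inclusion_wise_minimal_cluster_separators_have_all_vertices_reachable_from_x} every $v\in S$ has an in-arc from $R^+_{G-S}(\mathcal{X})$, so $S^\star = S$. All of $S$ is certainly not, in general, a minimal $\mathcal{X}\to \Vout\cup V_\Omega$-separator (some vertices of $S$ only serve to hit long cycles, not to separate), and a non-minimal separator is never important in the paper's sense. The correct object is the frontier $F$ itself — the \emph{backward} boundary towards $\Vout$ — which the paper proves to be the important separator; your proposal defines $F$ but then abandons it for $S^\star$.

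The second problem is the contradiction scheme. Comparing $S^\star$ against an \emph{arbitrary} competitor $S''$ leaves you with two cases you cannot close. In Case 1, the decomposition $V(G)=R^+_{G-T}(\mathcal{X})\uplus T\uplus R^-_{G-T}(\Vout)$ is simply false for cluster separators in strong digraphs: a vertex may be reachable from neither side after deleting $T$, and even if it held you would also need to align the $R^+$-importance of $S''$ with the $R^-$-importance of cluster separators, which requires a careful argument rather than a cardinality count. In Case 2 the handling of inactive pairs is not an argument at all — the ``forced membership in $S''$'' of the first $S$-vertex on $P_{i,j}$ is asserted but unjustified, and since $S''$ is arbitrary there is no reason the relevant $X_i\to X_j$-path in $G-S''$ must intersect $F$. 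The paper avoids both issues by \emph{not} comparing against arbitrary competitors: it assumes $F$ is not important, takes the canonical important $F'$ with $R^+_{G-F}(\mathcal{X})\subseteq R^+_{G-F'}(\mathcal{X})$, forms the structured replacement $S'=(S\setminus F)\cup F'$, and shows $S'$ is a cluster separator of no larger size with $R^-_{G-S'}(\Vout)\subsetneq R^-_{G-S}(\Vout)$, contradicting the importance of $S$ as a \emph{cluster} separator. Crucially, $S\setminus S'=F\setminus F'\subseteq F$, so any $X_i\to X_j$-path in $G-S'$ is automatically hit by $F$, which is exactly the hook needed to invoke \autoref{cor:solutions_disjoint_of_s_omega} and then \autoref{thm:bypassing_through_nearby_paths}. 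Without that controlled structure your argument cannot get off the ground, and I would encourage you to rework the proof around the replacement set $S'$ rather than a direct importance claim about $S^\star$.
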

\begin{proof}
  Let $S$ be an important cluster separator as in the statement of the lemma.
  We can assume~$S$ to be inclusion-wise minimal, by the following argument:
  if any important cluster separator $S' \subsetneq S$ contains an important separator with properties as in the theorem statement, then also $S$ includes this important separator.
  Note that two important cluster separators with $S' \subsetneq S$ can exists, for example for $S = S' \cup \lbrace v \rbrace$ with $v \in R^-_{G - S'}(\Vout)$.
	
  Define $V_\Omega$ to contain for every path $P \in \mathcal{P}$ the landing strip $L^{3\ell^7k}_P(\omega)$ of an arbitrary open $(\alpha, \beta)$-outlet $\omega \in \Omega_\gamma(P)$ if such an outlet exists on $P$.
  We will show that in fact the frontier~$F$ of~$S$ is an important $\mathcal{X} \to \Vout \cup V_\Omega$-separator.
  This is done in two steps.
  First, we show that~$F$ is an $\mathcal{X} \to \Vout \cup V_\Omega$-separator.
  Then we assume that $F$ is not important, replace it by an important separator and get a contradiction by showing that $S$ was not important.
	
  \begin{claim}
    The frontier $F$ of $S$ is an $\mathcal{X} \to \Vout \cup V_\Omega$-separator.
  \end{claim}
  \begin{proof}
    Suppose, for sake of contradiction, that there is an $\mathcal{X} \to v$-path $P$ in $G - F$ with $v \in \Vout \cup V_\Omega$.
	If $v \in V_\Omega$ we can assume that $v$ is an open $(\alpha, \beta)$-outlet by prolonging the path by vertices of the landing strip till the end.
	None of the vertices of the landing strip is contained in $S \supseteq F$, by assumption.
	As $v$ is open, there is a $v \to \Vout$-path $Q$ disjoint from $S \supseteq F$.
	So either $P$ (if $v \in \Vout$) or $P \circ Q$ (if $v \in V_\Omega$) is an $\mathcal{X} \to \Vout$-path.
	As $S$ is an $\mathcal{X} \to \Vout$-separator and $Q$ is disjoint from $S$ we have that $P$ must be intersected by $S$.
	Let $s$ be the last vertex of $S$ on $P$, i.e. $P[s, v]$ intersects~$S$ only in $s$.
	Then $P[s,v]$ or $P[s,v] \circ Q$ is a certificate that $s$ should be in the frontier $F$, but $F$ was disjoint from $P$ and $Q$---a contradiction.
  \end{proof}
	
  If $F$ is an important $\mathcal{X} \to \Vout \cup V_\Omega$-separator we are done as $S$ contains $F$.
  So assume that~$F$ is not an important $\mathcal{X} \to \Vout \cup V_\Omega$-separator.
  Let $F'$ be an important $\mathcal{X} \to \Vout \cup V_\Omega$-separator with $R^+_{G - F}(\mathcal{X}) \subseteq R^+_{G - F'}(\mathcal{X})$ and $|F'| \leq |F|$.
  We now consider the set $S' = (S \setminus F) \cup F'$.
  Obviously, $|S'| \leq |S| - |F| + |F'| \leq |S|$.
	
  \begin{claim}
    The set $S'$ is a cluster separator.
  \end{claim}
  \begin{proof}
    As $S'$ contains the $\mathcal{X} \to \Vout \cup V_\Omega$ separator $F'$, there is no $\mathcal{X} \to \Vout$-path in $G - S'$.
    Now suppose, for sake contradiction, that there is an $X_i \to X_j$-path $Q$ in $G - S'$ for some $i \neq j$.
    By choice of $\mathcal{P}$ we have a $P \in \mathcal{P}$ that is also an $X_i \to X_j$-path in $G - S'$.
		
    The path $Q$ was intersected by the original set $S$, as $S$ was a cluster separator, but is not intersected by~$S'$.
    Therefore, there is a vertex in $Q \cap (S' \setminus S) \subseteq F$.
    \autoref{cor:solutions_disjoint_of_s_omega} then tells us that~$P$ has an open $(\alpha, \beta)$-outlet with respect to $S$ among the first $\gamma$ ones.
    So there is an $\omega \in V_\Omega \cap \Omega_\gamma(P)$.
		
    Let $P$ be an $x_1 \to y_1$-path and let $Q$ be an $x_2 \to y_2$-path.
    Then \mbox{$\dist(x_1, x_2), \dist(y_1, y_2) \leq 2\ell^2$}.
    Further, $Q$ is disjoint from $F' \subset S'$.
    Also $\omega$ and its landing strip $L^{3\ell^7 k}_P(\omega)$ are in $V_\Omega$ and therefore disjoint from $F'$ (as it is a $\mathcal{X} \to V_\Omega$-separator).
    The landing strip is a subpath of~$P$, has length
    \begin{equation*}
      3\ell^7 k \geq \ell^5k \cdot (2\ell^2 + 2)
                \geq \ell^5k \cdot \left(  \max\lbrace \dist(x_1, x_2), \dist(y_1, y_2)\rbrace + 2\right),
    \end{equation*}
    and ends in $\omega$.
    Together with $|F'| \leq |S'| \leq |S| \leq k$, \autoref{thm:bypassing_through_nearby_paths} guarantees us the existence of an $x_2 \to \omega$-path disjoint from $F'$, in contradiction to $F'$ being an $\mathcal{X} \to V_\Omega$-separator.
  \end{proof}
  Now all that remains to show is $R^-_{G - S'}(\Vout) \subsetneq R^-_{G - S}(\Vout)$ to derive a contradiction to $S$ being an important cluster separator.
	
  We first show that $R^-_{G - S'}(\Vout) \subseteq R^-_{G - S}(\Vout)$, and then prove that they are not equal.
  Suppose, for sake of contradiction, that there is a vertex $v \in R^-_{G - S'}(\Vout) \setminus R^-_{G - S}(\Vout)$.
  Then there is a $v \to z$-path $P$ for some $z \in \Vout$ that is disjoint from $S'$ but not of $S$.
  Let $w$ be the last vertex of $P$ which is in $S$.
  Then $P[w, z]$ is a certificate that $w$ lies in the frontier $F$ of $S$.
  As we chose $S$ to be inclusion-wise minimal, \autoref{thm:inclusion_wise_minimal_cluster_separators_have_all_vertices_reachable_from_x} tells us that there is a $x \to w$-path $R$ in $G - (S \setminus \lbrace w \rbrace)$ for some $x \in \mathcal{X}$.
  This path $R$ lies in $R^+_{G - S}(\mathcal{X}) \cup \lbrace w \rbrace \subseteq R^+_{G - F}(\mathcal{X}) \cup \lbrace w \rbrace \subseteq R^+_{G - F'}{\mathcal{X}} \cup \lbrace w \rbrace$.
  So $R$ is disjoint from $F'$ except for maybe $w$.
  Also $P[w, z]$ as subpath of $P$ is disjoint from $F' \subseteq S'$, this time including $w$.
  So $R \circ P[w,z]$ is a $\mathcal{X} \to \Vout$-path in $G - F$ --- a contradiction to $F$ being an $\mathcal{X} \to \Vout$-separator.
	
  Now suppose for sake of contradiction, that $R^-_{G - S'}(\Vout) = R^-_{G - S}(\Vout)$.
  Let $Z$ be the set of vertices $z$ that have an arc $(z, r)$ with $r \in R^-_{G - S}(\Vout)$.
  \begin{claim}
    The sets $F$ and $Z$ are equal.
  \end{claim}
  \begin{proof}
    As $R^-_{G - S}(\Vout)$ is disjoint from $S$ every $z \in Z$ has an $z \to \Vout$-path that is disjoint from~$S$ except for maybe $z$.
    But as the $z$'s do not lie in $R^-_{G - S}(\Vout)$ they have to lie in $S$.
    Therefore, $Z \subseteq F$.

    Suppose, for sake of contradiction, that there is a $v \in F \setminus Z$.
    Then, by $F$ being the frontier of~$S$, there is a $v \to \Vout$-path $P$ disjoint from $S - v$.
    The path $P$ has to enter $R^-_{G - S}(\Vout)$ at some point.
    Let $w$ be the vertex on $P$ before entering $R^-_{G - S}(\Vout)$.
    This vertex is in $Z$, and therefore in~$S$.
    But $v \notin Z$, and thus $P$ is intersected by $S$, contradicting the choice of~$P$.
  \end{proof}
  As $R^-_{G - S'}(\Vout) = R^-_{G - S}(\Vout)$, we have that $Z$ also must lie in $S'$.
  But $S'$ contains no vertices of $F$ if they are not in $F'$, so $F = Z \subseteq F'$.
  Now $F'$ was chosen as an important separator, and therefore $|F'| \leq |F|$ which implies $F = F'$.
  This is a contradiction to the assumption of $F$ being not important.
\end{proof}

\begin{theorem}
\label{thm:solving_important_cluster_separator_in_strong_graphs_of_bounded_circumference}
  The {\sc Important Cluster Separator in Strong Digraphs of Bounded Circumference} problem can be solved in time $2^{\mathcal{O}(k^4\log k \log \ell)}\cdot \polyn$ by a set $\setCS$ of size at most $2^{\mathcal{O}(k^2\log k\log\ell)}\log n + 2^{\mathcal{O}(k^4\log k \log \ell)}$.
\end{theorem}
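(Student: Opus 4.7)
The plan is to combine Lemma \ref{thm:important_cluster_separator_contains_important_separator} with the enumeration of important separators from Proposition \ref{thm:fptenumerationofimportantseperators}. First I would fix, for each ordered pair $(i,j)\in\{1,\dots,p\}^2$ with $i\ne j$, an arbitrary $X_i\to X_j$-path $P_{i,j}$; since $p\le k(k+1)+1$, the resulting collection $\mathcal{P}$ has size $|\mathcal{P}|=\mathcal{O}(k^4)$. Set $\alpha=3\ell^6$, $\beta=3\ell^3$, $\gamma=k(6\ell^6+2)+1$ as in the preceding section. For every $P\in\mathcal{P}$, I would compute the first $\gamma$ outlets $\Omega_\gamma(P)$ using Lemma \ref{thm:compute_outlets_in_cubic_time}, then for every outlet $\omega\in\Omega_\gamma(P)$ compute its closed-outlet witness set $\mathcal{S}_\omega$ via Lemma \ref{thm:near_closed_outlet_guessing_set} and its landing strip $L^{3\ell^7k}_P(\omega)$.

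Define the ``preliminary'' hitting set
\[
\mathcal{S}_{\mathrm{prel}}\;=\;\bigcup_{P\in\mathcal{P}}\bigcup_{\omega\in\Omega_\gamma(P)}\bigl(\mathcal{S}_\omega\cup L^{3\ell^7k}_P(\omega)\bigr),
\]
and add it to $\setCS$. If the (unknown) important cluster separator $S$ of size at most $k$ intersects $\mathcal{S}_{\mathrm{prel}}$ we are done, so I may assume it is disjoint from $\mathcal{S}_{\mathrm{prel}}$. Then Lemma \ref{thm:important_cluster_separator_contains_important_separator} guarantees that $S$ contains an important $\mathcal{X}\to\Vout\cup V_\Omega$-separator for some set $V_\Omega$ that, for every $P\in\mathcal{P}$, consists of at most one landing strip $L^{3\ell^7k}_P(\omega_P)$ with $\omega_P\in\Omega_\gamma(P)$. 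Since I do not know the correct choice of the outlets $\omega_P$, I would branch over all possibilities: for each $P$, choose one of the at most $\gamma+1$ options (either one outlet of $\Omega_\gamma(P)$ or none), giving $(\gamma+1)^{|\mathcal{P}|}\le 2^{\mathcal{O}(k^4(\log k+\log\ell))}$ branches. In each branch, after the set $V_\Omega$ is fixed, I would invoke Proposition \ref{thm:fptenumerationofimportantseperators} on $G$ with terminals $\mathcal{X}$ and $\Vout\cup V_\Omega$ to enumerate all at most $4^k$ important separators of size at most $k$, and add the union of these (at most $k\cdot 4^k$ vertices) to $\setCS$. By Lemma \ref{thm:important_cluster_separator_contains_important_separator}, the correct branch produces an important $\mathcal{X}\to\Vout\cup V_\Omega$-separator contained in $S$, and therefore $\setCS$ meets $S$.

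The running time bound follows straightforwardly: constructing $\mathcal{S}_{\mathrm{prel}}$ needs $|\mathcal{P}|\cdot\gamma=\mathcal{O}(k^5\ell^6)$ calls to Lemma \ref{thm:near_closed_outlet_guessing_set}, each in time $2^{\mathcal{O}(k^2\log k\log\ell)}\cdot\polyn$; the branching stage performs $2^{\mathcal{O}(k^4(\log k+\log\ell))}$ enumerations of Proposition \ref{thm:fptenumerationofimportantseperators}, each of cost $2^{\mathcal{O}(k)}\cdot\polyn$, for a total of $2^{\mathcal{O}(k^4\log k\log\ell)}\cdot\polyn$. For the size of $\setCS$, the $\mathcal{S}_\omega$-contribution is $|\mathcal{P}|\cdot\gamma\cdot 2^{\mathcal{O}(k^2\log k\log\ell)}\beta\log n=2^{\mathcal{O}(k^2\log k\log\ell)}\log n$, the landing-strip contribution is $\poly(k,\ell)$, and the enumerated separators add up to $2^{\mathcal{O}(k^4\log k\log\ell)}$, exactly matching the stated bound.

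The main obstacle is not the engineering above but confirming that the branching on $V_\Omega$ is sound, i.e.\ that for the ``correct'' choice of $\omega_P$ in every path $P$ the hypotheses of Lemma \ref{thm:important_cluster_separator_contains_important_separator} are really met. Concretely, I must verify that the lemma's conclusion---``$S$ contains an important $\mathcal{X}\to\Vout\cup V_\Omega$-separator for \emph{some} admissible $V_\Omega$''---translates to ``there \emph{exists} a branch where the chosen $V_\Omega$ contains exactly the landing strips witnessed by the open outlets used in the proof of the lemma''; this is what justifies adding the enumerated separators in every branch. The remaining bookkeeping (combining the three sources of size and ensuring disjointness assumptions propagate correctly) is routine once this correspondence is in place.
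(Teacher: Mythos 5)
Your proposal is correct and follows essentially the same route as the paper's proof: build the path collection $\mathcal{P}$, add the closed-outlet witness sets $\mathcal{S}_\omega$ and the landing strips for the first $\gamma$ outlets of each $P \in \mathcal{P}$ to $\setCS$, then branch over all admissible choices of $V_\Omega$ (at most one landing strip per path) and in each branch enumerate all important $\mathcal{X}\to\Vout\cup V_\Omega$-separators of size at most $k$ via \autoref{thm:fptenumerationofimportantseperators}. The worry you raise in your last paragraph resolves immediately: \autoref{thm:important_cluster_separator_contains_important_separator} already asserts the existence of an admissible $V_\Omega$ for which $S$ contains an important $\mathcal{X}\to\Vout\cup V_\Omega$-separator of size at most $k$, and since your branching exhaustively covers all admissible $V_\Omega$, the correct one appears in some branch, where the enumeration then produces a separator contained in $S$.
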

\begin{proof}
  Let $(G, k, \ell, X_1, \hdots, X_p, \Vout)$ be an instance of the \textsc{Important Cluster Separator in Strong Digraphs of Bounded Circumference} problem.
  Set $\alpha = 3\ell^6$, $\beta=3\ell^3$ and $\gamma = k(6\ell^6 + 2) +1$.
  Then our algorithm works as follows:\\
	
  \begin{algorithm}[H]
    \SetKwInOut{Input}{Input}\SetKwInOut{Output}{Output}	
	
    \Input{A digraph $G$, integers $k, \ell$ and vertex sets $X_1, \hdots, X_p, \Vout \subseteq V(G)$}
    \Output{A vertex set $\setCS \subseteq V(G)$}
    Generate a set $\mathcal{P}$ of paths that contains an arbitrary $X_i \to X_j$-path for every ordered pair $(i,j) \in \lbrace 1, \hdots, p\rbrace^2, i \neq j$\;
    Let $\setCS = \emptyset$\;
    \ForEach{$P \in \mathcal{P}$} {
      Compute the set $\Omega(P)$ of $(\alpha, \beta)$-outlets on $P$ by \autoref{thm:compute_outlets_in_cubic_time}\;
      Let $\Omega_\gamma(P)$ be the set of the first $\gamma$ many $(\alpha, \beta)$-outlets on $P$\;
      \ForEach{$\omega \in \Omega_\gamma(P)$}{
        Compute $S_\omega$ by \autoref{thm:near_closed_outlet_guessing_set} and add it to $\setCS$\;
        Compute the landing strip $L^{3\ell^7 k}_P(\omega)$ and add it to $\setCS$\;
      }
      Let $\mathcal{L}_P = \lbrace L^{3\ell^7 k}_P(\omega) | \omega \in \Omega_\gamma(P)\rbrace$\;
    }
    \ForEach{$(Z_P)_{P \in \mathcal{P}} \in \bigtimes_{P \in \mathcal{P}} \left(\mathcal{L}_P \cup \lbrace \emptyset\rbrace \right)$}{
      Let $V_\Omega = \bigcup_{P \in \mathcal{P}} Z_P$\;
      \ForEach{important $\mathcal{X} \to \Vout \cup V_\Omega$-separator $F$} {
        Add $V(F)$ to $\setCS$\;
      }
    }
  \end{algorithm}

  For correctness we consider \autoref{thm:important_cluster_separator_contains_important_separator}.
  We computed a set $\mathcal{P}$ as in the lemma.
  Every important cluster separator that intersects some $S_\omega$ or $L^{3\ell^7 k}_P(\omega)$ for an $\omega \in \omega_\gamma(P)$ intersects our $\setCS$.
  So we only have to argue about those important cluster separators disjoint from $\bigcup_{P \in \mathcal{P}}\bigcup_{\omega \in \Omega_\gamma(P)} S_\omega$.
  By \autoref{thm:important_cluster_separator_contains_important_separator} these important cluster separators contain an important $\mathcal{X} \to \Vout \cup V_\Omega$-separator for some $V_\Omega$ that contains for every $P \in \mathcal{P}$ at most one landing strip $L^{3\ell^7 k}_P(\omega)$ for some $\omega \in \Omega_\gamma(P)$.
  Our algorithm iterates over all possible choices for $V_\Omega$ and the important separator.
  Therefore, the resulting set $\setCS$ contains this separator and thus also intersects the important cluster separators disjoint from $\bigcup_{P \in \mathcal{P}}\bigcup_{\omega \in \Omega_\gamma(P)} S_\omega$.
	
  The size and run time bound follow from \autoref{thm:near_closed_outlet_guessing_set}, $|\mathcal{P}| = p(p - 1) \leq 6k^4$, and that there are at most~$4^k$ important $X \to Y$-separators for fixed $X$ and $Y$.
\end{proof}

\subsection{Putting Everything Together}
This section combines the previous sections to an overall algorithm solving \DLCHS{}.
For the analysis we need to bound expressions of type $\log^{f(k)} n$ by some function $g(k) \cdot n^{\mathcal{O}(1)}$.
This we do by the following lemma:

\begin{lemma}
\label{thm:boundingpowersoflogarithm}
  For $n \geq 4$ and $f(k) \geq 0$ we have \mbox{$(\log n)^{f(k)} \leq f(k)^{2f(k)} + \frac{n}{2^{f(k)}} \in 2^{\mathcal{O}(f(k)\log f(k))} + n$}.
\end{lemma}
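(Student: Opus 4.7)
The statement has two parts: the explicit inequality $(\log n)^{f(k)} \le f(k)^{2f(k)} + n/2^{f(k)}$, and the asymptotic containment in $2^{\mathcal{O}(f(k)\log f(k))} + n$. The asymptotic part is immediate, since $f(k)^{2f(k)} = 2^{2f(k)\log f(k)}$ and $n/2^{f(k)} \le n$, so the real content lies in establishing the explicit inequality.

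My plan for the inequality is a case distinction on the size of $\log n$ relative to $f(k)^2$. In the first case, when $\log n \le f(k)^2$, the bound is immediate from the first summand alone, since $(\log n)^{f(k)} \le (f(k)^2)^{f(k)} = f(k)^{2f(k)}$. In the complementary case, $\log n > f(k)^2$, I would prove the stronger estimate $(\log n)^{f(k)} \le n/2^{f(k)}$, so that the bound follows from the second summand alone. After taking logarithms and setting $y := \log n$, this sub-claim reduces to verifying $f(k)(\log y + 1) \le y$ whenever $y > f(k)^2$. A standard monotonicity argument for the function $y \mapsto y - f(k)\log y - f(k)$, whose derivative becomes non-negative once $y$ exceeds a small multiple of $f(k)$, then reduces the task to checking non-negativity at the threshold $y = f(k)^2$. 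That check amounts to the elementary inequality $f(k) \ge 2\log f(k) + 1$, valid once $f(k)$ is sufficiently large.

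The main obstacle is the regime of small $f(k)$: the elementary inequality $f(k) \ge 2\log f(k) + 1$ fails for a few small values of $f(k)$, and there the transition around $\log n \approx f(k)^2$ yields no neat dominance of $(\log n)^{f(k)}$ by a single summand. I would dispatch this residual range by direct verification for the finitely many small values of $f(k)$, exploiting the sum structure of the right-hand side to absorb the slack that the case-by-case dominance argument cannot provide: there the two summands are comparable, and the doubled bound $(\log n)^{f(k)} \le 2\max(f(k)^{2f(k)}, n/2^{f(k)})$ suffices once one verifies $n \ge 4$ by hand. Once these finitely many small cases are dispatched, the monotonicity argument closes the general case uniformly.
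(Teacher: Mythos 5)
Your two-case split on $\log n \lessgtr f(k)^2$ is essentially the same strategy as the paper's, just phrased with a cleaner threshold: the paper instead splits on $f(k) \lessgtr \frac{\log n}{1+\log\log n}$, in the first case directly getting $(\log n)^{f(k)} \le n/2^{f(k)}$ from $n \ge (2\log n)^{f(k)}$, and in the second deducing $\log n \le f(k)^2$ from the (claimed) estimate $\frac{\log n}{(1+\log\log n)^2} \ge 1$. Your Case~1 is immediate and correct; your Case~2 reduction to $f(k)(\log y + 1) \le y$ with $y = \log n$, followed by monotonicity of $y \mapsto y - f(k)\log y - f(k)$ on $[f(k)^2,\infty)$, is also sound. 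The boundary check becomes $f(k) \ge 2\log f(k) + 1$, which holds for $f(k) \le 1$ and for $f(k) \gtrsim 6.3$, but fails in between, as you noted.

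The gap is in how you propose to dispatch the remaining range. First, $f(k)$ is a nonnegative real, not an integer, so there are not "finitely many small values" to check. Second, and more fundamentally, the explicit inequality in the lemma is actually \emph{false} in that range: with $f(k)=2$ and $n=32$ one gets $(\log n)^{f(k)} = 5^2 = 25$ while $f(k)^{2f(k)} + n/2^{f(k)} = 16 + 8 = 24$, so no finite verification can succeed, and the proposed ``doubled bound'' $2\max(\cdot,\cdot)$ would be proving a weaker statement than the lemma claims. You are in good company: the paper's own Case~2 relies on $\frac{\log n}{(1+\log\log n)^2} \ge 1$ for all $n\ge 4$, which also fails for small $n$ (e.g.\ $n=32$ gives $5/(1+\log 5)^2 < 1$), so the paper's proof of the explicit inequality has the same hole. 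What is actually used downstream is only the asymptotic containment $(\log n)^{f(k)} \in 2^{\mathcal{O}(f(k)\log f(k))} + n$, which your argument (and the paper's) does establish once one inflates the constant; an honest fix would be to weaken the displayed inequality to, say, $(\log n)^{f(k)} \le f(k)^{c f(k)} + n/2^{f(k)}$ for a sufficiently large absolute constant $c$, or to state only the asymptotic form.
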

\begin{proof}
  We distinguish two cases, and add the upper bounds for $(\log n)^{f(k)}$ from both cases.

  If $f(k) \leq \frac{\log n}{1 + \log\log n}$ then we have
  $n \geq 2^{f(k)^2(1 + \log \log n)} = (2\log n)^{f(k)}$,
  which is equivalent to $(\log n)^{f(k)} \leq \frac{n}{2^{f(k)}}$.

  Otherwise, we have $f(k) > \frac{\log n}{1 + \log\log n}$.
  For $n \geq 4$ it then holds
  \begin{displaymath}
    \frac{f(k)^2}{\log n} > \frac{\log n}{(1 + \log \log n)^2} \geq 1.
  \end{displaymath}
  This is equivalent to $\log n \leq f(k)^2$ which implies $(\log n)^{f(k)} \leq f(k)^{2f(k)}$.
	
  Adding both cases we get $(\log n)^{f(k)} \leq f(k)^{2f(k)} + \frac{n}{2^{f(k)}}$ which by $f(k) = 2^{\log f(k)}$ lies in $2^{\mathcal{O}(f(k)\log f(k))} + n$.
\end{proof}

Now, by combining \autoref{thm:original_to_isolating_lchs}, \autoref{thm:isolating_lchs_to_important_hitting_separator}, \autoref{thm:hitting_separator_to_cluster_separator} and \autoref{thm:solving_important_cluster_separator_in_strong_graphs_of_bounded_circumference}, we get the following:
\begin{theorem}
  There is an algorithm solving \DLCHS{} with run time \overallruntime{}.
\end{theorem}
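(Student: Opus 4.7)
The plan is to chain together the four main reduction theorems from the preceding subsections into a single composite algorithm, and then use \autoref{thm:boundingpowersoflogarithm} to convert the resulting polylogarithmic factors into the advertised run-time form.

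First I would stack the algorithms in the natural order. Taking the algorithm from \autoref{thm:solving_important_cluster_separator_in_strong_graphs_of_bounded_circumference} as $\algorithmCS$, its output-size bound $\functionCS(k,\ell)=2^{\mathcal{O}(k^2\log k\log\ell)}\log n+2^{\mathcal{O}(k^4\log k\log\ell)}$ feeds directly into \autoref{thm:hitting_separator_to_cluster_separator}, producing an algorithm $\algorithmHS$ for \textsc{Important Hitting Separator in Strong Digraphs} with output-size bound $\functionHS(k,\ell)=2^{\mathcal{O}(k+\log\ell)}+\functionCS(k,\ell)$, which simplifies to $2^{\mathcal{O}(k^4\log k\log\ell)}\log n$. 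Plugging $\algorithmHS$ into \autoref{thm:isolating_lchs_to_important_hitting_separator} with $|T|\le k+1$ yields an algorithm $\algorithmILCHSI$ for \textsc{Isolating Long Cycle Hitting Set Intersection} whose output has size $\functionILCHSI(k,\ell)\le 2^{\mathcal{O}(k^2)}\cdot(k+1)\cdot\log^2(n)\cdot\functionHS(k,\ell)=2^{\mathcal{O}(k^4\log k\log\ell)}\cdot\log^3 n$. Finally, substituting $\algorithmILCHSI$ into \autoref{thm:original_to_isolating_lchs} gives the full algorithm for \DLCHS{}.

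Next I would aggregate the run times. Each of the four reductions contributes either a multiplicative $2^{\mathcal{O}(k^c\log k\log\ell)}\cdot\polyn$ overhead or appears as the oracle cost inside the previous level. The dominant factor comes from \autoref{thm:original_to_isolating_lchs}, which makes up to $2^{\mathcal{O}(\ell k^3\log k)}\cdot\big(\functionILCHSI(k,\ell)\big)^k\cdot n^2\log^{2k+2}(n)$ calls to $\algorithmILCHSI$; each such call costs $2^{\mathcal{O}(k^4\log k\log\ell)}\cdot\polyn$ by the composition of the inner algorithms. Multiplying out the $(\functionILCHSI(k,\ell))^k$ term produces a factor $2^{\mathcal{O}(k^5\log k\log\ell)}\cdot\log^{\mathcal{O}(k)}(n)$, and combining with the outer $2^{\mathcal{O}(\ell k^3\log k)}$ overhead from iterative compression and the disjointness step yields an overall run time of the form
\[
 2^{\mathcal{O}(\ell k^3\log k+k^5\log k\log\ell)}\cdot\log^{\mathcal{O}(k)}(n)\cdot n^{\mathcal{O}(1)}.
\]

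To reach the stated bound \overallruntime{}, the last step is to eliminate the $\log^{\mathcal{O}(k)}(n)$ factor using \autoref{thm:boundingpowersoflogarithm} with $f(k)=\mathcal{O}(k)$, which gives $\log^{\mathcal{O}(k)}(n)\le 2^{\mathcal{O}(k\log k)}+n$. The $2^{\mathcal{O}(k\log k)}$ summand is absorbed into the exponential factor, while the $n$ summand is absorbed into the polynomial-in-$n$ factor. Correctness is inherited from the four component theorems: if the input $(G,k,\ell)$ admits a solution, at least one branch of the composite algorithm identifies a valid hitting set, which is finally certified by the feasibility check of \autoref{thm:checkforsolution}. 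The main care-point in the calculation is book-keeping of the $\log n$ factors that appear in $\functionCS$, $\functionHS$ and $\functionILCHSI$; the lemma on powers of logarithms is precisely what allows us to trade these polylogarithmic factors for the single-exponential dependence on $k$ and $\ell$ claimed in the theorem.
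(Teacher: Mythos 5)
Your proposal is correct and follows essentially the same route as the paper: you chain \autoref{thm:solving_important_cluster_separator_in_strong_graphs_of_bounded_circumference} $\to$ \autoref{thm:hitting_separator_to_cluster_separator} $\to$ \autoref{thm:isolating_lchs_to_important_hitting_separator} $\to$ \autoref{thm:original_to_isolating_lchs} in the same order, propagate the output-size functions $\functionCS, \functionHS, \functionILCHSI$ through the reductions, obtain the same intermediate bound $2^{\mathcal{O}(\ell k^3\log k + k^5\log k\log \ell)}\cdot\log^{\mathcal{O}(k)}(n)\cdot\polyn$, and finish by invoking \autoref{thm:boundingpowersoflogarithm} to trade the $\log^{\mathcal{O}(k)} n$ factor for $2^{\mathcal{O}(k\log k)}+n$, exactly as the paper does.
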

\begin{proof}
  By using \autoref{thm:solving_important_cluster_separator_in_strong_graphs_of_bounded_circumference} as an oracle for \autoref{thm:hitting_separator_to_cluster_separator}, we get an algorithm solving \textsc{Important Hitting Separator in Strong Graphs} in time $\left(2^{\mathcal{O}(k^2)} + 2^{\mathcal{O}(k^4\log k \log \ell)}\right) \cdot \polyn$ producing a set of size at most $2^{\mathcal{O}(k + \log \ell)} + 2^{\mathcal{O}(k^2\log k\log\ell)}\log n + 2^{\mathcal{O}(k^4\log k \log \ell)} =  2^{\mathcal{O}(k^4\log k \log \ell)} \log n $.
	
  Using this algorithm in \autoref{thm:isolating_lchs_to_important_hitting_separator}, we obtain an algorithm for \textsc{Isolating Long Cycle Hitting Set} with run time
  \begin{displaymath}
    2^{\mathcal{O}(k^2)}|T| \cdot \log^2(n) \cdot \left(2^{\mathcal{O}(k^2)} + 2^{\mathcal{O}(k^4\log k \log \ell)}\right) \cdot \polyn = 2^{\mathcal{O}(k^4\log k \log \ell)} \cdot \polyn
  \end{displaymath}
  producing a set of size at most
  \begin{displaymath}
    2^{\mathcal{O}(k^2)}|T| \cdot \log^2(n) \cdot 2^{\mathcal{O}(k^4\log k \log \ell)} \log n = |T| \cdot 2^{\mathcal{O}(k^4\log k \log \ell)} \log^3(n)\enspace.
  \end{displaymath}

  If we plug this into \autoref{thm:original_to_isolating_lchs}, we may assume that $|T| \leq k$.
  Using $\log^{3k}(n) \leq 2^{\mathcal{O}(k \log k)} + n$ (by \autoref{thm:boundingpowersoflogarithm}), we obtain our final algorithm for \DLCHS{} with run time
  \begin{align*}
    &2^{\mathcal{O}(\ell k^3\log k)} \cdot \left( |T| \cdot 2^{\mathcal{O}(k^4\log k \log \ell)} \log^3(n) \right)^k\cdot 2^{\mathcal{O}(k^4\log k \log \ell)} \cdot \polyn\\
    \leq & 2^{\mathcal{O}(\ell k^3\log k + k^5\log k\log \ell)} \cdot \log^{3k}(n) \cdot \polyn\\
    \leq & 2^{\mathcal{O}(\ell k^3\log k + k^5\log k\log \ell)} \cdot \polyn \enspace . \qedhere
  \end{align*}		
\end{proof}

\section{Technical Tools Proofs}
\label{sec:technical_tools_proofs}
This section contains the proofs of theorems found in \autoref{sec:technical_tools}.

We will show that separators of bounded size are also defined only by a set of vertices of bounded size (\autoref{sec:bcl_impseps}).
Then we consider graphs with bounded circumference.
For these we can show some length bounds between paths (\autoref{sec:bcl_propertiesofdirectedgraphswithboundedcircumference}) and ways to bypass small deletion sets (\autoref{sec:bypassing}).
We are also able to obtain $k$-representative sets of paths of bounded size for strongly connected digraphs with bounded circumference, independent of the structure of the paths (\autoref{sec:bcl_representativesetsofpaths}).

\subsection{Important Separators and Consequences}
\label{sec:bcl_impseps}
Given a vertex $x\in V(G)$ and a large vertex set $\{y_1,\hdots,y_r\}\subseteq V(G)$, it is certainly possible that for every~$y_i$ there is a small set $S_i$ of vertices that separates a single $y_i$ from $x$, but does not separate~$y_j$ from $x$ for any $j\not=i$.
An example of such a situation is depicted in \autoref{fig:bcl_onewaysep}.
\begin{figure}[htpb]
  \centering
  \includegraphics{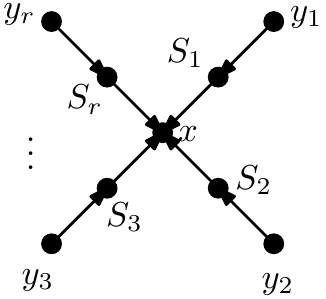}
  \caption{A digraph $G$ in which for every $t_i$ there is a singleton $S_i$ that separates $t_i$ from $s$ but does not separate $t_j$ from $s$ for any $j\neq i$.}
\label{fig:bcl_onewaysep}
\end{figure}

The following simple statement shows that the opposite is not possible: if $r$ is large, then it cannot happen that for every $y_i$ there is a small separator $S_i$ that separates every vertex of $\{y_1,\hdots,y_r\}\setminus\{y_i\}$ but does \emph{not} separate $y_i$ from $x$.
\begin{lemma}
\label{thm:bcl_exceptsep_small}
  Let $G$ be a digraph and let $x,y_1,\hdots,y_r$ be vertices of $G$.
  Let $S_1,\hdots,S_r$ be sets of vertices of size at most $k$ each, such that the following holds for each $i = 1,\hdots,r$:
  \begin{compactitem}
    \item $y_i$ is reachable from $x$ in $H - S_i$, but
    \item for each $j\in\{1,\hdots,r\}\setminus\{i\}$, vertex $y_j$ is not reachable from $x$ in $H - S_i$ (potentially $y_j\in S_i$).
  \end{compactitem}
  Then $r\leq (k+1)4^{k+1}$.
\end{lemma}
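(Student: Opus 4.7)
The plan is to reduce the statement to a pigeonhole argument on the bounded number of important separators guaranteed by Proposition~\ref{thm:fptenumerationofimportantseperators}. Throughout, write $Y = \{y_1,\dots,y_r\}$.

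First, I would enlarge each $S_i$ to a separator for the \emph{entire} target set. Observe that $y_i\notin S_i$ (since $y_i$ is reachable from $x$ in $G-S_i$), so $\widetilde S_i := S_i \cup\{y_i\}$ has size at most $k+1$, and by the hypothesis it separates $x$ from every $y_j$ (including $y_i$, which is now in the separator itself). Hence $\widetilde S_i$ is an $x\to Y$-separator of size at most $k+1$.

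Next, I would invoke the standard ``push to importance'' operation to obtain, for each $i$, an important $x\to Y$-separator $S_i^\star$ with $|S_i^\star|\le|\widetilde S_i|\le k+1$ and $R^+_{G-\widetilde S_i}(x)\subseteq R^+_{G-S_i^\star}(x)$. By Proposition~\ref{thm:fptenumerationofimportantseperators}, there are at most $4^{k+1}$ such separators. If, for contradiction, $r>(k+1)4^{k+1}$, then pigeonhole yields an important separator $S^\star$ and at least $k+2$ indices $i$ with $S_i^\star=S^\star$. Since $|S^\star|\le k+1$, at least one of these $k+2$ indices, call it $i_0$, satisfies $y_{i_0}\notin S^\star$.

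The contradiction then comes from chasing an $x\to y_{i_0}$-path. By hypothesis such a path $P$ exists in $G-S_{i_0}$; let $w$ be the predecessor of $y_{i_0}$ on $P$. Since $P$ is simple and avoids $S_{i_0}$, the prefix $P[x,w]$ avoids $\widetilde S_{i_0}=S_{i_0}\cup\{y_{i_0}\}$, so $w\in R^+_{G-\widetilde S_{i_0}}(x)\subseteq R^+_{G-S^\star}(x)$. But the arc $(w,y_{i_0})$ is present in $G-S^\star$ (because $y_{i_0}\notin S^\star$), so $y_{i_0}\in R^+_{G-S^\star}(x)$, contradicting the fact that $S^\star$ is an $x\to Y$-separator.

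The only non-obvious step is realising that we must include $y_i$ into the separator: working with the original $S_i$ as an $x\to T_i$-separator (with $T_i=Y\setminus\{y_i\}$) gives different target sets across $i$, so Proposition~\ref{thm:fptenumerationofimportantseperators} is not directly applicable. Replacing $S_i$ by $\widetilde S_i$ unifies the target set at the cost of just $+1$ in separator size, which is exactly accounted for by the $4^{k+1}$ (versus $4^k$) and by the extra factor $k+1$ from the pigeonhole on which of the at most $k+1$ vertices of $S^\star$ could equal $y_i$. Everything else is routine.
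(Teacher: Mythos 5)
Your overall plan --- reduce to a pigeonhole on important separators via Proposition~\ref{thm:fptenumerationofimportantseperators} --- matches the paper's, but the ``push to importance'' step does not actually go through with the objects you chose. You define $\widetilde S_i = S_i\cup\{y_i\}$ and call it an $x\to Y$-separator, but the paper's definition requires an $X$--$Y$-separator to be \emph{disjoint} from $X\cup Y$, and $\widetilde S_i$ contains $y_i\in Y$ (and possibly other $y_j$, since the hypothesis explicitly allows $y_j\in S_i$). Because $\widetilde S_i$ is not a valid $x\to Y$-separator, there need not exist any important $x\to Y$-separator $S_i^\star$ with $R^+_{G-\widetilde S_i}(x)\subseteq R^+_{G-S_i^\star}(x)$. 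Concretely, take $V(G)=\{x,a,y_1,y_2\}$ with arcs $(x,a),(a,y_1),(a,y_2)$, $Y=\{y_1,y_2\}$, $S_1=\{y_2\}$: then $\widetilde S_1=\{y_1,y_2\}$ has $R^+_{G-\widetilde S_1}(x)=\{x,a\}$, but every $x\to Y$-separator disjoint from $Y$ must contain $a$, so its reach is only $\{x\}$; the $S_1^\star$ your step~2 asks for simply does not exist. So this is a genuine gap, not a cosmetic one.

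This is precisely what the paper's auxiliary-sink trick is for: add a fresh vertex $y^\star$ with arcs $(y_i,y^\star)$ for every $i$ and work with $x\to y^\star$-separators in the enlarged graph. Then $S_i'=S_i\cup\{y_i\}$ \emph{is} disjoint from $\{x,y^\star\}$, so it is a legitimate $x\to y^\star$-separator of size at most $k+1$, and push-to-importance applies. Once you make that change, your pigeonhole (some important separator $S^\star$ is hit by at least $k+2$ indices, and since $|S^\star|\le k+1$ some $y_{i_0}\notin S^\star$) and your path-chasing contradiction both go through, with the final sentence replaced by ``$y_{i_0}\in R^+_{G'-S^\star}(x)$ together with the arc $(y_{i_0},y^\star)$ places $y^\star$ in the reach, contradicting that $S^\star$ is an $x\to y^\star$-separator.'' At that point your argument is essentially the paper's, just packaged as a pigeonhole-plus-contradiction rather than the paper's more direct ``each $y_i$ lies in some important $x\to y^\star$-separator of size at most $k+1$; there are at most $4^{k+1}$ of them, each with at most $k+1$ vertices.''
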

\begin{proof}
  Create a graph $G'$ from $G$ by adding a new vertex $y^\star$ together with the arcs $(y_i,y^\star)$, for each $i = 1,\hdots,r$.
  Observe that each vertex $y_i$ is part of an $x  \rightarrow y^\star$-separator $S_i' = S_i\cup\{y_i\}$ of size $k+1$ and moreover $R_{H'\setminus S_i'}^+(x)$ contains some vertex $v_i$ such that $(v_i,y_i)$ is an arc of $G$.
  Therefore, there exists an important $x \rightarrow y^\star$-separator $S_i''$ such that $R^+_{H'\setminus S_i'}(x)\subseteq R^+_{H'\setminus S_i''}(x)$, which implies that $v_i\in R^+_{H'\setminus S_i''}(s)$ and $y_i\in S_i''$.
  Consequently, each vertex $y_i$ belongs to some important $x  \rightarrow y^\star$-separator of size at most $k+1$, and since by \autoref{thm:fptenumerationofimportantseperators} applied with $p = k + 1$ there are at most $(k+1)4^{k+1}$ such vertices, the lemma follows.
\end{proof}

This means that for the set $Y = \{y_1,\hdots,y_r\}$ at most $(k+1)4^{k+1}$ vertices of $Y$ define what $x \rightarrow Y$-separators look like.
We will now show how to construct for $Y$ a small ``witness'' set~$Y'$ of size at most $(k+1)4^{k+1}$ such that all $x \rightarrow Y'$-separators of size at most $k$ are also $x \rightarrow Y$-separators.

\restatesmallpathwitness*
\begin{proof}
  Initially, we start with $Y' = Y$, which certainly satisfies property~\eqref{eqn:bcl_impsepproperty}.
  For every $v\in Y'$ we check whether $Y'\setminus \{v\}$ also satisfies property~\eqref{eqn:bcl_impsepproperty}.
  For this purpose, we need to check whether there is a set $S$ of at most $k$ vertices such that some vertex of $Y$ is reachable from $x$ in $G - S$, but no vertex of $Y'\setminus\{v\}$ is reachable.
  As $Y'$ satisfies the assumptions of the lemma, if $Y$ is reachable, then some vertex of~$Y'$ is reachable.
  Therefore, what we need is a set $S$ such that $v$ is reachable from~$x$ in $G - S$, but no vertex of $Y'\setminus\{v\}$ is reachable.

  Let us introduce a new vertex $y^\star$ into $G$ and add an arc from every vertex of $Y'\setminus\{v\}$ to~$y^\star$.
  Observe that $S$ is an $x \rightarrow y^\star$-separator (clearly, we have $x\notin S$).
  We claim that if there is an $x \rightarrow y^\star$-separator $S$ of size at most $k$ such that $v$ is reachable from $x$ in $G - S$, there is such an important separator~$S'$.
  Indeed, if $S'$ is an important separator with $|S'|\leq |S|$ and $R^+_{G - S}(\{x\})\subseteq R^+_{G - S'}(\{x\})$, then $v$ is reachable from $x$ also in $G - S'$.
  Therefore, we can test existence of the required separator~$S$ by testing every important $s \rightarrow y^\star$-separator of size at most~$k$.
  If none of them satisfies the requirements, then we can conclude that $Y'\setminus\{v\}$ also satisfies property~\eqref{eqn:bcl_impsepproperty} and we can continue the process with the smaller set $Y'\setminus\{v\}$.
  
  Suppose now that for every $v\in Y'$, we have found a set $S_v$ of at most $k$ vertices such that $v$ is reachable from $x$ in $G - S_v$, but $Y' - \{v\}$ is not.
  Then \autoref{thm:bcl_exceptsep_small} implies that $|Y'|\leq (k+1)4^{k+1}$.
\end{proof}

Next, we prove a ``set extension'' of the previous lemma, in which the vertex $x$ is enlarged to a set $X$.
\restatesmallpathwitnessset*
\begin{proof}
  Let us introduce a new vertex $x$ into $G$ and add an arc from $x$ to every vertex of~$X$.
  Let us use the algorithm of \autoref{thm:bcl_smallpathwitness} to find a set $Y'\subseteq Y$ of size at most $(k+1)4^{k+1}$.
  Let~$\overleftarrow{G}$ be the digraph obtained from~$G$ by reversing the orientation of all arcs.
  Add a vertex~$\overleftarrow{x}$ to~$\overleftarrow{G}$ and add an arc $(\overleftarrow{x}, v)$ for every vertex $v\in Y'$.
  Apply the algorithm of \autoref{thm:bcl_smallpathwitness} on $\overleftarrow{G}$ with $\overleftarrow{x}$ playing the role of $x$ and $X$ playing the role of $Y$; let $X'$ be the set returned by the algorithm.
  
  We claim that $X'$ and $Y'$ satisfy the requirements of the lemma.
  Suppose that there is an $X \rightarrow Y$-path~$P$ in $G - S$.
  By the way we obtained $Y'$, we may assume that $P$ ends in~$Y'$.
  Then the reverse of $P$ is a $Y' \rightarrow X$-path in $\overleftarrow{G} - S$.
  Therefore, by the way we obtained~$X'$ there is a path~$Q$ in $\overleftarrow{G} - S$ from $Y'$ to $X'$.
  Now the reverse of $Q$ is an $X' \rightarrow Y'$-path in $G - S$, as required.
\end{proof}

We can further extend above set version to multiple sets $X_i$.

\restatesmallpathwitnesstsets*
\begin{proof}
  For every ordered pair $(i,j)$ apply \autoref{thm:bcl_smallpathwitnessset} to $X_i$ and $X_j$ to obtain sets $X_i^{(i,j)}$ and~$X_j^{(i,j)}$.
  Let
  \begin{displaymath}
    X'_i = \bigcup_{\substack{j=1\\ j \not = i}}^t \left( X_i^{(i,j)} \cup X_i^{(j,i)} \right) \enspace .
  \end{displaymath}
  These have the desired properties, as for a $X_i \rightarrow X_j$-path in $G - S$ for $i \not = j$ there is by construction a $X_i^{(i,j)} \subseteq X'_i \rightarrow X_j^{(i,j)} \subseteq X'_j$-path in $G - S$.
  The size bound follows directly.
\end{proof}

\subsection{Properties of Digraphs with Bounded Circumference}
\label{sec:bcl_propertiesofdirectedgraphswithboundedcircumference}
We now establish some properties of digraphs with bounded ``circumference''.
Recall that for a digraph $G$, its \emph{circumference} $\mathsf{cf}(G)$ is defined as the maximum length of any of its directed cycles or~$0$ if it is acyclic.
Further, recall that the \emph{distance} $\mathsf{dist}_G(x,y)$ between any two vertices $x,y\in V(G)$ in~$G$ the minimum length of a directed path from $x$ to $y$ in $G$.
For vertex sets $X,Y\subseteq V(G)$, their distance $\mathsf{dist}_G(X,Y)$ is defined as the minimum of $\mathsf{dist}_G(x,y)$ over all $x\in X,y\in Y$.

\restatedistboundedcircum*
\begin{proof}
  Note that the statement trivially holds if $x = y$: then every simple path between the two vertices has length 0.
  Let $x',y'$ be any two distinct vertices of $P_1$ such that no internal vertex $P[x',y']$ is on $P_2$ (see \autoref{fig:intersectingpaths}).
  \begin{figure}[htpb]
    \centering
    \includegraphics{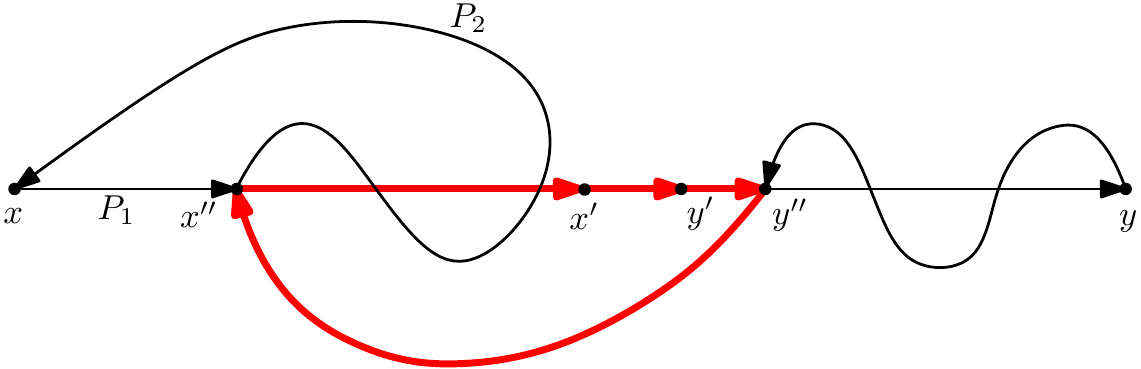}
    \caption{Proof of \autoref{thm:bcl_distboundedcircum}.}
    \label{fig:intersectingpaths}
  \end{figure}
  Going from $y$ to $x$ on $P_2$, let $x''$ be the first vertex of~$P_2$ that is in $P_1[x,x']$ (possibly $x''$ is equal to~$x$ or $x'$) and let $y''$ be the last vertex of $P_2$ before~$x''$ that is on $P_1$.
  Note that $y''$ has to be between $y'$ and $y$ (possibly $y''$ is $y'$ or $y$).
  As no internal vertex of $P_2[y'',x'']$ is on $P_1$, concatenating $P_2[y'',x'']$ and $P_1[x'',y'']$ gives a simple cycle; note that $P_1[x'',x']$ and $P_1'[y',y'']$ may contain vertices of $P_2$ outside $P_2[y'',x'']$.
  The length of this cycle is at most $\mathsf{cf}(G)$, hence $|P_1[x',y']| \leq |P_1[x'',y'']| \leq \mathsf{cf}(G)-1$.
  It follows that for any $\mathsf{cf}(G)-1$ consecutive vertices of $P_1$, at least one of these vertices is used by~$P_2$.
  As the first and last vertices of $P_1$ (that is, $x$ and $y$) are in $P_2$, it is easy to see that if~$n_i$ denotes the number of vertices of $P_i$, then $n_1\leq (n_2 - 1)(\mathsf{cf}(G) - 1) + 1$.
  In other words, $|P_1|\leq |P_2|(\mathsf{cf}(G) - 1)$, what we had to show.
\end{proof}

Note that the ratio $\mathsf{cf}(G) - 1$ in \autoref{thm:bcl_distboundedcircum} is tight; see the blue and green paths in \autoref{fig:cf4} for an example.
\begin{figure}[htpb]
  \centering
  \includegraphics{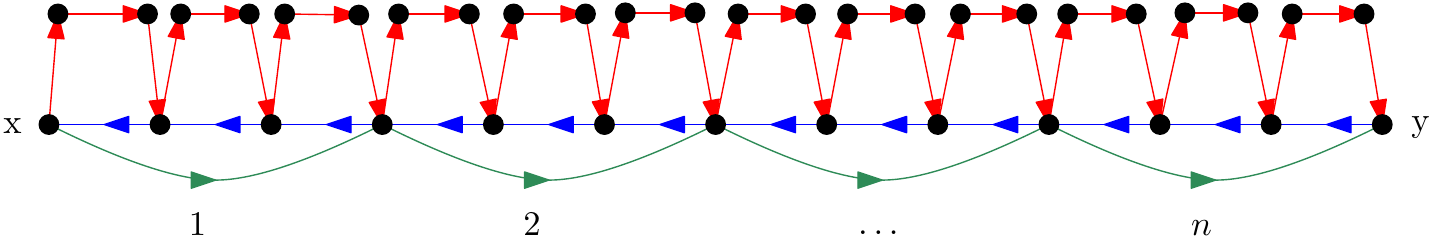}
  \caption{A strong digraph $G$ with circumference $\mathsf{cf}(G) = 4$.
    There is an $x\rightarrow y$-path of length~$n$ (green), an $x\rightarrow y$-path of length $(\mathsf{cf}(G)-1)^2n = 9n$ (red), and a $y\rightarrow x$-path of length $(\mathsf{cf}(G) - 1)n = 3n$ (blue).
    Examples for arbitrary $\mathsf{cf}(G) = \ell$ can be constructed in a similar way.}
  \label{fig:cf4}
\end{figure}

\restatedistboundedcircumsquared*
\begin{proof}
  If $x = y$, then every simple $x\rightarrow y$-path has length 0, and the statement holds trivially.
  Otherwise, let $P^\star$ be a shortest $ x\rightarrow y$-path.
  As $G$ is strong, every arc of~$P^\star$ is in a cycle of length at most $\mathsf{cf}(G)$.
  Thus, for every arc $(u,v)$ of $P^\star$, there is a $v\rightarrow u$-path of length at most $\mathsf{cf}(G) - 1$.
  Concatenating these paths for every arc of~$P^\star$, we obtain a $y\rightarrow x$ walk of length at most $(\mathsf{cf}(G) - 1)|P^\star| = (\mathsf{cf}(G) - 1)\mathsf{dist}_G(x,y)$, and hence there is a $y\rightarrow x$-path $P_2$ of at most this length.
  By \autoref{thm:bcl_distboundedcircum}, we have $|P_1|\leq (\mathsf{cf}(G) - 1)|P_2| \leq (\mathsf{cf}(G) - 1)^2|P^\star| = (\mathsf{cf}(G) - 1)^2\mathsf{dist}_G(x,y)$.
\end{proof}
Again, the ratio $(\mathsf{cf}(G) -1)^2$ in \autoref{thm:bcl_distboundedcircumsquared} is tight, see the red and green paths in \autoref{fig:cf4}.

\restatedistsecondpath*
\begin{proof}
  The claim is true if $x = y$, as then both $P_1$ and $P_2$ have length 0, and the statement holds trivially.
  Otherwise, as $G$ is strong, every arc of $P_2$ is in a cycle of length at most~$\mathsf{cf}(G)$.
  Thus, for every arc $(u,v)$ of $P$, there is a $v\rightarrow u$-path of length at most $\mathsf{cf}(G) - 1$.
  Concatenating these paths for every arc of $P_2$, we obtain a walk from $y$ to $x$ where every vertex is at distance at most $\mathsf{cf}(G) - 2$ from $P_2$.
  This implies that there is an $y\rightarrow x$-path $P_3$ where every vertex is at distance at most $\mathsf{cf}(G) - 2$ from $P_2$.
  
  Observe that if $(u,v)$ is an arc of $P_1$, then $\mathsf{dist}_G(P_2,v) \leq \mathsf{dist}_G(P_2,u) + 1$.
  Therefore, if~$P_1$ has a vertex $v$ with $\mathsf{dist}_G(P_2,v) > 2(\mathsf{cf}(G) - 2)$, then there is a subpath $P_1[v',v'']$ with $\mathsf{dist}_G(P_2,v') = \mathsf{cf}(G) - 2$, $\mathsf{dist}_G(P_2,v'') = 2\mathsf{cf}(G) - 3$, and every internal vertex of $P_1[v',v'']$ is at distance more than $\mathsf{cf}(G) - 2$ from~$P_2$.
  This means that $P_3$ does not contain any internal vertex of $P_1[v',v'']$, since every vertex of $P_3$ is at distance at most $\mathsf{cf}(G) - 2$ from~$P_2$.
  Now $P_1[v'',y]\circ P_3\circ P_1[x,v']$ is a $v''\rightarrow v'$ walk that does not contain any internal vertex of $P_1[v',v'']$ and hence there is a simple cycle containing $P[v',v'']$.
  Note that the length of any $v''\rightarrow v$-path is at least 2: $P_1$ has no arc $(v'',v')$ and such an arc cannot appear in $P_3$ either, as $\mathsf{dist}_G(P_2,v) > \mathsf{cf}(G) - 2$.
  Therefore, the length of this cycle is at least $|P[v',v'']| + 2 (\mathsf{cf}(G) - 1) + 2 > \mathsf{cf}(G)$, a contradiction.
  This proves $\mathsf{dist}_G(P_2,v) \leq 2(\mathsf{cf}(G) - 2)$.
  To prove the second bound $\mathsf{dist}_G(v,P_2) \leq 2(\mathsf{cf}(G) - 2)$, let us reverse the arcs of the graph and apply the first bound on the two $y\rightarrow x$-paths corresponding to $P_1$ and~$P_2$.
\end{proof}
The bound $2(\mathsf{cf}(G)-2)$ in \autoref{thm:bcl_distsecondpath} is tight: in \autoref{fig:cf4}, the red $x\rightarrow y$-path has vertices at distance exactly $2(\mathsf{cf}(G) - 2) = 4$ from the green path (the example can be generalized to larger~$\mathsf{cf}(G)$).

Next, we generalize \autoref{thm:bcl_distsecondpath} to the case when the start/end vertices of the two paths are not necessarily the same, but they are close to each other.

\restatepathdistance*
\begin{proof}
  Let $Q_x$ be a shortest $x_1\rightarrow x_2$-path (which has length $\mathsf{dist}_G(x_1,x_2)\leq t$ and let $Q_y$ be a shortest $y_2\rightarrow y_1$-path (which has length $\mathsf{dist}_G(y_2,y_1)\leq (\mathsf{cf}(G) - 1)\mathsf{dist}_G(y_1,y_2)\leq (\mathsf{cf}(G) - 1)t$ by \autoref{thm:bcl_distboundedcircum}).
  The concatenation $Q_x\circ P_2\circ Q_y$ is an $x_1\rightarrow y_1$ walk; let $R$ be an $x_1\rightarrow y_1$-path using a subset of arcs of this walk.
  By \autoref{thm:bcl_distsecondpath}, every vertex $v$ of $P_1$ is at distance at most $2(\mathsf{cf}(G) - 2)$ from $R$; let $u$ be a vertex of $R$ with $\mathsf{dist}_G(u,v)\leq 2(\mathsf{cf}(G) - 2)$.
  We consider three cases depending on the location of $u$:
  \begin{compactitem}
    \item $u\in P_2$: Then we are done, since vertex $v$ is at distance at most $2(\mathsf{cf}(G) - 1)$ from $P_2$.
    \item $u\in Q_x$: Then
      \begin{eqnarray*}
        \mathsf{dist}_G(x_2,v) & \leq & \mathsf{dist}_G(x_2,u) + \mathsf{dist}_G(u,v)\\
                               & \leq & (\mathsf{cf}(G) - 1) \mathsf{dist}_G(u,x_2) + \mathsf{dist}_G(u,v)\\
                               & \leq & (\mathsf{cf}(G) - 1)|Q_x| + 2(\mathsf{cf}(G) - 2)\\
                               & \leq & (\mathsf{cf}(G) - 1)t + 2(\mathsf{cf}(G) -2),
      \end{eqnarray*}
      by the triangle inequality and \autoref{thm:bcl_distboundedcircum}; hence, we are done.
    \item $u\in Q_y$: Then
      \begin{eqnarray*}
        \mathsf{dist}_G(y_2,v) & \leq & \mathsf{dist}_G(y_2,u)+\mathsf{dist}_G(u,v)\\
                               & \leq & |Q_y|+2(\mathsf{cf}(G)-2)\\
                               & \leq & (\mathsf{cf}(G)-1)t + 2(\mathsf{cf}(G)-1),
      \end{eqnarray*}
      and we are done again.\qedhere
  \end{compactitem}
\end{proof}

\subsection{Bypassing}
\label{sec:bypassing}
In this subsection, we prove a result exploiting that any two $x\rightarrow y$-paths in a strong digraph of bounded circumference are ``close'' to each other, hence (if the paths are sufficiently long), there are many disjoint paths connecting them.
Therefore, if we delete a set $S$ of at most $k$ vertices, then we can use these connecting paths to switch from one path to the other, avoiding the vertices of $S$.
We use this result in \autoref{sec:finding_important_cluster_separators}.

\restatebypassingthroughnearbypaths*
\begin{proof}
  As $|P_1[a, b]| \geq \cf{G}^5 \cdot (t + 2)k$, it is possible to select vertices $v_1, \hdots, v_{k+1}$ on $P_1[a, b]$ such that $|P_1[v_i, v_j]| \geq \cf{G}^3 (t + 2)$ for $i, j \in \lbrace 1, \hdots, k\rbrace, i < j$.
  By \autoref{thm:bcl_pathdistance} every $v_i$ is at distance at most $(\cf{G} - 1)t + 2(\cf{G} - 2)$ from $P_2$.
  Let $Q_i$ be a $P_2 \to v_i$-path for every $i \in \lbrace 1, \hdots, k+1\rbrace$.

  \begin{claim}
    The paths $Q_1, \hdots, Q_{k +1}$ are pairwise vertex-disjoint.
  \end{claim}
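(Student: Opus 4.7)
The plan is to proceed by contradiction: assume $u\in Q_i\cap Q_j$ for some $i<j$, and derive a contradiction by comparing the long $v_i\to v_j$-subpath $P_1[v_i,v_j]$ against a short $v_j\to v_i$-route constructed using $u$.

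Each $Q_i$ is a $P_2\to v_i$-path of length at most $(\cf(G)-1)t+2(\cf(G)-2)\le\cf(G)(t+2)$, so both subpaths $Q_i[u,v_i]$ and $Q_j[u,v_j]$ have length at most $\cf(G)(t+2)$. Applying \autoref{thm:bcl_distboundedcircum} to $Q_j[u,v_j]$ yields a $v_j\to u$-path of length at most $\cf(G)^2(t+2)$, and concatenating it with $Q_i[u,v_i]$ produces a $v_j\to v_i$-walk of length at most $2\cf(G)^2(t+2)$, from which I extract a simple $v_j\to v_i$-path $B$.

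The core of the argument compares $B$ with $A:=P_1[v_i,v_j]$. Enumerate $V(A)\cap V(B)=\{w_0=v_i,w_1,\ldots,w_m,w_{m+1}=v_j\}$ in the order the vertices appear on $A$; since $B$ runs from $v_j$ to $v_i$, the $w_\ell$'s appear on $B$ in reverse. By maximality of the $w_\ell$'s, for every $\ell\in\{0,\ldots,m\}$ the two halves $A[w_\ell,w_{\ell+1}]$ and $B[w_{\ell+1},w_\ell]$ share no internal vertex, so their concatenation is a simple cycle of length at most $\cf(G)$. Summing across $\ell$ telescopes to $|A|+|B|\le (m+1)\cf(G)$, and since the $m+2$ distinct vertices $w_0,\ldots,w_{m+1}$ all lie on $B$, we have $m+1\le|B|\le 2\cf(G)^2(t+2)$, hence $|A|\le |B|\cdot\cf(G)\le 2\cf(G)^3(t+2)$.

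Finally, because $|P_1[a,b]|\ge\cf(G)^5(t+2)k$, the $k+1$ vertices $v_1,\ldots,v_{k+1}$ may be placed equally on $P_1[a,b]$ so that $|P_1[v_i,v_j]|\ge\cf(G)^5(t+2)$ whenever $i<j$ (a stronger spacing than the recorded $\cf(G)^3(t+2)$, but still well within the slack of the hypothesis). This contradicts $|A|\le 2\cf(G)^3(t+2)$ since $\cf(G)^5(t+2)>2\cf(G)^3(t+2)$ for every loopless $G$ with $\cf(G)\ge 2$. The main obstacle I anticipate is the cycle-decomposition bookkeeping: verifying that the pieces $A[w_\ell,w_{\ell+1}]$ and $B[w_{\ell+1},w_\ell]$ are pairwise internally disjoint (which follows from the $w_\ell$'s exhausting $V(A)\cap V(B)$), and calibrating the spacing between the $v_i$'s so that it dominates the $O(\cf(G)^3)$ upper bound produced by that decomposition.
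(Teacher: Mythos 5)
Your overall strategy is the same as the paper's: from a shared vertex $u$ you obtain a short $v_j\to v_i$ route and then compare it against the long segment $P_1[v_i,v_j]$ to contradict the chosen spacing. The paper carries out this comparison purely in terms of distances, applying \autoref{thm:bcl_distboundedcircum} twice; you instead construct the backward path $B$ explicitly, which is a minor cosmetic difference.

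The genuine problem is your telescoping cycle-decomposition argument for bounding $|A|$ in terms of $|B|$. You assert that the common vertices $w_0,\ldots,w_{m+1}$ of $A$ and $B$ appear on $B$ in the reverse of their order on $A$. This is false in general: on a digraph with cycles, a forward $v_i\to v_j$-path and a backward $v_j\to v_i$-path can visit their shared vertices in unrelated orders. For example, if $A=v_i,a,b,v_j$ and $B=v_j,a,b,v_i$ (via some $(b,v_i)$ arc), the shared vertices $a,b$ appear in the \emph{same} order on both paths, and $B[w_{\ell+1},w_\ell]=B[b,a]$ is not even a subpath of $B$. So the sum $\sum_\ell\bigl(|A[w_\ell,w_{\ell+1}]|+|B[w_{\ell+1},w_\ell]|\bigr)$ is ill-defined and the bound $|A|+|B|\le(m+1)\cf(G)$ does not follow. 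This is precisely the subtlety that the paper's proof of \autoref{thm:bcl_distboundedcircum} is careful about: there, one does \emph{not} assume reverse order, but instead, for a gap on $P_1$ with no internal shared vertex, one locates a suitable subpath of $P_2$ whose internal vertices avoid $P_1$ and argues from that. You should simply have invoked \autoref{thm:bcl_distboundedcircum} a second time (as the paper does, or with your $B$ in place of the implicit shortest path) to get $|P_1[v_i,v_j]|\le(\cf(G)-1)|B|<2\cf(G)^3(t+2)$; with that lemma as a black box, the rest of your argument (including the recalibrated spacing of $\cf(G)^5(t+2)$ between the $v_i$'s, which the slack in the hypothesis does allow) goes through.
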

  \begin{claimproof}
    Suppose, for sake of contradiction, that two paths $Q_i$ and $Q_j$, $i < j$ intersect in a vertex~$q$.
    Using the triangle inequality and \autoref{thm:bcl_distboundedcircum}, we have
    \begin{eqnarray*}
      \mathsf{dist}_G(v_j,v_i) & \leq & \mathsf{dist}_G(v_j,q) + \mathsf{dist}_G(q,v_i)\\
                               & \leq & (\cf(G) - 1)\mathsf{dist}_G(q,v_j) + \mathsf{dist}_G(q,v_i)\\
                               & \leq & (\cf(G) - 1)|Q_{i_2}| + |Q_{i_1}|\\
                               & \leq & \cf(G) \cdot \left[(\cf(G) - 1)t + 2(\cf(G) - 2)\right]\\
                               &   <  & \cf(G)^2 (t + 2) \enspace .
    \end{eqnarray*}
    Another usage of \autoref{thm:bcl_distboundedcircum} yields then $|P[v_i, v_j]| \leq (\cf(G) - 1) \dist_G(v_j, v_i) < \cf(G)^3 (t + 2)$ --- a contradiction to the choice of $v_i$ and $v_j$ as $|P_1[v_i, v_j]| \geq \cf{G}^3 (t + 2)$.
	\end{claimproof}
	
  As the $Q_1, \hdots, Q_{k+1}$ are pairwise vertex-disjoint and $|S| \leq k$, there is a $Q_i$ that is disjoint from~$S$.
  Let $w_i$ be the first vertex of $Q_i$.
  Consider the $x_2 \to b$-walk $W = P_2[x_2, w_i] \circ Q_i \circ P_1[v_i, b]$.
  Note that it is only a walk (and not necessarily a path) as $Q_i$ may use arcs of $P_[v_i, b]$.
  As $v_i \in V(P_1[a,b])$ we know that $P_1[v_i, b]$ is disjoint from $S$.
  Also, every subpath of $P_2$ is disjoint from $S$ as well as~$Q_i$.
  Therefore, $W$ is a $x_2 \to b$-walk in $G - S$ and as such contains a $x_2 \to b$-path in $G - S$.
\end{proof}

\subsection{Representative Sets of Paths}
\label{sec:bcl_representativesetsofpaths}
Let $G$ be a digraph and let $x,y\in V(G)$.
We say that a set $\mathcal P$ of $x\rightarrow y$-paths is \emph{$k$-representative} if whenever $S\subseteq V(G)$ is a set of at most $k$ vertices such that $G - S$ has an $x\rightarrow y$-path, then~$\mathcal P$ contains an $x\rightarrow y$-path disjoint from $S$.
Representative sets of paths (and also of other objects) are important tools in the design of parameterized algorithms~\cite{Monien1985,FominEtAl2014,FominEtAl2014b,ShachnaiZehavi2016,Marx2009}.

The algorithm of Bonsma and Lokshtanov~\cite{BonsmaLokshtanov2011} for the case $\mathsf{cf}(G)\leq 2$ uses the following observation in an essential way.
Let $G$ be a strong digraph and let $\langle G\rangle$ denote the underlying undirected graph of $G$.
If $\mathsf{cf}(G)\leq 2$ then $\langle G\rangle$ is a tree (with bidirected arcs in $G$), and hence there is a unique $x\rightarrow y$-path $P$ for any pair $x,y$ of distinct vertices in $G$.
This means that for any set $S\subseteq V(G)$, either $P$ is an $x\rightarrow y$-path in $G - S$ or there is no $x\rightarrow y$-path in $G - S$ at all.
In other words, the set $\{P\}$ is a $k$-representative family for every~$k$.

The situation is significantly different even for $\mathsf{cf}(G) = 3$.
Consider the strong digraph in \autoref{fig:cf3}.
\begin{figure}[htpb]
  \centering
  \includegraphics{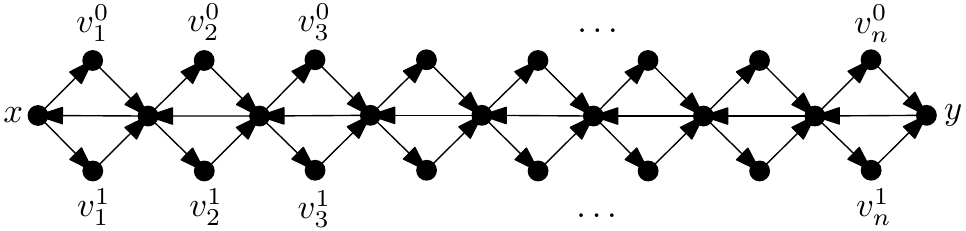}
  \caption{A digraph $G$ with $\mathsf{cf}(G) = 3$ where every $k$-representative set of $x\rightarrow y$-paths has size $2^{\Omega(k)}\log n$.}
  \label{fig:cf3}
\end{figure}
There are exactly $2^n$ different $x\rightarrow y$-paths in $G$; each such path corresponds to a $0$-$1$ vector of length $n$.
Thus, if we remove vertex $v_i^0$ (resp., $v_i^1$), then only those paths survive that have 1 (resp.,~0) at the $i$-th coordinate.
Therefore, a collection of paths in this graph is $k$-representative only if no matter how we fix the values of $k$ arbitrary coordinates, there is a vector in the collection satisfying these constraints.
Kleitman and Spencer~\cite{KleitmanSpencer1973} proved that every collection of vectors of length $n$ satisfying this property has size $2^{\Omega(k)}\cdot\log n$ (more precisely, they gave a lower bound on the dual question of $k$-independent families, but it can be easily rephrased into this lower bound).
The main result of this subsection is that in a digraph of bounded circumference, we can construct a $k$-representative family of paths whose size is somewhat worse than this lower bound: assuming that the circumference is bounded by a constant, there is such a family of size $2^{\mathcal{O}(k^2\log k)}\cdot\log n$ (\autoref{thm:bcl_computerepresenetativeforboundedcf}).

If the paths we are considering have bounded length, then the results of Monien~\cite{Monien1985} give a representative set of bounded size:
\begin{proposition}[\cite{Monien1985}]
\label{thm:bcl_smallrepresentativeset}
  Let $G$ be a digraph, let $x,y\in V(G)$ and let $k\in\mathbb N$.
  If every $x\rightarrow y$-path in~$G$ has length at most $\ell$, then a $k$-representative set containing at most $\ell^k$ many $x\rightarrow y$-paths can be found in time $\ell^{\mathcal{O}(k)}\cdot n^{\mathcal{O}(1)}$.
\end{proposition}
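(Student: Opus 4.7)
The plan is a dynamic program over path length combined with a representative-family reduction at each step. For every vertex $v$ and every length $i\in\{0,1,\ldots,\ell\}$ I would maintain a $k$-representative collection $\mathcal{P}_v^i$ of $x\rightarrow v$-paths of length exactly $i$, starting from $\mathcal{P}_x^0=\{(x)\}$ and $\mathcal{P}_v^0=\emptyset$ for $v\neq x$. To advance from length $i$ to length $i+1$, for every vertex $v$ I would form the candidate family $\mathcal{C}_v^{i+1}$ of all paths obtained by extending some $P\in\mathcal{P}_u^i$ by an in-arc $(u,v)$ with $v\notin V(P)$, and then replace $\mathcal{C}_v^{i+1}$ by a small $k$-representative subfamily $\mathcal{P}_v^{i+1}$. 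The final output is $\bigcup_{i=0}^{\ell}\mathcal{P}_y^i$.

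Correctness follows by induction on $i$: if some $x\rightarrow v$-path of length $i+1$ avoids a $k$-set $S$, then its length-$i$ prefix avoids $S$ and is $k$-represented in $\mathcal{P}_u^i$, so some extension in $\mathcal{C}_v^{i+1}$ avoids $S$, and the reduction step preserves this property. The reduction itself rests on the classical representative-sets bound: in any inclusion-minimal $k$-representative subfamily of a collection of sets of size at most $s$, every surviving set $A$ admits a witness $B_A$ of size at most $k$ that is disjoint from $A$ but meets every other surviving set; feeding the pairs $(A,B_A)$ into Bollob\'as's set-pair inequality caps the subfamily at $\binom{s+k}{k}$ elements.

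Since each path of length at most $\ell$ has at most $\ell+1$ vertices, this bounds $|\mathcal{P}_v^i|$ by $\binom{\ell+1+k}{k}=\ell^{\mathcal{O}(k)}$, so the returned family also has size $\ell^{\mathcal{O}(k)}$; a sharper accounting using $|V(P)|=i+1$ at step $i$ recovers the stated $\ell^k$ bound. The chief obstacle is executing the reduction within the claimed time budget, because a naive search for a witness set of size $\leq k$ would cost $n^{\Theta(k)}$ per candidate. The clean remedy is to realise the reduction via matroid truncation: view each path as a vertex set and run a polynomial-time basis-exchange procedure in the $(i+1+k)$-truncated uniform matroid on $V(G)$ to extract a representative subfamily of the claimed size. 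Slotting this subroutine into each DP step gives an overall running time of $\ell^{\mathcal{O}(k)}\cdot n^{\mathcal{O}(1)}$, as required.
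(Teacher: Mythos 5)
Your dynamic program does not establish the claimed invariant. In the inductive step you argue that if an $x\to v$-path $P$ of length $i{+}1$ avoids a $k$-set $S$, then its length-$i$ prefix $P_0$ (ending at the penultimate vertex $u$) avoids $S$, hence some $P_0'\in\mathcal{P}_u^i$ avoids $S$, and $P_0'\circ(u,v)\in\mathcal{C}_v^{i+1}$. But $k$-representativity of $\mathcal{P}_u^i$ only supplies a representative avoiding $S$, not one avoiding $S\cup\{v\}$---a set of size up to $k{+}1$. The chosen $P_0'$ may well contain $v$, in which case $P_0'\circ(u,v)$ is not a simple path and never enters $\mathcal{C}_v^{i+1}$, so $\mathcal{C}_v^{i+1}$ need not be $k$-representative and the invariant is not preserved. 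The standard repair is to carry $(k+\ell-i)$-representativity at step $i$, so that every vertex that might still be appended to the path is charged to the allowance. But then the Bollob\'as bound only gives $|\mathcal{P}_v^i|\le\binom{k+\ell-1}{i-1}$, whose maximum over $i$ is $2^{\Theta(k+\ell)}$; for $k\ll\ell$ this is exponentially larger than $\ell^{\mathcal{O}(k)}$, so the patched DP meets neither the stated size nor time bound. (Your matroid-truncation step likewise uses truncation level $i{+}1{+}k$ where $k{+}\ell{+}1$ would be required, for the same reason.)

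The intended argument is more elementary and needs no representative-family machinery at all. Find one $x\to y$-path $P$ (it has at most $\ell-1$ internal vertices, by hypothesis). If $k=0$, return $\{P\}$; otherwise, for each internal vertex $v$ of $P$, recursively build a $(k{-}1)$-representative family $\mathcal{P}_v$ of $x\to y$-paths in $G-v$, and output $\{P\}\cup\bigcup_v\mathcal{P}_v$. Correctness: if a $k$-set $S$ leaves some $x\to y$-path alive and hits $P$, pick $v\in V(P)\cap S$; then $G-v$ has an $x\to y$-path avoiding $S\setminus\{v\}$, so by induction $\mathcal{P}_v$ contains one, which avoids $v$ automatically and hence all of $S$. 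The output size $T(k)$ obeys $T(k)\le 1+(\ell-1)T(k-1)$ with $T(0)=1$, so $T(k)\le\sum_{i=0}^{k}(\ell-1)^i\le\ell^k$, and the recursion runs in $\ell^{\mathcal{O}(k)}\cdot n^{\mathcal{O}(1)}$ time, exactly the bounds in the proposition.
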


Recently, Fomin et al.~\cite{FominEtAl2014} improved the computation of representative sets of paths, both in terms of the size of the set and the run time, but \autoref{thm:bcl_smallrepresentativeset} will be sufficient for our purposes.

We will show that in strong digraphs of bounded circumference, a $k$-representative set of bounded size can be found even if there is no bound on the length of the $x\rightarrow y$-paths.
The proof uses a certain family of hash functions.
Let $\mathcal F$ be a family of functions $f:U\rightarrow\{1,\hdots,k\}$ on the universe~$U$.
We say that $\mathcal F$ is a $k$-perfect family of hash functions if for every $X\subseteq U$ of size at most $k$, there is an $f\in\mathcal F$ that is injective on~$X$, that is, $f(x)\not=f(x')$ for any two distinct $x,x'\in X$.
Alon et al.~\cite{AlonEtAl1995} showed that a $k$-perfect family $\mathcal F$ of size $2^{\mathcal{O}(k)}\log|U|$ exists, and can be constructed in time $2^{\mathcal{O}(k)}|U|^{\mathcal{O}(1)}$.

Before presenting the construction of representative sets for strong digraphs of bounded circumference, let us explain how $k$-perfect families of hash functions can be used for the construction in the case of the graph of \autoref{fig:cf3}.
Let $\mathcal F$ be a $k$-perfect family of hash functions over the universe $U = \{1,\hdots,n\}$.
For every $f\in \mathcal F$ and every function $h:\{1,\hdots,k\}\rightarrow\{0,1\}$, we add to the set the path that used vertex $v_i^{h(f(i))}$ for every $i\in\{1,\hdots,n\}$.
Let $S$ be a set of vertices that contains, for some $X\subseteq U$ of size $k$ and function $g:X\rightarrow\{0,1\}$, the vertices~$v_i^{g(i)}$.
As $\mathcal F$ is a $k$-perfect family, there is an $f\in\mathcal F$ that is injective on $X$.
For every $i\in X$, let us define $h(f(i)) = 1 - g(i)$; as~$f$ is injective on $X$, this is well-defined and gives a function $h:\{1,\hdots,k\}\rightarrow\{0,1\}$.
We claim that the path $P$ introduced for this choice of $f$ and $h$ is disjoint from $S$.
For $i\notin X$, it does not matter if~$P$ uses $v_i^0$ or $v_i^1$.
For $i\in X$, set~$S$ contains~$v_i^{f(i)}$.
By our definition of $h$, we have $h(f(i)) = 1 - g(i)$, hence $P$ uses $v_i^{1-g(i)}$, avoiding $S$.
Thus $P$ is indeed disjoint from $S$.

The following proof generalizes this construction to arbitrary strong digraphs of bounded circumference: we construct the path by concatenating a series of fairly independent ``short jumps.''
The short jumps are taken from a representative set of short paths; the choice of which short path to select is determined by $k$-perfect hash function, similarly to the argument in the previous paragraph.

\restatecomputerepresentativesetofpathsforboundedcircum*
\begin{proof}
  Let us fix an arbitrary $x\rightarrow y$-path $R$ (which exists as $G$ is strong) to guide our construction.
  Denote by $r$ the length of $R$ and by $v_0 = x, v_1,\hdots,v_{r-1}, v_r=y$ its vertices.
  We only consider a subset of vertices $z_i$ at distance $d = 2\mathsf{cf}(G)^4$ from each other or more formally $z_i = v_{i\cdot d}$.
  These~$z_i$ will be the anchor vertices for our short jumps.
  We divide then~$z_i$ further into $k+1$ subsets~$Z^o$ by taking every $(k+1)$st vertex starting at offset $o$.
  Formally we define $z_i^o = z_{i(k+1) + o}$ and $Z^o = \{z^o_i\}$.
  These subsets have the advantage that one of these is far away from a deletion set $S$ of size at most~$k$.
  For this we fix a set $S$ of size at most $k$ such that a $x \rightarrow y$-path in $G - S$ exists.
  
  \begin{claim}
  \label{thm:bcl_shortjumps_claim1}
    There is some $o_S \in\{0,\hdots,k\}$ such that
    \begin{compactitem}
    \item $\mathsf{dist}_G(Z^{o_S},S) > 2(\mathsf{cf}(G)-2)$ and
    \item $\mathsf{dist}_G(S,Z^{o_S}) > 2(\mathsf{cf}(G) - 2)$.
    \end{compactitem}
  \end{claim}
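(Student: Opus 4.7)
The approach is to pigeonhole over the $k+1$ offset classes and show that each $s \in S$ rules out at most one offset. Set $c := 2(\mathsf{cf}(G)-2)$ and, for each $s \in S$, call an anchor $z_i$ \emph{bad for $s$} if $\mathsf{dist}_G(z_i, s) \leq c$ or $\mathsf{dist}_G(s, z_i) \leq c$. The plan is to prove that any two bad anchors for the same $s$ must share the same index, so all bad anchors of $s$ lie in a single offset class, namely the one with residue $i \bmod (k+1)$. Given this, the $|S| \leq k$ vertices of $S$ kill at most $k$ of the $k+1$ offsets, and any surviving offset can be taken as $o_S$.

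The key step is therefore to derive a contradiction from the existence of two bad anchors $z_i, z_j$ for the same $s$ with $i < j$, using that consecutive anchors on $R$ are spaced by $d = 2\mathsf{cf}(G)^4$. I split into four sub-cases depending on which direction inequality holds for $z_i$ and for $z_j$. When one bound controls $\mathsf{dist}_G(z_i, s)$ and the other $\mathsf{dist}_G(s, z_j)$, the triangle inequality gives $\mathsf{dist}_G(z_i, z_j) \leq 2c$ directly, and Lemma~\autoref{thm:bcl_distboundedcircumsquared} (valid since $G$ is strong) bounds $|R[z_i, z_j]| \leq (\mathsf{cf}(G)-1)^2 \cdot 2c$. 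The symmetric case yields $\mathsf{dist}_G(z_j, z_i) \leq 2c$, to which Lemma~\autoref{thm:bcl_distboundedcircum} alone applies, giving $|R[z_i, z_j]| \leq 2(\mathsf{cf}(G)-1)c$. In the two remaining cases I first apply Lemma~\autoref{thm:bcl_distboundedcircum} to flip one inequality at the cost of a factor $\mathsf{cf}(G)-1$, then combine via the triangle inequality to obtain $\mathsf{dist}_G(z_j, z_i) \leq \mathsf{cf}(G) \cdot c$, and apply Lemma~\autoref{thm:bcl_distboundedcircum} once more to $R[z_i, z_j]$. In every case the resulting estimate on $(j-i)\,d = |R[z_i, z_j]|$ is $O(\mathsf{cf}(G)^3)$, which by the choice $d = 2\mathsf{cf}(G)^4$ forces $j - i < 1$, contradicting $i < j$.

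The principal obstacle is the asymmetry of directed distances: Lemma~\autoref{thm:bcl_distboundedcircum} only relates forward and backward distances at a cost of a $(\mathsf{cf}(G)-1)$ blowup, so the four combinations of directions produce different constants. The spacing $d = 2\mathsf{cf}(G)^4$ is chosen precisely to dominate all of them — one factor of $\mathsf{cf}(G)$ from the $c = \Theta(\mathsf{cf}(G))$ slack and up to two further factors from converting between forward and backward path lengths. Once index-uniqueness is established, the pigeonhole counting is immediate.
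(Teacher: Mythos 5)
Your proof is correct. It rests on exactly the same two ingredients the paper uses — the spacing $d=2\mathsf{cf}(G)^4$ between anchors is too large to let two distinct anchors both be close to the same $s$ (via the distance-ratio lemmas), and then a pigeonhole over the $k+1$ offset classes — but it organizes the directional bookkeeping differently, which is worth noting. You define ``bad'' symmetrically with the literal threshold $c=2(\mathsf{cf}(G)-2)$ in \emph{both} directions, which forces the four-way case split over which of $\mathsf{dist}_G(z_i,s)$, $\mathsf{dist}_G(s,z_i)$, $\mathsf{dist}_G(z_j,s)$, $\mathsf{dist}_G(s,z_j)$ is small. The paper instead inflates the threshold and uses \emph{one} direction only: it shows each $s$ can be within forward distance $2\mathsf{cf}(G)^2$ of at most one anchor (a single case, using only \autoref{thm:bcl_distboundedcircum}), pigeonholes to get $\mathsf{dist}_G(Z^{o_S},S)>2\mathsf{cf}(G)^2$, and then recovers the reverse bound in one line via \autoref{thm:bcl_distboundedcircum}: $\mathsf{dist}_G(S,Z^{o_S})>2\mathsf{cf}(G)^2/(\mathsf{cf}(G)-1)>2(\mathsf{cf}(G)-2)$. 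So the paper's inflated threshold is precisely what lets it skip your case analysis and avoid invoking \autoref{thm:bcl_distboundedcircumsquared}. Both routes are valid; yours is more explicit about where each $(\mathsf{cf}(G)-1)$ factor is spent, the paper's is shorter because it spends a whole factor of $\mathsf{cf}(G)$ up front to buy a one-sided argument.
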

  \begin{claimproof}
    We claim that for every $s\in S$ there is at most one value $o\in\{0,\hdots,k\}$ such that $\mathsf{dist}_G(Z^o,s)\leq 2\mathsf{cf}(G)^2$.
    Suppose that $\mathsf{dist}_G(w_1,s),\mathsf{dist}(w_2,s)\leq 2\mathsf{cf}(G)^2$ for some $w_1\in Z^{o_1}$ and $w_2\in Z^{o_2}$ with $o_1\not=o_2$.
    Assume, without loss of generality, that $w_1$ is before~$w_2$ on $R$; then $R[w_1,w_2]$ has length at least~$d$ (as different $z_i$ have distance at least~$d$).
    By \autoref{thm:bcl_distboundedcircum}, we have
    \begin{equation*}
      \mathsf{dist}_G(s,w_1)\leq (\mathsf{cf}(G)-1)\mathsf{dist}_G(w_1,s)\leq (\mathsf{cf}(G)-1)\cdot 2\mathsf{cf}(G)^2,
    \end{equation*}
    thus $\mathsf{dist}_G(w_2,w_1)\leq \mathsf{dist}_G(w_2,s) + \mathsf{dist}_G(s,w_1)\leq 2\mathsf{cf}(G)^3$.
    Again by \autoref{thm:bcl_distboundedcircum}, we have\linebreak $d\leq |R[w_1,w_2]|\leq (\mathsf{cf}(G)-1)\mathsf{dist}_G(w_2,w_1) < 2\mathsf{cf}(G)^4$, a contradiction.
    Thus, we have proved that for each of the $k$ vertices $s\in S$ there is at most one value $o\in\{0,\hdots,k\}$ such that~$s$ is at distance at most $2\mathsf{cf}(G)^2$ from~$Z^o$.
    Therefore, by the pigeon-hole principle there is an $o_S\in\{0,\hdots,k\}$ such that $\mathsf{dist}_G(Z^{o_S},S) > 2\mathsf{cf}(G)^2$.
    By \autoref{thm:bcl_distboundedcircum} this also implies $\mathsf{dist}_G(S,Z^{o_S}) > 2\mathsf{cf}(G)^2/(\mathsf{cf}(G)-1) > 2(\mathsf{cf}(G)-2)$.
    This completes the proof of \autoref{thm:bcl_shortjumps_claim1}.
  \end{claimproof}

  Thus we know that a small surrounding of one of the $Z^o$'s will be disjoint from $S$.
  Furthermore, \autoref{thm:bcl_distboundedcircumsquared} gives a bound on the length of a path $P$ between two consecutive vertices $z^o_i$ and~$z^o_{i+1}$ of $Z^o$, by $|P| \leq (\mathsf{cf}(G) - 1)^2 |R[z^o_i,z^o_{i+1}]| = \mathcal{O}(\mathsf{cf}(G)^7k)$.
  This allows us to introduce sets~$\mathcal{P}^o_i$ of $k$-representative $z^o_i \rightarrow z^o_{i+1}$-paths using the algorithm of \autoref{thm:bcl_smallrepresentativeset} and have their size bounded by some $B = \mathcal{O}(\mathsf{cf}(G)^7k)^k = \mathsf{cf}(G)^{\mathcal{O}(k\log k)}$ (using $k = 2^{\log k}$ and $\mathsf{cf}(G) \geq 2$).
  By duplicating paths as necessary we can assume that every $P^o_i$ has size exactly~$B$.
  
  To make sure that our path collections with offset are connected to $x$ and $y$ we construct additional sets $\mathcal{P}^o_x$ and $\mathcal{P}^o_y$ as follows:
  Let $z^o_x$ be the first vertex in $Z^o$ \emph{after} $x$ and $z^o_y$ the last vertex \emph{before} $y$.
  Then compute, using the algorithm of \autoref{thm:bcl_smallrepresentativeset},  $\mathcal{P}^o_x$ as a $k$-representative set of $x \rightarrow z^o_x$-paths and $\mathcal{P}^o_y$ as a $k$-representative set of $z^o_y \rightarrow y$-paths.
  As the distances between these pairs of vertices are bounded by the distance of neighboring vertices in $Z^o$ we can analogously get a size bound of $B$ for $\mathcal{P}^o_x$ and $\mathcal{P}^o_y$.
  Note that for some offsets $o$ either $\mathcal{P}^o_x$ or $\mathcal{P}^o_y$ may align with some~$\mathcal{P}^o_i$; then we leave out this $\mathcal{P}^o_i$ as we do not need it anymore.
  We define the set of these relevant sets as $\mathcal{P}^o := \{\mathcal{P}^o_x$, $\mathcal{P}^o_y\} \cup \{\mathcal{P}^o_i\}_i$ for each $o$.
  
  \begin{claim}
    \label{thm:bcl_shortjumps_claim2}
   	Every $\mathcal{P}^{o_S}_T \in \mathcal{P}^{o_S}$ contains a path disjoint from $S$.
  \end{claim}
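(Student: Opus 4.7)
My plan is to invoke the defining property of $k$-representative sets: for each $\mathcal{P}^{o_S}_T\in\mathcal{P}^{o_S}$ it suffices to exhibit \emph{some} corresponding path in $G-S$ (an $x\to z^{o_S}_x$-path when $T=x$, a $z^{o_S}_i\to z^{o_S}_{i+1}$-path when $T=i$, or a $z^{o_S}_y\to y$-path when $T=y$), after which $k$-representativeness immediately delivers a member of $\mathcal{P}^{o_S}_T$ disjoint from $S$. To that end, fix an arbitrary $x\to y$-path $Q$ in $G-S$; such $Q$ exists by the standing assumption of the overall proof.

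The first technical step is to produce short hops between $Z^{o_S}$ and $Q$ inside $G-S$. Applying \autoref{thm:bcl_distsecondpath} to the two $x\to y$-paths $R$ and $Q$ gives, for every $z_j:=z^{o_S}_j$ on $R$, vertices $p_j,q_j\in V(Q)$ with $\dist_G(p_j,z_j)\le 2(\cf(G)-2)$ and $\dist_G(z_j,q_j)\le 2(\cf(G)-2)$. Combined with \autoref{thm:bcl_shortjumps_claim1}, which guarantees $\dist_G(Z^{o_S},S)>2(\cf(G)-2)$ and $\dist_G(S,Z^{o_S})>2(\cf(G)-2)$, this forces any shortest $z_j\to q_j$- and $p_j\to z_j$-path in $G$ to avoid $S$ entirely: a vertex of $S$ on such a path would sit within distance $2(\cf(G)-2)$ of $z_j$, contradicting the distance bound. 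Hence both short hops $z_j\leadsto q_j$ and $p_j\leadsto z_j$ are realized in $G-S$.

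The boundary cases $\mathcal{P}^{o_S}_x$ and $\mathcal{P}^{o_S}_y$ are now painless, because $x$ and $y$ themselves lie on $Q$: prepending $Q[x,p_{j_x}]$ to the short $p_{j_x}\to z^{o_S}_x$-hop yields the needed $x\to z^{o_S}_x$-path in $G-S$, and symmetrically at the $y$-end. For the interior sets $\mathcal{P}^{o_S}_i$, my plan is to select $q_j$ as the \emph{last} vertex on $Q$ admitting a $z_j\to q_j$-hop of length $\le 2(\cf(G)-2)$ and $p_{j+1}$ as the \emph{first} vertex on $Q$ admitting a $p_{j+1}\to z_{j+1}$-hop of that length, whereupon the walk $z_j\to q_j\to Q[q_j,p_{j+1}]\to p_{j+1}\to z_{j+1}$ lies in $G-S$ and contains the desired $z_j\to z_{j+1}$-path --- provided $q_j$ precedes $p_{j+1}$ along $Q$.

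The hard part is exactly this monotonicity on $Q$. The plan is to argue by contradiction: if $p_{j+1}$ appeared strictly before $q_j$ on $Q$, then $Q[p_{j+1},q_j]$ would be a $p_{j+1}\to q_j$-path in $G$, while concatenating the reverse-direction detours $q_j\to z_j$, $R[z_j,z_{j+1}]$, and $z_{j+1}\to p_{j+1}$ (their lengths controlled through \autoref{thm:bcl_distboundedcircum}) would give a $q_j\to p_{j+1}$-walk of length $d(k+1)+\mathcal{O}(\cf(G)^2)$ in $G$. Because $d(k+1)=2\cf(G)^4(k+1)$ vastly dwarfs the $\mathcal{O}(\cf(G)^2)$ contribution of the two short detours, the generous spacing $d=2\cf(G)^4$ chosen for $Z^{o_S}$ at the very start of the proof leaves enough slack to derive, via another application of \autoref{thm:bcl_distboundedcircum} to the opposing $p_{j+1}\to q_j$- and $q_j\to p_{j+1}$-paths, a contradiction with either the circumference bound or the maximality/minimality in the choices of $q_j$ and $p_{j+1}$. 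This monotonicity step is where the parameters fixed at the outset genuinely earn their keep, and I expect it to be the most delicate piece of the argument.
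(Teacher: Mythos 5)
Your high-level plan — reduce Claim~2 to exhibiting a single $x_T\to y_T$-path in $G-S$ and let $k$-representativeness finish — is exactly the paper's, and your handling of the boundary sets $\mathcal{P}^{o_S}_x$ and $\mathcal{P}^{o_S}_y$ matches the paper's argument as well. The gap is in the interior case, and it is precisely where you flag the difficulty.

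The monotonicity step you sketch does not go through by the argument you propose. You want to derive a contradiction from the existence of a long $q_j\to z_j \to R[z_j,z_{j+1}] \to z_{j+1}\to p_{j+1}$ \emph{walk} and the $p_{j+1}\to q_j$-path $Q[p_{j+1},q_j]$ via \autoref{thm:bcl_distboundedcircum}. But \autoref{thm:bcl_distboundedcircum} compares \emph{paths}, and the $q_j\to p_{j+1}$-path contained in that long walk may be very short: the walk is highly non-simple (the two $O(\cf^2)$-length detours can intersect each other and $R$), so nothing forces the extracted path to inherit the $\Theta(dk)$ length. Applying \autoref{thm:bcl_distboundedcircum} to $Q[p_{j+1},q_j]$ versus that (possibly short) extracted path therefore only yields bounded-ratio inequalities between two unconstrained lengths; no contradiction with the circumference follows, and I do not see how the maximality/minimality of $q_j$ and $p_{j+1}$ enters to rescue it. In short, the monotone correspondence along $Q$ is not established.

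The paper avoids ever needing such a correspondence by applying \autoref{thm:bcl_distsecondpath} twice, with the second application on the \emph{rerouted} path. Concretely: after taking the short outgoing hop $Q_x\colon x_T\to q_x$ (disjoint from $S$ by Claim~1) and splicing it with $Q[q_x,y]$, one extracts an $x_T\to y$-path $\hat{Q}_x$ lying entirely in $G-S$. Since $\hat{Q}_x$ and $R[x_T,y]$ now share both endpoints, \autoref{thm:bcl_distsecondpath} applies directly and gives a vertex $q_y$ \emph{on $\hat{Q}_x$} (not on $Q$!) with $\dist(q_y,y_T)\le 2(\cf(G)-2)$; the short incoming hop $Q_y\colon q_y\to y_T$ is disjoint from $S$ by Claim~1, and $\hat{Q}_x[x_T,q_y]\circ Q_y$ yields the needed $x_T\to y_T$-path. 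There is no ordering question because the target of the second hop is located on a path that already starts at $x_T$. If you want to save your one-pass presentation, this is the fix: after the outgoing hop from $z_j$, look for the incoming hop into $z_{j+1}$ on the rerouted path rather than on $Q$, and observe that if that landing point fell on the outgoing hop itself, then $\dist(z_j,z_{j+1})\le 4(\cf(G)-2)$, contradicting $|R[z_j,z_{j+1}]|=(k+1)d=2(k+1)\cf(G)^4$ via \autoref{thm:bcl_distboundedcircumsquared}.
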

  \begin{claimproof}
	Consider a set $\mathcal{P}^{o_S}_T$ with $T \in \{x, y, i\}$ such that the paths in $\mathcal{P}^{o_S}_T$ are $x_T \rightarrow y_T$-paths.
	As above sets are $k$-representative sets of paths, we must only show that there is any $x_T \rightarrow y_T$-path in $G - S$.
	By assumption there is a $x \rightarrow y$~path $Q$ in $G -S$.
	By \autoref{thm:bcl_distsecondpath} we can find a $q_x \in V(Q)$ such that $\mathsf{dist}(x_T, q_x) \leq 2(\mathsf{cf}(G) - 2)$ and a $x_T \rightarrow q_x$-path $Q_x$ in~$G$ achieving this distance.
	By \autoref{thm:bcl_shortjumps_claim1} we know that $Q_x$ is disjoint from~$S$ and therefore, $Q_x \circ Q[q_x,y]$ is a $q_x \rightarrow y$ walk disjoint from $S$.
	Let $\hat{Q}_x$ be a $q_x \rightarrow y$-path contained in this walk.
	Another application of \autoref{thm:bcl_distsecondpath} yields a vertex $q_y \in V(\hat{Q})$ with $\mathsf{dist}(q_y,y_T) \leq 2(\mathsf{cf}(G) - 2)$ and a $q_y \rightarrow y_T$-path~$Q_y$ in $G$ achieving this distance.
	Again, by \autoref{thm:bcl_shortjumps_claim1}, $Q_y$ is disjoint from~$S$.
	Then $\hat{Q}_x[x_T, q_y] \circ Q_y$ contains a $x_T \rightarrow y_T$-path as proposed.
	This completes the proof of \autoref{thm:bcl_shortjumps_claim2}.
  \end{claimproof}
	
  Of course, enumerating all possible tuples of paths would construct to many candidates, as the size of~$\mathcal{P}^{o_S}$ can be $\Omega(m)$.
  Therefore, we want to use a $f(k)$-perfect family of hash functions.
  This is possible if we can bound the number of intersections with the sets $\mathcal{P}^{o_S}$ by~$f(k)$.
  \begin{claim}
	\label{thm:bcl_shortjumps_claim3}
	The set $S$ intersects at most $2k$ sets of $\mathcal{P}^{o_S}$.
  \end{claim}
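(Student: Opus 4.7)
The plan is a vertex-by-vertex count: I will show that each $s \in S$ lies on paths in at most two sets of $\mathcal{P}^{o_S}$; since $|S|\leq k$, summing over $s \in S$ then yields the claimed bound $2k$.

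The key geometric observation is that every path in $\mathcal{P}^{o_S}_i$ shares its endpoints $z^{o_S}_i$ and $z^{o_S}_{i+1}$ with the reference subpath $R_i := R[z^{o_S}_i, z^{o_S}_{i+1}]$. Hence, whenever $s$ lies on some $P \in \mathcal{P}^{o_S}_i$, applying \autoref{thm:bcl_distsecondpath} to the two $z^{o_S}_i \to z^{o_S}_{i+1}$-paths $P$ and $R_i$ gives both $\dist_G(R_i, s) \leq 2(\cf(G) - 2)$ and $\dist_G(s, R_i) \leq 2(\cf(G) - 2)$. So $s$ is forward- and backward-close to every reference block $R_i$ that it witnesses, within a constant depending only on $\cf(G)$.

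Suppose for contradiction that $s$ witnesses three indices $i_1 < i_2 < i_3$. Pick $r_1 \in R_{i_1}$ with $\dist_G(r_1, s) \leq 2(\cf(G) - 2)$ and $r_3 \in R_{i_3}$ with $\dist_G(s, r_3) \leq 2(\cf(G) - 2)$; concatenation produces an $r_1 \to r_3$ path in $G$ of length at most $4(\cf(G) - 2)$. Since $G$ is strong, \autoref{thm:bcl_distboundedcircumsquared} then yields a reverse $r_3 \to r_1$ path of length at most $4(\cf(G) - 1)^2(\cf(G) - 2)$. Pairing this reverse path with the forward subpath $R[r_1, r_3]$ of $R$ and applying \autoref{thm:bcl_distboundedcircum} bounds $|R[r_1, r_3]| \leq 4(\cf(G) - 1)^3(\cf(G) - 2) < 4\cf(G)^4$. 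On the other hand, since $i_3 \geq i_1 + 2$, the subpath $R[r_1, r_3]$ must contain at least one full inter-anchor block of $Z^{o_S}$, so $|R[r_1, r_3]| \geq (k+1)d = 2(k+1)\cf(G)^4 \geq 4\cf(G)^4$ whenever $k \geq 1$. These bounds are incompatible, giving the desired contradiction; the case $k = 0$ is trivial since then $S = \emptyset$.

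The main obstacle is handling the extremal sets $\mathcal{P}^{o_S}_x$ and $\mathcal{P}^{o_S}_y$, whose endpoints involve $x$ or $y$ rather than two anchors of $Z^{o_S}$. Treating $x$ and $y$ as virtual anchors, the same distance argument carries through near the boundary, although the lower bound $|R[r_1, r_3]| \geq (k+1)d$ needs a brief separate verification there. Observe that the choice $d = 2\cf(G)^4$ in the construction is calibrated precisely so that the lower and upper bounds on $|R[r_1, r_3]|$ collide: a looser choice would either weaken the constant $2$ in the ``$2k$'' bound or force a larger representative-set size down the line, so the real technical care is in pinning these constants down rather than in the structure of the argument itself.
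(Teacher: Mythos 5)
Your proof is correct and follows essentially the same strategy as the paper's: a vertex-by-vertex count, using \autoref{thm:bcl_distsecondpath} to place each $s\in S$ close (in both directions) to every reference block $R_i := R[z^{o_S}_i, z^{o_S}_{i+1}]$ it witnesses, then contradicting the large gap on $R$ between non-adjacent blocks. The one inessential difference is that you take a detour through an explicit reverse $r_3\to r_1$ path (citing \autoref{thm:bcl_distboundedcircumsquared} where you really want \autoref{thm:bcl_distboundedcircum} to bound $\dist_G(r_3,r_1)$) before applying \autoref{thm:bcl_distboundedcircum} once more, whereas the paper applies \autoref{thm:bcl_distboundedcircumsquared} directly to $R[r_1,r_3]$, giving a bound tighter by a factor of $\cf(G)-1$ and making the separate $k=0$ check unnecessary.
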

  \begin{claimproof}
    We show that $s\in S$ can intersect for at most two sets that share an endpoint, thus achieving the claimed size bound.
   	Suppose, for sake of contradiction, that $s$ intersects two paths~$Q_1$ and $Q_2$ out of sets in $\mathcal{P}^{o_S}$ that do not share an endpoint.
   	Let each $Q_i$ be an $x_i \rightarrow y_i$-path.
   	Assume, without loss of generality, that the order in which the endpoints appear on $R$ is $x_1, y_1, x_2, y_2$, and that $|R[y_1,x_2] \geq 2\textsf{cf}(G)^5$ (by the distance of the $z_i$.
   	At the same time, $R[x_i, y_i]$ and $Q_i$ connect the same endpoints, hence \autoref{thm:bcl_distsecondpath} implies that there is a $t_1 \in V(R[x_1, y_1])$ with $\mathsf{dist}(t_1,s) \leq 2(\mathsf{cf}(G) - 2)$ and a $t_2 \in V(R[x_2, y_2])$ with $\mathsf{dist}(s, t_2) \leq 2(\mathsf{cf}(G) - 2)$ as $s \in Q_1 \cap Q_2$.
   	This implies that $\mathsf{dist}(t_1, t_2) \leq \mathsf{dist}(t_1,s) + \mathsf{dist}(s, t_2) \leq 4(\mathsf{cf}(G) - 2)$.
	If we now consider $R[t_1,t_2]$, we get $|R[t_1,t_2]| \geq |R[y_1,x_2] \geq 2\mathsf{cf}(G)^5  > (\mathsf{cf}(G) - 1)^2 \cdot \mathsf{dist}(t_1, t_2)$ in contradiction to \autoref{thm:bcl_distboundedcircumsquared}.
	This completes the proof of \autoref{thm:bcl_shortjumps_claim3}.
  \end{claimproof}

  We can now construct a $2k$-perfect family $\Psi^o$ of hash functions over the universe $\mathcal{P}^o$ for each~$o$.
  For $o_S$ this family contains an element $\psi$ which gives all set of $\mathcal{P}^{o_S}$ which are intersected by $S$ a different number in $\{1, \hdots, 2k\}$ (by \autoref{thm:bcl_shortjumps_claim3}).
  Further, there is a map~$\pi_\textsf{free}$ that maps the numbers of $\{1,\hdots,2k\}$ to a number of $\{1,\hdots,B\}$, such that for every $\mathcal{P} \in \mathcal{P}^{o_S}$ which has a path intersected by $S$, we have that the $\psi \circ \pi_\textsf{free}(\mathcal{P})$th path of $\mathcal{P}$ is not intersected by $S$.
  There is such a path by \autoref{thm:bcl_shortjumps_claim2}.
  Denote by $Q_{\psi,\pi_\textsf{free}}(\mathcal{P})$ this path.
  As we cannot know~$\pi_\textsf{free}$ in advance we create a set $\Pi$ of all possible functions from $\{1,\hdots,2k\}$ to~$\{1,\hdots,B\}$.
	
  We know that for the specific choices of $o_S$, $\psi$ and $\pi_\textsf{free}$ we get a that the union of paths in $\{Q_{\psi,\pi_\textsf{free}}(\mathcal{P}) | \mathcal{P} \in \mathcal{P}^{o_S}\}$ forms a $x \rightarrow y$ walk $W$ in $G - S$.
  Every $x \rightarrow y$-path within $W$ is also disjoint from $S$.
  Therefore, the set~$\mathcal P_{x,y,k}$ created as follows contains a path disjoint from~$S$:
  For every $o \in \{1,\hdots,k+1\}$, every $\psi \in \Psi$ and every $\pi \in \Pi$ consider the $x \rightarrow y$-walk $\Cup_{\mathcal{P} \in \mathcal{P}^{o}} Q_{\psi,\pi}(\mathcal{P})$ and introduce an arbitrary $x \rightarrow y$-path in it into $\mathcal P_{x,y,k}$.
	
  The size bound on $\mathcal P_{x,y,k}$ is proven by multiplying the possibilities for each choice:
  \begin{equation*}
    \underbrace{(k+1)}_{\textnormal{choice of $o$}} \cdot \underbrace{2^{\mathcal{O}(k)}\log m}_{|\Psi|}\cdot \underbrace{B^{2k}}_{|\Pi|} = \mathsf{cf}(G)^{\mathcal{O}(k^2\log k)}\log n.
  \end{equation*}
  The run time follows similarly.
\end{proof}


The previous lemma is very useful if we have a strong digraph of bounded circumference.
However, if we have a graph $G$ and a subset of vertices $T$ such that $\mathsf{cf}(G - T) \leq \ell$ it is not clear how to get a $k$-representative set of paths.
Instead we give a much weaker result which suffices for our algorithm.
We restrict our deletion sets from arbitrary sets $S$ of size at most $k$ to sets which additionally fulfill $\mathsf{cf}(G - S) \leq \ell$.
Additionally, instead of walks we consider closed walks connecting at least two vertices of $T$ after the deletion of $S$.
\begin{lemma}
\label{lem:prepostpath}
  Let $G$ be a digraph, let $s,t\in V(G)$, and let $k,d\in\mathbb Z_{\geq 0}$.
  In time $2^{\mathcal O(kd)}\cdot n^{\mathcal O(1)}$ we can construct collections $\mathcal{R}_{\leq 2d}$ and $\mathcal{R}_{> 2d}$, each of size $2^{\mathcal O(kd)}$, such that
  \begin{compactitem}
    \item $\mathcal{R}_{\leq 2d}$ contains only paths of length at most $2d$,
    \item $\mathcal{R}_{> 2d}$ contains pairs $(P_s,P_t)$, where each $P_s$ is a path of length $d$ starting at $s$, each $P_t$ is a path of length $d$ ending at $t$.
  \end{compactitem}	  
  Then, for every set $S$ of size at most $k$, if there is an $s \to t$-path $P$ disjoint from $S$ then there is a path $P'$ disjoint from $S$ with
  \begin{compactitem}
    \item  $P' \in  \mathcal{R}_{\leq 2d}$ if $|P| \leq 2d$ or
    \item $(P_s', P_t') \in \mathcal{R}_{> 2d}$, where $P'_s$ and $P'_t$ are the disjoint subpaths of $P'$ containing the first and the last~$d$ arcs of~$P'$ respectively, if $|P| > 2d$.
  \end{compactitem}	
\end{lemma}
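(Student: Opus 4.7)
My plan is to apply representative-set machinery via \autoref{thm:bcl_smallrepresentativeset} to two different families of bounded-length paths, treating the short-path case with a single representative collection and the long-path case by representing prefixes and suffixes separately.

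For $\mathcal{R}_{\leq 2d}$, the plan is to enforce the length bound via a standard layering trick. I would construct an auxiliary digraph $H$ that has $2d+1$ copies $(v,0),\dots,(v,2d)$ of each $v\in V(G)$ and, for every arc $(u,v)\in A(G)$, an arc $((u,i),(v,i+1))$ between consecutive layers. Designating $(s,0)$ as source and adding a super-sink $t^\star$ reachable via zero-length arcs from every $(t,i)$, the $(s,0)\to t^\star$-paths in $H$ correspond bijectively to $s\to t$-paths in $G$ of length at most $2d$, and all such paths have length at most $2d+1$. A $k$-vertex deletion in $G$ inflates to deleting at most $k(2d+1)$ copies in $H$, so invoking \autoref{thm:bcl_smallrepresentativeset} on $H$ with $\ell=2d+1$ and the scaled deletion parameter produces a representative family of $s\to t$-paths of length at most $2d$ in $G$, which I project back to obtain $\mathcal{R}_{\leq 2d}$ within the stated size and time bounds.

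For $\mathcal{R}_{> 2d}$, I would build two analogous $(d+1)$-layered auxiliary graphs to compute a $k$-representative family $\mathcal{R}_s$ of length-$d$ $s$-rooted paths and a $k$-representative family $\mathcal{R}_t$ of length-$d$ $t$-ending paths, and take $\mathcal{R}_{> 2d}:=\mathcal{R}_s\times\mathcal{R}_t$; the product respects the claimed size budget because $2\cdot\mathcal{O}(kd)\in\mathcal{O}(kd)$. Given any $S$ of size at most $k$ and any witness $s\to t$-path $P$ disjoint from $S$ with $|P|>2d$, the first-$d$-arc prefix $P_s^\star$ and the last-$d$-arc suffix $P_t^\star$ of $P$ are disjoint length-$d$ paths each disjoint from $S$; representativity of $\mathcal{R}_s,\mathcal{R}_t$ then supplies some $P_s'\in\mathcal{R}_s$, $P_t'\in\mathcal{R}_t$ individually disjoint from $S$.

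The hard part will be closing the loop: we need a single $s\to t$-path $P'$ in $G-S$ whose first and last $d$ arcs are exactly $P_s'$ and $P_t'$. Since $P_s'$ can end at a vertex $v\neq v^\star$ (the endpoint of $P_s^\star$) and dually for $P_t'$, the middle of the original $P$ is not directly reusable as the middle of $P'$; this is the main obstacle and is where the construction must be strengthened. The plan to overcome it is to refine $\mathcal{R}_s$ to be representative \emph{per endpoint} — for each possible end-vertex $v$ of a length-$d$ $s$-rooted path, it contains a representative $s\to v$-path of length $d$ avoiding any specified $k$-set — and analogously for $\mathcal{R}_t$ per start-vertex. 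This refinement can be folded into the layered-graph construction by running \autoref{thm:bcl_smallrepresentativeset} once per target copy $(v,d)$; the bookkeeping contributes only a polynomial factor absorbed into $n^{\mathcal{O}(1)}$ of the running time, while the combined family size remains $2^{\mathcal{O}(kd)}$ after deduplication. With endpoints preserved, we can select $P_s'$ ending at $v^\star$ and $P_t'$ starting at $w^\star$, and then concatenate $P_s'$ with the original middle of $P$ and with $P_t'$ to produce the desired $P'$ disjoint from $S$.
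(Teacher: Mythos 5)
Your plan for $\mathcal{R}_{>2d}$ has a genuine gap at exactly the point you flag as ``closing the loop'', and the per-endpoint refinement does not close it. After selecting $P_s'\in\mathcal{R}_s$ ending at $v^\star$ and $P_t'\in\mathcal{R}_t$ starting at $w^\star$, each individually disjoint from $S$, the object $P_s'\circ P[v^\star,w^\star]\circ P_t'$ is only a \emph{walk}: nothing prevents $P_s'$ from meeting $P_t'$, or either of them from meeting the (unboundedly long) middle segment of $P$, in internal vertices. The lemma requires an actual path $P'$ whose first $d$ arcs are exactly $P_s'$ and whose last $d$ arcs are exactly $P_t'$ (and this is used downstream: \autoref{lem:stongplus2} manipulates subpaths $Q[x,y]$, $Q[x',y']$ of such a realizing path $Q$). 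Shortcutting the walk to a simple path destroys the prefix/suffix structure, and you cannot repair this by raising the representativity parameter of $\mathcal{R}_s$, because the set it would have to avoid, $V(P_t')\cup V(P[v^\star,w^\star])$, has no bound in $k$ and $d$. This is precisely why the paper does not build prefix and suffix families independently: its proof is an induction on $d$ that peels off $s$ and $t$ one layer at a time, branches over the witness sets $X'\subseteq N^+(s)$, $Y'\subseteq N^-(t)$ of size $2^{\mathcal{O}(k)}$ from \autoref{thm:bcl_smallpathwitnessset}, and recurses on $G-\{s,t\}$; the induction hypothesis certifies the prefix, the suffix, \emph{and} a simple path realizing them simultaneously, so the extension $sQ't$ remains a path. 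That gives branching factor $2^{\mathcal{O}(k)}$ over $d$ levels, i.e.\ size $2^{\mathcal{O}(kd)}$ with no dependence on $n$.

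Separately, your size bounds do not come out. Invoking \autoref{thm:bcl_smallrepresentativeset} on the layered graph with the scaled deletion parameter $k(2d+1)$ yields a family of size $(2d+1)^{k(2d+1)}=2^{\Theta(kd\log d)}$, not $2^{\mathcal{O}(kd)}$; and the per-endpoint refinement of $\mathcal{R}_s$ necessarily carries a factor of the number of distinct endpoints reachable by length-$d$ paths, which can be $\Omega(n)$, so $|\mathcal{R}_{>2d}|=|\mathcal{R}_s|\cdot|\mathcal{R}_t|$ acquires an $n^2$ factor that no deduplication removes (paths with different endpoints are distinct). The latter is not cosmetic: the collection sizes from this lemma become branching degrees in a recursion of depth $\Theta(k+|T|)$ (\autoref{cor:isolating_t}), where a $\log^2 n$ factor per level is tolerable but a polynomial factor per level inflates to $n^{\Theta(k)}$ and destroys fixed-parameter tractability.
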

\begin{proof}
  The proof is by induction on $d$.
  For $d=0$, the construction is easy:
  If $s = t$ introduce the zero length path $\lbrace s \rbrace$ into $ \mathcal{R}_{\leq 2d}$.
  Otherwise set $\mathcal{R}_{\leq 2d} = \emptyset$.
  In any case let~$\mathcal{R}_{> 2d}$ contain a single pair $(P_s,P_t)$ with $P_s$ and $P_t$ being zero length paths containing only vertices $s$ and~$t$ respectively.
  This construction is correct since every $s \to t$-path starts in $s$ and ends in $t$ and a zero length path does only exist (and is unique) if $s = t$.

  Suppose that $d>0$, and that the statement of the lemma holds for $d - 1$.
  We start with  $\mathcal{R}_{\leq 2d} = \mathcal{R}_{> 2d} = \emptyset$.
  If $s=t$ or $s$ and $t$ are adjacent, then we introduce to $\mathcal{R}_{\leq 2d}$ the path of length~$0$ or~$1$, respectively.
  Afterwards, let us invoke the algorithm of \autoref{thm:bcl_smallpathwitness} on $X=N^+(s)$ and $Y=N^-(t)$.
  For every pair $(s',t')$ with $s'\in X'$ and $t'\in Y'$, let us use the induction hypothesis and invoke our algorithm on the digraph $G-\{s,t\}$, vertices $s',t'$, and integers $k$ and $d-1$ to enumerate the collections~$\mathcal{R}'_{\leq 2(d-1)}$ and $\mathcal{R}'_{> 2(d-1)}$.
  We add to~$\mathcal{R}_ {\leq 2d}$ all paths obtained by extending a path $P\in \mathcal{R}'_{\leq 2(d-1)}$ into $sPt$.
  Moreover, we add to~$\mathcal{R}_{> 2d}$ all pairs obtained by extending a pair $(P'_s,P'_t)\in\mathcal{R}'_{> 2(d-1)}$ to $(sP'_s,P'_tt)$.

  To prove that the resulting collections $\mathcal{R}_{\leq 2d}$ and $\mathcal{R}_{> 2d}$ satisfy the requirements, consider a path~$P$ disjoint from an arbitrary set $S \subset V(G)$ of size at most $k$.
  If $P$ has length $0$ or $1$, we introduced this path into $\mathcal{R}_{\leq 2d}$ and are done.
  Otherwise, let $s'$ and~$t'$ be the neighbors of $s$ and $t$ on~$P$, respectively.
  As $|P| \geq 2$, $P[s', t']$ is a subpath of $P$ and, therefore disjoint from~$S$.
  Moreover, it is disjoint from $s$ and $t$ by definition.
  By choice of $X'$ and $Y'$ (see \autoref{thm:bcl_smallpathwitness}), there is an $x \to y$-path~$Q$ in $G-(S\cup \{s,t\})$ with $x\in X'$ and $<\in Y'$.
  Therefore, by the induction hypothesis, there is a path $Q'$ in $G-(S\cup \{s,t\})$ such that either $Q'\in \mathcal{R}'_{\leq 2(d-1)}$ or  $(Q'_s,Q'_t)\in \mathcal{R}'_{> 2(d-1)}$, where $Q'_s$ and~$Q'_t$ contain the first and last $d-1$ arcs of $Q'$, respectively.
  In the first case, $sQ't$ is an $s\to t$-path in $G-S$ which we introduced to $\mathcal{R}_{\leq 2d}$.
  In the second case, $(sQ'_s, Q'_tt)$ will appear in $\mathcal{R}_{> 2d}$ and satisfy the requirements.

  By induction, $\mathcal{R}_1$ and $\mathcal{R}_2$ have size $2^{\mathcal O(kd)}$.
  The time for their construction is $2^{\mathcal O(kd)}\cdot n^{\mathcal O(1)}$.
\end{proof}

\begin{lemma}
\label{lem:stongplus2}
  Let $G$ be a digraph with two vertices $s,t$ and $k$ be an integer and suppose that $\cf(G-\{s,t\})\le \ell$.
  Then in time $2^{\mathcal O(k\ell + k^2\log k)}\cdot n^{\mathcal O(1)}$, we can compute a collection $\mathcal{Q}$ of $2^{\mathcal O(k\ell + k^2\log k)}\log^2 n$ closed walks in $G$, each containing both $s$ and $t$, such that the following holds: if $S\subseteq V(G)$ is a set of at most $k$ vertices in $G$ such that $\cf(G-S)\le\ell$ and $G-S$ has a closed walk containing both $s$ and $t$, then there is a closed walk in $\mathcal{Q}$ disjoint from $S$.
\end{lemma}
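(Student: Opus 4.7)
The plan is to decompose any closed walk in $G-S$ through $s$ and $t$ into a simple $s\to t$-path $P_1$ and a simple $t\to s$-path $P_2$, each disjoint from $S$, and to enumerate the two pieces separately: the length-$\ell$ ends via \autoref{lem:prepostpath}, and the interiors (which lie in $G-\{s,t\}$, a digraph of circumference at most $\ell$) via \autoref{thm:bcl_computerepresenetativeforboundedcf}. First I would invoke \autoref{lem:prepostpath} with $d=\ell$ on $(s,t)$ and on $(t,s)$, producing in time $2^{\mathcal O(k\ell)}\cdot n^{\mathcal O(1)}$ collections $\mathcal R^{st}_{\le 2\ell},\mathcal R^{st}_{>2\ell},\mathcal R^{ts}_{\le 2\ell},\mathcal R^{ts}_{>2\ell}$ of size $2^{\mathcal O(k\ell)}$ each. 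For any valid $S$, the path $P_1$ is either of length $\le 2\ell$ and hence present in $\mathcal R^{st}_{\le 2\ell}$, or a pair $(P_s,P_t)\in\mathcal R^{st}_{>2\ell}$ is $S$-disjoint and realises the length-$\ell$ prefix and suffix of a real $s\to t$-path $P_1'\subseteq G-S$; letting $x$ be the endpoint of $P_s$ and $y$ be the start of $P_t$, the middle of $P_1'$ is an $x\to y$-path in $G-\{s,t\}-S$, since $P_1'$ is simple. The analogous statement holds for $P_2$.

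For each long witness pair $(P_s,P_t)$ with endpoints $(x,y)$, I would produce a small collection of candidate middles. Since $G-\{s,t\}$ has circumference at most $\ell$, each of its strong components $C$ is a strong digraph of circumference at most $\ell$; when $x$ and $y$ share a strong component $C$, \autoref{thm:bcl_computerepresenetativeforboundedcf} applied to $G[C]$ yields a $k$-representative set of $x\to y$-paths of size $\ell^{\mathcal O(k^2\log k)}\log n$ in time $\ell^{\mathcal O(k^2\log k)}\cdot n^{\mathcal O(1)}$. I would then form $\mathcal Q$ by concatenating each of the $2^{\mathcal O(k\ell)}$ choices on the $s\to t$-side (short path, or prefix-middle-suffix triple) with each of the $2^{\mathcal O(k\ell)}$ choices on the $t\to s$-side, producing $|\mathcal Q|=2^{\mathcal O(k\ell)}\cdot\bigl(\ell^{\mathcal O(k^2\log k)}\log n\bigr)^2=2^{\mathcal O(k\ell+k^2\log k)}\log^2 n$ closed walks, as required. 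Correctness follows since for the true $S$ the witnesses for $P_1$ and $P_2$ reassemble into a closed walk of $\mathcal Q$ disjoint from $S$ that contains both $s$ and $t$.

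The hard part will be the case in which the witness middle $x\to y$-path in $G-\{s,t\}-S$ crosses several strong components of $G-\{s,t\}$, so that $x$ and $y$ do not share a strong component and \autoref{thm:bcl_computerepresenetativeforboundedcf} is not directly applicable. I would resolve this by decomposing the middle along the DAG of strong components of $G-\{s,t\}$: every inter-component step is a single arc of $G-\{s,t\}$ and every intra-component segment of the middle is captured by a separate application of \autoref{thm:bcl_computerepresenetativeforboundedcf} inside that component. To avoid an explosion in the number of candidate sequences of components, I would narrow down the set of relevant entries and exits on each side of a bridging arc via \autoref{thm:bcl_smallpathwitnessset}, so that only $2^{\mathcal O(k)}$ endpoint pairs per component boundary need to be considered; the residual choice of bridging arcs and component sequences is absorbed into polynomial factors (and into the $\log^2 n$-factor of $|\mathcal Q|$), preserving the claimed bounds on running time and collection size.
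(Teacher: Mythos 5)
Your overall plan matches the paper's construction: split the closed walk into an $s\to t$-path and a $t\to s$-path, apply \autoref{lem:prepostpath} with $d=\ell$ to capture the two length-$\ell$ end-segments, and use \autoref{thm:bcl_computerepresenetativeforboundedcf} inside a strong component of $G-\{s,t\}$ for the middle. The genuine gap is precisely in what you flag as ``the hard part.'' You worry that the inner endpoints $x$ (end of $P_s$) and $y$ (start of $P_t$) may fail to share a strong component of $G-\{s,t\}$, and you propose a DAG-of-components workaround. But this case does not arise, and the reason is the part of the hypothesis you have not used: not only does $G-S$ have a disjoint $s\to t$-path $Q$, it also has a $t\to s$-path $P_{t,s}$, and $\cf(G-S)\le\ell$. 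Traversing $P_{t,s}$ from $t$, its first meeting with $Q$ must occur within the last $\ell$ vertices of $Q$ (otherwise a simple cycle of length $>\ell$ would close up with a suffix of $Q$), and symmetrically its last meeting with $Q$ before $s$ must occur within the first $\ell$ vertices of $Q$. Splicing these two meeting points with the corresponding pieces of $Q$ and $P_{t,s}$ yields a closed walk through both $x$ and $y$ that avoids $s$ and $t$, so $x$ and $y$ lie in a single strong component $C$ of $G-\{s,t\}$; and since $Q[x,y]$ is a path between two vertices of the strong component $C$, it lies wholly in $C$. So \autoref{thm:bcl_computerepresenetativeforboundedcf} always applies directly.

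Beyond being unnecessary, your DAG workaround would not meet the required size bound. Each strong-component segment of the middle would contribute an independent factor of $\ell^{\mathcal{O}(k^2\log k)}\log n$ (plus choices of entry and exit vertices and bridging arcs), and the number of components crossed is not bounded in $k$ or $\ell$. The resulting collection size would be of order $\bigl(\ell^{\mathcal{O}(k^2\log k)}\log n\bigr)^{r}$ for an unbounded $r$, which is not $2^{\mathcal{O}(k\ell+k^2\log k)}\log^2 n$; the claim that this is ``absorbed into polynomial factors and into the $\log^2 n$-factor'' does not hold.
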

\begin{proof}
  We first compute a collection $\mathcal{P}_{s,t}$ of $s\to t$-paths.
  Let us use the algorithm of \autoref{lem:prepostpath} with digraph $G$ and $d=\ell$ to compute the collection $\mathcal{R}_{\leq 2 \ell}$ and  $\mathcal{R}_{> 2\ell}$.
  Let us introduce every path in $\mathcal{R}_{\leq 2\ell}$ into $\mathcal{P}_{s,t}$.
  We will introduce further paths into $\mathcal{P}_{s,t}$ based on~$\mathcal{R}_{> 2 \ell}$ the following way.
  For every $(P_s,P_t)\in \mathcal{R}_{> 2\ell}$, $x\in V(P_s)$, and $y\in V(P_t)$, if $x$ and $y$ are in the same strong component $C$ of $G-\{s,t\}$, then let us invoke the algorithm of \autoref{thm:bcl_computerepresenetativeforboundedcf} to obtain a collection~$\mathcal{P}_{x,y,k}$.
  Then for each $Z\in\mathcal{P}_{x,y,k}$, we extend $P$ to an $s\to t$-walk $Z^*$ by adding the prefix of~$P_s$ ending at $x$ and the suffix of $P_t$ starting at $y$, and we introduce $Z^*$ (or an $s\to t$-path using only the vertex set of $Z^*$) into $\mathcal{P}_{s,t}$.
  Observe that the size of $\mathcal{P}_{s,t}$ can be bounded by $2^{\mathcal O(k\ell + k^2\log k)}\log n$.

  We repeat a similar construction step with the roles of $s$ and $t$ reversed, to obtain a collection~$\mathcal{P}_{t,s}$ of $t\to s$-paths.
  Then for every choice of $P_{s,t}\in \mathcal{P}_{s,t}$ and $P_{t,s}\in\mathcal{P}_{t,s}$, we introduce the concatenation of
  $P_{s,t}$ and $P_{t,s}$ into $\mathcal{Q}$.
  Clearly, every member of $\mathcal{Q}$ is a closed walk containing both $s$ and $t$, and the size of $\mathcal{Q}$ is $2^{\mathcal O(k\ell + k^2\log k)}\log^2 n$.

  To prove the correctness of the construction, suppose that $S$ is a set of at most $k$ vertices such that $\cf(G-S) \le\ell$ and $G-S$
  has a closed walk containing both $s$ and $t$.
  This means that there is an $s\to t$-path $P_{s,t}$ and a $t\to s$-path $P_{t,s}$, both disjoint from $S$.
  We claim that both~$\mathcal{P}_{s,t}$ and~$\mathcal{P}_{t,s}$ contain paths disjoint from $S$.
  If this is true, then it follows by construction that~$\mathcal{Q}$ contains a closed walk disjoint from $S$.

  Let us prove that $\mathcal{P}_{s,t}$ contains a path disjoint from $S$ (the statement for $\mathcal{P}_{t,s}$ follows symmetrically).
  Assume, as a first case, that $P_{s,t}$ has length $\leq 2\ell$.
  Then by \autoref{lem:prepostpath}, we have that~$\mathcal{R}_{\leq 2\ell}$ contains an $s \to t$-path disjoint of $S$.
  This path also appears in $\mathcal{P}_{s,t} \supseteq  \mathcal{R}_{\leq 2\ell}$.
  
  Suppose now that $P_{s,t}$ has length $>2\ell$.
  Then there is an $s \to t$-path $Q$ disjoint of $S$ with $(Q_s, Q_t) \in \mathcal{R}_{> 2\ell}$ being the subpaths of its first and last $\ell$ arcs respectively.
  Let $x$ be the last vertex of $Q_s$ and $y$ be the first vertex of $Q_t$.
  Then $Q[x,y]$ is a certificate that there is an $x \to y$-path in $(G - \lbrace s,t \rbrace) - S$.
  
  We argue that $x$ and $y$ are in the same strong component of $G - \lbrace s,t \rbrace$.
  Consider the path~$P_{t,s}$.
  As both $Q$ and $P_{t,s}$ exist in $G - S$ the closed walk $W$ they form must contain no cycle of length greater than $\ell$.
  Hence, the path $Q_s$ must be intersected by $P_{t,s}$ outside of~$s$, as otherwise the cycle in $W$ containing the segment $Q_s$ has length greater than $\ell$.
  Let $x'$ be the last vertex of $P_{t,s} - s$ that intersects $Q_s$.
  By the same argument, $Q_t$ must be intersected by $P_{t,s} -t$.
  Let $y'$ be the first vertex of $P_{t,s} -t$ that intersects $Q_t$.
  Then $Q[x', y'] \circ P_{t,s}[y', x']$ is a closed walk in $G - \lbrace s,t \rbrace$ containing $x$ and $y$.
  Thus, $x$ and $y$ are in the same strong component of $G - \lbrace s,t \rbrace$.
  
  Then, by choice of $\mathcal{P}_{x,y,k}$, there is a path $Z\in \mathcal{P}_{x,y,k}$ that is disjoint from $S$ and we have extended~$Z$ to~$Z^*$ by adding $P_s$ and $P_t$ to it and then introduced it into $\mathcal{P}_{s,t}$.
  As $Z$, $P_s$, and $P_t$ are all disjoint from~$S$, it follows that~$\mathcal{P}_{s,t}$ contains a path disjoint from $S$.
\end{proof}

\restatestongplusW*
\begin{proof}
  We construct $\mathcal{Q}$ the following way.
  For every pair $s,t$ of vertices in $W$, we invoke the algorithm on \autoref{lem:stongplus2} in $G-(W\setminus\{s,t\})$ and vertices $s,t$.
  The collection $\mathcal{Q}$ will be the union of the $\binom{|W|}{2}$ collections obtained this way.

  To prove the correctness, suppose that $G-S$ has a strong component $C$ containing at least two vertices of $W$.
  This means that there is a closed walk $R$ containing at least two vertices of~$W$; let us choose $R$ such that $|W\cap V(R)|$ is minimum possible (but at least two), and subject to that,~$R$ is of minimum length.
  If $R$ contains exactly two vertices $s,t$ of $W$, then \autoref{lem:stongplus2} guarantees that a member of $\mathcal{Q}$ is disjoint from $S$.
  Suppose that $R$ contains a set $W_0$ of at least three vertices of $W$.
  If $R$ is a simple cycle, then it has length at most $\ell$ (as $\cf(G-S)\le \ell$) and we are done.
  Otherwise, there is a vertex $v\in V(R)$ that is visited at least twice during the walk, meaning that the walk can be split into two closed walks $R_1$ and $R_2$, meeting at $x$ (this is true even if $R$ visits $x$ more than twice).
  As $|W|\ge 3$, we can assume without loss of generality that $R_1$ visits at least two vertices of $W$. Note that $R_1$ cannot visit all vertices of $W$, as this would contradict the minimal choice of $R$.
  This means that $R_1$ visits at least two vertices of~$W$, but strictly fewer than~$C$, contradicting the minimal choice of $R$.
\end{proof}

\section{Reductions for Directed Long Cycle Vertex Deletion}
\label{sec:boundedcyclelength}
In this section we deal with reductions arising in the context of {\sc Directed Long Cycle Hitting Set}.
At first we will show that the arc deletion version can be reduced to an instance of the vertex deletion version of the same size.
This is simply done by taking the directed line graph.

\begin{theorem}
\label{thm:bcl_reductionarctovertex}
  There exists a polynomial parameter transformation from instances $(G,k,\ell)$ of the arc-deletion variant of {\sc Directed Long Cycle Hitting Set} to an instance $(G', k, \ell)$ of the vertex-deletion variant of {\sc Directed Long Cycle Hitting Set}.
\end{theorem}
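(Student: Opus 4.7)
The plan is to let $G'$ be the directed line graph $L(G)$ of $G$: the vertex set of $G'$ is the arc set $A(G)$, and there is an arc from $a$ to $b$ in $G'$ whenever $a=(u,v)$ and $b=(v,w)$ are arcs of $G$ sharing the intermediate vertex $v$ (with $a\neq b$). This transformation is computable in polynomial time, $|V(G')|=|A(G)|\le n^2$, and, crucially, arcs of $G$ are in one-to-one correspondence with vertices of $G'$. Hence selecting $k$ arcs in $G$ is literally the same as selecting $k$ vertices in $G'$, and the parameter $k$ (and also $\ell$) is preserved.

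The heart of the argument is to establish a length-preserving correspondence between directed cycles of $G$ and directed cycles of $G'$. In the easy direction, a cycle $v_1\to v_2\to\cdots\to v_p\to v_1$ in $G$ using arcs $a_1,\ldots,a_p$ yields the cycle $a_1\to a_2\to\cdots\to a_p\to a_1$ in $G'$, because consecutive arcs share the common endpoint $v_{i+1}$; the vertices of the image cycle are distinct since distinct arcs of a simple cycle of $G$ are distinct. For the converse, a cycle $a_1\to a_2\to\cdots\to a_p\to a_1$ in $G'$ consists of $p$ distinct arcs of $G$ whose head/tail vertices chain into a closed walk in $G$ of length~$p$, and one must argue that this walk is actually a simple directed cycle of $G$.

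Given such a length-preserving cycle correspondence, the equivalence of the two variants is immediate. A set $S\subseteq A(G)$ with $|S|\le k$ satisfies $\mathsf{cf}(G-S)\le \ell$ if and only if, viewing $S$ as a subset of $V(G')$, we have $\mathsf{cf}(G'-S)\le\ell$. This gives a parameter-preserving reduction in both directions, so $(G,k,\ell)$ is a YES-instance of the arc-deletion variant iff $(G',k,\ell)$ is a YES-instance of the vertex-deletion variant.

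The main obstacle will be the reverse direction of the cycle correspondence: in a general digraph, a cycle in the directed line graph $L(G)$ corresponds \emph{a priori} only to a closed walk in $G$ that uses each arc at most once, but may revisit vertices. To complete the reduction one therefore needs either a careful combinatorial argument that under the paper's hypotheses such a walk is forced to be simple, or a light structural preprocessing of $G$ (e.g., subdividing problematic anti-parallel arc pairs) that rules out the non-simple configurations while preserving the relevant cycle lengths. Handling this subtlety cleanly, without increasing $k$ or $\ell$, is where the real work of the proof lies.
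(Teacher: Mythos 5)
You take the same route as the paper's published proof: $G'$ is the directed line graph $L(G)$ with $k' = k$ and $\ell' = \ell$. Importantly, you correctly flag the point on which this reduction actually fails, and that obstruction is genuine, not a mere technicality. A simple cycle in $L(G)$ corresponds only to an arc-disjoint closed walk in $G$, and such a walk need \emph{not} be a simple cycle. For a concrete counterexample, take $G$ with vertex set $\{v,x,y\}$ and arcs $a = (v,x)$, $b = (x,v)$, $c = (v,y)$, $d = (y,v)$: then $\cf(G) = 2$, but $a \to b \to c \to d \to a$ is a simple $4$-cycle in $L(G)$, so $\cf(L(G)) = 4$. With $k = 0$ and $\ell = 2$, the arc-deletion instance $(G,0,2)$ is a YES-instance while $(L(G),0,2)$ is a NO-instance, so with $\ell' = \ell$ the reduction is not equivalence-preserving. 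The paper's own one-paragraph proof, which asserts that ``every cycle of $G'$ is mapped to a cycle of $G$ with the same length,'' silently relies on precisely the correspondence you identify as unestablished; that claim is false, and the forward direction of the equivalence breaks.

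Consequently, the first repair you suggest, a combinatorial argument that the induced closed walk is forced to be a simple cycle, cannot succeed: nothing in the problem rules out vertices of $G$ with both in-degree and out-degree at least $2$, and at any such vertex an arc-disjoint closed walk may split into several shorter cycles. Your second suggested route, structural preprocessing of $G$, points in the right direction, though the fix is not ``preprocess and then take the line graph'' but a different gadget altogether: immediately after this theorem's proof, the paper records (as a reviewer-suggested alternative reduction) a construction that subdivides every arc and blows each original vertex up into a large biclique, turning a length-$\ell$ cycle in $G$ into a length-$3\ell$ cycle in $G'$ while making original vertices too costly to delete. That yields a correct polynomial parameter transformation with $\ell' = 3\ell$ rather than $\ell' = \ell$, which still suffices. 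As it stands, your proposal diagnoses the gap accurately, indeed more carefully than the paper's own proof of this theorem, but it does not close the gap, so the argument is not yet a proof.
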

\begin{proof}
  Given an instance $(G,k,\ell)$ of the arc-deletion variant of {\sc Directed Long Cycle Hitting Set}, create a digraph~$G'$ from $G$ by letting $G'$ be the directed line graph of $G$.
  In other words, $V(G') = A(G)$ and $A(G) = \{((v_1,v_2),(v_3,v_4))~|~(v_1,v_2),(v_3,v_3)\in A(G),v_2 = v_3\}$.
  We further set $k' = k$ and $\ell' = \ell$; then $(G',k',\ell')$ is an instance of the vertex-deletion variant of {\sc Directed Long Cycle Hitting Set}.

  In the forward direction, let $S$ be a set of at most $k$ arcs of $G$ such that in $G - S$ every simple cycle has length at most $\ell$.
  Then, since every cycle of $G$ is mapped to a cycle of $G'$ with the same length, the set $S'$ of vertices to which the arcs in $S$ get mapped to in $G'$ is such that $G' - S'$ does not have any simple cycles of length strictly more than $\ell$.

  In the backward direction, let $S'$ be a set of at most $k$ vertices of $G'$ such that every simple cycle of $G' - S'$ has length at most $\ell$.
  Then, since every cycle of $G'$ is mapped to a cycle of $G$ with the same length, the set $S$ of arcs to which the vertices in $S'$ get mapped to in $G$ is such that $G - S$ does not have any simple cycles of length strictly more than $\ell$.
\end{proof}
An alternative reduction works as follows: We create a digraph $G'$ by subdividing each arc $a\in A(G)$ of the original digraph $G$ by introducing a new vertex $v_a$ and then blowing up each original vertex $v\in V(G)$ into a biclique with $(|V(G)^2|+1$ vertices in each of its two partite classes.
Then for any $\ell\geq 3$ each directed cycle in $G$ of length $\ell$ becomes a cycle of length~$3\ell$ in $G'$, and vice versa.
Further, any arc-hitting set of the original digraph for cycles of length $\ell$ or more, gives a vertex hitting set of the new graph for cycles of length $3\ell$ or more, and vice versa.
This reduction was suggested to us by an anonymous reviewer of an earlier version of this paper.

\medskip
There is also a reduction in the other direction by splitting each vertex $v$ into $v^-$ and~$v^+$ and adding an arc from $v^-$ to $v^+$.
Arcs $(u,v)$ are replaced by $(u^+, v^-)$ and are made undeletable by taking enough copies.
\begin{theorem}
  There exists a polynomial parameter transformation from instances $(G,k,\ell)$ of the vertex-deletion variant of {\sc Directed Long Cycle Hitting Set} to an instance $(G', k,2 \ell)$ of the arc-deletion variant of {\sc Directed Long Cycle Hitting Set}.
\end{theorem}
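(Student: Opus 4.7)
The plan is to realize the sketch given in the paragraph preceding the theorem. Given an instance $(G,k,\ell)$ of the vertex-deletion variant, I would construct $G'$ as follows: for each $v\in V(G)$, introduce two new vertices $v^-$ and $v^+$ together with a single arc $(v^-,v^+)$, which I will call the \emph{split arc} of $v$; for each $(u,v)\in A(G)$, introduce $k+1$ parallel copies of the arc $(u^+,v^-)$, which I will call \emph{replacement arcs}. Set $k'=k$ and $\ell'=2\ell$.

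The first step is a structural observation: every directed cycle in $G'$ must alternate between split and replacement arcs. This follows immediately from the fact that $v^-$ has a unique outgoing arc (the split arc of $v$) and $v^+$ has only split arcs incoming and only replacement arcs outgoing. Consequently, cycles in $G'$ have even length, and contracting each pair $v^-,v^+$ back to $v$ and ignoring the choice of parallel copy yields a natural correspondence between cycles of length $2m$ in $G'$ and cycles of length $m$ in $G$.

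For the forward direction, from $S\subseteq V(G)$ with $|S|\le k$ and $\cf(G-S)\le \ell$ I would take $S' := \{(v^-,v^+) : v\in S\}$; the cycle correspondence gives $\cf(G'-S')\le 2\ell$ immediately. The backward direction is where the multiplicity matters: given $S'\subseteq A(G')$ with $|S'|\le k$ and $\cf(G'-S')\le 2\ell$, the $(k+1)$-fold multiplicity of replacement arcs forces $S'$ to leave at least one parallel copy of every replacement edge intact. I would then argue that replacing $S'$ by $S'' := S'\cap\{\text{split arcs}\}$ preserves the circumference bound: any cycle in $G'-S''$ that uses some replacement arc of $S'\setminus S''$ can be rerouted through a surviving parallel copy to produce a cycle of the same length in $G'-S'$, so $\cf(G'-S'')\le \cf(G'-S')\le 2\ell$. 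Setting $S := \{v : (v^-,v^+) \in S''\}$ then gives a vertex set with $|S|\le k$, and the cycle correspondence yields $\cf(G-S)\le\ell$. The only mildly subtle point is the rerouting argument, but it is essentially free because parallel arcs share both endpoints; no other obstacles arise, and the transformation is clearly polynomial and parameter-preserving.
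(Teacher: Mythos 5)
Your proposal is correct and uses essentially the same construction and argument as the paper: split each vertex $v$ into $v^-,v^+$ joined by a single split arc, replace each arc by $k+1$ parallel copies between the corresponding $+/-$ vertices, and exploit the forced alternation of split and replacement arcs to get a length-doubling bijection between cycles. The only cosmetic difference is that you spell out the rerouting-through-a-surviving-parallel-copy step that the paper compresses into ``we may assume $S'$ contains only split arcs.''
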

\begin{proof}
  Given an instance $(G,k,\ell)$ of the vertex-deletion variant of \DLCHS{}, create a digraph $G'$ from $G$ by splitting each vertex $v\in V(G)$ into two vertices $v^+,v^-$, adding an arc from $v^-$ to $v^+$, and connecting all in-neighbours $u$ of $v$ in $G$ by $k + 1$ parallel arcs from~$u^+$ to~$v^-$ in $G'$, and all out-neighbours $u$ of $v$ in $G$ by an arc from $v^+$ to $u^-$ in $G'$.
  In other words, $V(G') = \{v^+,v^-~|~v\in V(G)\}$ and $A(G') = \{(v^-,v^+)~|~v\in V(G)\}\cup \{(u^+,v^-)^k~|~u\in N_G^-(v),v\in V(G)\}\cup\{(v^+,u^-)^k~|~u\in N_G^+(v),v\in V(G)\}$.
  We further set $k' = k$ and $\ell' = 2\ell$; then $(G',k',\ell')$ is an instance of the arc-deletion variant of \DLCHS{}.

  In the forward direction, let $S$ be a set of at most $k$ vertices such that any simple cycle of $G - S$ has length at most $\ell$.
  We let $S' = \{(v^-,v^+)~|~v\in S\}$, it follows that $S$ is a set of at most~$k$ arcs such that any simple cycle of $G' - S'$ has length at most $\ell' = 2\ell$.

  In the backward direction, let $S'$ be a set of at most $k'$ arcs such that $G' - S'$ does not have any simple cycles of length strictly more than $\ell' = 2\ell$.
  We may assume that $S'$ only contains arcs of the from $(v^-,v^+)$ for some vertex $v\in V(G)$, as $S'$ contains at most $k$ arcs and there are $k + 1$ parallel arcs between any two vertices of $G'$ that correspond to distinct vertices of $G$.
  Therefore, the set $S = \{v~|~(v^-,v^+)\in S'\}$ is a set of at most $k$ vertices in $G$ such that $G - S$ does not have any simple cycles of length more than $\ell$.
\end{proof}

It is clear that the \DLCHS{} problem generalizes the DFVS problem for parameter $\ell = 0$.
We now show that this problem also generalizes the {\sc Feedback Vertex Set in Mixed Graphs} problem, but this time for the parameter $\ell = 2$.
A mixed graph $G = (V, A, E)$ is a graph on a vertex set $V$ that has a set of directed arcs $A$, as well as a set of undirected edges $E$.
Recall that in the FVS problem in mixed graphs, we are given as input a mixed graph $G = (V,A,E)$, where each arc in $A$ can be traversed only along its direction and each edge in $E$ can be traversed in both directions, together with an integer $k$, and we are seeking a set $S$ of at most $k$ vertices such that $G - S$ does not contain any cycles.

\begin{theorem}
  There is a polynomial parameter transformation from instances $(G = (V,A,E), k)$ of {\sc Feedback Vertex Set in Mixed Graphs} to an instance $(G', k,2)$ of {\sc Directed Long Cycle Hitting Set}.
\end{theorem}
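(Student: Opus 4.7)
The plan is to transform a mixed graph instance $(G=(V,A,E),k)$ of \textsc{Feedback Vertex Set in Mixed Graphs} into a directed multigraph $G'$ with the same solution budget $k$ and with $\ell=2$, following the sketch indicated in the introduction: first eliminate length-$2$ directed cycles in the arc part by subdivision, then turn each undirected edge into an anti-parallel pair (which is now harmless because cycles of length at most $\ell=2$ are allowed). Concretely, $V(G')$ consists of $V$ together with one fresh vertex $w_{uv}$ for every unordered pair $\{u,v\}$ with both $(u,v),(v,u)\in A$. For each such pair I replace the arc $(u,v)$ by the length-$2$ path $u\to w_{uv}\to v$ and keep $(v,u)$; every remaining arc of $A$ is copied verbatim; and for every $\{u,v\}\in E$ I add both $(u,v)$ and $(v,u)$ to $A(G')$, forming a directed $2$-cycle that is permitted by the bound $\ell=2$.

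The correctness argument has two directions. For the easy direction, given a mixed feedback vertex set $S\subseteq V$ of size at most $k$, consider any directed cycle $C'$ of length greater than $2$ in $G'$. Each subdivision vertex $w_{uv}$ has in-degree $1$ and out-degree $1$ in $G'$, so if $C'$ uses $w_{uv}$ it must also use $u$ and $v$; by contracting the two arcs incident to every such $w_{uv}$, $C'$ projects to a closed walk of $G$ that uses only vertices of $V$ and is a mixed-graph cycle (no undirected edge is traversed twice because $C'$ visits each vertex at most once). Hence $S$ already hits $C'$, and $S$ is a solution for the \DLCHS{} instance. For the converse, take a hitting set $S'\subseteq V(G')$ of size at most $k$ for cycles of length more than $2$ in $G'$, and let $S$ be obtained from $S'$ by replacing every subdivision vertex $w_{uv}\in S'$ with $u$ (merging duplicates). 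Then $|S|\le |S'|\le k$ and $S\subseteq V$. To see that $S$ is a mixed FVS of $G$, translate any mixed cycle $C$ of $G$ into a directed cycle of $G'$: undirected edges of $C$ are routed through the anti-parallel arcs, arcs of $A$ are routed through the original arc or through $u\to w_{uv}\to v$ if subdivided. The resulting directed cycle has length at least $3$ (since either $|C|\ge 3$ in $G$, or $C$ is an anti-parallel pair that now traverses a subdivision vertex), so $S'$ must hit it. Any occurrence of $w_{uv}$ hitting this cycle is equivalent to $u$ hitting it, because the cycle must contain $u$ whenever it contains $w_{uv}$; therefore $S$ hits $C$ as well.

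The construction clearly runs in polynomial time and preserves the parameter exactly ($k'=k$, $\ell'=2$), giving a polynomial parameter transformation as required. The main conceptual obstacle is to ensure that subdivision vertices in the target solution can be safely \emph{swapped back} into the original vertex set without increasing the budget; this is what drives the particular choice of subdividing only one arc per anti-parallel pair (so that the in- and out-neighbourhoods of each $w_{uv}$ are singletons $\{u\}$ and $\{v\}$ respectively), and it is the one place where a small but nontrivial combinatorial argument is needed rather than a syntactic translation.
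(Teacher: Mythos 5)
Your construction is not sound, and the gap is exactly at the step you flag as harmless. You argue that after subdividing one arc of every anti-parallel arc pair, you may replace each undirected edge $\{u,v\}\in E$ by a pair of anti-parallel arcs ``which is now harmless because cycles of length at most $\ell=2$ are allowed.'' This premise fails whenever a mixed cycle of length $2$ involves an undirected edge. Take $V=\{u,v\}$, $A=\{(u,v)\}$, $E=\{\{u,v\}\}$, $k=0$. The mixed graph has the cycle $u\xrightarrow{(u,v)}v\xrightarrow{\{u,v\}}u$, so $(G,0)$ is a NO-instance. In your $G'$, the arc $(u,v)$ is not in an anti-parallel pair inside $A$, so it is copied verbatim, and the edge becomes $(u,v),(v,u)$; every directed cycle of $G'$ then has length $2$ and is permitted by $\ell=2$, so $(G',0,2)$ is a YES-instance. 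The same failure occurs for two parallel edges, and also for the case where $(u,v),(v,u)\in A$ \emph{and} $\{u,v\}\in E$: the edge-arc $(v,u)$ together with the unsubdivided $(v,u)\in A$ (or the other way around) still yields an undetected length-$2$ directed cycle that corresponds to a genuine mixed cycle. Your claim in the converse direction, ``the resulting directed cycle has length at least $3$ since either $|C|\geq 3$, or $C$ is an anti-parallel pair,'' incorrectly assumes an anti-parallel arc pair is the only kind of mixed $2$-cycle.

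The paper avoids this by subdividing \emph{every} arc of $A$ (not just one arc per anti-parallel pair), and by subdividing every edge $e=\{v,w\}$ for which there is any other arc or edge between $v$ and $w$; only an edge that is the unique connection between its endpoints is replaced by bare anti-parallel arcs, and for those the resulting directed $2$-cycle in $G'$ is safe precisely because $G[\{v,w\}]$ has no mixed cycle to detect. If you keep your minimal-subdivision idea you must at least (i) also subdivide the arc $(u,v)$ whenever $\{u,v\}\in E$, and (ii) subdivide (one copy of) each edge whenever there is any other arc or edge between its endpoints; this is essentially re-deriving the paper's case distinction. The ``swap back'' argument for moving subdivision vertices into $V$ is fine, but it cannot rescue a target instance that simply has no long cycle to hit.
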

\begin{proof}
  Let $(G = (V,A,E), k)$ be an instance of the {\sc Feedback Vertex Set in Mixed Graphs} problem.
  We can assume that $G$ is loop-free, as vertices with loops need to be removed in any solution.
  Now, we will create a digraph $G'$ such that $(G', k, 2)$ as instance of \DLCHS{} has a solution if and only if $(G, k)$ has one.
  For this we replace every arc $a \in A$ by a path $P_a$ of length two in the same direction.
  Afterwards we replace all edges depending on the existence of other arcs/edges between it's endpoints:
  If for an edge $e = \{v, w\} \in E$ the only arc/edge between $v$ and $w$ in $G$ is $e$ (i.e. $G[\{v,w\}]$ contains a cycle) we replace $e$ by arcs $\overrightarrow{e} = (v,w)$ and $\overleftarrow{e} = (w,v)$ in both directions.
  Otherwise we replace $e = \{v, w\} \in E$ by two paths $\overrightarrow{P_e}$ and $\overleftarrow{P_e}$ of length two in both directions.
  The resulting graph is~$G'$.

  Let now $S$ be a solution to $(G, k)$.
  Then the only cycles in $G' - S$ must be those formed by replacing an edge with forward and backward paths/arcs.
  Only the edges replaced by paths can form cycles of length longer than two.
  But those edges had another arc/edge between their endpoints, thus forming a cycle in $G$.
  As $S$ intersects this cycle, only cycles of length two survive in $G' - S$.

  For the opposite direction, let $S'$ be a solution to $(G', k ,2)$.
  We can assume that $S' \subseteq V(G)$ as all other vertices lie in the middle of paths (i.e. have degree two) and we could include an endpoint of the path instead.
  As cycles in $G$ get replaced by longer cycles in $G'$, $G-S'$ contains only cycles of length two which don't get longer when transforming to $G'$.
  These cycles can only contain two edges between the same vertices (as arcs get longer).
  But these get replaced by paths so the cycles would have length at least four in $G'$ and would be deleted by $S'$.
  Thus, $G - S'$  contains no cycles.
\end{proof}

\section{Discussion}
\label{sec:discussion}
In this paper we have settled the parameterized complexity of hitting long cycles in directed graphs.
Our main result is a single-exponential fixed-parameter algorithm for this problem, which generalizes the breakthrough result by Chen et al.~\cite{ChenEtAl2008} for the setting of hitting all cycles in digraphs.
The algorithm also generalizes the fixed-parameter tractability result~\cite{BonsmaLokshtanov2011} by Bonsma and Lokshtanov for hitting cycles in mixed graphs.

Along the way, we showed how to compute a representative set of $x\to y$-paths, that is, a collection of paths such that if an (unknown) set $S$ of at most $k$ vertices does not disconnect $y$ from $x$, then there is at least one $x\to y$-path disjoint from~$S$ in our collection.
The collection has size $\ell^{\mathcal{O}(k^2\log k)}\cdot \log n$ on directed graphs without cycles of length greater than $\ell$.
We believe this result can find applications beyond the problem discussed here.

It would be interesting if the run time of algorithm can be improved.
Precisely, can we find a set hitting all cycles of length $\ell$ in time $2^{\mathcal O(\ell + k\log k)}\cdot n^{\mathcal{O}(1)}$ to match the best known run times for the cases of DFVS (where $\ell = 0$) and {\sc Long Directed Cycle} (where $k = 0$)?

\bibliographystyle{abbrvnat}
\bibliography{long_cycle_hitting_set}

%
%

\end{document}